\theoremstyle{plain}
\newtheorem{theorem}{Theorem}[section]
\newtheorem{lemma}[theorem]{Lemma}
\newtheorem{corollary}[theorem]{Corollary}
\theoremstyle{definition}
\newtheorem{definition}{Definition}[section]
\newtheorem{remark}[theorem]{Remark}
\newcommand{\R}{\mathbb{R}}
\newcommand{\C}{\mathbb{C}}
\newcommand{\Z}{\mathbb{Z}}
\newcommand{\F}{\mathbb{F}}
\newcommand{\cL}{\mathcal{L}}
\newcommand{\cC}{\mathcal{C}}
\renewcommand{\epsilon}{\eps}
\newcommand{\eps}{\varepsilon}
\newcommand{\bS}{\mathbb{S}}
\newcommand{\bcube}{\{-1,1\}}
\DeclareMathOperator{\sign}{sign}
\DeclareMathOperator{\poly}{poly}
\DeclareMathOperator{\polylog}{polylog}
\DeclareMathOperator{\E}{\mathbb{E}}
\DeclareMathOperator{\Var}{Var}
\DeclareMathOperator{\Unif}{Unif}
\DeclareMathOperator{\cost}{cost}
\DeclareMathOperator{\dist}{dist}
\DeclareMathOperator{\proj}{proj}
\newcommand{\iid}{i.i.d.\@ }
\renewcommand{\Re}{\operatorname{Re}}
\newcommand{\eqdist}{\overset{\mathrm{dist}}{=\joinrel=}}
\DeclareMathOperator{\rank}{rank}
\title{Tight Bounds for the Subspace Sketch Problem with Applications\footnotetext{Yi Li was supported in part by a Singapore Ministry of Education (AcRF) Tier 2 grant MOE2018-T2-1-013. Ruosong Wang and David P. Woodruff were supported in part by an Office of Naval Research (ONR) grant N00014-18-1-2562, as well as the Simons Institute for the Theory of Computing where part of this work was done.}}
\author{Yi Li\\
			Nanyang Technological University\\
			\texttt{yili@ntu.edu.sg}
			\and
			Ruosong Wang\\
			Carnegie Mellon University\\
			\texttt{ruosongw@andrew.cmu.edu}
			\and
			David P.\@ Woodruff\\
			Carnegie Mellon University\\
			\texttt{dwoodruf@cs.cmu.edu}}
\date{}
\begin{document}

\clearpage\maketitle
\thispagestyle{empty}
\setcounter{page}{0}

\begin{abstract}
In the {\it subspace sketch problem} one is given an $n \times d$ matrix $A$ with $O(\log(nd))$ bit entries, and would like to compress it in an arbitrary way to build a small space data structure $Q_p$, so that for any given $x \in \mathbb{R}^d$, with probability at least $2/3$, one has $Q_p(x) = (1 \pm \eps) \|Ax\|_p$, where $p \geq 0$ and the randomness is over the construction of $Q_p$. The central question is: 
\begin{center}
{\it How many bits are necessary to store $Q_p$?}
\end{center}
This problem has applications to the communication of approximating the number of non-zeros in a matrix product, the size of coresets in projective clustering, the memory of streaming algorithms for regression in the row-update model, and embedding subspaces of $L_p$ in functional analysis. A major open question is the dependence on the approximation factor $\epsilon$. 

We show if $p \geq 0$ is {\it not a positive even integer} and $d = \Omega(\log(1/\eps))$, then $\widetilde{\Omega}(\eps^{-2} \cdot d)$ bits are necessary. On the other hand, if $p$ is a positive even integer, then there is an upper bound of $O(d^p \log(nd))$ bits independent of $\eps$. Our results are optimal up to logarithmic factors, and show in particular that one cannot compress $A$ to $O(d)$ ``directions'' $v_1, \ldots, v_{O(d)}$, such that for any $x$, $\|Ax\|_1$ can be well-approximated from $\langle v_1, x \rangle, \ldots, \langle v_{O(d)}, x \rangle$. Our lower bound rules out arbitrary functions of these inner products (and in fact arbitrary data structures built from $A$), and thus rules out the possibility of a singular value decomposition for $\ell_1$ in a very strong sense. Indeed, as $\epsilon \rightarrow 0$, for $p = 1$ the space complexity becomes arbitrarily large, while for $p = 2$ it is at most $O(d^2 \log(nd))$. As corollaries of our main lower bound, we obtain new lower bounds for a wide range of applications, including the above, which in many cases are optimal.  
\end{abstract}

\newpage

\tableofcontents

\newpage

\section{Introduction}
The explosive growth of available data has necessitated new models for
processing such data. A particularly powerful tool for analyzing such
data is {\it sketching}, which has found applications to communication complexity, data stream
algorithms, functional analysis, machine learning, numerical linear algebra, sparse recovery, and
many other areas. Here one is given a large object, such as a graph, a matrix, or a vector, 
and one seeks to compress it while still preserving useful information
about the object.
One of the main goals of a sketch is to use as little memory
as possible in order to compute functions of interest. Typically, to obtain
non-trivial space bounds, such sketches need to be both randomized and approximate.
By now there are nearly-optimal bounds on the memory required of sketching many 
fundamental problems, such as graph sparsification, norms of vectors, and problems
in linear algebra such as low-rank approximation and regression. 
We refer the reader to the surveys~\cite{m05,w14} as well as the compilation
of lecture notes~\cite{course}. 

In this paper we consider the {\it subspace sketch problem}.
\begin{definition}\label{def:SS}
Given an $n \times d$ matrix $A$ with entries specified by $O(\log(nd))$ bits, an accuracy
parameter $\epsilon > 0$, and a function
$\Phi: \mathbb{R}^n \rightarrow \mathbb{R}^{\geq 0}$, design a data structure $Q_\Phi$ so that for 
any $x \in \mathbb{R}^d$, with probability at least $0.9$, $Q_\Phi(x) = (1 \pm \epsilon)\Phi(Ax)$. 
\end{definition} 

The subspace sketch problem captures many important problems as special cases. 
We will show how to use this problem to bound the communication of approximating 
statistics of a matrix product, the size of coresets in projective clustering, 
the memory of streaming algorithms for regression in the row-update model, 
and the embedding dimension in functional analysis. We will describe these applications
in more detail below. 

The goal in this work is to determine the memory, i.e., the size of $Q_\Phi$, required
for solving the subspace sketch problem for different functions $\Phi$. We first consider the
classical $\ell_p$-norms $\Phi(x) = \sum_{i=1}^n |x_i|^p$, in which case the problem is referred to as the $\ell_p$ subspace sketch problem\footnote{Note we are technically considering the $p$-th power of the $\ell_p$-norms, but for the purposes of $(1+\epsilon)$-approximation, they are the same for constant $p$. Also, when $p < 1$, $\ell_p$ is not a norm, though it is still a well-defined quantity. Finally, $\ell_0$ denotes the number of non-zero entries of $x$.}. We later extend our techniques to their robust 
counterparts $\Phi(x) = \sum_{i=1}^n \phi(x_i)$, where $\phi(t) = |t|^p$ if $|t| \leq \tau$ and $\phi(t) = \tau^p$ otherwise.
Here $\Phi$ is a so-called $M$-estimator and known as 
the Tukey loss $p$-norm. It is less sensitive to ``outliers'' 
since it truncates large coordinate valus at $\tau$. 
We let $Q_p$ denote $Q_\Phi$ when $\Phi(x) = \sum_i |x_i|^p$, and use $Q_{p, \tau}$ when $\Phi$ is the 
Tukey loss $p$-norm.

It is known that for $p \in (0,2]$ and $r = O(\varepsilon^{-2})$, 
if one chooses a matrix $S \in \mathbb{R}^{r \times n}$ of i.i.d.\@ $p$-stable random variables, 
then for any fixed $y \in \mathbb{R}^n$, from the sketch $S \cdot y$ one can output a number 
$z$ for which
$(1-\epsilon)\|y\|_p \leq z \leq (1+\epsilon)\|y\|_p$ with probability at least $0.9$ \cite{i06}. 
We say $z$ is a $(1 \pm \epsilon)$-approximation of $\|y\|_p$. 
For $p = 1$, the output is just $\operatorname{med}(Sy)$, where $\operatorname{med}(\cdot)$ 
denotes the median of the absolute values of the coordinates in a vector. A sketch $S$ with
$r = O(\eps^{-2} \log n)$ rows is also known for $p = 0$ \cite{knw10}. For $p > 2$, there
is a distribution on $S \in \mathbb{R}^{r \times n}$ 
with $r = O(n^{1-2/p} \log n / \varepsilon^2)$ for which
one can output a $(1 \pm \epsilon)$-approximation of $\|y\|_p$ given $Sy$ with probability
at least $0.9$ \cite{gw18}. By appropriately discretizing the entries, 
one can solve the $\ell_p$ subspace sketch problem by storing $S \cdot A$ for an 
appropriate sketching matrix $S$, and estimating $\|Ax\|_p$ using $S \cdot A \cdot x$.
In this way,
one obtains a sketch of size 
$\widetilde{O}(\eps^{-2} d)$\footnote{Throughout we use 
$\widetilde{O}, \widetilde{\Omega},$ and $\widetilde{\Theta}$
to hide factors that are polynomial in $\log(nd/\epsilon)$. We note that 
our lower bounds are actually independent of $n$.} bits for $p \in [0, 2]$, and a sketch of size
$\widetilde{O}(n^{1-2/p} / \varepsilon^2 \cdot d)$ bits for $p > 2$. Note, however, that this is only one particular
approach, based on choosing a random matrix $S$, and better approaches may be possible. Indeed,
note that for $p = 2$, one can simply store $A^T A$ and output $Q_2(x) = x^TA^TAx$.
This is exact (i.e., holds for $\epsilon = 0$) and only uses $O(d^2 \log(nd))$ bits of space, 
which is significantly smaller than $\widetilde{O}(\epsilon^{-2} d)$ for small enough $\epsilon$. We note that the $\epsilon^{-2}$ term may be extremely prohibitive in applications.
For example, if one wants high accuracy such as $\epsilon = 0.1\%$, the $\epsilon^{-2}$ factor is a severe drawback of existing algorithms. 

A natural question is what makes it possible for $p = 2$ to obtain $\widetilde{O}(d^2)$ bits of
space, and whether it is also possible to achieve $\widetilde{O}(d^2)$ space for $p = 1$. One thing
that makes this possible for $p = 2$ is the singular value decomposition (SVD), namely, that
$A = U \Sigma V^T$ for matrices $U \in \mathbb{R}^{n \times d}$ and $V \in \mathbb{R}^{d \times d}$ 
with orthonormal columns, and $\Sigma$ is a non-negative
diagonal matrix. Then $\|Ax\|_2^2 = \|\Sigma V^T x\|_2^2$ since $U$ has orthonormal columns. Consequently,
it suffices to maintain the $d$ inner products 
$\langle \Sigma_{1,1} v_1, x \rangle, \ldots, \langle \Sigma_{d,d} v_d, x \rangle$, 
where the $v_i$'s are the rows of $V^T$. Thus one can ``compress'' $A$ to $d$ ``directions'' 
$\Sigma_{i,i} v_i$. A natural question is whether for $p = 1$ it is also possible find $O(d)$ 
directions $v_1, \ldots, v_{O(d)}$, 
such that for any
$x$, $\|Ax\|_1$ can be well-approximated from some function of 
$\langle v_1, x \rangle, \ldots, \langle v_{O(d)}, x \rangle$. Indeed, this would be the analogue of the 
SVD for $p = 1$, for which little is known. 

The central question of our work is: 
\begin{center}
{\it How much memory is needed to solve the subspace sketch problem as a function of $\Phi$?}
\end{center}

\subsection{Our Contributions}
Up to polylogarithmic factors, we resolve the above question for $\ell_p$-norms and Tukey
loss $p$-norms for any $p \in [0,2)$. For $p \geq 2$ we also obtain a surprising separation
for even integers $p$ from other values of $p$. 

Our main theorem is the following. We denote by $\Z^+$ the set of positive integers.
\begin{theorem}[Informal]\label{thm:main}
Let $p\in [0,\infty)\setminus 2\Z^+$ be a constant. 
For any $d = \Omega(\log(1/\varepsilon))$ and $n = \widetilde{\Omega}(\varepsilon^{-2} \cdot d)$, we have that $\widetilde{\Omega}(\varepsilon^{-2} \cdot d)$ bits are
necessary to solve the $\ell_p$ subspace sketch problem.
\end{theorem}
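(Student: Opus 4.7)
The plan is to prove the lower bound by a reduction from a hard two-party communication problem, so that any subspace sketch $Q_p$ of size $s$ bits yields a one-way protocol of cost $s$ for which a standard communication lower bound forces $s = \widetilde{\Omega}(\eps^{-2} d)$. The reduction will have two layers: a base hard instance giving the $\eps^{-2}$ factor, and an augmented indexing construction that multiplies the cost by $d$.

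First I would establish the base hard distribution over $m \times 1$ matrices with $m = \Theta(\eps^{-2})$ showing $\widetilde{\Omega}(\eps^{-2})$ bits are required for a single block. Following the usual template for $\eps^{-2}$ lower bounds on norm-estimation sketches, encode the input of a two-party promise problem (a variant of Gap-$\ell_\infty$ / Augmented Indexing with $\Theta(\eps^{-2})$ coordinates) into a matrix $A_y$ parameterized by $y \in \bcube^m$, and exhibit a family of queries $\{x_i\}_{i\in[m]}$ such that $\|A_y x_i\|_p^p$ is separated multiplicatively by $1 \pm \Theta(\eps)$ between the YES and NO instances. The crucial step is to verify that such a separation exists whenever $p \notin 2\Z^+$: expanding $|t|^p$ in a Taylor series around a suitable center (e.g.\ the mean of a Rademacher combination of coordinates of $y$), the leading nonzero contribution to the perturbation has a coefficient proportional to a derivative of $|t|^p$ that vanishes exactly when $p$ is a positive even integer. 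A Yao/Fano-type argument, combined with anti-concentration of the Taylor remainder, then yields $\widetilde{\Omega}(\eps^{-2})$ bits.

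To boost to $\widetilde{\Omega}(\eps^{-2} d)$ bits, I would apply an augmented-indexing style reduction. Alice draws $d$ independent instances $y^{(1)},\dots,y^{(d)}$ from the base distribution and forms $A$ by stacking $d$ row-blocks, the $j$-th block being the base matrix for $y^{(j)}$ scaled by a geometric factor $\gamma^j$ for a suitable constant $\gamma > 1$. After receiving the sketch, Bob, given a target level $j^\star$ together with $y^{(j^\star+1)}, \ldots, y^{(d)}$, picks the base query $x_i$ lifted to the $j^\star$-th block, computes $Q_p(x)$, subtracts the exactly-known contributions of the heavier blocks $j > j^\star$, and uses the geometric decay of the lighter blocks $j < j^\star$ to treat them as a lower-order additive $O(\eps)\|Ax\|_p^p$ term. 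Decoding thus still recovers a fresh bit of $y^{(j^\star)}$, and a chain-rule / conditional-mutual-information argument summing over $j^\star = 1, \ldots, d$ multiplies the per-block $\eps^{-2}$ lower bound by $d$. The hypothesis $d = \Omega(\log(1/\eps))$ is used to ensure that enough geometric levels are available for the cancellation to separate signal from noise while keeping all entries representable in $O(\log(nd))$ bits.

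The main obstacle will be the base step: designing $\mathcal V$, the template $A_y$, and the queries $x_i$ so that (i) single-coordinate perturbations of $y$ induce a $\Theta(\eps)$ multiplicative change in $\|A_y x_i\|_p^p$ whenever $p \notin 2\Z^+$; (ii) the resulting estimator is information-theoretically optimal via anti-concentration of the Taylor remainder; and (iii) the matrix entries can be discretized to $O(\log(nd))$ bits without destroying the separation. This is precisely where the hypothesis $p \notin 2\Z^+$ is indispensable: for even integer $p$ the expansion of $|t|^p = t^p$ terminates, the distinguishing signal collapses, and indeed the subspace sketch can be represented exactly in $O(d^p \log(nd))$ bits, consistent with the matching upper bound stated above.
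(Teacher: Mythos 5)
There is a genuine gap, and it sits exactly where you flag ``the main obstacle'': the base $\widetilde{\Omega}(\eps^{-2})$ instance is never constructed, and the sketch of how it would work is not repairable as stated. A base distribution over $m\times 1$ matrices cannot give an $\eps^{-2}$ lower bound: for a one-column matrix $A_y$ one has $\|A_y x\|_p = |x|\,\|A_y\|_p$, so storing a single $(1\pm\eps)$-approximate number answers every query in $O(\log(n/\eps))$ bits. More generally the number of columns of the base instance is essential (the paper shows $d=2$ already admits an $\widetilde{O}(1/\eps)$ upper bound), and the paper's hard instance needs $d=\Theta(\log(1/\eps))$ columns with all $2^d\approx \eps^{-2}$ sign vectors as (scaled) rows. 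Your requirement (i), that a single-coordinate perturbation of $y$ change $\|A_y x_i\|_p^p$ by a $\Theta(\eps)$ multiplicative factor, is also not achievable naively: with $\Theta(\eps^{-2})$ comparable rows, flipping one row's scaling changes the norm by an $O(\eps^2\,\poly\log)$ relative amount, far below the $(1\pm\eps)$ error. The paper's resolution is to encode the $\Theta(\eps^{-2})$ bits not in individual coordinates but in a nearly orthogonal family of directions extracted from the matrix $M^{(d,p)}_{i,j}=|\langle i,j\rangle|^p$ on the Boolean cube: one must show $M^{(d,p)}$ has $\Omega(2^d/\poly(d))$ eigenvalues of magnitude $\Omega(2^{d/2}/\sqrt d)\,|\sin(p\pi/2)|$ (a Fourier/Gamma-function computation, not a Taylor expansion of $|t|^p$), then run a Gram--Schmidt-type argument so that each query recovers the sign of one Rademacher coefficient despite an additive error of $\widetilde{\Theta}(1/\eps)$, together with a nonnegativity shift and rounding so the scalings are legitimate matrix entries. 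Your ``Taylor expansion $+$ anti-concentration'' remark gestures at why even $p$ is special but supplies no construction and no quantitative signal bound, and the quantity that actually controls the signal is the alternating sum $\sum_i(-1)^{i/2}\binom{d/2}{i/2}|d-2i|^p$, whose nonvanishing for $p\notin 2\Z^+$ is the technical heart of the paper.

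The amplification step is also off. Stacking $d$ row-blocks that share the same columns with geometric weights $\gamma^j$ does not work with a multiplicative $(1\pm\eps)$ guarantee: the error is $\eps$ times the \emph{total} $\|Ax\|_p^p$, which is dominated by the heaviest blocks, so even after Bob subtracts their exactly-known contributions the residual error swamps the level-$j^\star$ signal; moreover $\gamma^{d}$ can exceed the $O(\log(nd))$-bit entry bound when $d$ is large. The paper instead pads block-diagonally on \emph{disjoint} columns, so a query supported on one block's columns zeroes out all other blocks and plain $\mathsf{INDEX}$ over the concatenated bit strings multiplies the bound by $d/\log(1/\eps)$ with no geometric scaling or augmented indexing. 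Relatedly, the hypothesis $d=\Omega(\log(1/\eps))$ is not about ``having enough geometric levels''; it is what allows each base block to have $\Theta(\log(1/\eps))$ columns and $2^{\Theta(d)}\approx\eps^{-2}$ rows in the first place.
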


When $p \in 2\Z^+$, there is an upper bound of $O(d^p \log(nd))$ bits, independent of $\eps$ (see Remark~\ref{rem:eps=0_even d}). This gives a surprising separation between positive even integers and other values of $p$; in particular for positive even integers $p$ it is possible to obtain $\eps = 0$ with at most $O(d^p \log(nd))$ bits of space, whereas for other values of $p$ the space becomes arbitrarily large as $\eps \rightarrow 0$. This also shows it is not possible, 
for $p = 1$ for example, 
to find $O(d)$ representative directions 
for $\epsilon = 0$ analogous to the SVD for $p = 2$. Note that the lower bound in Theorem
\ref{thm:main} is much stronger than this, showing that there is no data structure whatsoever which
uses fewer than $\widetilde{\Omega}(\varepsilon^{-2} \cdot d)$ bits, and so as $\varepsilon$ gets smaller, the
space complexity becomes arbitrarily large. 

In addition to the $\ell_p$-norm, in the subspace sketch problem we also consider a more general entry-decomposable $\Phi$, that is, $\Phi(v) = \sum_i \phi(v_i)$ for $v\in \R^n$ and some $\phi:\R\to\R^{\geq 0}$. We show the same $\widetilde{\Omega}(\eps^{-2} \cdot d)$ lower bounds for a number of $M$-estimators $\phi$.

\begin{theorem}\label{thm:$M$-estimator_intro} The subspace sketch problem requires $\widetilde{\Omega}(\eps^{-2} d)$ bits when $d = \Omega(\log(1/\eps))$ and $n = \widetilde{\Omega}(\varepsilon^{-2} \cdot d)$ for the following functions $\phi$:
\begin{itemize}[topsep=0.5ex,itemsep=-0.5ex,partopsep=1ex,parsep=1ex]
	\item ($L_1$-$L_2$ estimator) $\phi(t) = 2(\sqrt{1+t^2/2}-1)$;
	\item (Huber estimator) $\phi(t) = t^2/(2\tau)\cdot \mathbf{1}_{\{|t|\leq \tau\}} + (|t|-\tau/2)\cdot\mathbf{1}_{\{|t| > \tau\}}$;
	\item (Fair estimator) $\phi(t) = \tau^2(|x|/\tau - \ln(1+|t|/\tau))$;
	\item (Cauchy estimator) $\phi(t) = (\tau^2/2)\ln(1+(t/\tau)^2)$;
	\item (Tukey loss $p$-norm) $\phi(t) = |t|^p\cdot \mathbf{1}_{\{|t|\leq \tau\}} + \tau^p\cdot\mathbf{1}_{\{|t| > \tau\}}$.
\end{itemize}
\end{theorem}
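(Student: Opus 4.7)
The plan is to reduce each $M$-estimator subspace sketch problem to an $\ell_p$ subspace sketch problem for some $p \notin 2\Z^+$ and invoke Theorem~\ref{thm:main}. The core observation is that each $\phi$ listed, in some scaling regime of its argument, behaves like $c\,|t|^p$ up to a negligible error, so we can take the hard $\ell_p$ instance $A_0$ from the proof of Theorem~\ref{thm:main} and rescale it to land in that regime. Throughout, I will use $\alpha$ for the rescaling factor, chosen as a function of $\tau$ and $\eps$, and arrange the query set so that the entries of $A_0 x$ lie in a bounded range; the approximation error in the reduction must be $O(\eps)$-small so that the $\widetilde{\Omega}(\eps^{-2} d)$ bound transfers with only polylogarithmic slack.

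\textbf{Large-argument reductions ($L_1$-$L_2$, Huber, Fair).} For $L_1$-$L_2$, $\phi(t) = \sqrt{2}\,|t| + O(1/|t|)$ as $|t|\to\infty$; for Huber, $\phi(t) = |t| - \tau/2$ exactly whenever $|t| > \tau$; for Fair, $\phi(t) = \tau|t| - \tau^2 \ln(1 + |t|/\tau)$, whose logarithmic term is lower order than the linear term for large $|t|$. In each case, using the hard $\ell_1$ instance $A_0$, scale by a large $\alpha$ (polynomial in $\tau/\eps$) so that every entry of $\alpha A_0 x$ has magnitude $\gg \tau$ on all queries used by the lower bound. Then
\[
\sum_i \phi(\alpha [A_0 x]_i) = c_\phi \cdot \alpha \|A_0 x\|_1 + E,
\]
where $|E| \leq \eps \cdot c_\phi \alpha \|A_0 x\|_1$. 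A $(1\pm\eps/3)$-approximation to the left-hand side then yields a $(1\pm\eps)$-approximation to $\|A_0 x\|_1$, so Theorem~\ref{thm:main} with $p = 1$ gives the $\widetilde{\Omega}(\eps^{-2} d)$ lower bound.

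\textbf{Small-argument reduction (Tukey $p$-norm).} For the Tukey $p$-norm with $p \notin 2\Z^+$, $\phi(t) = |t|^p$ exactly on $|t|\leq \tau$. Taking the hard $\ell_p$ instance $A_0$ and scaling by a sufficiently small $\alpha$ so that $\max_i |[\alpha A_0 x]_i| \leq \tau$ on all relevant queries, we obtain $\Phi(\alpha A_0 x) = \alpha^p \|A_0 x\|_p^p$ as an exact identity, and the lower bound transfers directly.

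\textbf{Cauchy estimator.} This is the main obstacle. In the small-$t$ regime $\phi(t) = t^2/2 - t^4/(4\tau^2) + \cdots$, whose leading term corresponds to $p=2$ and is excluded by Theorem~\ref{thm:main}; in the large-$t$ regime $\phi(t) \approx \tau^2 \ln(|t|/\tau)$, which is not of the form $c\,|t|^p$ for any $p$. So no single-scale reduction to a power function works. I expect to handle this by directly invoking the internals of the argument used for Theorem~\ref{thm:main}, which crucially uses only that $|t|^p$ fails to be a polynomial in $t^2$; the Cauchy $\phi$ is a real-analytic even function whose Taylor series around any positive center is likewise not a polynomial, and the same gadget-based communication lower bound should go through after verifying the requisite non-polynomiality and growth conditions on $\phi$. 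The delicate step is to quantify the ``distance from a low-degree polynomial'' in the relevant range so that the parameters match the $\widetilde{\Omega}(\eps^{-2} d)$ bound up to polylogarithmic factors; I anticipate this is where most of the real work lies, with the $\ell_p$ reductions for the other four estimators being comparatively routine.
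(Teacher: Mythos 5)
Your reductions for the $L_1$-$L_2$, Huber, Fair and Tukey estimators are essentially the paper's own argument: the paper proves a general transfer theorem (Theorem~\ref{thm:power_function}) stating that if $\phi(t/\lambda)\sim\alpha|t|^p$ as $t\to\infty$ or $t\to 0$ with $p\notin 2\Z$, then the hard $\ell_p$ instance can be rescaled (using that every nonzero entry of $Ax$ on $\pm1$ queries has magnitude at least $\Delta=\Omega(\log^{1/(2p)}(1/\eps))$ and at most $\poly(d)$) so that $\Phi$ evaluated on the rescaled instance is a $(1\pm\eps)$-approximation to $\alpha\beta^p\|Ax\|_p^p$, and the first four bullets follow exactly as you describe.

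The Cauchy estimator, however, is a genuine gap in your proposal, and your own framing of it points at a step that would fail as stated. You suggest that the lower-bound machinery of Theorem~\ref{thm:main} "crucially uses only that $|t|^p$ fails to be a polynomial in $t^2$," but non-polynomiality is not the property the argument needs. The crux is a \emph{quantitative} lower bound on the alternating binomial sum $\bigl|\sum_k(-1)^k\binom{2n}{n+k}g(k)\bigr|$ (equivalently, on the Fourier coefficient $\Lambda_0^{(d,p)}$ attached to Hamming-weight-$d/2$ characters), which must be of order $2^{d/2}/\poly(d)$ — i.e., comparable to $\binom{2n}{n}$ itself — for the near-orthogonal-rows construction and the sign-recovery step to tolerate the $\eps\cdot 2^d\poly(d)$ additive error. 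Non-polynomiality alone gives no such magnitude bound; indeed in the $\ell_p$ case the bound carries a factor $|\sin(p\pi/2)|$ that degrades continuously to zero near even integers, which shows quantitative control is essential. The paper closes this gap by introducing the surrogate $\phi_{\mathrm{aux}}(t)=\ln|t|\cdot\mathbf{1}_{\{|t|\geq 1\}}$, proving (Lemma~\ref{lem:log_matrix_singular_value}, via differentiating the integral identity of Lemma~\ref{lem:critical_identity} in $p$ and a careful limit $p\to 0^+$ involving $\Gamma'/\Gamma$ and $\zeta(1+p)$) that the matrix with entries $\phi_{\mathrm{aux}}(\langle i,j\rangle)$ has $\binom{d}{d/2}$ singular values of size $\Omega(2^{d/2}/\sqrt d)$, and then transferring to the Cauchy loss via $\phi(\tau t)=(1\pm\eps)\tau^2\phi_{\mathrm{aux}}(t)$ for large $|t|$ together with the same rescaling trick as in Theorem~\ref{thm:power_function}. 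Without an analogue of that spectral estimate (for $\ln|t|$ or for the Cauchy function directly), your plan for this bullet remains a plan rather than a proof.
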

We also consider a mollified version of the $1$-Tukey loss function, for which the lower bound of $\widetilde{\Omega}(\eps^{-2}d)$ bits still holds. Furthermore, the lower bound is tight up to logarithmic factors, since we design a new algorithm which approximates $\Phi(x)$ using $\widetilde{O}(\eps^{-2})$ bits, which implies an upper bound of $\widetilde{O}(\eps^{-2} d)$ for the subspace sketch problem. See Section~\ref{sec:tukey_UB} for details.

While Theorem \ref{thm:main} gives a tight lower bound for $p \in [0,2)$, matching
the simple sketching upper bounds described earlier, and also gives a separation from the 
$O(d^p \log(nd))$ bit bound for even
integers $p \geq 2$, one may ask what exactly the space required is for even integers
$p \geq 2$ and arbitrarily small $\epsilon$. 
For $p = 2$, the $O(d^2 \log(nd))$ upper bound 
is tight up to logarithmic factors since the previous work \cite[Theorem 2.2]{andoni2014sketching}
implies an $\widetilde{\Omega}(d^2)$ lower bound once $\epsilon = O(1/\sqrt{d})$. For $p > 2$, we show
the following: for a constant $\eps \in (0,1)$, there is an upper bound of 
$\widetilde{O}(d^{p/2})$ bits (see Remark~\ref{rem:d^{p/2}_foreach_tight}), 
which is nearly tight in light of the following lower bound, which
holds for constant $\eps$. 
\begin{theorem}[Informal]\label{thm:even_p_lb}
Let $p \ge 2$ and $\eps\in (0,1)$ be constants. Suppose that $n=\widetilde{\Omega}(d^{p/2})$, then
$\widetilde{\Omega}(d^{p/2})$ bits are
necessary to solve the $\ell_p$ subspace sketch problem.
\end{theorem}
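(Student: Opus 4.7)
The plan is to deduce Theorem~\ref{thm:even_p_lb} by a direct-sum reduction from the classical problem of sketching the $\ell_p$-norm of a single vector, using the fact that when $n = \widetilde{\Theta}(d^{p/2})$, an $n \times d$ matrix can simultaneously encode $d$ independent hard instances of the vector $\ell_p$-estimation problem, each of length $n$, that are decoupled by standard-basis queries. The case $p=2$ is handled separately by the trivial $\Omega(d)$ lower bound coming from the query dimension, so the interesting range is $p>2$.

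The starting point is the classical result that any randomized data structure which, given a vector $y \in \R^m$ with $O(\log m)$-bit entries, outputs a constant-factor approximation to $\|y\|_p$ with probability at least $2/3$ must use $\widetilde{\Omega}(m^{1-2/p})$ bits whenever $p>2$; this follows from the Bar-Yossef--Jayram--Kumar--Sivakumar / Chakrabarti--Khot--Sun lower bounds for $\mathrm{Gap}\text{-}\ell_\infty$ and multiparty disjointness. I would then set $m := n = \widetilde{\Theta}(d^{p/2})$, draw $d$ independent hard instances $y^{(1)},\dots,y^{(d)} \in \R^m$ from the corresponding hard distribution, and assemble them as the columns of a matrix $A \in \R^{m \times d}$. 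A subspace sketch $Q_p$ of $A$ can be queried at $x = e_i$ to obtain $Q_p(e_i) = (1\pm\eps)\|Ae_i\|_p = (1\pm\eps)\|y^{(i)}\|_p$ with probability at least $2/3$, so $Q_p$ acts as a one-way sketch that simultaneously solves all $d$ base instances.

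The crux is a direct-sum step showing that this multi-instance task requires $\widetilde\Omega(d)$ times the single-instance cost, which yields the target bound
\[
\widetilde{\Omega}\bigl(d \cdot m^{1-2/p}\bigr) \;=\; \widetilde{\Omega}\bigl(d \cdot d^{(p/2)(1-2/p)}\bigr) \;=\; \widetilde{\Omega}(d^{p/2}).
\]
I would execute this via information complexity: with the $y^{(i)}$ drawn from a product distribution, the mutual information between $Q_p$ and $A$ decomposes as $I(Q_p;A) = \sum_{i=1}^d I(Q_p; y^{(i)} \mid y^{(1)},\dots,y^{(i-1)})$, and I would show that each conditional term is bounded below by the information cost of the base $\ell_p$-estimation problem.

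The main obstacle is that the ``for each'' guarantee only provides constant (not high) success probability per query $e_i$, so the standard direct-sum machinery must be adapted. I would handle this either by running $O(\log d)$ independent copies of the sketch to boost per-query success to $1 - 1/(3d)$ — the $\log d$ overhead is absorbed into $\widetilde\Omega$ — or more cleanly by a Yao-minimax argument within each coordinate: condition on the event that $Q_p(e_i)$ is $(1\pm\eps)$-accurate (which has constant probability) and show that the conditioned sketch still contains, in the sense of mutual information, enough bits about $y^{(i)}$ to drive the base $\widetilde\Omega(m^{1-2/p})$-bit lower bound. The remaining verification is routine: checking that the hard distribution for the base problem can be supported on $O(\log m)$-bit entries so the embedded matrix $A$ satisfies the entry-bit constraint of Definition~\ref{def:SS}, and that the final bound is independent of $n$ beyond $n = \widetilde\Omega(d^{p/2})$.
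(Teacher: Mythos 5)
There is a genuine gap, and it is at the very first step: the base lower bound you invoke does not hold in the model relevant to the subspace sketch problem. The BJKS/CKS bounds of $\Omega(m^{1-2/p})$ for $F_p$ estimation are lower bounds on streaming space or on communication when the input is split among players, i.e., in models where no single party ever holds the whole vector $y$. In Definition~\ref{def:SS} the data structure is an \emph{arbitrary} compression of $A$: it sees the entire matrix during preprocessing. A data structure that only ever has to answer the $d$ queries $e_1,\dots,e_d$ can simply compute each $\|Ae_i\|_p = \|y^{(i)}\|_p$ exactly offline and store a $(1\pm\eps)$-rounding of each value, using $O(\log(1/\eps)+\log\log(nd))$ bits per coordinate, hence $\widetilde{O}(d)$ bits total. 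So the ``single-instance cost'' you want to direct-sum is $\widetilde{O}(1)$ bits, not $\widetilde{\Omega}(m^{1-2/p})$, and the information-complexity decomposition $I(Q_p;A)=\sum_i I(Q_p;y^{(i)}\mid y^{(<i)})$ can only certify $\widetilde{\Omega}(d)$ in this setup --- consistent with the fact that your hard instance genuinely admits a $\widetilde{O}(d)$-bit solution. Boosting the success probability or conditioning on correctness does not help; the obstruction is the model mismatch, not the error probability.

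The paper's proof avoids this by making the hardness come from the richness of the query set rather than from the difficulty of a single norm computation. It takes a code-based set $S\subset\bcube^d$ of $d^p$ nearly orthogonal vectors (pairwise inner products $O_p(\sqrt{d})$), lets the rows of $A$ be an arbitrary $R$-subset of $S$ with $R=\Theta(d^{p/2})$, and observes that querying $Q_p(x)$ for all $x\in S$ reveals exactly which vectors are rows of $A$: if $x$ is a row then $\|Ax\|_p\ge d$, while otherwise $\|Ax\|_p\le C_p\sqrt{d}\,R^{1/p}\ll d$. Since there are $\binom{d^p}{R}\ge e^{R}$ candidate matrices, the data structure must contain $\Omega(\log\binom{d^p}{R})=\Omega(d^{p/2})$ bits of information about $A$ (after an $O(\log d)$-fold repetition to union-bound over the $d^p$ queries, whence the $\widetilde{\Omega}$). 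If you want to salvage a reduction-style argument, the information you extract per query must scale with the number of distinguishable matrices, not with the number of queries; some mechanism like the near-orthogonal query family is needed to pack $\exp(\Omega(d^{p/2}))$ distinguishable instances into a single $n\times d$ matrix.
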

Note that Theorem \ref{thm:even_p_lb} holds even if $p$ is not an even integer, and shows that a lower bound of $d^{\Omega(p)}$ holds for every $p \geq 2$.   

We next turn to concrete applications of Theorems~\ref{thm:main} and~\ref{thm:$M$-estimator_intro}.

\paragraph{Statistics of a Matrix Product.} 
In \cite{WZ18}, an algorithm was given for estimating $\|A \cdot B\|_p$ for integer matrices $A$ and $B$ with $O(\log n)$ bit integer entries (see Algorithm 1 in \cite{WZ18} for the general algorithm). When $p = 0$, this estimates the number of non-zero entries of $A \cdot B$, which may be useful since there are faster algorithms for matrix product when the output is sparse, see \cite{p13} and the references therein. 
More generally, norms of the product $A \cdot B$ can be used to determine how correlated the rows of $A$ are with the columns of $B$. The bit complexity of this problem was studied in~\cite{GWWZ15, WZ18}. In~\cite{GWWZ15} a lower bound of $\Omega(\eps^{-2} \cdot n)$ bits was shown for estimating $\|AB\|_0$ for $n\times n$ matrices $A,B$ up to a $(1+\eps)$ factor, assuming $n \geq 1/\eps^2$ (this lower bound holds already for binary matrices $A$ and $B$). This lower bound implies an $\ell_0$-subspace sketch lower bound of $\Omega(\eps^{-2} \cdot d)$ {\it assuming that} $d\geq 1/\eps^2$. Our lower bound in Theorem~\ref{thm:main} considerably strengthens this result by showing the same lower bound (up to $\polylog(d/\eps)$ factors) for a much smaller value of $d = \Omega(\log(1/\eps))$. For any $p \in [0,2]$, there is a matching upper bound up to polylogarithmic factors (such an upper bound is given implicitly in the description of Algorithm 1 of \cite{WZ18}, where the $\epsilon$ there is instantiated with $\sqrt{\epsilon}$, and also follows from the random sketching matrices $S$ discussed above). 

\paragraph{Projective Clustering.}
In the task of projective clustering, we are given a set $X \subset \mathbb{R}^d$ of $n$ points, a positive integer $k$, and a non-negative integer $j \le d$. 
A center $\cC$ is a $k$-tuple $(V_1, V_2, \ldots, V_k)$, where each $V_i$ is a $j$-dimensional affine subspace in $\mathbb{R}^d$. Given a function $\phi:\R\to\R^{\geq 0}$, the objective is to find a center $\cC$ that minimizes the projective cost, defined to be 
\[
\cost(X, \cC) = \sum_{x \in X} \phi(\dist(x, \cC)),
\] 
where $\dist(x, \cC) = \min_i \dist(x, V_i)$, the Euclidean distance from a point $p$ to its nearest subspace $V_i$ in $\cC = (V_1, V_2, \ldots, V_k)$. The coreset problem for projective clustering asks to design a data structure $Q_\phi$ such that for any center $\cC$, with probability at least $0.9$, $Q_\phi(\cC) = (1\pm \eps)\cost(X,\cC)$. Note that in this and other computational geometry problems, the dimension $d$ may be small (e.g., $d = \log(1/\epsilon$)), though one may want a high accuracy solution. Although possibly far from optimal, surprisingly our lower bound below is the first non-trivial lower bound on the size of coresets for projective clustering.

\begin{theorem}[Informal]\label{thm:projective_clustering_intro}
Suppose that $\phi(t) = |t|^p$ for $p\in [0,\infty)\setminus 2\Z^+$ or $\phi$ is one of the functions in Theorem~\ref{thm:$M$-estimator_intro}.
For $k\geq 1$ and $j = \Omega(\log(k/\eps))$, any coreset for projective clustering requires $\widetilde{\Omega}(\eps^{-2} k j)$ bits.
\end{theorem}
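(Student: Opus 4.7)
The plan is to reduce $k$ independent instances of the dimension-$j$ $\ell_p$ (respectively $M$-estimator) subspace sketch problem to the projective clustering coreset problem. Let $A^{(1)}, \ldots, A^{(k)} \in \R^{n' \times j}$ be drawn independently from the hard distribution underlying Theorem~\ref{thm:main} (respectively Theorem~\ref{thm:$M$-estimator_intro}), where $n' = \widetilde{\Theta}(\eps^{-2} j)$. I construct a point set $X \subset \R^D$ with ambient dimension $D = kj + k$: the first $kj$ coordinates are split into $k$ slots of size $j$, and the last $k$ coordinates form a ``bucket''. The $\ell$-th row of $A^{(i)}$ becomes the point $p_\ell^{(i)}$ with $A_\ell^{(i)}$ placed in slot $i$, zero in the other slots, and $Me_i$ in the bucket, where $M>0$ is a small scalar to be chosen.

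To extract $\|A^{(i^*)} x^{(i^*)}\|_p^p$ for any $i^* \in [k]$ and unit-norm query $x^{(i^*)} \in \R^j$, I query the coreset with the center $\cC = (V_1, \ldots, V_k)$ where $V_{i^*}$ is the $j$-dimensional linear subspace spanned by $\{v \in \mathrm{slot}_{i^*}: v \perp x^{(i^*)}\}$ together with $e_{i^*}$ in the bucket, and $V_{i'} := \mathrm{slot}_{i'}$ (as a $j$-dim linear subspace) for every $i' \neq i^*$. A short distance computation shows that each $p_\ell^{(i^*)}$ has nearest subspace $V_{i^*}$ with $\dist(p_\ell^{(i^*)}, V_{i^*}) = |\langle A_\ell^{(i^*)}, x^{(i^*)}\rangle|$, while each $p_\ell^{(i)}$ for $i \neq i^*$ has nearest subspace $V_i$ at distance exactly $M$. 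Therefore
\[
\cost(X, \cC) = \Phi(A^{(i^*)} x^{(i^*)}) + (k-1)\, n' \,\phi(M),
\]
and the second term is a known constant. Choosing $M$ small enough that $(k-1)n'\phi(M) \leq \eps\, \|A^{(i^*)} x^{(i^*)}\|_p^p$---possible because the hard distribution yields $\|A^{(i^*)} x^{(i^*)}\|_p^p$ of order $n'$, and we can rescale entries to match the desired range---a $(1\pm\eps)$-approximate coreset output, with this known constant subtracted, yields a $(1\pm O(\eps))$-approximation of $\|A^{(i^*)} x^{(i^*)}\|_p^p$.

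Thus a coreset of size $s$ is a valid sketch for each of $k$ independent subspace sketch instances with success probability at least $0.9$, and by a direct sum argument at the level of information complexity---which applies since the lower bound in Theorem~\ref{thm:main} is proved via such a technique, so the bound $\widetilde{\Omega}(\eps^{-2} j)$ survives conditioning on the other $k-1$ instances---we obtain $s = \widetilde{\Omega}(\eps^{-2} k j)$. The hypothesis $j = \Omega(\log(k/\eps))$ implies the $d = \Omega(\log(1/\eps))$ condition required for each single instance. The main technical obstacle is the direct sum step: one must open up the proof of the subspace sketch lower bound to confirm that it yields an information cost bound that cleanly composes over $k$ iid copies. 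A secondary obstacle is the $M$-estimator case, where $\phi$ may saturate or be nonhomogeneous; this is handled by rescaling the $A^{(i)}$ and tuning the parameter $\tau$ so that the relevant distances $|\langle A_\ell^{(i^*)}, x^{(i^*)}\rangle|$ and the separation $M$ both lie in $\phi$'s unsaturated regime, where $\phi$ behaves like $|t|^p$ up to constants, so that Theorem~\ref{thm:$M$-estimator_intro} can be invoked.
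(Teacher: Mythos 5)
Your reduction follows the same overall strategy as the paper---pack $k$ independent hard subspace-sketch instances into one clustering instance, query centers that isolate one instance at a time, and compose the single-instance bound over the $k$ copies---but the packing itself differs, and the difference costs you something. The paper does not give each instance its own $j$-dimensional slot: it prefixes the points of instance $i$ with a codeword $s_i$ of dimension only $D=O(\log k)$ (Lemma~\ref{lem:code}), queries hyperplanes orthogonal to $(t_i,\mathbf{0})$ or $(t_\ell,y)$, and uses $\langle s_i,t_i\rangle=0$ to place every point of instance $i\neq\ell$ \emph{exactly} on its own subspace (contribution $\phi(0)=0$) while the separation $\langle s_\ell,t_i\rangle\geq L^2$ forces instance $\ell$'s points to select $W_\ell$; hence $\cost(X,\cC_\ell)=\Phi(A^{(\ell)}y/\|w_\ell\|_2)$ with no additive term (Lemma~\ref{lem:hard_instance_clustering}), and the factor $j$ is recovered afterwards by block-diagonal padding (Corollary~\ref{cor:clustering2}). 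This is exactly why the hypothesis reads $j=\Omega(\log(k/\eps))$ and why the paper's bound holds already in ambient dimension $d=j+1$, the regime the paper emphasizes. Your slot construction lives in ambient dimension $kj+k$, so as written it proves the lower bound only when $d=\Omega(kj)$, a strictly weaker statement: for a fixed small $d$ your construction cannot accommodate both $k$ and $j$ large, whereas the paper's can take $k$ exponentially large in $j$. Your worry about the direct-sum step, on the other hand, is not a real obstacle: the single-instance bound is an $\mathsf{INDEX}$ bound (Lemma~\ref{lem:lb_game}), i.e., the hard matrix encodes $\widetilde\Omega(\eps^{-2}j)$ independent random bits any one of which is recoverable from a single query with constant probability, and $k$ independent copies simply encode $k$ times as many bits, exactly as in the padding argument of Corollary~\ref{cor:comm_lb_eps}; no information-complexity machinery is needed.

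Two smaller points about your construction. First, the bucket coordinate with $M>0$ is both unnecessary and, for $p=0$, fatal as written: $\phi(M)=1$ for every $M\neq0$, so $(k-1)n'\phi(M)\leq\eps\,\Phi(A^{(i^*)}x^{(i^*)})$ cannot hold, and after subtracting the known constant the residual error $\eps\bigl(\Phi(A^{(i^*)}x^{(i^*)})+(k-1)n'\bigr)$ is of order $\eps k$ relative to the quantity you want. Setting $M=0$ fixes everything: the non-queried points then lie on their own subspaces and contribute nothing for every $\phi$ in the statement (mirroring the paper's zero-contribution design), while $e_{i^*}$ remains available to pad $V_{i^*}$ up to dimension $j$, and no additive term ever appears. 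Second, for the $M$-estimators you do not get to ``tune $\tau$''---it is part of the problem definition. What you actually need, and what does hold in your construction, is that the queried distances equal $|\langle A^{(i^*)}_\ell,x^{(i^*)}\rangle|$ exactly, so the cost restricted to instance $i^*$ is literally $\Phi(A^{(i^*)}x^{(i^*)})$ and the $M$-estimator subspace-sketch lower bounds apply verbatim; the rescaling into the regime where $\phi$ behaves like a power function is already built into those hard instances (Theorem~\ref{thm:power_function}), not something the clustering reduction must arrange.
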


\paragraph{Linear Regression.}
In the linear regression problem, there is an $n \times d$ data matrix $A$ and a vector $b \in \mathbb{R}^n$. 
The goal is to find a vector $x \in \mathbb{R}^d$ so as to minimize $\Phi(Ax - b)$, where $\Phi(v) = \sum_i \phi(v_i)$ for $v\in \R^n$ and some $\phi:\R\to\R^{\geq 0}$.
Here we consider streaming coresets for linear regression in the row-update model.
In the row-update model, the streaming coreset is updated online during one pass over the $n$ rows of $\begin{pmatrix}A & b\end{pmatrix}$, and outputs a $(1 \pm \varepsilon)$-approximation to the optimal value $\min_x \Phi(Ax - b)$ at the end. 
By a simple reduction, our lower bound for the subspace sketch problem implies lower bounds on the size of streaming coresets for linear regression in the row-update model.
To see this, we note that by taking sufficiently large $\lambda$, 
$$\min_y \left( \Phi(Ay) + \lambda \Phi(x - y)\right) = \Phi(Ax).$$
Thus, a streaming coreset for linear regression can solve the subspace sketch problem, which we formalize in the following corollary.

\begin{corollary}
Suppose that $\phi(t) = |t|^p$ for $p\in [0,\infty)\setminus 2\Z^+$ or $\phi$ is one of the functions in Theorem~\ref{thm:$M$-estimator_intro}.
Any streaming coreset for linear regression in the row-update model requires $\widetilde{\Omega}\left(\varepsilon^{-2} d \right)$ bits when $d = \Omega(\log(1/\eps))$.
\end{corollary}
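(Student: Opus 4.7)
The plan is a direct reduction from the subspace sketch problem to streaming linear regression in the row-update model, using the identity displayed just before the statement. Let $\mathcal{A}$ be any streaming coreset for regression that uses $s$ bits of memory. Given a subspace sketch instance $A\in\R^{n\times d}$, I feed the $n$ rows of $A$ (each augmented with response value $0$) to $\mathcal{A}$ as the first $n$ items of the stream; let $\sigma$ denote the resulting state, which satisfies $|\sigma|\leq s$. The subspace sketch data structure will be $\sigma$ itself, and to answer a query $x\in\R^d$ we resume $\mathcal{A}$ from this state.

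Concretely, to answer a query $x$, I stream in $N$ copies of each row $(e_i^\top, x_i)$ for $i=1,\dots,d$, where $N$ is a positive integer polynomial in $nd/\eps$ to be fixed below, and output whatever estimate $\mathcal{A}$ returns at the end of the extended stream. Since $\Phi$ is entry-decomposable, the full regression objective becomes
\begin{equation*}
    \Phi(A'y-b') \;=\; \Phi(Ay) + N\,\Phi(y-x),
\end{equation*}
whose minimum over $y$ is at most $\Phi(Ax)$ (take $y=x$). For the matching lower bound, I choose $N$ large enough that any $y$ with $\Phi(Ay)+N\Phi(y-x)<(1-\eps)\Phi(Ax)$ must satisfy $\Phi(y-x)<\Phi(Ax)/N$, which forces $\|y-x\|$ to be so small that continuity of $\phi$ near $0$ combined with a $\poly(nd)$ bound on $\|A\|$ (available because $A$ has $O(\log(nd))$-bit entries) yields $\Phi(Ay)\geq (1-\eps)\Phi(Ax)$, a contradiction. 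Hence $\min_y \Phi(A'y-b') = (1\pm\eps)\Phi(Ax)$, so $\mathcal{A}$'s output is a $(1\pm O(\eps))$-approximation to $\Phi(Ax)$. The induced subspace sketch data structure has size $|\sigma|\leq s$, and Theorem~\ref{thm:main} (for $\phi(t)=|t|^p$) or Theorem~\ref{thm:$M$-estimator_intro} (for the listed $M$-estimators) immediately gives $s=\widetilde{\Omega}(\eps^{-2}d)$ whenever $d=\Omega(\log(1/\eps))$.

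The main obstacle is making the ``small perturbation'' step precise for each $\phi$ in the statement: for $\phi(t)=|t|^p$ it reduces to homogeneity plus the triangle inequality, while for the Huber, Cauchy, fair, $L_1$-$L_2$, and Tukey-$p$ estimators one exploits local Lipschitzness of $\phi$ near $0$ together with the crude operator norm bound on $A$. A secondary bookkeeping point is that the extended stream has $n+Nd=\poly(nd/\eps)$ rows with $O(\log(nd/\eps))$-bit entries, but since Theorems~\ref{thm:main} and~\ref{thm:$M$-estimator_intro} provide lower bounds independent of $n$, this overhead is absorbed into the $\widetilde{\Omega}$ notation.
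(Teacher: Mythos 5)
Your reduction is exactly the one the paper uses: stream the rows of $A$ with response $0$, keep the coreset's state as the subspace sketch, and realize the penalty term so that $\min_y\bigl(\Phi(Ay) + N\,\Phi(y-x)\bigr)$ is a $(1\pm O(\eps))$-approximation of $\Phi(Ax)$ for a sufficiently large $N=\poly(nd/\eps)$, then invoke Theorems~\ref{thm:main} and~\ref{thm:$M$-estimator_intro}. Your write-up in fact fills in more detail (the approximate version of the identity and its verification for each $\phi$) than the paper's two-line argument, so it is correct and follows essentially the same approach.
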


\paragraph{Subspace Embeddings.} Let $p\geq 1$. Given $A\in \R^{n\times d}$, the $\ell_p$ subspace embedding problem asks to find a linear map $T:\R^n\to \R^r$ such that for all $x \in \mathbb{R}^d$, 
\begin{equation}\label{eqn:l1ss}
(1-\eps)\|Ax\|_p \leq \|TAx\|_p\leq (1 + \eps)\|Ax\|_p.
\end{equation}
The smallest $r$ which admits a $T$ for every $A$ is denoted by $N_p(d,\eps)$, which is of great interest in functional analysis. When $T$ is allowed to be random, we require \eqref{eqn:l1ss} to hold with probability at least $0.9$. This problem can be seen as a special case of the ``for-all'' version of the subspace sketch problem in Definition~\ref{def:SS}. 
In the for-all version of the subspace sketch problem, the data structure $Q_p$ is 
required to, with probability at least $0.9$, satisfy $Q_p(x) = (1\pm \eps)\|Ax\|_p$ 
simultaneously for all $x\in\R^d$.
In this case, the same lower bound of $\widetilde{\Omega}(\eps^{-2} \cdot d)$ bits holds 
for $p\in [1,\infty)\setminus 2\Z$. 

Since the data structure can store $T$ if it exists, we can turn our bit lower bound into a dimension lower bound on $N_p(d,\eps)$. Doing so will incur a loss of an $\widetilde{O}(d)$ factor (Theorem~\ref{thm:conversion_to_dimension_lb}). 
We give an $\widetilde\Omega(\eps^{-2})$ lower bound, which is the first such lower bound giving a dependence on $\eps$ for general $p$. 

\begin{corollary}\label{cor:N_p(d,eps)}
Suppose that $p\in [1,\infty)\setminus 2\Z$ and $d=\Omega(\log(1/\eps))$. It holds that $N_p(d,\eps) = \widetilde\Omega(\eps^{-2})$.
\end{corollary}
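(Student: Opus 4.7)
The plan is to convert the bit-complexity lower bound for the for-all version of the $\ell_p$ subspace sketch problem (which, as noted just above the corollary, is $\widetilde\Omega(\eps^{-2}d)$ bits for $p\in[1,\infty)\setminus 2\Z$ and $d=\Omega(\log(1/\eps))$) into a dimension lower bound on $N_p(d,\eps)$ via Theorem~\ref{thm:conversion_to_dimension_lb}. The natural reduction is the following: given an embedding $T\in\R^{r\times n}$ certifying $N_p(d,\eps)\leq r$ on the hard instance $A$ from the sketch lower bound, store $M:=TA\in\R^{r\times d}$ and answer a query $x$ by returning $\|Mx\|_p$. If the entries of $M$ can be discretized to $O(\log(nd/\eps))$ bits each without damaging the $(1\pm\eps)$-approximation, then the resulting data structure uses $\widetilde O(rd)$ bits, so the for-all lower bound forces $r=\widetilde\Omega(\eps^{-2})$.

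I would carry this out in three steps. First, fix the hard instance $A$ and the finite family of query vectors implicit in the for-all sketch lower bound, noting that $A$ has $O(\log(nd))$-bit integer entries. Second, normalize the embedding $T$ (for example, by right-multiplying by a well-conditioned $\ell_p$ basis of the column span of $A$, such as an Auerbach or Lewis-weights basis, both of which exist with $\poly(d)$ condition number for $p\geq 1$) so that the rows have bounded norm and $\|M\|_\infty\leq\poly(nd/\eps)$. Third, round each entry of $M$ to the nearest multiple of $1/\poly(nd/\eps)$; a union bound over queries then shows that the resulting additive perturbation of $\|Mx\|_p$ is at most $\eps\|Ax\|_p$, so the $(1\pm\eps)$ guarantee degrades by at most a further $O(\eps)$ factor, which is absorbed by replacing $\eps$ with $\eps/C$ for a constant $C$.

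The main obstacle is the normalization step: $T$ is an arbitrary real matrix realizing the embedding dimension, and one must argue, without loss of generality, that the entries of $TA$ on the hard instance are polynomially bounded in $n,d,1/\eps$. Preconditioning by a well-conditioned basis of the column span of $A$ handles this, and the bounded-bit assumption on $A$ is used crucially to control the dynamic range. Once this is in place the bit count becomes $\widetilde O(rd)$, and the $\widetilde O(d)$-factor loss noted in Theorem~\ref{thm:conversion_to_dimension_lb} is precisely the $d$ coordinates stored per row of $M$; hence the dimension bound is $\widetilde\Omega(\eps^{-2})$ rather than $\widetilde\Omega(\eps^{-2}d)$.
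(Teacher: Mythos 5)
Your proposal is correct and follows essentially the same route as the paper: it invokes Theorem~\ref{thm:conversion_to_dimension_lb} (store a rounded version of $TA$ as a for-all subspace sketch) on the hard instance and plays it against the $\widetilde\Omega(\eps^{-2}d)$ bit lower bound, losing only the $\widetilde O(d)$ factor from the $d$ entries per row. The only (minor) difference is in controlling the dynamic range for rounding: you precondition with an Auerbach/Lewis well-conditioned basis, whereas the paper's Theorem~\ref{thm:conversion_to_dimension_lb} already carries a $\kappa(A)$ dependence and the corollary simply uses that the hard instance satisfies $\kappa(A)=O(1)$ (proved via Khintchine's inequality), so no extra preconditioning is needed.
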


The dependence on $\eps$ in this lower bound is tight, up to $\polylog(1/\eps)$ factors, for all values of $p \in [1, \infty) \setminus 2\Z$ ~\cite{schechtman:l_r^n}.  When $p\in 2\Z$, no lower bound with a dependence on $\eps$ should exist, since a $d$-dimensional subspace of $\ell_p^n$ always embeds into $\ell_p^r$ isometrically with $r = \binom{d+p-1}{p}-1$~\cite{handbook:21}. See more discussion below in Section~\ref{sec:func_anal} on functional analysis.
We also prove a bit complexity lower bound for the aforementioned for-all version of the subspace sketch problem. 
We refer the reader to Section~\ref{sec:forall_lb_poly(d)} for details.

\begin{theorem}
Let $p\geq 1$ be a constant. Suppose that $\eps > 0$ is a constant. The for-all version of the subspace sketch problem requires $\Omega(d^{\max\{p/2,1\}+1})$ bits.
\end{theorem}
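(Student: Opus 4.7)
The plan is to prove the lower bound via an encoding/packing argument. I would construct a family $\mathcal{F}$ of $n\times d$ candidate matrices of cardinality $|\mathcal{F}| = 2^{\Omega(d^{\max\{p/2,1\}+1})}$, such that for any two distinct $A,A'\in\mathcal{F}$, some query vector $x\in\mathbb{R}^d$ witnesses $\|Ax\|_p\notin (1\pm 3\varepsilon)\|A'x\|_p$. Since the for-all sketch is required to approximate $\|Ax\|_p$ within a factor $1\pm\varepsilon$ simultaneously for every $x$ with probability at least $0.9$, distinct elements of $\mathcal{F}$ force distinct internal states with probability bounded away from $1/2$, and a standard Yao-minimax encoding argument then turns this into a bit lower bound of $\log_2|\mathcal{F}| = \Omega(d^{\max\{p/2,1\}+1})$.

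The choice of the ambient row dimension $n$ controls the regime. For $p\in [1,2]$, I would take $n = \Theta(d)$ and let $\mathcal{F}$ consist of entrywise-discretized matrices (e.g., with $\pm 1$ entries, or Gaussian entries rounded at scale $1/\poly(d)$); a direct volume count produces $2^{\Theta(nd)} = 2^{\Theta(d^2)}$ candidates. For $p > 2$, I would take $n = \Theta(d^{p/2})$, which is dictated by the classical subspace embedding lower bound $N_p(d,O(1)) = \Omega(d^{p/2})$ of Bourgain-Lindenstrauss-Milman and Figiel-Johnson-Schechtman, and the same entrywise discretization yields $2^{\Theta(nd)} = 2^{\Theta(d^{p/2+1})}$ candidates. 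In both regimes, the ambient dimension is chosen exactly large enough that a generic element of $\mathcal{F}$ has column span admitting no nontrivial $\ell_p$-compression, so the sketch truly needs $nd$ bits' worth of information to encode it.

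The main obstacle is establishing pairwise distinguishability for \emph{every} pair in $\mathcal{F}$, not merely a random pair, and doing so while losing only constant factors. A promising route is to argue that a for-all sketch effectively quantizes the Grassmannian of $d$-dimensional subspaces of $\ell_p^n$ modulo the group of linear $\ell_p$-isometries (which for $p\neq 2$ is just the finite group of signed permutations): the $\varepsilon$-packing number of this quotient in the Banach-Mazur metric is $\exp(\Theta(nd))$ by a standard Gaussian-width/volume computation, and the bit cost of the sketch is at least its logarithm. A subtlety appears when $p$ is not an even integer, since $\|\cdot\|_p^p$ is no longer a polynomial and the clean tensor-moment counting available in the $p\in 2\Z^+$ case does not carry over; here one would instead invoke anti-concentration of the linear forms induced by $\mathcal{F}$ together with a net-versus-packing duality to ensure that distinct elements in a properly perturbed family genuinely separate at some test vector rather than merely differ generically.
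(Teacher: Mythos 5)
Your high-level plan (a large pairwise-distinguishable family of matrices plus an information-theoretic/encoding step) is the same skeleton the paper uses, but the proposal leaves the actual theorem unproved: the entire content of the result is the construction of a family of size $\exp(\Omega(d^{\max\{p/2,1\}+1}))$ that is \emph{pairwise} separated by a constant multiplicative factor at some query point, and neither of your two suggested mechanisms delivers it. A ``direct volume count'' of entrywise-discretized matrices does not work: two matrices differing by $1/\poly(d)$ perturbations induce norms $x\mapsto\|Ax\|_p$ that agree within $1\pm\varepsilon$ for every $x$ when $\varepsilon$ is a constant, so the number of discretized matrices wildly overcounts the packing you need, and a priori the packing could be much smaller than $2^{\Theta(nd)}$ (for even integers $p$ the function $x\mapsto\|Ax\|_p^p$ has only $O(d^p)$ degrees of freedom, and for $p=2$ only $O(d^2)$, independent of $n$). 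Your second route, packing the Grassmannian of $d$-dimensional subspaces of $\ell_p^n$ modulo $\ell_p$-isometries in the Banach--Mazur metric, is both unsubstantiated (the claimed $\exp(\Theta(nd))$ entropy is exactly what has to be proved) and measured in the wrong metric: Banach--Mazur quotients by isomorphisms of the domain, whereas the sketch approximates $\|Ax\|_p$ with the identity identification of $\mathbb{R}^d$; most tellingly, for $p=2$ all $d$-dimensional subspaces of $\ell_2^n$ are isometric, so that quotient is a single point, yet the theorem still asserts an $\Omega(d^2)$ bound. The vague appeal to ``anti-concentration plus net-versus-packing duality'' at the end is precisely the missing proof, not a sketch of one.

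What the paper actually does to close this gap: for $p\ge 2$ it takes $N=c_p d^{p/2}$ and random sign matrices, and uses Talagrand's concentration inequality (via convexity and $1$-Lipschitzness of $A\mapsto\|Ax\|_p$, plus Khintchine for the mean) to show that for a fixed ``non-bad'' $S\in\{\pm1\}^{N\times d}$ and a fresh random $T$, with probability $1-e^{-\Omega(Nd)}$ some row $T_i$ of $T$ satisfies $\|ST_i\|_p\le Cc_p^{1/p}\sqrt{p}\,d$, while trivially $\|TT_i\|_p\ge d$ since $\langle T_i,T_i\rangle=d$; a union bound over $\exp(cNd)$ sampled matrices then yields a multiset $\mathcal{S}$ with $\log|\mathcal{S}|=\Omega(Nd)=\Omega(d^{p/2+1})$ whose elements are pairwise distinguishable by querying rows (which the for-all guarantee covers simultaneously), and recovery of $A$ gives the bit bound. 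The range $1\le p<2$ is then handled not by a separate packing but by a reduction from $p=2$, using the fact that $\ell_2^n$ embeds into $\ell_p^m$, $m=O(n)$, with constant distortion via a rescaled Rademacher matrix. If you want to salvage your write-up, you should replace the volume/Banach--Mazur reasoning with an explicit concentration-based separation lemma of this kind (or an equivalent quantitative anti-concentration statement proved, not assumed), and add the embedding reduction for $p<2$, since the direct $\pm1$ construction with $n=\Theta(d)$ does not obviously give constant-factor separation there.
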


This lower bound immediately implies a dimension lower bound of $N_p(d,\eps)=\widetilde{\Omega}(d^{\max\{p/2,1\}})$ for the subspace embedding problem for constant $\eps$, recovering existing lower bounds (up to logarithmic factors), which are known to be tight.

\paragraph{Sampling by Lewis Weights.} While it is immediate that $N_p(d,\eps)\geq d$, our lower bound above thus far has not precluded the possibility that $N_p(d,\eps) = \widetilde{O}(d + 1/\eps^2)$. However, the next corollary, which lower bounds the target dimension for sampling-based embeddings, indicates this is impossible to achieve using a prevailing existing technique. 

\begin{corollary}
Let $p\geq 1$ and $p\notin 2\Z$. 
Suppose that $Q_p(x) = \|TAx\|_p^p$ solves the $\ell_p$ subspace sketch problem for some $T\in \R^{r\times n}$ for which each row of $T$ contains exactly one non-zero element. Then $r = \widetilde\Omega(\eps^{-2} d)$, provided that $d = \Omega(\log(1/\eps))$.
\end{corollary}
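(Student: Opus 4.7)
The plan is to reduce the corollary to Theorem~\ref{thm:main} via a bit-counting argument: because $T$ has only one non-zero per row, its description is very compact, and the $\widetilde\Omega(\eps^{-2}d)$ bit lower bound then forces $r$ to be large.

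\textbf{Step 1: Compact encoding of $T$.} I would first argue that $T$ is encodable in $\widetilde{O}(r)$ bits. Since each row of $T$ has exactly one non-zero, $T$ is completely specified by the $r$ (index, weight) pairs $(s_i, w_i) \in [n]\times\R$. The row indices contribute $O(r\log n)$ bits. For the weights, a standard perturbation argument---using that $A$ has integer entries of magnitude $\poly(nd)$, so $\|Ax\|_p^p$ is polynomially bounded in the relevant range of $x$---shows that each $w_i$ can be rounded to precision $\eps/\poly(nd)$ without spoiling the $(1\pm\eps)$-approximation. Thus $O(\log(nd/\eps))$ bits per weight suffice, and $T$ is encodable in $O(r\polylog(nd/\eps))$ bits.

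\textbf{Step 2: Invoke Theorem~\ref{thm:main}.} For $p\in[1,\infty)\setminus 2\Z$ and $d=\Omega(\log(1/\eps))$, Theorem~\ref{thm:main} implies that any data structure solving the $\ell_p$ subspace sketch problem uses at least $\widetilde\Omega(\eps^{-2}d)$ bits. Identifying the sketching procedure with its parameterization by $T$ and combining with Step~1, one obtains
\[
r \cdot \polylog(nd/\eps) \;\ge\; \widetilde\Omega(\eps^{-2}d),
\]
and hence $r=\widetilde\Omega(\eps^{-2}d)$, as claimed.

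\textbf{Main obstacle.} The delicate point is the bit accounting. A self-contained data structure could in principle store all $rd$ entries of $TA$ explicitly, costing $\widetilde O(rd)$ bits and giving only $r=\widetilde\Omega(\eps^{-2})$ from Theorem~\ref{thm:main}---losing the crucial factor of $d$. Recovering that factor requires exploiting that the only algorithmic freedom of a sampling sketch lies in the $r$ indices and $r$ weights of $T$. One way to make this precise is to replay the communication-complexity reduction underlying Theorem~\ref{thm:main} with the hard matrix $A$ treated as \emph{public input} shared by both players; in that model the sketch is described purely by $T$, so the $\widetilde\Omega(\eps^{-2}d)$-bit lower bound falls directly on the $O(r\polylog(nd/\eps))$-bit encoding of $T$, and the conclusion $r=\widetilde\Omega(\eps^{-2}d)$ follows.
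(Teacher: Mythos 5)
Your Step 1 and the final bit-counting comparison are in the right spirit, and you correctly identified the real difficulty (a generic data structure could just store $TA$, costing $\widetilde{O}(rd)$ bits and losing the factor $d$). But the fix you propose for that difficulty does not work. The $\widetilde{\Omega}(\eps^{-2}d)$ lower bound of Theorem~\ref{thm:main} (Theorem~\ref{thm:comm_lb_eps}/Corollary~\ref{cor:comm_lb_eps}) is a reduction from $\mathsf{INDEX}$ in which the hidden random bits are encoded \emph{inside the matrix $A$ itself} (in the diagonal scalings of the hard instance); if $A$ is made public to the querying party, the querier can compute $\|Ax\|_p$ exactly with zero stored bits, so in that model there is no lower bound at all to ``fall on'' the encoding of $T$. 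Hence you cannot apply the bit lower bound to a description of $T$ alone: the data structure must be charged for whatever information about $A$ it retains, which is exactly the information contained in $TA$.

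What actually closes the gap in the paper (Theorem~\ref{thm:sampling_based} and its for-each corollary) is the \emph{structure of the hard instance}, not a change of communication model: the hard $A$ is block-diagonal with $k=\Theta(d/\log(1/\eps))$ blocks of size $2^s\times s$, $s=\Theta(\log(1/\eps))$, and every row of $A$ is a $\pm 1$ vector on only $s$ coordinates times a scalar in $[(2\sqrt{s})^{1/p},(8\sqrt{s})^{1/p}]$. Therefore each of the $r$ rows of $TA$ is described by a block index, an $s$-bit sign pattern, and a single scalar $t_i\tilde y_{j(i)}$, so \emph{all} of $TA$ --- not just $T$ --- can be encoded in $\widetilde{O}(r+\eps^{-2})$ bits; comparing this with the $\widetilde{\Omega}(\eps^{-2}d)$ bound then gives $r=\widetilde{\Omega}(\eps^{-2}d)$. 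Two further points that your write-up glosses over are needed to make even the encoding of the weights legitimate: (i) rounding to precision $\eps^{1/p}$ only gives $O(\log(\cdot))$ bits per weight once you also have a \emph{magnitude} bound on the $t_i$, which the paper extracts by applying the approximation guarantee at the canonical basis vectors $e_j$ and an AM--GM argument (plus a separate treatment of the degenerate case $r<k$); and (ii) in the for-each model those guarantees at the $e_j$'s each hold only with probability $0.9$, so one must amplify with $O(\log d)$ independent copies, which is where the extra $\log d$ loss in the paper's corollary comes from.
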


The same lower bound holds for the for-all version of the $\ell_p$ subspace sketch problem. 
As a consequence, since the upper bounds of $N_p(d,\eps)$ in \eqref{eqn:N_ub} for $1\leq p<2$ are based on subsampling with the ``change of density'' technique (also known as sampling by Lewis weights~\cite{lewis_sampling}), they are, within the framework of this classical technique, best possible up to polylog$(d/\epsilon)$ factors. 

\paragraph{Oblivious Sketches.}
For the for-all version of the $\ell_p$ subspace sketch problem, we note that there exist general sketches such as the Cauchy sketch~\cite{clarkson2016fast} which are beyond the reach of the corollary above. Note that the Cauchy sketch is an oblivious sketch, which means the distribution is independent of $A$. We also prove a dimension lower bound of $\widetilde{\Omega}(\eps^{-2} \cdot d)$ on the target dimension for oblivious sketches (see Section~\ref{sec:OSE}), which is tight up to logarithmic factors since the Cauchy sketch has a target dimension of $O(\eps^{-2} d \log(d/\eps))$.

\begin{theorem}[Informal]
Let $p\in [1,2)$ be a constant. Any oblivious sketch that solves the for-all version of the $\ell_p$ subspace sketch problem has a target dimension of $\widetilde{\Omega}(\eps^{-2} d)$.
\end{theorem}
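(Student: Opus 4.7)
The plan is to reduce the oblivious-sketch dimension lower bound to a classical non-embeddability fact: for $p\in [1,2)$, any linear embedding of $\ell_2^d$ into $\ell_p^r$ with distortion $(1+\eps)$ requires $r=\widetilde{\Omega}(\eps^{-2}d)$. I would exhibit a hard matrix $A_\star$ whose column space is a $(1+\eps)$-isomorphic copy of $\ell_2^d$ inside $\ell_p^n$, and then argue that any oblivious sketch of target dimension $r$ that solves the for-all problem on $A_\star$ induces such an embedding into $\ell_p^r$, forcing $r$ to satisfy the same bound.

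First, I would fix $A_\star\in \R^{n\times d}$ with $n=\poly(d/\eps)$ and $O(\log(nd))$-bit entries such that $\|A_\star x\|_p=(1\pm\eps)\|x\|_2$ for every $x\in\R^d$. Such an $A_\star$ exists because $\ell_2^d$ embeds $(1+\eps)$-isomorphically into $\ell_p^n$ for $n=\poly(d/\eps)$, e.g.\ via (rounded) Gaussian or $p$-stable projections, and can be realized with the required bit complexity by a net argument. Its column space $V=\text{colsp}(A_\star)\subset \ell_p^n$ is then a $(1+\eps)$-isomorphic copy of $\ell_2^d$.

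Second, let $S\in\R^{r\times n}$ be an oblivious sketch succeeding on $A_\star$ with probability at least $0.9$. Conditioning on success, the decoder evaluated on $SA_\star$ returns $(1\pm \eps)\|A_\star x\|_p=(1\pm O(\eps))\|x\|_2$ for every $x\in\R^d$. In the natural case where the decoder is $\|\cdot\|_p$ of the sketched vector, the linear map $x\mapsto SA_\star x$ is directly a $(1+O(\eps))$-distortion embedding of $\ell_2^d$ into $\ell_p^r$, so the non-embeddability bound yields $r=\widetilde{\Omega}(\eps^{-2}d)$. For a general decoder, I would instead pass through an $\eps$-net for the unit sphere of $\ell_2^d$ of size $\exp(\Omega(d\log(1/\eps)))$: the sketch $SA_\star$ must allow the decoder to distinguish the $\ell_2$-norms of all net points up to $(1\pm O(\eps))$, which together with the linearity and obliviousness of $S$ forces a target dimension of $r=\widetilde{\Omega}(\eps^{-2}d)$, matching the Cauchy-sketch upper bound of $O(\eps^{-2} d \log(d/\eps))$.

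The main obstacle is the arbitrary-decoder case. The subspace-sketch formalism permits any measurable function of $SA_\star$, so one cannot immediately identify $SA_\star$ with an honest $\ell_p$-norm-preserving linear map. To bridge the gap, I expect to either (i) argue that any valid decoder on $A_\star$ is effectively equivalent to $\|\cdot\|_p$ of the sketch up to the target distortion on $V$, so the embedding viewpoint goes through, or (ii) prove a refined bit/dimension trade-off directly: after discretizing $S$ to $O(\log(nd/\eps))$-bit entries via a standard net and Nisan's PRG, combine the $\widetilde{\Omega}(\eps^{-2}d)$ for-all bit lower bound with a JL-style encoding argument over the $\exp(\Omega(d))$-sized $\eps$-net in $V$, forcing the linear map $SA_\star$ itself (as opposed to just the decoder output) to preserve the net geometry and hence to have $r=\widetilde{\Omega}(\eps^{-2}d)$ rows.
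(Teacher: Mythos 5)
There is a genuine gap, in fact two. First, the ``classical non-embeddability fact'' your whole reduction rests on --- that any $(1+\eps)$-distortion linear embedding of $\ell_2^d$ into $\ell_p^r$, $1\leq p<2$, forces $r=\widetilde{\Omega}(\eps^{-2}d)$ --- is not a known result. The known lower bounds are only $r=\Omega(d)$ at constant distortion (tightness of Dvoretzky's theorem) and, for $p=1$, the Bourgain--Lindenstrauss--Milman bound $r\geq c(d)\eps^{-2(d-1)/(d+2)}$; for constant $d$ the matching zonotope upper bound $c(d)\eps^{-2(d-1)/(d+2)}$ shows your claimed fact is actually \emph{false} in that regime, and for growing $d$ it is essentially the open problem stated in this paper (whether $N_p(d,\eps)=\widetilde{\Omega}(d/\eps^2)$); the paper's own embedding-dimension lower bound (Corollary~\ref{cor:N_p(d,eps)}) is only $\widetilde{\Omega}(\eps^{-2})$, with no factor of $d$. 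Second, even granting such an embedding bound, your handling of the arbitrary decoder does not close the loop: option (i) (``any valid decoder is effectively $\|\cdot\|_p$ of the sketch'') is asserted, not proved, and the definition of an oblivious sketch allows any recovery map $\mathcal{A}$; option (ii) cannot work quantitatively, because discretizing and storing $SA_\star\in\R^{r\times d}$ costs $O(rd\log(nd/\eps))$ bits, so the for-all bit lower bound of $\widetilde{\Omega}(\eps^{-2}d)$ only yields $r=\widetilde{\Omega}(\eps^{-2})$ --- the same $\widetilde{O}(d)$-factor loss the paper already flags in Theorem~\ref{thm:conversion_to_dimension_lb}. So neither branch of your argument reaches the target dimension $\widetilde{\Omega}(\eps^{-2}d)$.

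The paper's proof (Theorem~\ref{thm:OSE_lb}) sidesteps both obstacles by a distributional hypothesis-testing argument rather than a reduction to embeddings: it compares $\cL_1$ (an $n\times d$ Gaussian matrix, $n=\Theta(d\eps^{-2}\log(1/\eps))$) with $\cL_2$ (the same plus a rank-one spike $\sigma uv^T$ with $\sigma=\alpha\sqrt{\eps/d}$), uses Gaussian concentration and a Dvoretzky-type net argument to show that $\sup_{x\in\bS^{d-1}}\|Ax\|_p$ is at most $(1+\eps)\beta_p n^{1/p}$ under $\cL_1$ but at least $(1+4\eps)\beta_p n^{1/p}$ under $\cL_2$, so any for-all oblivious sketch with its (arbitrary) recovery algorithm distinguishes the two distributions, and then invokes the linear-sketching lower bound of Li--Woodruff for this spiked-matrix detection problem, giving $md\geq c/\sigma^4=c'd^2/\eps^2$ and hence $m=\Omega(d/\eps^2)$. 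If you want to salvage your plan, you would either need to prove the $\widetilde{\Omega}(\eps^{-2}d)$ embedding lower bound for $\ell_2^d$ into $\ell_p^r$ (a strong new result) together with a genuine argument that an arbitrary decoder forces $SA_\star$ itself to be a low-distortion embedding, or switch to a distributional argument of the paper's type.
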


Therefore, it is natural to ask in general whether $N_p(d,\eps) = \widetilde{\Omega}(d/\eps^2)$. A proof using the framework of this paper would require an $\widetilde\Omega(d^2/\eps^2)$ lower bound for the for-all version of the $\ell_p$ subspace sketch problem. We conjecture it is true; however, our current methods, giving almost-tight lower bounds (in the for-each sense), do not extend to give this result and so we leave it as a main open problem.
 
\subsection{Connection with Banach Space Theory}\label{sec:func_anal}
In the language of functional analysis, the $\ell_p$ subspace embedding problem is a classical problem in the theory of $L_p$ spaces with a rich history.
For two Banach spaces $X$ and $Y$, we say $X$ $K$-embeds into $Y$, 
if there exists an injective homomorphism $T:X\to Y$ satisfying $\|x\|_X\leq \|Tx\|_Y\leq K\|x\|_X$ for all $x\in X$. Such a $T$ is called an \emph{isomorphic embedding}. A classical problem in the theory of Banach spaces is to consider the isomorphic embedding of finite-dimensional subspaces of $L_p = L^p(0,1)$ into $\ell_p^n = (\R^n,\|\cdot\|_p)$, where $p\geq 1$ is a constant. Specifically, the problem asks what is the minimum value of $n$, denoted by $N_p(d,\eps)$, for which all $d$-dimensional subspaces of $L_p$ $(1+\eps)$-embed into $\ell_p^n$. A comprehensive survey of this problem can be found in~\cite{handbook:19}. 

The case of $p=2$ is immediate, in which case one can take $n=d$ and $\eps = 0$, obtaining an isometric embedding, and thus we assume $p\neq 2$. We remark that, when $p$ is an even integer, it is also possible to attain an isometric embedding into $\ell_p^n$ with $n=\binom{d+p-1}{p}-1$~\cite{handbook:21}. In general, the best\footnote{A few upper bounds for the case $p\in (2,\infty)\setminus 2\Z$ are known, none of which dominates the rest. Here we choose the one having the best dependence on both $d$ and $\eps$, up to $\polylog(d/\eps)$ factors.} 
known upper bounds on $N_p(d,\eps)$ are as follows.
\begin{equation}\label{eqn:N_ub}
N_p(d,\eps) \leq \begin{cases}
								C\eps^{-2}d \log d &\quad p=1\\
								C\eps^{-2}d (\log \eps^{-2}d)(\log \log\eps^{-2}d+\log(1/\eps))^2 & \quad p\in (1,2)\\
							 	C_p\eps^{-2}d^{p/2}\log^2 d\log(d/\eps) & \quad p\in (2,\infty)\setminus 2\Z\\
							 	C\eps^{-2} (10d/p)^{p/2}& \quad p\in 2\Z,
							\end{cases}
\end{equation}
where $C > 0$ is an absolute constant and $C_p>0$ is a constant that depends only on $p$.
The cases of $p=1$ and $p\in (1,2)$ are due to Talagrand~\cite{talagrand:p=1,talagrand:1<p<2}.
The case of non-even integers $p>2$ is 
taken from~\cite[Theorem 15.13]{LT91}, based on the earlier work of Bourgain et al.~\cite{bourgain:zonoid_lb}.
The case of even integers $p$ is due to Schechtman~\cite{schechtman:tight}.

The upper bounds in~\eqref{eqn:N_ub} are established by subsampling with a technique called ``change of density''~\cite{handbook:19}. First observe that it suffices to consider embeddings from $\ell_p^N$ to $\ell_p^n$ since any $d$-dimensional subspace of $L_p$ $(1+\eps)$-embeds into $\ell_p^N$ for some large $N$. Now suppose that $E$ is a $d$-dimensional subspace of $\ell_p^N$. One can show that randomly subsampling coordinates induces a low-distortion isomorphism between $E$ and $E$ restricted onto the sampled coordinates, provided that each element of $E$ is ``spread out'' among the coordinates, which is achieved by first applying the technique of change of density to $E$. 

Regarding lower bounds, a quick lower bound follows from the tightness of Dvoretzky's Theorem for $\ell_p$ spaces (see, e.g.~\cite[p21]{MS:book}), which states that if $\ell_2^d$ $2$-embeds into $\ell_p^n$, then $n\geq c d$ for $1\leq p < 2$ and $n\geq (cd/p)^{p/2}$ for $p\geq 2$, where $c>0$ is an absolute constant. Since $\ell_2^d$ embeds into $L_p$ isometrically for all $p\geq 1$~\cite[p16]{handbook:1}, identical lower bounds for $N_p(d,\eps)$ follow. Hence the upper bounds in~\eqref{eqn:N_ub} are, in terms of $d$, tight for $p\in 2\Z$ and tight up to logarithmic factors for other values of $p$. However, the right dependence on $\eps$ is a long-standing open problem and little is known. 
See~\cite[p845]{handbook:19} for a discussion on this topic. 
It is known that $N_1(d,\eps)\geq c(d)\eps^{-2(d-1)/(d+2)}$~\cite{bourgain:zonoid_lb}, whose proof critically relies upon the fact that the unit ball of a finite-dimensional space of $\ell_1$ is the polar of a zonotope (a linear image of cube $[-1,1]^d$) and the $\ell_1$-norm for vectors in the subspace thus admits a nice representation~\cite{bolker}.
However, a lower bound for general $p$ is unknown. Our Corollary~\ref{cor:N_p(d,eps)} shows that $n\geq c \eps^{-2}/\poly(\log(1/\varepsilon))$ for all $p \geq 1$ and $p\not\in 2\Z$, which is the first lower bound on the dependence of $\eps$ for general $p$, and is optimal up to logarithmic factors. We would like to stress that except for the very special case of $\ell_1$, no lower bound on the dependence on $\epsilon$ whatsoever was known for $p \not\in 2\Z$. We consider this to be significant evidence of the generality and novelty of our techniques. Moreover, even our lower bound for $p = 1$ is considerably wider in scope, as discussed more below.

\subsection{Comparison with Prior Work}
\subsubsection{Comparison with Previous Results in Functional Analysis}
As discussed, the mentioned lower bounds on $N_p(d,\eps)$ come from the tightness of Dvoretzky's Theorem, which shows the impossibility of embedding $\ell_2^d$ into a Banach space with low distortion. Here the hardness comes from the geometry of the target space. In contrast, we emphasize that the hardness in our $\ell_p$ subspace sketch problem comes from the source space, since the target space is unconstrained and the output function $Q_p(\cdot)$ does not necessarily correspond to an embedding. 
The lower bound via tightness of Dvoretzky's Theorem cannot show that $\ell_p^d$ does not $(1 + \eps)$-embed into $\ell_q^{n}$ for $d=\Theta(\log(1/\eps))$ and $n = O(1 / \eps^{1.99})$, where $q\not\in 2\Z$. 

When the target space is not $\ell_p$, lower bounds via functional analysis are more difficult to obtain since they require understanding the geometry of the target space. Since our data structure problem has no constraints on $Q_p(\cdot)$, the target space does not even need to be normed. In theoretical computer science and machine learning applications, the usual ``sketch and solve'' paradigm typically just requires the target space to admit an efficient algorithm for the optimization problem at hand\footnote{For example, consider the space $\R^n$ endowed with a premetric $d(x,y)=\sum_i f(x_i-y_i)$, where $f(x)=\tau x \mathbf{1}_{\{x\geq 0\}} + (\tau-1)x\mathbf{1}_{\{x\leq 0\}}$ ($\tau\in(0,1)$), which is not even symmetric when $\tau\neq \frac{1}{2}$. See~\cite{YMM} for an embedding into this space.
}. Our lower bounds are thus much wider in scope than those in geometric functional analysis.

\subsubsection{Comparison with Previous Results for Graph Sparsifiers}
Recently, the bit complexity of cut sparsifiers was studied in \cite{andoni2014sketching, carlson2017optimal}. 
Given an undirected graph $G = (V, E)$, $|V|=d$, a function $f : 2^{V} \to \R$ is a $(1 + \eps)$-{\em cut sketch}, if for any vertex set $S \subseteq V$, 
$$
(1-\eps) C(S, V\setminus S) \leq f(S) \leq (1 + \eps) C(S, V \setminus S),
$$
where $C(S, V \setminus S)$ denotes the capacity of the cut between $S$ and $V \setminus S$.
The main result of these works is that any $(1 + \eps)$-cut sketch requires $\Omega(\eps^{-2} d \log d)$ bits to store. 
Note that a cut sketch can be constructed using a for-all version of the $\ell_p$ subspace sketch for any $p$, by just taking the matrix $A$ to be the edge-vertex incidence matrix of the graph $G$ and querying all vectors $x \in \{0, 1\}^d$.
Thus, one may naturally ask if the lower bounds in \cite{andoni2014sketching, carlson2017optimal} imply any lower bounds for the subspace sketch problem. 

We note that both works \cite{andoni2014sketching, carlson2017optimal} have explicit constraints on the value of $\eps$.
In \cite{andoni2014sketching}, in order to prove the $\Omega(\eps^{-2} d)$ lower bound, it is required that $\eps = \Omega(1 / \sqrt{d})$.
In \cite{carlson2017optimal} the lower bound of $\Omega(d\log d/\varepsilon^2)$ requires $\eps = \omega(1 / d^{1 / 4})$.
Thus, the strongest lower bound that can be proved using such an approach is $\widetilde{\Omega}(d^2)$. 
This is natural, since one can always store the entire adjacency matrix of the graph in $\widetilde{O}(d^2)$ bits.
Our lower bound, in contrast, becomes arbitrarily large as $\varepsilon \to 0$.
\subsection{Our Techniques}

We use the case of $p = 1$ to illustrate our ideas behind the $\widetilde{\Omega}\left(\eps^{-2}\right)$ lower bound for the $\ell_p$ subspace sketch problem, when $d = \Theta(\log(1/\varepsilon))$. We then extend this to an $\widetilde{\Omega}\left(\eps^{-2} d \right )$ lower bound for general $d$ via a simple padding argument. 
We first show how to prove a weaker $\widetilde{\Omega}\left(\eps^{-1} \right)$ lower bound for the for-all version of the problem, and then show how to strengthen the argument to obtain both a stronger $\widetilde{\Omega}\left(\eps^{-2}\right)$ lower bound and in the weaker original version of
the problem (the ``for-each'' model, where we only need to be correct on a fixed query $x$ with
constant probability).

Note that the condition that $d = \Theta(\log(1 / \varepsilon))$ is crucial for our proof. As shown in Section~\ref{sec:2dub}, when $d = 2$, there is actually an $\widetilde{O}(\varepsilon^{-1})$ upper bound, and thus our $\widetilde{\Omega}(\varepsilon^{-2})$ lower bound does not hold universally for all values of $d$. It is thus crucial that we look at a larger value of $d$, and we show that
$d = \Theta(\log(1/\varepsilon))$ suffices. 

To prove our bit lower bounds for the $\ell_1$ subspace sketch problem, we shall encode random bits in the matrix $A$ such that  having a $(1+\eps)$-approximation to $\|Ax\|_1$ will allow us to recover, in the for-each case, some specific random bit, and in the for-all case, all the random bits using different choices of $x$. A standard information-theoretic argument then implies that the lower bound for the subspace sketch problem is proportional to the number of random bits we can recover.

\paragraph{Warmup: An $\widetilde{\Omega}\left(\eps^{-1} \right)$ Lower Bound for the For-All Version.}
In our hard instance, we let $d = \Theta(\log(1 / \eps))$ be such that $n = 2^d = \widetilde{\Theta}(1/\eps)$. 
Form a matrix $A \in \R^{n \times d}$ by including all vectors $i \in \bcube^d$ as its rows and then scaling the $i$-th row by a nonnegative scalar $r_i \le \poly(d)$.
We can think of $r$ as a vector in $\R^{n}$ with $\|r\|_{\infty} \le \poly(d)$.
Now, we query $Q_1(i)$ for all vectors $i \in \bcube^d$. 
For an appropriate choice of $d = \Theta(\log(1 / \eps))$, for all $i \in \bcube^d$, we have 
\begin{equation}\label{eqn:|Ai|_1}
\|Ai\|_1 = \sum_{j \in \bcube^d} r_j \cdot |\langle i, j \rangle| \leq 2^d \cdot \poly(d) < \frac{1}{\eps}.
\end{equation}
Since $Q_1(i)$ is a $(1 \pm \eps)$-approximation to $\|Ai\|_1$, and $\|Ai\|_1$ is always an integer, we can recover the exact value of $\|Ai\|_1$ using $Q_1(i)$, for all $i \in \bcube^d$.

Now we define a matrix $M \in \R^{n \times n}$, where $M_{i, j} = |\langle i, j \rangle|$, where $i, j$ are interpreted as vectors in $\bcube^d$.
A simple yet crucial observation is that, $\|Ai\|_1$ is exactly the $i$-th coordinate of $M r$. Notice that this critically relies on the assumption that $r$ has nonnegative coordinates. 
Thus, the problem can be equivalently viewed as designing a vector $r \in \R^{n}$ with $\|r\|_{\infty} \le \poly(d)$ and recovering $r$ from the vector $M r$. 
At this point, a natural idea is to show that the matrix $M$ has a sufficiently large rank, say, $\rank(M) =  \widetilde{\Omega}(\eps^{-1})$, and carefully design $r$ to show an $\Omega(\rank(M)) =  \widetilde{\Omega}(\eps^{-1})$ lower bound. 

Fourier analysis on the hypercube shows that the eigenvectors of $M$ are the rows of the normalized Hadamard matrix, while the eigenvalues of $M$ are the Fourier coefficients associated with the function $g(s) = |d - 2w_H(s)|$, where $w_H(s)$ is the Hamming weight of a vector $s \in \F_2^d$.
Considering all vectors of Hamming weight $d / 2$ in $\F_2^d$ and their associated Fourier coefficients, we arrive at the conclusion that there are at least $\binom{d}{d/ 2}$ eigenvalues of $M$ with absolute value
\[
\left| \sum_{\substack{0\leq i\leq d\\ \text{$i$ is even}}} (-1)^{i/2} \binom{d / 2}{i / 2}  |d - 2i|\right|,
\]
which can be shown to be at least $\Omega(2^{d / 2} / \poly(d))$. The formal argument is given in Section~\ref{sec:spec}.
Hence $\rank(M) \ge \binom{d}{d / 2} = \Omega(2^d / \poly(d))$.
Without loss of generality we assume the $\rank(M) \times \rank(M)$ upper-left block of $A$ is non-singular. 

Now an $\widetilde{\Omega}(1 / \eps)$ lower bound follows readily. Set $r$ so that 
$$
r_i = \begin{cases}
s_i, & i \le \rank(M); \\
0, & i > \rank(M),
\end{cases}
$$
where $\{s_i\}_{i = 1}^{\rank(M)}$ is a set of \iid Bernoulli random variables.
Since the exact value of $M r$ is known and the $\rank(M) \times \rank(M)$ upper-left block of $A$ is non-singular, one can recover the values of $\{s_i\}_{i = 1}^{\rank(M)}$ by solving a linear system, which implies an $\Omega(\rank(M)) =  \widetilde{\Omega}(\eps^{-1})$ lower bound. 

Before proceeding, let us first review why our argument fails for $p = 2$. 
For the $\ell_p$-norm, the Fourier coefficients associated with the vectors of Hamming weight $d / 2$ on the Boolean cube are
\[
\left| \sum_{\substack{0\leq i\leq d\\ \text{$i$ is even}}} (-1)^{i/2} \binom{d / 2}{i / 2}  |d - 2i| ^p\right| = \Theta\left( \frac{ 2^{d / 2} }{ \sqrt{d}} \left|\sin\frac{p \pi}{2}\right| \right).
\]
Therefore this sum vanishes if and only if $p$ is an even integer, in which case $\rank(A)$ will no longer be $\Omega(2^d / \poly(d))$ and the lower bound argument will fail.

\paragraph{An $\widetilde{\Omega}\left(\eps^{-2}\right)$ Lower Bound for the For-Each Version.}
To strengthen this to an $\widetilde{\Omega}(\eps^{-2})$ lower bound, it is tempting to increase $d$ so that $n = 2^d = \widetilde{\Omega}(\eps^{-2})$.
In this case, however, we can no longer recover the exact value of $M r$, since each entry of $M r$ now has magnitude $\widetilde{\Theta}(\eps^{-2})$ and the function $Q_1(\cdot)$ only gives a $(1 \pm \eps)$-approximation. 
We still obtain a noisy version of $M r$, but with a $\widetilde{\Theta}(1 / \eps)$ additive error on each entry. 
One peculiarity of the model here is that if some entries of $r$ are negative, then $\|Ai\|_1 = (M|r|)_i$ (cf.~\eqref{eqn:|Ai|_1}), where $|r|$ denotes the vector formed by taking the absolute value of each coordinate of $r$, i.e., $\|Ai\|_1$ depends only on the absolute values of entries of $r$, which suggests that the constraint that each entry of $M r$ has magnitude $\widetilde{\Theta}(1 / \eps^2)$ with an additive error of $\widetilde{\Theta}(1 / \eps)$ is somehow intrinsic. 

To illustrate our idea for overcoming the issue of large additive error, for the time being let us forget the actual form of $M$ previously defined in the argument for our $\widetilde{\Omega}\left(\eps^{-1} \right)$ lower bound and consider instead a general $M\in \R^{n\times n}$ with orthogonal rows, each row having $\ell_2$ norm $\Omega(2^{d / 2} / \poly(d))$.
For now we also allow $r$ to contain negative entries such that $\|r\|_{\infty} \le \poly(d)$, and pretend that the noisy version of $Mr$ has an $\widetilde{\Theta}(1 / \eps)$ additive error on each entry. 
Now, let
$$
r = \sum_{i=1}^{n} s_i \cdot \frac{M_i}{\|M_i\|_2},
$$
where $\{s_i\}_{i=1}^{n}$ is a set of \iid Rademacher random variables. 
By a standard concentration inequality, $\|r\|_{\infty} \le  \poly(d)$ holds with high probability (recall that $n=2^d$). Consider the vector $M r$. Due to the orthogonality of the rows of $M$, the $i$-th coordinate of $M r$ will be
$$
\langle M_i, r\rangle = s_i \cdot \|M_i\|_2.
$$
Provided that $ \|M_i\|_2$ is larger than the additive error $\widetilde{\Theta}(1 / \eps)$, we can still recover $s_i$ by just looking at the sign of $\langle M_i, r\rangle$.
Thus, for an appropriate choice of $d$ such that $2^{d / 2}  / \poly(d) = \widetilde{\Omega}(1 / \eps)$, we can obtain an $\Omega(2^d) = \widetilde{\Omega}(1 / \eps^2)$ lower bound. 

Now we return to the original $M$ with $M_{i,j}=|\langle i,j\rangle|$, whose rows are not necessarily orthogonal. The previous argument still goes through so long as we can identify a subset $\mathcal{R} \subseteq [n] = [2^d]$ of size $|\mathcal{R}| \ge \Omega(2^d / \poly(d))$ such that the rows $\{M_i\}_{i\in\mathcal{R}}$ are nearly orthogonal, meaning that the $\ell_2$ norm of the orthogonal projection of $M_i$ onto the subspace spanned by other rows $\{M_j\}_{j \in \mathcal{R} \setminus \{i\}}$ is much smaller than $\|M_i\|_2$.

To achieve this goal, we study the spectrum of $M$, and as far as we are aware, this is the first such study of spectral properties of this matrix. 
The Fourier argument mentioned above implies that at least $\Omega(2^d / \poly(d))$ eigenvalues of $A$ have the same absolute value $\Omega(2^{d  / 2} / \poly(d))$.
If all other eigenvalues of $A$ were zero, then we could identify a set of $|\mathcal{R}| \ge \Omega(2^d / \poly(d))$ nearly orthogonal rows using rows of $A$ each with $\ell_2$ norm $\Omega(2^{d / 2} / \poly(d))$, using a procedure similar to the standard Gram-Schmidt process.
The full details can be found in Section \ref{sec:or}.
Although the other eigenvalues of $M$ are not all zero, we can simply ignore the associated eigenvectors since they are orthogonal to the set of nearly orthogonal rows we obtain above. 

Lastly, recall that what we truly obtain is $M|r|$ rather than $Mr$ unless $r\geq 0$. To fix this, note that $\|r\|_{\infty} \le \poly(d)$ with high probability, and so we can just shift each coordinate of $r$ by a fixed amount of $\poly(d)$ to ensure that all entries of $r$ are positive. 
We can still obtain $\langle M_i, r \rangle$ with an additive error $\widetilde{\Theta}(1 / \eps)$, since the amount of the shift is fixed and bounded by $\poly(d)$.

Notice that the above argument in fact holds even for the for-each version of the subspace sketch problem.
By querying the $i$-th vector on the Boolean cube for some $i \in \mathcal{R}$, we are able to recover the sign of $s_i$ with constant probability. 
Given this, a standard information-theoretic argument shows that our 
lower bound holds for the for-each version of the problem.

The formal analysis given in Section \ref{sec:comm_lb} is a careful combination of all the ideas mentioned above.

\paragraph{Applications: $M$-estimators and Projective Clustering Coresets.}
Our general strategy for proving lower bounds for $M$-estimators is to relate one $M$-estimator, for which we want to prove a lower bound, to another $M$-estimator for which a lower bound is easy to derive.
For the $L_1$-$L_2$ estimator, the Huber estimator and the Fair estimator, when $|t|$ is sufficiently large, $\phi(t) = (1 \pm \varepsilon) |t|$ (up to rescaling of $t$ and the function value), and thus the lower bounds follow from those for the $\ell_1$ subspace sketch problem.

For the Cauchy estimator, we relate it to another estimator $\phi_{\text{aux}}(t) = \ln |x|\cdot \mathbf{1}_{\{|x|\geq 1\}}$. 
In Section \ref{sec:m-est}, we show that our Fourier analytic arguments also work for $\phi_{\text{aux}}(t)$.
Since for sufficiently large $t$, the Cauchy estimator satisfies $\phi(t) = (1 \pm \varepsilon) \phi_{\text{aux}}(t)$ (up to rescaling of $t$ and the function value), a lower bound for the Cauchy estimator follows.

To prove lower bounds on coresets for projective clustering, the main observation is that when $k = 1$ and $j = d - 1$, by choosing the query subspace to be the orthogonal complement of a vector $z$, the projection cost is just $\sum_{x \in X}\phi(\langle x, z \rangle)$, and thus we can invoke our lower bounds for the subspace sketch problem.
We use a coding argument to handle general $k$.
In Lemma~\ref{lem:code}, we show there exists a set $S = \{(s_1, t_1), (s_2, t_2), \ldots, (s_{k}, t_{k})\}$, where $s_i, t_i \in \mathbb{R}^{O(\log k)}$, $\langle s_i, t_i \rangle = 0$ and $\langle s_i, t_j \rangle$ is arbitrarily large when $i \neq j$. 
Now for $k$ copies of the hard instance of the subspace sketch problem, we add $s_i$ as a prefix to all data points in the $i$-th hard instance, and set the query subspace to be the orthogonal complement of a vector $z$, to which we add $t_i$ as a prefix. Now, the data points in the $i$-th hard instance will always choose the $i$-th center in the optimal solution, since otherwise an arbitrarily large cost will incur. Thus, we can solve $k$ independent copies of the subspace sketch problem, and the desired lower bound follows. 

\medskip

In the rest of the section, we shall illustrate our techniques for proving lower bounds that depend on $p$ for the $\ell_p$ subspace sketch problem. These lower bounds hold even when $\varepsilon$ is a constant.  We again resort to information theory, trying to recover, using $Q_p$ queries, the entire matrix $A$ among a collection $\mathcal{S}$ of matrices. The lower bound is then $\Omega(\log|\mathcal{S}|)$ bits.
 
\paragraph{An $\widetilde{\Omega}(d^{p / 2})$ Lower Bound for the For-Each Version.}
Our approach for proving the $\widetilde{\Omega}(d^{p / 2})$ lower bound is based on the following crucial observation: consider a uniformly random matrix $A \in \{-1,1\}^{\Theta(d^{p / 2}) \times d}$ and a uniformly random vector $x \in \{-1, 1\}^d$. Then $\E\|Ax\|_p^p = O(d^{p})$, whereas for each row $A_i \in \R^d$ of $A$, interpreted as a column vector, $\|AA_i\|_p^p \ge d^p$. 
Intuitively, the lower bound comes from the fact that one can recover the whole matrix $A$ by querying all Boolean vectors $x \in \bcube^d$ using the function $Q_p(\cdot)$, since if $x$ is a row of $A$, then $\|Ax\|_p^p$ would be slightly larger than its typical value, by adjusting constants. 

To implement this idea, one can generate a set of almost orthogonal vectors $S \subseteq \R^d$ and require that all rows of $A$ come from $S$.
A simple probabilistic argument shows that one can construct a set of $|S| = d^p$ vectors such that for any distinct $s, t \in S$, $\left| \langle s, t \rangle \right| \le O(\sqrt{d \log d})$\footnote{The $O(\sqrt{\log d})$ factor can be removed using more sophisticated constructions based on coding theory (see Lemma \ref{lem:selb_hard}).}. 
If we form the matrix $A$ using $n = \widetilde{\Omega}(d^{p / 2})$ vectors from $S$ as its rows, then for any vector $t$ that is {\em not} a row of $A$,
$$
\|At\|_p^p \le n \cdot (d \log d)^{p / 2} \ll d^{p}
$$
for some appropriate choice of $n$.
Thus, by querying $Q_p(s)$ for all vectors $s \in S$, one can recover the whole matrix $A$, even when $\varepsilon$ is a constant. 
By a standard information-theoretic argument, this leads to a lower bound of $\Omega\left(\log \binom{d^{p}}{\widetilde{\Omega}(d^{p / 2})}\right) = \widetilde{\Omega}(d^{p / 2})$ bits.
Furthermore, one only needs to query $|S| = d^p$ vectors, which means the lower bound in fact holds for the for-each version of the $\ell_p$ subspace sketch problem, by a standard repetition argument and losing a $\log d$ factor in the lower bound. 

\paragraph{An $\Omega(d^{\max\{p / 2, 1\} + 1})$ Lower Bound for the For-All Version.}

In order to obtain the nearly optimal $\Omega(d^{\max\{p / 2, 1\} + 1})$ lower bound for the for-all version, we must abandon the constraint that all rows of the $A$ matrix come from a set $S$ of $\poly(d)$ vectors. 
Our plan is still to construct a large set of matrices $\mathcal{S} \subseteq \{+1, -1\}^{\Theta(d^{p / 2}) \times d}$, and show that for any distinct matrices $S, T \in \mathcal{S}$, it is possible to distinguish them using the function $Q_p(\cdot)$, thus proving an $\Omega(\log |\mathcal{S}|)$ lower bound. 
The new observation is that, to distinguish two matrices $S, T \in \mathcal{S}$, it suffices to have a {\em single} row of $T$, say $T_i$, such that $\|ST_i\|_p^p \ll d^p$.
Again using the probabilistic method, we show the existence of such a set $\mathcal{S}$ with size $\exp\left(\Omega(d^{p / 2 + 1}) \right)$, which implies an $\Omega(\log |\mathcal{S}|) = \Omega(d^{p / 2 + 1})$ lower bound.

Our main technical tool is Talagrand's concentration inequality, which shows that for any $p \ge 2$ and vector $x \in \bcube^d$, for a matrix $A \in \R^{\Theta(d^{p / 2}) \times d}$ with \iid Rademacher entries, $\|Ax\|_p=\Theta(d)$ with probability $1 - \exp(-\Omega(d))$. 
This implies that for two random matrices $S, T \in \R^{\Theta(d^{p / 2}) \times d}$ with \iid Rademacher entries, the probability that there exists some row $T_i$ of $T$ such that $\|ST_i\|_p^p \ll d^p$ is at least $1 - \exp\left(\Omega(d^{p / 2 + 1})\right)$, since the $\Theta(d^{p / 2})$ rows of $T$ are independent. 
By a probabilistic argument, the existence of the set $\mathcal{S}$ follows.
The formal analysis is given in Section \ref{sec:p>=2}.

The above argument fails to give an $\Omega(d^2)$ lower bound when $p < 2$. 
However,  for any $p < 2$, since $\ell_2^n$ embeds into $\ell_p^m$ with $m = O_{p}(n)$ and a constant distortion, we can directly reduce the case of $p < 2$ to the case of $p = 2$. The formal analysis can be found in Section \ref{sec:p<2}.
Combining these two results yields the $\Omega(d^{\max\{p / 2, 1\} + 1})$ lower bound.

\section{Preliminaries}
For a vector $x \in \R^n$, we use $\|x\|_p$ to denote its $\ell_p$-norm, i.e., $\|x\|_p = \left(\sum_{i=1}^n |x_i|^p\right)^{1/ p }$. When $p<1$, it is not a norm but it is still a well-defined quantity and we call it the $\ell_p$-norm for convenience. 
When $p = 0$, $\|x\|_0$ is defined to be the number of nonzero coordinates of $x$.

For two vectors $x, y \in \R^n$, we use $\proj_y x \in \mathbb{R}^n$ to denote the orthogonal projection of $x$ onto $y$.
For a matrix $A\in \R^{n\times d}$, we use $A_i \in \R^d$ to denote its $i$-th row, treated as a column vector. 
We use $\|A\|_2 $ to denote its spectral norm, i.e., $\|A\|_2 = \sup_{\|x\|_2 = 1}\|Ax\|_2$, and $\|A\|_F$ to denote its Frobenius norm, i.e., $\|A\|_F = \big(\sum_{i=1}^n \sum_{j= 1}^d A_{ij}^2\big)^{1/2}$.

Suppose that $A \in \mathbb{R}^{m \times n}$ has singular values $\sigma_1 \ge \sigma_2 \ge \cdots \ge \sigma_r \ge 0$, where $r = \min\{m,n\}$. It holds that $\sigma_1 = \|A\|_2 \le \|A\|_F = \big(\sum_{i=1}^r \sigma_i^2\big)^{1/2}$. 
The condition number of $A$ is defined to be
\[
\kappa(A) = \frac{\sup_{\|x\|_2=1} \|Ax\|_2}{\inf_{\|x\|_2=1} \|Ax\|_2}.
\]

\begin{theorem}[Eckart--Young--Mirsky Theorem]\label{thm:svd}
Suppose that $A \in \mathbb{R}^{m \times n}$ has singular values $\sigma_1 \ge \sigma_2 \ge \cdots \ge \sigma_r > 0$, where $\rank(A) = r \le \min\{m,n\}$. For any matrix $B \in \mathbb{R}^{m \times n}$ such that $\rank(B) \le k \le r$, it holds that
$$
\|A - B\|_F^2 \ge \sum_{i = k + 1}^{r} \sigma_i^2.
$$
\end{theorem}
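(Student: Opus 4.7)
The plan is to use the SVD $A=U\Sigma V^T$ (with $\Sigma = \mathrm{diag}(\sigma_1,\dots,\sigma_r,0,\dots,0)$) together with the orthogonal invariance of the Frobenius norm, and reduce the problem to a classical Ky Fan–type maximum principle. Without loss of generality assume $\rank(B) = k$ (padding only makes the right side no larger), and write $B = XY^T$ where $X \in \R^{m\times k}$ has orthonormal columns spanning the column space of $B$ and $Y \in \R^{n\times k}$ is arbitrary. For a fixed such $X$, the quadratic map $Y \mapsto \|A-XY^T\|_F^2$ is minimized at $Y^\ast = A^T X$ by standard least squares, and the Pythagorean identity (valid because $XX^T$ is an orthogonal projector) gives
$$
\|A-B\|_F^2 \;\ge\; \|A - X{Y^\ast}^T\|_F^2 \;=\; \|A\|_F^2 - \|X^T A\|_F^2.
$$

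The remaining task is thus to upper bound $\|X^T A\|_F^2 = \tr\bigl(X^T AA^T X\bigr) = \sum_{i=1}^k x_i^T (AA^T) x_i$ uniformly over all orthonormal tuples $x_1,\dots,x_k$, where $x_i$ denotes the $i$-th column of $X$. Expanding in the eigenbasis $u_1,\dots,u_m$ of $AA^T$, whose nonzero eigenvalues are $\sigma_1^2\ge\cdots\ge\sigma_r^2>0$, this trace equals $\sum_j \sigma_j^2\, c_j$ with $c_j = \sum_{i=1}^k \langle x_i,u_j\rangle^2$. Because $X$ has orthonormal columns, the coefficients satisfy $0\le c_j\le 1$ and $\sum_j c_j = k$. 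Since the weights $\sigma_j^2$ are non-increasing, the linear functional $\sum_j \sigma_j^2 c_j$ is maximized (subject to these constraints) by concentrating mass on the first $k$ indices, yielding $\|X^T A\|_F^2 \le \sum_{i=1}^k \sigma_i^2$, with equality when $X = [u_1,\dots,u_k]$.

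Combining the two inequalities gives
$$
\|A-B\|_F^2 \;\ge\; \|A\|_F^2 - \sum_{i=1}^k \sigma_i^2 \;=\; \sum_{i=k+1}^r \sigma_i^2,
$$
which is the claimed bound, using $\|A\|_F^2 = \sum_{i=1}^r \sigma_i^2$. The only step that is not entirely mechanical is the Ky Fan majorization bound on $\sum_j \sigma_j^2 c_j$; however, this reduces to a one-line sorting argument using the monotonicity of $(\sigma_j^2)$ and the box and sum constraints on $(c_j)$, so I do not anticipate any real obstacle. An alternative route that avoids Ky Fan altogether is Weyl's inequality for singular values $\sigma_{i+k}(A) \le \sigma_i(A-B) + \sigma_{k+1}(B) = \sigma_i(A-B)$ (applied to $A = (A-B)+B$ with $\sigma_{k+1}(B)=0$), after which $\|A-B\|_F^2 = \sum_i \sigma_i(A-B)^2 \ge \sum_{j=k+1}^r \sigma_j(A)^2$ is immediate; this could be used as a backup if a self-contained majorization argument becomes awkward to present.
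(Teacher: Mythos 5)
Your proof is correct. Note, however, that the paper does not prove this statement at all: Theorem~2.1 is quoted in the Preliminaries as the classical Eckart--Young--Mirsky theorem and used as a black box (in Lemma~\ref{lem:gram_schmidt}), so there is no in-paper argument to compare against. Your argument is a complete and standard proof: writing $B=XY^T$ with $X$ having $k$ orthonormal columns, minimizing over $Y$ to reduce to $\|A\|_F^2-\|X^TA\|_F^2$, and then bounding $\|X^TA\|_F^2=\sum_j \sigma_j^2 c_j$ via the constraints $0\le c_j\le 1$, $\sum_j c_j=k$ is exactly the variational (Ky Fan--type) route, and each step checks out, including the reduction to $\rank(B)=k$ by padding $X$. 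The Weyl-inequality alternative you sketch, $\sigma_{i+k}(A)\le \sigma_i(A-B)+\sigma_{k+1}(B)=\sigma_i(A-B)$ followed by summing squares, is also a valid self-contained proof and is arguably the quickest way to present it if you want to avoid the majorization step.
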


Below we list a handful of concentration inequalities which will be useful in our arguments.

\begin{lemma}[Hoeffding's inequality, {\cite[p34]{BLM}}]\label{lem:hoeffding}
Let $s_1,\dots,s_n$ be \iid Rademacher random variables and $a_1,\dots,a_n$ be real numbers. Then 
\[
\Pr\left\{\sum_i s_ia_i > t\right\}\leq \exp\left(-\frac{t^2}{2\sum_i a_i^2}\right).
\]
\end{lemma}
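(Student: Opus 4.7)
The plan is to use the standard Chernoff/exponential moment method. First I would introduce a parameter $\lambda > 0$ and apply Markov's inequality to the exponentiated sum, writing
\[
\Pr\left\{\sum_i s_i a_i > t\right\} = \Pr\left\{e^{\lambda \sum_i s_i a_i} > e^{\lambda t}\right\} \leq e^{-\lambda t}\, \E\!\left[e^{\lambda \sum_i s_i a_i}\right].
\]
By independence of the $s_i$, the expectation factors as $\prod_i \E[e^{\lambda a_i s_i}]$, so the problem reduces to bounding the moment generating function of a single Rademacher variable.

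Next I would compute $\E[e^{\lambda a_i s_i}] = \tfrac{1}{2}(e^{\lambda a_i} + e^{-\lambda a_i}) = \cosh(\lambda a_i)$ and prove the elementary inequality $\cosh(x) \leq e^{x^2/2}$. The clean way is to compare Taylor series: $\cosh(x) = \sum_{k\geq 0} x^{2k}/(2k)!$ while $e^{x^2/2} = \sum_{k\geq 0} x^{2k}/(2^k k!)$, and termwise $(2k)! \geq 2^k k!$ (since $(2k)!/k! = (k+1)(k+2)\cdots(2k) \geq 2^k$). Hence
\[
\E\!\left[e^{\lambda \sum_i s_i a_i}\right] = \prod_i \cosh(\lambda a_i) \leq \prod_i e^{\lambda^2 a_i^2/2} = \exp\!\left(\frac{\lambda^2}{2}\sum_i a_i^2\right).
\]

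Substituting back, the tail bound becomes $\exp(-\lambda t + \lambda^2 \sum_i a_i^2 / 2)$, and I would finish by optimizing over $\lambda > 0$. The quadratic in $\lambda$ is minimized at $\lambda^\star = t/\sum_i a_i^2$, yielding the stated bound $\exp(-t^2/(2\sum_i a_i^2))$. There is no real obstacle here: the only mildly delicate step is verifying $\cosh(x)\leq e^{x^2/2}$, and the Taylor-series comparison above handles it cleanly. The independence assumption is used exactly once, in factoring the MGF of the sum.
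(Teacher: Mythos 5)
Your proof is correct and complete: the Chernoff bound, factoring the MGF by independence, the Taylor-series comparison giving $\cosh(x)\leq e^{x^2/2}$, and optimizing at $\lambda^\star = t/\sum_i a_i^2$ together yield exactly the stated bound. The paper does not prove this lemma itself but cites it from the literature, and your argument is the standard proof given there, so there is nothing to add.
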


\begin{lemma}[Khintchine's inequality, {\cite[p145]{BLM}}]\label{lem:khintchine}
Let $s_1,\dots,s_n$ be \iid Rademacher random variables and $a_1,\dots,a_n$ be real numbers. There exist absolute constants $A,B>0$ such that 
\[
A\left(\sum_i a_i^2\right)^{1/2} \leq \left(\E |\sum_i s_i a_i|^p\right)^{1/p} \leq B\sqrt{p}\left(\sum_i a_i^2\right)^{1/2}.
\]
\end{lemma}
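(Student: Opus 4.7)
Write $X=\sum_i s_i a_i$ and $\sigma^2=\sum_i a_i^2$. Since the $s_i$ are independent with mean $0$ and variance $1$, the cross terms vanish and $\E[X^2]=\sigma^2$. This identity is the anchor for both inequalities.

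For the upper bound the plan is to feed the subgaussian tail from Hoeffding's inequality (Lemma~\ref{lem:hoeffding}, just proved) into the layer cake formula. Specifically, $\Pr\{|X|>t\}\leq 2\exp(-t^2/(2\sigma^2))$ gives
\[
\E|X|^p = \int_0^\infty p t^{p-1}\Pr\{|X|>t\}\,dt \leq 2p\int_0^\infty t^{p-1} e^{-t^2/(2\sigma^2)}\,dt = p(2\sigma^2)^{p/2}\Gamma(p/2),
\]
after the substitution $u=t^2/(2\sigma^2)$. Taking $p$-th roots and using Stirling's estimate $\Gamma(p/2)^{1/p}\leq C\sqrt{p}$ produces the desired $B\sqrt{p}\,\sigma$ upper bound.

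For the lower bound I would split on whether $p\geq 2$ or $p<2$. When $p\geq 2$, Jensen (or monotonicity of $L^p$ norms on a probability space) immediately gives $(\E|X|^p)^{1/p}\geq (\E X^2)^{1/2}=\sigma$, so $A=1$ works. For $1\leq p<2$ the classical trick is interpolation between $L^p$ and $L^4$: choose $\theta\in(0,1)$ so that $\tfrac{1}{2}=\tfrac{\theta}{p}+\tfrac{1-\theta}{4}$ and apply the log-convexity of $L^r$-norms, i.e.\ $\|X\|_2\leq \|X\|_p^\theta\|X\|_4^{1-\theta}$. Plugging in $\|X\|_2=\sigma$ and the upper bound $\|X\|_4\leq 2B\sigma$ just established yields $\|X\|_p\geq c\sigma$ for an absolute constant $c>0$.

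The only moderately subtle step is the interpolation lower bound for $p<2$; everything else is routine calculus or Jensen. For $0<p<1$ (which Khintchine's inequality also covers, if one wants it there) the same log-convexity argument works, interpolating between $L^p$ and $L^2$ or $L^4$ — no new ideas are needed, only book-keeping of the exponents.
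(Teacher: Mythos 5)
Your proposal is correct for the regime in which the paper actually uses the lemma ($p \geq 1$), and it is a complete, standard derivation; note that the paper itself gives no proof at all, simply citing \cite[p145]{BLM}, so there is no ``paper proof'' to diverge from. Your two steps are exactly the textbook route: the upper bound by feeding the two-sided Hoeffding tail $\Pr\{|X|>t\}\leq 2e^{-t^2/(2\sigma^2)}$ (the paper's Lemma~\ref{lem:hoeffding} plus symmetry) into the layer-cake identity, giving $\E|X|^p \leq p(2\sigma^2)^{p/2}\Gamma(p/2)$ and hence $B\sqrt{p}\,\sigma$ via Stirling since $p\Gamma(p/2)=2\Gamma(p/2+1)$; and the lower bound by $\|X\|_2=\sigma$ together with monotonicity of $L^r$ norms for $p\geq 2$ and Lyapunov/H\"older interpolation $\|X\|_2\leq \|X\|_p^\theta\|X\|_4^{1-\theta}$ (Littlewood's trick) for $p<2$, where $\theta = p/(4-p)\geq 1/3$ on $[1,2)$ keeps the constant absolute.

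One small caveat about your closing aside: for $0<p<1$ the statement with \emph{absolute} constants is actually false at both ends. As $p\to 0^+$ the factor $(p\Gamma(p/2))^{1/p}$ in your upper bound blows up rather than behaving like $\sqrt{p}$ (and indeed $n=1$, $a_1=1$ gives $(\E|X|^p)^{1/p}=1 \not\leq B\sqrt{p}$ for small $p$), and in the interpolation lower bound $\theta=p/(4-p)\to 0$, so the constant $(2B)^{-(1-\theta)/\theta}$ degenerates; the example $a_1=a_2=1$ shows this degeneration is genuine, since $(\E|X|^p)^{1/p}=2^{1-1/p}\to 0$ while $\sigma=\sqrt{2}$. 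So the extension to $p<1$ requires $p$-dependent constants; this does not affect the lemma as stated and used in the paper, where $p\geq 1$.
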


\begin{lemma}[Talagrand's inequality, {\cite[p204]{BLM}}]\label{lem:talagrand}
Let $X = (X_1,\dots,X_n)$ be a random vector with independent coordinates taking values in $[-1,1]$. Let $f:[-1,1]^n\to \R$ be a convex $1$-Lipschitz function. It holds for all $t\geq 0$ that 
\[
\Pr\left\{f(X) - \E f(X) \geq t\right\}\leq e^{-t^2/8}.
\]
\end{lemma}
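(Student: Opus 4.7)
The plan is to prove the inequality by the entropy method, bounding the log-moment generating function $\psi(\lambda) = \log \E e^{\lambda(f(X) - \E f(X))}$ by a quadratic in $\lambda$ and then applying a Chernoff argument. First I would reduce to the case of smooth convex $f$: any convex $1$-Lipschitz function on $[-1,1]^n$ can be uniformly approximated by $C^1$ convex $1$-Lipschitz functions (e.g.\ by mollification), and both sides of the inequality pass to the limit, so I may assume $f$ is differentiable with $\|\nabla f\|_2 \leq 1$ pointwise.

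The main tool will be a modified log-Sobolev inequality in the ``one-sided'' form of Boucheron--Lugosi--Massart. Let $X'_1,\dots,X'_n$ be an independent copy of $X_1,\dots,X_n$, and set $X^{(i)} = (X_1,\dots,X_{i-1},X'_i,X_{i+1},\dots,X_n)$ and $Z = f(X)$. Define
\[
V^+ = \sum_{i=1}^n \E\bigl[(Z - Z^{(i)})^2 \mathbf{1}_{\{Z > Z^{(i)}\}} \,\big|\, X\bigr],\qquad Z^{(i)} = f(X^{(i)}).
\]
A standard consequence of the tensorization of entropy and the Herbst argument (e.g.\ the symmetrized form of the entropy method in Chapter 6 of BLM) yields that whenever $V^+ \leq c$ almost surely,
\[
\log \E e^{\lambda(Z - \E Z)} \leq \frac{c\lambda^2}{2}\qquad \text{for all } \lambda > 0.
\]
I would quote this as the black-box entropy inequality; deriving it from scratch is the longest but most routine part of the argument.

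The key step where convexity enters is the bound $V^+ \leq 4$. Since $f$ is convex and smooth, the subgradient inequality gives, coordinate-wise,
\[
f(X) - f(X^{(i)}) \leq \partial_i f(X)\cdot (X_i - X'_i),
\]
so on the event $\{Z > Z^{(i)}\}$ one has $(Z - Z^{(i)})^2 \leq (\partial_i f(X))^2 (X_i - X'_i)^2 \leq 4(\partial_i f(X))^2$ because $|X_i|, |X'_i| \leq 1$. Summing over $i$ and using $\|\nabla f(X)\|_2^2 \leq 1$ (the $1$-Lipschitz hypothesis) yields $V^+ \leq 4$ pointwise. Substituting into the entropy bound gives $\log \E e^{\lambda(Z-\E Z)} \leq 2\lambda^2$, and the Chernoff bound $\Pr(Z - \E Z \geq t) \leq \exp(-\lambda t + 2\lambda^2)$ optimized at $\lambda = t/4$ produces the stated $e^{-t^2/8}$.

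The main obstacle is the entropy inequality in the middle paragraph: one must be careful that the inequality is one-sided (it controls only $V^+$, not $V^- $), which is precisely what makes it compatible with the one-sided convexity bound above. A symmetric Efron--Stein-type inequality would force one to bound both $V^+$ and $V^-$, and $V^-$ need not be small for convex $f$ (concave $f$ can have large upper tails of the jumps). Handling this asymmetry correctly, and verifying that the approximation from smooth $f$ to general convex Lipschitz $f$ does not deteriorate the constant, are the two points requiring genuine care; everything else is bookkeeping.
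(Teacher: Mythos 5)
The paper does not prove this lemma at all; it is quoted verbatim from Boucheron--Lugosi--Massart, so your proposal is being measured against the standard entropy-method proof in that reference, which is indeed the route you are following. The convexity step is fine (and in fact only separate convexity is needed), and the smoothing reduction is harmless, though unnecessary: a convex $1$-Lipschitz $f$ on $[-1,1]^n$ has subgradients of Euclidean norm at most $1$, which is all your coordinatewise inequality uses.

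The genuine gap is in the black-box you quote in the middle paragraph. The symmetrized exponential Efron--Stein inequality of BLM (their Theorem 6.7, or equivalently the symmetrized modified log-Sobolev inequality, whose kernel is $\tau(x)=x(e^x-1)$ rather than $\phi(x)=e^x-x-1$) does \emph{not} give $\log\E e^{\lambda(Z-\E Z)}\leq c\lambda^2/2$ when $V^+\leq c$; it gives $\log\E e^{\lambda(Z-\E Z)}\leq c\lambda^2$, i.e.\ sub-Gaussianity with variance factor $2c$ and tail $e^{-t^2/(4c)}$. With your bound $V^+\leq 4$ this yields only $e^{-t^2/16}$, not the stated $e^{-t^2/8}$, and the stronger form you assert is not a standard quotable result (the loss of the factor $2$ is intrinsic to symmetrizing with an independent copy, precisely because you cannot enforce one-sidedness pointwise). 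The fix is the one BLM actually use for convex Lipschitz functions: take $Z_i=\inf_{x_i'}f(X_1,\dots,X_{i-1},x_i',X_{i+1},\dots,X_n)$, which is measurable with respect to the other coordinates and satisfies $Z-Z_i\geq 0$ deterministically. Then the (non-symmetrized) modified log-Sobolev inequality gives
\[
\mathrm{Ent}\bigl(e^{\lambda Z}\bigr)\;\leq\;\sum_{i=1}^n \E\Bigl[e^{\lambda Z}\,\phi\bigl(-\lambda(Z-Z_i)\bigr)\Bigr]\;\leq\;\frac{\lambda^2}{2}\,\E\Bigl[e^{\lambda Z}\sum_{i=1}^n (Z-Z_i)^2\Bigr],
\]
using $\phi(-x)\leq x^2/2$ for $x\geq 0$, and your convexity computation applied to the minimizing point $x_i^*$ (instead of an independent copy) gives $\sum_i(Z-Z_i)^2\leq \|\nabla f(X)\|_2^2\max_i(X_i-x_i^*)^2\leq 4$ on $[-1,1]^n$. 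Herbst's argument then gives $\log\E e^{\lambda(Z-\E Z)}\leq 2\lambda^2$, and your Chernoff optimization at $\lambda=t/4$ recovers $e^{-t^2/8}$ exactly. With that substitution your argument is the standard proof; as written, it does not establish the claimed constant.
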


\begin{lemma}[Gaussian concentration, {\cite[p105]{vershynin}}]\label{lem:gaussian_lipschitz}
Let $p\geq 1$ be a constant. Consider a random vector $X\sim N(0,I_n)$ and a non-negative $1$-Lipschitz function $f:(\R^n,\|\cdot\|_2)\to \R$, then 
\[
\Pr\left\{ |f(x) - (\E (f(x))^p)^{1/p} |\geq t \right\}\leq 2e^{-ct^2},
\]
where $c = c(p) > 0$ is a constant that depends only on $p$.
\end{lemma}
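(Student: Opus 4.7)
The plan is to reduce the statement to the standard Gaussian Lipschitz concentration inequality, which says that for $X \sim N(0,I_n)$ and any $1$-Lipschitz $f: (\R^n,\|\cdot\|_2) \to \R$, one has
\[
\Pr\{|f(X) - \E f(X)| \geq s\} \leq 2e^{-s^2/2}.
\]
The only gap between this and the stated lemma is that the lemma centers around $(\E f(X)^p)^{1/p}$ instead of $\E f(X)$, so the key step is to show that these two centerings differ by at most some constant $C(p)$ depending only on $p$; once that is established, the conclusion follows by the triangle inequality with a suitable $c = c(p) > 0$.

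First, since $f \geq 0$ and $t \mapsto t^p$ is convex on $[0,\infty)$ for $p \geq 1$, Jensen's inequality gives $(\E f^p)^{1/p} \geq \E f$. For the reverse direction, write $f = \E f + (f - \E f)$ and apply Minkowski's inequality to obtain
\[
(\E f^p)^{1/p} \leq \E f + (\E|f - \E f|^p)^{1/p}.
\]
It therefore suffices to bound the $L^p$ norm of the centered variable $Z := f(X) - \E f(X)$. Using the standard Gaussian Lipschitz concentration as a tail bound and integrating by parts,
\[
\E |Z|^p = \int_0^\infty p t^{p-1}\Pr\{|Z| \geq t\}\, dt \leq \int_0^\infty 2p\,t^{p-1}e^{-t^2/2}\,dt = 2p \cdot 2^{(p-2)/2}\,\Gamma(p/2),
\]
and Stirling's formula yields $(\E|Z|^p)^{1/p} \leq C\sqrt{p}$ for an absolute constant $C$. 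Combining the two directions gives $|(\E f^p)^{1/p} - \E f(X)| \leq C\sqrt{p} =: C(p)$.

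Finally, by the triangle inequality,
\[
\Pr\{|f(X) - (\E f^p)^{1/p}| \geq t\} \leq \Pr\{|f(X) - \E f(X)| \geq t - C(p)\}.
\]
When $t \geq 2C(p)$, the right-hand side is at most $2e^{-(t/2)^2/2} = 2e^{-t^2/8}$. When $t < 2C(p)$, we can choose $c = c(p)$ small enough that $2e^{-ct^2} \geq 2e^{-4c\,C(p)^2} \geq 1$, so the stated inequality is trivial in that range. Taking the smaller of $1/8$ and this $c(p)$ yields a single constant $c = c(p) > 0$ making $2e^{-ct^2}$ an upper bound for all $t \geq 0$, as required.

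This argument is essentially standard, and there is no real obstacle; the only thing to be careful about is that the constant $c$ is allowed to depend on $p$ (which the lemma permits), since the gap $|(\E f^p)^{1/p} - \E f(X)|$ grows like $\sqrt{p}$. If one wanted a $p$-independent constant, a more delicate argument would be required, but the lemma as stated asks for nothing beyond this reduction.
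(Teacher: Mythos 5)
Your proof is correct. The paper does not prove this lemma at all --- it is cited directly from Vershynin's book --- and your argument (standard Gaussian concentration about the mean, plus the observation via Jensen, Minkowski, and tail integration that $(\E f(X)^p)^{1/p}$ and $\E f(X)$ differ by at most $C\sqrt{p}$, absorbed into a $p$-dependent constant $c(p)$) is exactly the standard derivation of this re-centered form, so it matches the intended source rather than deviating from it.
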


\begin{lemma}[Extreme singular values, {\cite[p91]{vershynin}}]\label{lem:singular_values}
Let $A$ be an $N\times n$ matrix with \iid Rademacher entries. Let $\sigma_{\min}(A)$ and $\sigma_{\max}(A)$ be the smallest and largest singular values of $A$. Then for every $t\geq 0$, with probability at least $1-2\exp(ct^2)$, it holds that 
\[
\sqrt{N}-C\sqrt{n}-t\leq \sigma_{\min}(A) \leq \sigma_{\max}(A) \leq  \sqrt{N}+C\sqrt{n}+t,
\]
where $C,c>0$ are absolute constants. 
\end{lemma}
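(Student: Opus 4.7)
The plan is to bound $\sigma_{\max}(A)$ and $\sigma_{\min}(A)$ by the classical $\eps$-net argument on the unit sphere, reducing the problem to a concentration statement for $\|Ax\|_2$ at a single unit vector $x$. Viewing the entries of $A$ as a random vector in $[-1,1]^{Nn}$, the map $A \mapsto \|Ax\|_2$ is convex (it is a norm precomposed with a linear map) and $1$-Lipschitz in the Frobenius norm whenever $\|x\|_2=1$, because $|\|Ax\|_2-\|Bx\|_2|\leq \|(A-B)x\|_2\leq \|A-B\|_F$. Applying Talagrand's inequality (Lemma~\ref{lem:talagrand}) gives the subgaussian upper tail $\Pr\{\|Ax\|_2\geq \E\|Ax\|_2+s\}\leq e^{-s^2/8}$, and the same inequality applied around the median yields a matching lower tail. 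Since $\E\|Ax\|_2^2=N$, Jensen gives $\E\|Ax\|_2\leq \sqrt N$, while the concentration forces the mean and median to agree to within an absolute constant, so $|\|Ax\|_2-\sqrt N|\lesssim 1+s$ with subgaussian probability.

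Next I would fix a $(1/4)$-net $\mathcal{N}\subseteq S^{n-1}$ with $|\mathcal{N}|\leq 9^n$ (standard volumetric bound) and union-bound the single-vector tail over $\mathcal{N}$ with deviation parameter $s=C\sqrt n+t$. Choosing $C$ large enough compared to $\sqrt{8\log 9}$ makes $2\cdot 9^n \exp(-(C\sqrt n+t)^2/8)\leq 2\exp(-ct^2)$, so on this event every $x\in \mathcal{N}$ satisfies $\sqrt N-C\sqrt n-t\leq \|Ax\|_2\leq \sqrt N+C\sqrt n+t$. To push this from $\mathcal{N}$ to $S^{n-1}$ I would use the usual approximation trick: given $y\in S^{n-1}$, pick $x\in\mathcal{N}$ with $\|y-x\|_2\leq 1/4$ and write $\|Ay\|_2\leq \|Ax\|_2+\|A(y-x)\|_2\leq \sup_{x\in\mathcal{N}}\|Ax\|_2+\tfrac14\sigma_{\max}(A)$. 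Taking the supremum over $y$ and rearranging gives $\sigma_{\max}(A)\leq \tfrac{4}{3}\sup_{x\in\mathcal{N}}\|Ax\|_2\leq \tfrac{4}{3}(\sqrt N+C\sqrt n+t)$, which, after absorbing constants, produces the desired upper bound.

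The main obstacle is the lower bound, since the triangle inequality couples $\sigma_{\min}(A)$ to the random quantity $\sigma_{\max}(A)$ rather than to a deterministic upper estimate. I would handle this by conditioning on the just-established event $\sigma_{\max}(A)\leq \sqrt N+C\sqrt n+t$: for any $y\in S^{n-1}$ and $x\in\mathcal{N}$ within distance $1/4$, $\|Ay\|_2\geq \|Ax\|_2-\tfrac14 \sigma_{\max}(A)\geq (\sqrt N-C\sqrt n-t)-\tfrac14(\sqrt N+C\sqrt n+t)$, which is at least $\sqrt N-C'\sqrt n-t$ for a possibly larger absolute constant $C'$. Two minor subtleties are worth flagging: Talagrand's inequality as stated in Lemma~\ref{lem:talagrand} yields only the upper tail, so the lower tail should be obtained either from the median-based form of the convex distance inequality or by applying the stated inequality to a convex truncation; and the constants $C$ and $c$ must be chosen jointly so that the net cardinality $9^n$ is absorbed by the exponent, which is exactly why the factor $C\sqrt n$ (rather than $c\sqrt n$) appears additively in the bound.
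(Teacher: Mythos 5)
The paper does not prove this lemma at all --- it is quoted verbatim from Vershynin's book --- so the relevant comparison is with the cited proof, and your reconstruction is the standard one: single-vector concentration of $\|Ax\|_2$ plus a $(1/4)$-net and the usual approximation step for $\sigma_{\max}$, followed by conditioning on that bound to get $\sigma_{\min}$. The one genuine difference is the concentration tool: Vershynin's argument controls $\|Ax\|_2^2 - N$ as a sum of independent subexponential variables via Bernstein's inequality, whereas you invoke Talagrand's convex-Lipschitz concentration (conveniently, the paper's Lemma~\ref{lem:talagrand}); both yield the same subgaussian tail at a fixed unit vector, Bernstein being the route that generalizes to arbitrary subgaussian rows, Talagrand being the natural choice for bounded entries and the one already available in this paper. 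Your argument is essentially correct, and you rightly flag that Lemma~\ref{lem:talagrand} as stated gives only the upper tail about the mean, so the lower tail must come from the two-sided median form of the convex distance inequality. One step is stated more briefly than it deserves: closeness of mean and median alone does not give the needed lower estimate $\E\|Ax\|_2 \geq \sqrt{N} - O(1)$; you should add that two-sided concentration bounds the variance by an absolute constant, so $(\E\|Ax\|_2)^2 \geq \E\|Ax\|_2^2 - O(1) = N - O(1)$, after which everything you wrote goes through. (Incidentally, the probability bound in the paper's statement has a sign typo: it should read $1-2\exp(-ct^2)$, which is what your calculation produces.)
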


\begin{lemma}\label{lem:r_bound}
Let $U_1,\dots,U_k\in \R^n$ be orthonormal vectors and $s_1,\dots,s_k$ be independent Rademacher random variables. It holds that
$$
\Pr\left\{\left\|  \sum_{i=1}^k s_i \cdot U_i \right\|_{\infty} \le 3\sqrt{\ln k}\right\} \ge 1 - \frac{1}{k^{1.3}}.
$$
\end{lemma}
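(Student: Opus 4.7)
The plan is to bound $\|r\|_\infty$ for $r:=\sum_{i=1}^k s_i U_i$ by applying Hoeffding's inequality coordinatewise, and then to combine the coordinate tail bounds in a way that replaces the naive $n$-sized union bound with one controlled by the total ``mass'' $\sum_j \sigma_j^2 = k$.

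First I would define $\sigma_j^2 := \sum_{i=1}^k (U_i)_j^2$ for $j\in[n]$. Writing $U\in\R^{n\times k}$ for the matrix with columns $U_1,\dots,U_k$, the orthonormality hypothesis says $U^T U = I_k$, so $UU^T$ is an orthogonal projection of rank $k$; in particular $\sigma_j^2 = (UU^T)_{jj} \in [0,1]$ for every $j$ and $\sum_{j=1}^n \sigma_j^2 = \operatorname{tr}(UU^T) = k$. Since $r_j = \sum_i s_i (U_i)_j$ is a Rademacher sum with squared coefficient-sum $\sigma_j^2$, the two-sided form of Lemma~\ref{lem:hoeffding} yields
$$\Pr\bigl\{|r_j| > 3\sqrt{\ln k}\bigr\} \;\le\; 2\exp\!\left(-\frac{9\ln k}{2\sigma_j^2}\right) \;=\; 2\,k^{-9/(2\sigma_j^2)}$$
for coordinates with $\sigma_j > 0$, while $r_j = 0$ deterministically when $\sigma_j = 0$.

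The main obstacle, and the reason a blind union bound over all $n$ coordinates is inadequate, is that $n$ can be much larger than $k$. The key input that sidesteps this is the elementary inequality $k^{-9/(2x)} \le x\cdot k^{-9/2}$, valid for every $x\in(0,1]$ as soon as $k\ge 2$: after substituting $y = 1/x \ge 1$ it reduces to $\ln y \le (9/2)(y-1)\ln k$, which follows from the classical bound $\ln y \le y-1$ together with $(9/2)\ln k \ge 1$ for $k\ge 2$. Applying it with $x = \sigma_j^2$ and summing over $j\in[n]$ gives
$$\Pr\bigl\{\|r\|_\infty > 3\sqrt{\ln k}\bigr\} \;\le\; 2\sum_{j=1}^n k^{-9/(2\sigma_j^2)} \;\le\; 2k^{-9/2}\sum_{j=1}^n \sigma_j^2 \;=\; 2k^{-7/2},$$
a bound that depends only on $k$. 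Since $2k^{-7/2} \le k^{-1.3}$ whenever $k\ge 2$, the conclusion follows; the residual case $k=1$ is vacuous because the target probability bound $1 - k^{-1.3} = 0$ is trivial there.
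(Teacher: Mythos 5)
Your proof is correct, and it starts from the same place as the paper's: bound each coordinate $r_j=\sum_i s_i(U_i)_j$ by Hoeffding's inequality (Lemma~\ref{lem:hoeffding}), using orthonormality to control the coefficient mass via the diagonal of the projection $UU^T$, so that $\sigma_j^2=\sum_i (U_i)_j^2\le 1$. Where you genuinely depart from the paper is in how the coordinate bounds are combined. The paper simply takes a union bound (written there as ``over all $j\in[k]$'', though the coordinates range over $[n]$), which only yields the stated $1-k^{-1.3}$ guarantee when $n$ is at most polynomially larger than $k$ with a small exponent; in the paper's application this is harmless, since the lemma is invoked with $n=2^d$ and $k=\Omega(2^d/\poly(d))$, so $n=\widetilde{O}(k)$ and the naive union bound with the full Hoeffding exponent $k^{-9/2}$ suffices. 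Your weighted union bound, using $\sum_j\sigma_j^2=\operatorname{tr}(UU^T)=k$ together with the elementary inequality $k^{-9/(2x)}\le x\,k^{-9/2}$ for $x\in(0,1]$ and $k\ge 2$, removes all dependence on $n$ and therefore proves the lemma at its stated level of generality, for arbitrary $n$, with the final bound $2k^{-7/2}\le k^{-1.3}$ and the vacuous case $k=1$ handled separately. In short: same key lemma and same use of orthonormality, but a sharper aggregation step that actually repairs the imprecision in the paper's union bound; the only cosmetic point is that Lemma~\ref{lem:hoeffding} is stated one-sided, so you should note (as you implicitly do) that the two-sided form follows from the symmetry of Rademacher sums.
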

\begin{proof}
Let $Z = \sum_{i=1}^k s_i U_i$, then
$$
Z_j = \sum_{i=1}^k s_i  U_{i, j}.
$$
Since  $\left\{ U_{i} \right\}$ is a set of orthonormal vectors, we have that
$$
 \sum_{i=1}^k U_{i, j}^2 \le 1.
$$
It follows from Hoeffding's inequality (Lemma~\ref{lem:hoeffding}) that for each $j \in [k]$,
\[
\Pr\{|Z_j|\geq 3\sqrt{\ln k}\} \leq \exp(-2\ln k).
\]
The claimed inequality follows by taking a union bound over all $j \in [k]$.
\end{proof}

We also need a result concerning uniform approximation of smooth functions by polynomials. Let $P_n$ denote the space of  polynomials of degree at most $n$. For a given function $f\in C[a,b]$, the best degree-$n$ approximation error $E_n(f; [a,b])$ is defined to be
\[
E_n(f;[a,b]) = \inf_{p\in P_n} \|f-p\|_\infty,
\]
where the $\|\cdot\|_\infty$ norm is taken over $[a,b]$. The following bound on approximation error is a classical result.

\begin{lemma}[{\cite[p23]{rivlin}}]\label{lem:poly_approx}
Let $f(x)$ have a $k$-th derivative on $[-1,1]$. If $n>k$, 
\[
	E_n(f;[-1,1]) \leq \frac{6^{k+1}e^k}{(k+1)n^k}\ \omega_k\left(\frac{1}{n-k}\right),
\]
where $\omega_k$ is the modulus of continuity of $f^{(k)}$, defined as
\[
	\omega_k(\delta) = \sup_{\substack{x,y\in [-1,1]\\ |x-y| \leq \delta}} \left|f^{(k)}(x) - f^{(k)}(y)\right|.
\]
\end{lemma}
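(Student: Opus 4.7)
The plan is to prove this Jackson-type inequality via the standard two-step argument in approximation theory: establish the base case $k=0$ with an explicit kernel construction, then bootstrap to general $k$ by approximating $f^{(k)}$ and integrating $k$ times. The main steps I would carry out, in order, are the following.

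First, I would handle the base case $k=0$, namely the classical Jackson inequality $E_n(f;[-1,1]) \leq C\,\omega_0(1/n)$ for some absolute constant $C$. The standard construction is to set $x=\cos\theta$, so that $g(\theta)=f(\cos\theta)$ is a continuous, even, $2\pi$-periodic function whose modulus of continuity is dominated by $\omega_0$, and then convolve $g$ with the normalized Jackson kernel $J_n(t) = c_n\bigl(\sin(nt/2)/\sin(t/2)\bigr)^4$. Convolution with $J_n$ yields an even trigonometric polynomial of degree at most $n$, and the standard moment estimates for $J_n$ show that the convolution approximates $g$ within a universal constant times $\omega_0(1/n)$; translating back via $x=\cos\theta$ produces an algebraic polynomial of degree at most $n$ with the claimed bound.

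Second, for $k\geq 1$, I would apply the base case to $f^{(k)}\in C[-1,1]$: there exists a polynomial $q$ of degree at most $n-k$ with
\[
\|f^{(k)}-q\|_\infty \leq C\,\omega_k\!\left(\tfrac{1}{n-k}\right).
\]
Let $p$ be the $k$-fold antiderivative of $q$, with constants of integration chosen so that $p$ and its first $k-1$ derivatives agree with those of $f$ at $x=0$ (say). Then $p$ is a polynomial of degree at most $n$, and iterating the fundamental theorem of calculus $k$ times yields
\[
|f(x)-p(x)| \;\leq\; \frac{|x|^k}{k!}\,\|f^{(k)}-q\|_\infty \;\leq\; \frac{1}{k!}\,\|f^{(k)}-q\|_\infty
\]
on $[-1,1]$. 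Combining with the base-case bound and using $1/k!\leq (e/k)^k$ already gives $E_n(f;[-1,1]) = O\bigl(\omega_k(1/(n-k))/n^k\bigr)$ of the right qualitative shape.

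Third, I would tighten the constants to recover the precise prefactor $6^{k+1}e^k/((k+1)n^k)$. Here one has to be more careful than a blind $k$-fold integration: instead of integrating $q$ directly, one approximates $f^{(k)}$ by the output of a kernel which already produces a polynomial of degree exactly $n$ after integration (so that there is no loss of degree at each step), and one uses the sharp form of Jackson's inequality with constant $6$. The factor $1/(k+1)$ comes from a more careful norm bound on the iterated integral that exploits the fact that $p$ interpolates $f$'s derivatives, rather than the crude Taylor remainder estimate.

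The main obstacle is the bookkeeping in the third step: the qualitative Jackson-type bound $E_n(f)\lesssim \omega_k(1/n)/n^k$ is routine, but hitting Rivlin's exact numerical constant requires simultaneously (i) invoking the sharp form of Jackson's theorem for continuous functions on $[-1,1]$ with constant $6$, (ii) using an approximation scheme that does not lose a factor of $n$ per integration, and (iii) carefully applying Stirling's bound $k!\geq (k/e)^k$ in the final tally. Since the statement is invoked as a black box from the literature, in practice I would simply quote the argument from Rivlin's monograph rather than reprove the sharp constant from scratch.
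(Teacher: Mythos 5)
The paper does not prove this lemma at all --- it is quoted as a black box from Rivlin's monograph --- so the only question is whether your sketch would actually deliver the stated bound, and it would not: there is a genuine gap in your second step. Taking the best degree-$(n-k)$ approximation $q$ to $f^{(k)}$, integrating $k$ times, and bounding the error by the Taylor remainder gives $|f(x)-p(x)| \le \frac{1}{k!}\|f^{(k)}-q\|_\infty \le \frac{C}{k!}\,\omega_k\bigl(\frac{1}{n-k}\bigr)$, which contains \emph{no} decay in $n$ whatsoever; the inequality $1/k! \le (e/k)^k$ does not convert this into $O\bigl(\omega_k(1/(n-k))/n^k\bigr)$, since for fixed $k$ the left side is a constant while the right side tends to $0$ as $n\to\infty$. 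So the issue is not merely the sharp constant $6^{k+1}e^k/((k+1)n^k)$ deferred to your third step --- the essential factor $n^{-k}$ is never produced by your argument.

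The standard (Rivlin-style) proof gains a factor of order $1/n$ \emph{per derivative} by applying Jackson's theorem once at each level rather than only once at the bottom. Concretely, if $q$ is the best degree-$(n-1)$ approximation to $f'$ and $F(x) = f(x) - \int_0^x q(t)\,dt$, then $\omega(F;\delta) \le \delta\,\|f'-q\|_\infty = \delta\,E_{n-1}(f')$, so Jackson's inequality $E_n(F)\le 6\,\omega(F;1/n)$ gives
\[
E_n(f) \;\le\; E_n(F) \;\le\; \frac{6}{n}\,E_{n-1}(f'),
\]
because $\int_0^x q$ has degree at most $n$. Iterating this $k$ times and finishing with the base case applied to $f^{(k)}$ yields $E_n(f) \le \frac{6^{k+1}}{n(n-1)\cdots(n-k+1)}\,\omega_k\bigl(\frac{1}{n-k}\bigr)$, and an elementary estimate on the falling factorial produces the prefactor $\frac{6^{k+1}e^k}{(k+1)n^k}$. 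Your base case ($k=0$ via the Jackson kernel) is fine, but the bootstrap must route every derivative through Jackson's theorem in this way; a single $k$-fold integration of a best approximant cannot recover the $n^{-k}$ factor, no matter how carefully the constants are tracked.
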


\section{An $\widetilde{\Omega}\left(\eps^{-2}\right)$ Lower Bound}\label{sec:1/eps^2 lb}
\newcommand{\midsize}{N^{(d)}}
\newcommand{\midvalue}{\Lambda^{(d, p)}_0}
\newcommand{\diag}{\Lambda^{(d, p)}}
\newcommand{\had}{H^{(d)}}
\newcommand{\improw}{\mathcal{R}}
\newcommand{\shiftr} {\Delta^{(d)}}
\newcommand{\ubf}{\overline{F}^{(d, p)}}

To prove the space lower bound of the data structure $Q_p$, we appeal to information theory. We shall encode random bits in $A$ such that if for each $x$, $Q_p(x)$ approximates $\|Ax\|_p^p$ (or $\|Ax\|_p$ when $p=0$) up to a $1\pm\eps$ factor with probability at least $0.9$, we can recover from $Q_p(x)$ some random bit (depending on $x$) with at least constant probability. A standard information-theoretic argument implies a lower bound on the size of $Q_p$ which is proportional to the number of random bits we can recover.

For each $p \ge 0$, we define a family of matrices $M^{(p)} = \{M^{(d, p)}\}_{d = 1}^{\infty}$, where $M^{(d, p)}$ is a $2^d \times 2^d$ matrix with entries defined as
$$
M^{(d, p)}_{i, j} = |\langle i, j \rangle|^p,
$$
where $i$ and $j$ are interpreted as vectors on the Boolean cube $\bcube^d$. We assume $0^0 = 0$ throughout the paper.


\subsection{Spectrum of Matrices $M$}\label{sec:spec}
\begin{lemma}\label{lem:decomp}
For any $d \ge 1$, $M^{(d, p)}$ can be rewritten as $\had \diag (\had)^T$ in its spectral decomposition form, where $\diag$ is a $2^d \times 2^d$ diagonal matrix, and $\had$ is a $2^d \times 2^d$ normalized Hadamard matrix. 
\end{lemma}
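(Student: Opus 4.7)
The plan is to exploit the group structure of the Boolean cube. First I would identify $\bcube^d$ with $\F_2^d$ via $\epsilon\mapsto (1-\epsilon)/2$, under which $\langle i,j\rangle = d - 2w_H(i\oplus j)$, where $\oplus$ denotes coordinate-wise XOR in $\F_2^d$. Therefore
\[
M^{(d,p)}_{i,j} = g(i\oplus j),\qquad g:\F_2^d\to\R,\quad g(k) = |d-2w_H(k)|^p.
\]
In other words $M^{(d,p)}$ is a group-convolution operator on the finite abelian group $\F_2^d$, and from here a standard character-theoretic argument will diagonalize it by the Hadamard matrix.

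Next I would set up the characters of $\F_2^d$ explicitly, namely $\chi_s(x)=(-1)^{\langle s,x\rangle}$ for $s\in\F_2^d$. These are precisely (the columns of) the normalized Hadamard matrix $\had$, with entries $\had_{i,s} = 2^{-d/2}(-1)^{\langle i,s\rangle}$ (where $i,s$ index the rows/columns by $\F_2^d$). The matrix $\had$ is symmetric and orthogonal, i.e., $\had(\had)^T=I$.

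Step three is a one-line verification of the eigenrelation. For any $s\in\F_2^d$ and $i\in\F_2^d$, substituting $k=i\oplus j$,
\[
(M^{(d,p)}\chi_s)(i) \;=\; \sum_{j\in\F_2^d} g(i\oplus j)\chi_s(j) \;=\; \sum_{k\in\F_2^d} g(k)\chi_s(i\oplus k) \;=\; \chi_s(i)\cdot \lambda_s,
\]
where I used the multiplicativity $\chi_s(i\oplus k)=\chi_s(i)\chi_s(k)$, and set $\lambda_s := \sum_{k\in\F_2^d} g(k)(-1)^{\langle s,k\rangle}$. Hence each column of $\had$ is an eigenvector of $M^{(d,p)}$ with eigenvalue $\lambda_s$. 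Placing the $\lambda_s$ on the diagonal of $\diag$ and using orthogonality of $\had$ yields $M^{(d,p)} = \had\,\diag\,(\had)^T$.

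The main obstacle is purely bookkeeping: one must fix a single bijection $\bcube^d\leftrightarrow\F_2^d$ and an indexing of $\had$ so that the diagonal entries of $\diag$ are precisely the Fourier coefficients of $g$; this is also what later sections need in order to invoke the eigenvalue estimates involving $\sum_i(-1)^{i/2}\binom{d/2}{i/2}|d-2i|^p$. Once the indexing is fixed, the computation above is essentially character orthogonality on $\F_2^d$, and no analytic input is required.
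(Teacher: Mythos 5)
Your proposal is correct and follows essentially the same route as the paper: the paper also views $M^{(d,p)}$ as a convolution operator $g(Ti+Tj)$ on $\F_2^d$, but it outsources the diagonalization to a cited lemma (\cite[Lemma 5]{disc14}) stating that the spectral factor is the normalized Hadamard matrix with the Fourier coefficients of $g$ on the diagonal, whereas you verify this eigenrelation directly by character multiplicativity. The direct computation is sound (including the orthogonality of $\had$ and the indexing bookkeeping you flag), so no gap remains.
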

\begin{proof}
Let $T$ be the natural isomorphism from the multiplicative group $\bcube^d$ to the additive group $\F_2^d$. Then $|\langle i,j\rangle|^p = g(Ti+Tj)$ for some function $g$ defined on $\F_2^d$. It can be computed (see~\cite[Lemma 5]{disc14}) that the singular values of a matrix with entries $g(Ti+Tj)$ are the absolute values of the Fourier coefficients of $g$. In our particular case, the singular values of $M^{(d, p)}$ are
\[
\left|\hat g(s)\right| = \left|\sum_{x\in \F_2^d} (-1)^{\langle s,x\rangle} g(x)\right|,\quad s\in \F_2^d.
\]
Furthermore, the proof of that lemma shows that $\had$ in the spectral decomposition is given by
\[
(\had_s)_z = \frac{1}{2^{d/2}}(-1)^{\langle s,z\rangle},\quad s,z\in \F_2^d,
\] 
which implies that $\had$ is a normalized Hadamard matrix.
\end{proof}

\begin{lemma}\label{lem:fourier_coeff}
When $d$ is even, there are at least $\binom{d}{d/2}$ entries in $\diag$ with absolute value 
\[
\left| \sum_{\substack{0\leq i\leq d\\ \text{$i$ is even}}} (-1)^{i/2} \binom{d / 2}{i / 2}  |d - 2i|^p\right| \triangleq  \midvalue \ge 0.
\]
\end{lemma}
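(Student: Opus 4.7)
The plan is to compute the Fourier coefficients $\hat g(s)$ for those $s \in \F_2^d$ of Hamming weight exactly $d/2$, show that they are all equal in absolute value to $\midvalue$, and then invoke Lemma~\ref{lem:decomp} to conclude. Recall from the proof of that lemma that the eigenvalues of $M^{(d,p)}$ are precisely the Fourier coefficients $\hat g(s)$, where $g:\F_2^d\to\R$ is determined by $g(Ti+Tj)=|\langle i,j\rangle|^p$. Identifying $x\in\F_2^d$ with the sign pattern $y\in\bcube^d$ via $(-1)^{x_i}=y_i$, one has $d-2w_H(x)=\sum_i y_i$, so $g(x)=\bigl|\sum_i y_i\bigr|^p$, and moreover $(-1)^{\langle s,x\rangle}=\prod_{i\in S}y_i$, where $S=\operatorname{supp}(s)$.

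Now I fix any $s$ with $|S|=d/2$ and split the summation according to $y_S\in\bcube^{S}$ and $y_{S^c}\in\bcube^{S^c}$. Parametrizing by $k=\#\{i\in S:y_i=-1\}$ and $j=\#\{i\in S^c:y_i=-1\}$, we get $\prod_{i\in S}y_i=(-1)^k$ and $\sum_i y_i=d-2(k+j)$, so
\[
\hat g(s)=\sum_{k=0}^{d/2}\sum_{j=0}^{d/2}(-1)^k\binom{d/2}{k}\binom{d/2}{j}\,|d-2(k+j)|^p.
\]
Grouping by $m=k+j$, the inner double sum becomes
\[
\hat g(s)=\sum_{m=0}^{d}|d-2m|^p\sum_{k}(-1)^k\binom{d/2}{k}\binom{d/2}{m-k},
\]
and the inner convolution is the coefficient of $z^m$ in $(1-z)^{d/2}(1+z)^{d/2}=(1-z^2)^{d/2}$. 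This coefficient equals $(-1)^{m/2}\binom{d/2}{m/2}$ when $m$ is even and $0$ when $m$ is odd, which collapses the sum to exactly $\midvalue$.

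The key observation is that this value depends on $s$ only through $|S|$, so every one of the $\binom{d}{d/2}$ vectors $s\in\F_2^d$ with Hamming weight $d/2$ contributes a Fourier coefficient of the same absolute value $\midvalue$. By Lemma~\ref{lem:decomp}, these Fourier coefficients are precisely the diagonal entries of $\diag$ (up to sign), yielding at least $\binom{d}{d/2}$ entries of absolute value $\midvalue$, as claimed. I do not expect any serious obstacle here: the argument is a direct Fourier computation, with the only mildly non-obvious step being the generating-function identity that makes the inner sum collapse and, in particular, makes it independent of the choice of $S$.
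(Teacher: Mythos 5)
Your proof is correct and follows essentially the same route as the paper: compute $\hat g(s)$ for each $s$ of Hamming weight $d/2$ by splitting the Boolean cube according to the Hamming weight inside and outside $\operatorname{supp}(s)$, collapse the inner convolution via the identity $(1-z)^{d/2}(1+z)^{d/2}=(1-z^2)^{d/2}$, and invoke Lemma~\ref{lem:decomp} to identify these $\binom{d}{d/2}$ equal-magnitude Fourier coefficients with entries of $\diag$. Your version is merely a bit more explicit in the sign-pattern bookkeeping; no gaps.
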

\begin{proof}
We shall use the notation in the proof of Lemma~\ref{lem:decomp}. Consider the Fourier coefficients $\hat g(s)$ for $s\in \F_2^d$ with Hamming weight $d/2$, which is the same for all $\binom{d}{d/2}$ such $s$'s. Note that
\[
\hat g(s) = \sum_{i=0}^{d} \sum_{j=0}^{i} (-1)^j \binom{d / 2}{j}\binom{d / 2}{i-j} g(i).
\]
By comparing the coefficients of $x^i$ on both sides of the identity $(1+x)^{d/2}(1-x)^{d/2} = (1-x^2)^{d/2}$, we see that
\begin{equation*} 
\sum_{j=0}^{i} (-1)^j \binom{d / 2}{j}\binom{d / 2}{i-j} = \begin{cases}																										(-1)^{i/2}\binom{d / 2}{i/2}, & i\text{ is even};\\
0, & i\text{ is odd}.																									\end{cases}
\end{equation*}
Hence
\[
\hat g(s) = \sum_{\text{even }i} (-1)^{i/2}\binom{d / 2}{i/2} g(i).
\]
Finally, observe that, for $x,y\in \{+1,-1\}^d$, we have $(d-\langle x,y\rangle)/2 = d_H(x,y) = w_H(Tx + Ty)$, where $d_H(x,y)$ denotes the Hamming distance between $x$ and $y$ and $w_H(s)$ denotes the Hamming weight of $s\in \F_2^d$. Hence $g(i) = |d-2i|^p$ and the conclusion follows.
\end{proof}

Let $\midsize$ be the multiplicity of the singular value $\midvalue$ of $M^{(d, p)}$. We know from the preceding lemma that $\midsize\geq \binom{d}{d/2}$. By permuting the columns of $\had$, we may assume the absolute value of the first $\midsize$ diagonal entries of $\diag$ are all equal to $\Lambda^{(d, p)}_0$, i.e., 
\[
\left|\diag_1\right| = \left|\diag_2\right| = \cdots = \left|\diag_{\midsize}\right| = \midvalue.
\]

The following lemma is critical in lower bounding $\midvalue$.
We found the result in a post on \texttt{math.stackexchange.com}~\cite{stackexchange} but could not find it in any published literature and so we reproduce the proof in full from~\cite{stackexchange}, with small corrections regarding convergence of integrals.

\begin{lemma}\label{lem:critical_identity} It holds for all complex $p$ satisfying $0<\Re p<2n$ that
\begin{equation}\label{eqn:critical_identity}
\sum_{k=1}^n (-1)^{k+1} \binom{2n}{n+k} k^p =
2^{2n-p}\frac{\Gamma(p+1)}{\pi} \left(\sin\frac{\pi p}{2}\right) \int_0^\infty \frac{\sin^{2n}t}{t^{p+1}} dt.
\end{equation}
\end{lemma}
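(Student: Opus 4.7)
The plan is to prove the identity in two stages: first establish it in the narrow strip $0<\Re p<2$ by combining a Fourier-style expansion of $\sin^{2n}$ with a classical Mellin transform, then extend to the full strip $0<\Re p<2n$ by analytic continuation. The idea is that once $\sin^{2n}(t)$ is written as a short cosine sum, each term integrates against $t^{-p-1}$ via a single well-known Mellin identity, and the remaining algebra is just $\Gamma$-function reflection.

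Step 1 (Expand $\sin^{2n}$). Writing $\sin(t) = (e^{it}-e^{-it})/(2i)$ and applying the binomial theorem produces
\[
\sin^{2n}(t) = \frac{1}{2^{2n}}\binom{2n}{n} + \frac{1}{2^{2n-1}}\sum_{k=1}^{n}(-1)^k\binom{2n}{n+k}\cos(2kt).
\]
Setting $t=0$ gives $\binom{2n}{n}/2^{2n} = \frac{1}{2^{2n-1}}\sum_{k=1}^{n}(-1)^{k+1}\binom{2n}{n+k}$, and substituting this back rewrites the expansion as
\[
\sin^{2n}(t) = -\frac{1}{2^{2n-1}}\sum_{k=1}^{n}(-1)^{k+1}\binom{2n}{n+k}\bigl(\cos(2kt)-1\bigr).
\]
This form is crucial because each summand now vanishes at $t=0$, so its Mellin transform against $t^{-p-1}$ converges for $\Re p>0$.

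Step 2 (Mellin transform, $0<\Re p<2$). I will use the classical identity $\int_0^\infty(\cos u-1)u^{-p-1}\,du = \Gamma(-p)\cos(\pi p/2)$, valid for $0<\Re p<2$ and obtained by the standard analytic continuation of $\int_0^\infty u^{s-1}\cos u\,du = \Gamma(s)\cos(\pi s/2)$ past $\Re s=0$ by subtracting off the constant $1$. Substituting $u=2kt$ yields $\int_0^\infty(\cos(2kt)-1)t^{-p-1}\,dt = (2k)^p\Gamma(-p)\cos(\pi p/2)$, so plugging into Step 1 produces
\[
\int_0^\infty \frac{\sin^{2n}(t)}{t^{p+1}}\,dt = -\frac{2^p\,\Gamma(-p)\cos(\pi p/2)}{2^{2n-1}}\sum_{k=1}^{n}(-1)^{k+1}\binom{2n}{n+k}k^p.
\]
Combining the reflection formula $\Gamma(-p)\Gamma(p+1) = -\pi/\sin(\pi p)$ with $\sin(\pi p)=2\sin(\pi p/2)\cos(\pi p/2)$ gives $\Gamma(-p)\cos(\pi p/2) = -\pi/\bigl(2\sin(\pi p/2)\Gamma(p+1)\bigr)$, and solving for the sum yields exactly \eqref{eqn:critical_identity} throughout the strip $0<\Re p<2$.

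Step 3 (Analytic continuation). To promote the identity to the full strip $0<\Re p<2n$, I will verify both sides are analytic there. The left-hand side is entire in $p$ as a finite linear combination of $k^p=e^{p\ln k}$. On the right, $2^{2n-p}$ and $\sin(\pi p/2)$ are entire, $\Gamma(p+1)$ is analytic for $\Re p>-1$, and the integral $\int_0^\infty \sin^{2n}(t)t^{-p-1}\,dt$ is absolutely convergent in exactly the strip $0<\Re p<2n$ because $|\sin^{2n}(t)|\le t^{2n}$ near $0$ (forcing $\Re p<2n$) and $|\sin^{2n}(t)|\le 1$ at infinity (forcing $\Re p>0$); differentiation under the integral sign then gives analyticity throughout. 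Since the two analytic functions agree on the substrip $0<\Re p<2$, they must agree on $0<\Re p<2n$, completing the proof.

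The main obstacle I foresee is not conceptual but bookkeeping: the precise constants $2^{2n-p}\Gamma(p+1)\sin(\pi p/2)/\pi$ emerge from a carefully tracked cancellation involving three factors of $2$, the reflection formula, and the $(-1)^{k+1}$ sign, so one must be meticulous at each step to land on the stated form rather than a spurious variant.
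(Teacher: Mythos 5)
Your proof is correct, and while it starts from the same cosine expansion of $\sin^{2n} t$ as the paper (the paper's equation~\eqref{eqn:aux}), it handles the analytic step differently. The paper keeps the terms $\cos(2kt)$, whose Mellin transforms against $t^{-p-1}$ do not converge near $t=0$ for $\Re p>0$, and therefore introduces an Abel-type regularizer $e^{-st}$, evaluates each regularized integral in closed form for $-1<\Re p<0$, analytically continues in $p$, and only then lets $s\to 0^+$ (which requires justifying the limit under the integral sign). You instead exploit the relation obtained at $t=0$ to rewrite the expansion in terms of $\cos(2kt)-1$, so each term has a convergent Mellin transform on the strip $0<\Re p<2$ via the classical identity $\int_0^\infty(\cos u-1)u^{-p-1}\,du=\Gamma(-p)\cos(\pi p/2)$; after the same reflection-formula bookkeeping you get \eqref{eqn:critical_identity} on $0<\Re p<2$ and then extend to $0<\Re p<2n$ by a clean one-parameter analytic continuation, since the left side is entire and the right side is analytic exactly on that strip. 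Your route avoids the two-parameter regularization and the $s\to 0^+$ interchange entirely, at the (small) cost of invoking the subtracted-cosine Mellin identity and the full-strip analyticity of the integral; the paper's route is more computational but self-contained in the sense that it only uses the unsubtracted cosine transform. One cosmetic point: at $p=1$ (the only integer in your base strip) both $\Gamma(-p)$ and the division by $\sin(\pi p)$ in the reflection step are singular, so strictly you obtain the identity on $\{0<\Re p<2\}\setminus\{1\}$; this is harmless because the identity theorem only needs agreement on a set with an accumulation point, and your Step 3 already supplies the analyticity of both sides at $p=1$ — the paper deals with the analogous issue the same way, by continuing to the integers at the end.
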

\begin{proof}
By the binomial theorem,
\[
(z-1)^{2n} = \sum_{k=0}^{2n}\binom{2n}{k}(-z)^k =\sum_{k=-n}^n\binom{2n}{k+n}(-z)^{k+n}.
\]
Splitting the sum at $k=-1$ and $k=1$, we have
$$\sum_{k=1}^n (-1)^k\binom{2n}{n+k}\big(z^k + z^{-k} \big) = (-1)^n(z-1)^{2n}\,z^{-n} - \binom{2n}{n}.$$
Plugging in $z=\exp{(2it)}$ yields
\begin{equation}\label{eqn:aux}
(2\sin t)^{2n}= 2\sum_{k=1}^n (-1)^k\binom{2n}{n+k}\cos(2kt)
+ \binom{2n}{n}.
\end{equation}
Plug \eqref{eqn:aux} into the integral on the right-hand side of \eqref{eqn:critical_identity} and introduce a regularizer $\exp(-st)$ ($s>0$) under the integral sign:
\[
\int_0^\infty e^{-st} \frac{(2\sin{t})^{2n}}{t^{p+1}}dt = 
 2\sum_{k=1}^n (-1)^k\binom{2n}{n+k} \int_0^\infty \frac{e^{-st}\cos(2kt)}{t^{p+1}} dt
+ \binom{2n}{n} \Gamma(-p)s^{p},\quad -1<\Re p<0.
\]
One can compute that
\begin{multline*}
\int_0^\infty \frac{e^{-st}\cos(2kt)}{t^{p+1}}dt =
\frac{1}{2}\int_0^\infty \frac{e^{-st}(e^{i2kt}+e^{-i2kt})}{t^{p+1}}dt = \frac{(s-2ki)^p+(s+2ki)^p}{2}\Gamma(-p)\\
= (4k^2+s^2)^{\frac{p}{2}}\cos\left(p\arctan\frac{2k}{s}\right)\Gamma(-p),\quad -1<\Re p<0.
\end{multline*}
It follows that
\begin{multline*}
\int_0^\infty e^{-st} \frac{(2\sin{t})^{2n}}{t^{p+1}}dt \\
= 
 2\sum_{k=1}^n (-1)^k\binom{2n}{n+k}(4k^2+s^2)^{\frac p2}\cos\left(p\arctan\frac{2k}{s}\right)\Gamma(-p)
+ \binom{2n}{n} \Gamma(-p)s^{p},\quad -1<\Re p<0.
\end{multline*}
It is easy to verify that the integral on the left-hand side is analytic whenever the integral converges. Analytic continuation permits $p$ to be extended to $\{p: -1<\Re p<2n\}\setminus \Z$. Now, for $p$ such that $0<\Re p<2n$ and $p\notin \Z$, let $s\to 0^+$ on both sides. It is also easy to verify that we can take the limit $s\to 0^+$ under the integral sign, hence
\begin{equation}\label{eqn:aux2}
\int_0^\infty \frac{(2\sin{t})^{2n}}{t^{p+1}}dt = 2\sum_{k=1}^n (-1)^k\binom{2n}{n+k} (2k)^p\cos\left(\frac{\pi p}{2}\right)\Gamma(-p),\quad 0<\Re p<2n, p\not\in \Z.
\end{equation}
Invoking the reflection identity (see, e.g.~\cite[p9]{special_functions})
\begin{equation}\label{eqn:reflection_identity}
\Gamma(-p)\Gamma(1+p) = -\frac{\pi}{\sin(p\pi)},\quad p\not\in\Z
\end{equation}
we obtain that
\[
\sum_{k=1}^n (-1)^k\binom{2n}{n+k} k^p = -\frac{2^{2n-p}\Gamma(p+1)}{\pi}\left(\sin\frac{p\pi}{2}\right)\int_0^\infty \frac{\sin^{2n} t}{t^{p+1}}dt,\quad 0<\Re p<2n, p\not\in \Z.
\]
Finally, analytic continuation extends $p$ to the integers in $(0,2n)$.
\end{proof}

As an immediate corollary of Lemma~\ref{lem:critical_identity}, we have:
\begin{corollary}\label{cor:singular_value_lb} Suppose that $d \in 8\Z$. There exists an absolute constant $c > 0$ such that
\[
\midvalue \geq c\frac{2^{d/2}}{\sqrt d} \left|\sin\frac{p \pi}{2}\right|.
\]
\end{corollary}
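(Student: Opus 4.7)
The plan is to transform $\midvalue$ into a sum that matches the left-hand side of \eqref{eqn:critical_identity}, then apply Lemma~\ref{lem:critical_identity} to convert it into an integral, and finally bound the integral from below by a Laplace-type estimate around its peak at $t = \pi/2$.

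For the transformation, substitute $i = 2j$ to write $\midvalue = 2^p|T|$ with $T = \sum_{j=0}^{n}(-1)^j\binom{n}{j}|n-2j|^p$ and $n = d/2$. The assumption $d \in 8\Z$ ensures that both $n$ and $m := n/2 = d/4$ are even. The sum $T$ is invariant under the reflection $j\mapsto n-j$ (using $n$ even) and the middle term at $j=m$ vanishes for $p>0$, giving $T = 2\sum_{j=0}^{m-1}(-1)^j \binom{2m}{j}(n-2j)^p$. Setting $k=m-j$ and using $\binom{2m}{m-k}=\binom{2m}{m+k}$ together with $(-1)^{m-k}=(-1)^k$ (where $m$ even is crucial, and is precisely why we assume $d\in 8\Z$ rather than merely $d\in 4\Z$) yields
\[
\midvalue = 2^{2p+1}\left|\sum_{k=1}^{m}(-1)^{k+1}\binom{2m}{m+k}\,k^p\right|,
\]
which is exactly the form needed for Lemma~\ref{lem:critical_identity} with parameter $m=d/4$. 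Applying that lemma (valid for $0<p<2m$, which holds once $d$ is large enough relative to $p$) gives
\[
\midvalue = \frac{2^{d/2+p+1}\,\Gamma(p+1)}{\pi}\left|\sin\frac{\pi p}{2}\right|\int_0^\infty \frac{\sin^{2m}t}{t^{p+1}}\,dt.
\]

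It remains to show $\int_0^\infty \sin^{2m}t/t^{p+1}\,dt = \Omega(m^{-1/2})$, with constants possibly depending on $p$. To this end, I would restrict the integration to the window $|t-\pi/2|\le m^{-1/2}$. On this window, $\sin t = \cos(t-\pi/2)\ge 1-(t-\pi/2)^2/2 \ge 1 - 1/(2m)$, so $\sin^{2m}t \ge (1-1/(2m))^{2m} \ge e^{-1}/2$ for $m$ large, while $t^{p+1}\le (\pi/2 + 1)^{p+1}$. The window has length $2/\sqrt{m}$, giving the desired $\Omega(m^{-1/2}) = \Omega(d^{-1/2})$ bound. Plugging back produces the claimed inequality $\midvalue \ge c\cdot 2^{d/2}/\sqrt{d} \cdot |\sin(\pi p/2)|$, where $c$ depends only on $p$.

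The main obstacle is the bookkeeping in the reduction to the form of \eqref{eqn:critical_identity}: one must carefully track the two reflections and re-indexings to verify that $d\in 8\Z$ exactly kills the residual $(-1)^{d/4}$ sign and delivers the summand $(-1)^{k+1}\binom{2m}{m+k}k^p$ as written in the lemma. The integral lower bound is a routine Laplace-type estimate by comparison with the Gaussian approximation of $\sin^{2m}t$ near its maximum and poses no real difficulty.
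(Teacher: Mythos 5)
Your proof is correct and takes essentially the same route as the paper: the same re-indexing (with $d\in 8\Z$ handling the signs) rewrites $\Lambda^{(d,p)}_0$ as $2^{2p+1}\bigl|\sum_{k=1}^m(-1)^{k+1}\binom{2m}{m+k}k^p\bigr|$ with $m=d/4$, and then Lemma~\ref{lem:critical_identity} converts it to the integral $\int_0^\infty \sin^{2m}t/t^{p+1}\,dt$. The only real difference is the integral estimate: you localize to a window of width $\Theta(m^{-1/2})$ around $t=\pi/2$, whereas the paper sums Wallis-type contributions $\int_{n\pi}^{(n+1)\pi}\sin^{2m}t\,dt=\Omega(m^{-1/2})$ over all periods, picking up a harmless factor $\zeta(p+1)/\pi^{p+1}$; both give the needed $\Omega(1/\sqrt d)$. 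One caveat: the corollary asserts an \emph{absolute} constant $c$, while your write-up explicitly settles for a $p$-dependent one. Your own bound does deliver an absolute constant with one extra observation, exactly parallel to the paper's treatment of its function $h(p)$: the $p$-dependent factor you end up with, $\Gamma(p+1)\bigl(\tfrac{2}{\pi/2+1}\bigr)^{p+1}$ (up to absolute constants), is continuous and positive on $(0,\infty)$, tends to a positive limit as $p\to 0^+$ and to $\infty$ as $p\to\infty$, hence is bounded below by a positive absolute constant. Finally, your remark that Lemma~\ref{lem:critical_identity} needs $p<d/2$ is an implicit requirement of the paper's proof as well and is innocuous in context, since $p$ is a fixed constant and $d$ grows.
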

\begin{proof}
Letting $2n=d/2$ and $k=n-i/2$, the summation in Lemma~\ref{lem:fourier_coeff} becomes
\[
2^{2p+1}\left|\sum_{k=1}^n (-1)^k\binom{2n}{n+k} k^p\right| = \frac{2^{d/2}2^{p+1}\Gamma(p+1)}{\pi}\left|\sin\frac{p\pi}{2}\right|\int_0^\infty \frac{\sin^{d} t}{t^{p+1}}dt.
\]
Since (cf. \cite[p511]{courant})
\[
\int_0^\pi \sin^d x\,dx = \frac{\sqrt{\pi}\Gamma(\frac{d+1}{2})}{\Gamma(\frac d2)}\geq \frac{C}{\sqrt d},
\]
where $C>0$ is an absolute constant, we have that
\[
\int_0^\infty \frac{\sin^{d} t}{t^{p+1}}dt \geq \sum_{n=0}^\infty \frac{1}{((n+1)\pi)^{p+1}} \int_{n\pi}^{(n+1)\pi} \sin^d x\,dx \geq \frac{C}{\sqrt d}\cdot\frac{\zeta(p+1)}{\pi^{p+1}}.
\]
Notice that $h(p) = \Gamma(p+1)\zeta(p+1)/(\pi/2)^{p+1}$ is a positive continuous function on $(0,\infty)$ and $h(p)\to\infty$ as $p\to\infty$ and $p\to 0^+$, it must hold that $\inf_{p>0} h(p) > 0$. The conclusion follows.
\end{proof}

\subsection{Orthogonalizing Rows}\label{sec:or}
Suppose we are given a matrix $\Pi \in \mathbb{R}^{n \times n}$ in its spectral decomposition form $\Pi = H \Sigma H^T$, where
$$
\Sigma_{i, i} = \begin{cases}
\pm \sigma, & i \le r; \\
0, & r < i \le n,
\end{cases}
$$
and $H$ is the normalized Hadamard matrix.
The goal of this section is to identify a set of orthogonal vectors, using rows of $\Pi$.

\begin{lemma}\label{lem:row_norm}
Each row of $\Pi$ has the same $\ell_2$ norm $\|\Pi_i\|_2= \sigma \sqrt{r / n}$.
\end{lemma}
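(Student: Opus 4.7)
The plan is to compute $\|\Pi_i\|_2^2$ directly from the spectral decomposition, exploiting two structural facts: the normalized Hadamard matrix $H$ is orthogonal (so $H^T H = I$), and every entry of $H$ has magnitude exactly $1/\sqrt{n}$.

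First I would write $\Pi_i$ as $H\Sigma H^T e_i$, since $\Pi_i$ (the $i$-th row of $\Pi$, viewed as a column vector) equals $\Pi^T e_i = \Pi e_i$ (using that $\Pi$ is symmetric, as it is its own spectral decomposition). Then
\[
\|\Pi_i\|_2^2 \;=\; e_i^T H\Sigma H^T H \Sigma H^T e_i \;=\; e_i^T H \Sigma^2 H^T e_i,
\]
where the second equality uses $H^T H = I$. Because $\Sigma$ is diagonal with $\Sigma_{jj} = \pm \sigma$ for $j \le r$ and $\Sigma_{jj} = 0$ for $j > r$, the matrix $\Sigma^2$ is diagonal with $(\Sigma^2)_{jj} = \sigma^2 \mathbf{1}_{\{j \le r\}}$. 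Expanding the quadratic form coordinate-wise gives
\[
\|\Pi_i\|_2^2 \;=\; \sum_{j=1}^n H_{ij}^2 (\Sigma^2)_{jj} \;=\; \sigma^2 \sum_{j=1}^r H_{ij}^2.
\]

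Finally, since $H$ is a normalized Hadamard matrix, $H_{ij} = \pm 1/\sqrt{n}$ and hence $H_{ij}^2 = 1/n$ for every $i,j$. Therefore $\|\Pi_i\|_2^2 = \sigma^2 \cdot r/n$, which yields $\|\Pi_i\|_2 = \sigma\sqrt{r/n}$ as claimed, uniformly in $i$. There is no real obstacle here: the statement is a direct consequence of orthogonality of $H$ together with the fact that every entry of a normalized Hadamard matrix has the same absolute value, so each row of $\Pi$ draws equally from the $r$ nonzero eigencomponents.
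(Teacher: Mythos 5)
Your proof is correct and follows essentially the same route as the paper's: both arguments reduce $\|\Pi_i\|_2$ via the orthogonality of $H$ (the paper writes $\|H_i\Sigma H^T\|_2 = \|H_i\Sigma\|_2$, you compute the quadratic form $e_i^T H\Sigma^2 H^T e_i$) and then use that every entry of the normalized Hadamard matrix has magnitude $1/\sqrt{n}$ while $\Sigma$ has exactly $r$ nonzero diagonal entries of magnitude $\sigma$. The only cosmetic difference is your explicit appeal to the symmetry of $\Pi$, which is harmless and not even needed for the computation.
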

\begin{proof}
$$
\|\Pi_i\|_2 = \|H_i \Sigma H^T\|_2 = \|H_i \Sigma\|_2.
$$
The lemma follows since all entries in $H$ have absolute value $1 / \sqrt{n}$, and the $r$ non-zero entries on the diagonal of $\Sigma$ have absolute value $\sigma$.
\end{proof}

To identify a set of orthogonal vectors using the rows of $\Pi$, we run a procedure similar to the standard Gram-Schmidt process. 
\begin{lemma}\label{lem:gram_schmidt}
There is a set $\improw \subseteq [n]$ with size $|\improw| = r / 100$ such that for each $i \in \improw$, $\Pi_i$ can be written as
\begin{equation}\label{equ:Pi_decomp}
\Pi_i = R_{i} + P_{i},
\end{equation}
where $\left\{ R_{i} \right\}_{i \in \improw}$ is a set of orthogonal vectors, and $P_{i}$ is the orthogonal projection of $\Pi_i$ onto the subspace spanned by $\left \{ R_{j} \right\}_{j \in \improw \setminus \{i\}}$.
Furthermore, for each $i \in \improw$, $\|R_{i}\|_2^2 \ge 99 / 100 \|\Pi_i\|_2^2 = 99/100 \cdot \sigma^2 r / n$.
\end{lemma}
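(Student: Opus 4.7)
The plan is to construct $\improw$ greedily via a Gram--Schmidt style procedure: at each step select an index $i$ whose row $\Pi_i$ still retains at least $99/100$ of its squared norm after being projected away from the span of previously picked rows, and show that this can be done $r/100$ times by using the rank constraint on $\Pi\Pi^\top$.

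In more detail, I would inductively produce indices $i_1, i_2, \dots$ together with orthogonal vectors $R_1, R_2, \dots$ by the following rule. Having chosen $\improw_t = \{i_1, \ldots, i_t\}$ and orthogonal $R_1, \ldots, R_t$ with $R_k = \Pi_{i_k} - P_{k-1}\Pi_{i_k}$ (where $P_{k-1}$ denotes the orthogonal projection onto $W_{k-1}=\spa\{R_1,\ldots,R_{k-1}\}$) and with $\|R_k\|_2^2 \ge \tfrac{99}{100}\|\Pi_{i_k}\|_2^2$, I pick some $i_{t+1} \in [n]\setminus \improw_t$ satisfying $\|P_t \Pi_{i_{t+1}}\|_2^2 \le \tfrac{1}{100}\|\Pi_{i_{t+1}}\|_2^2 = \sigma^2 r/(100 n)$ (using Lemma~\ref{lem:row_norm}), and set $R_{t+1} = \Pi_{i_{t+1}} - P_t \Pi_{i_{t+1}}$. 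Because each $R_{t+1}$ is by construction orthogonal to $W_t$ and each subsequent step orthogonalizes against $R_{t+1}$ as well, the final vectors $\{R_i\}_{i\in \improw}$ are pairwise orthogonal. Moreover $\Pi_i - R_i \in \spa\{R_j : j\in \improw,\ j\ne i\}$ and $R_i$ is orthogonal to that span, so $P_i := \Pi_i - R_i$ is precisely the orthogonal projection of $\Pi_i$ onto $\spa\{R_j\}_{j\in \improw\setminus\{i\}}$, giving the decomposition \eqref{equ:Pi_decomp}.

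The crux is proving the procedure never gets stuck before $|\improw| = r/100$. Suppose for contradiction that at some step $t < r/100$ no admissible $j$ exists, so that $\|P_t\Pi_j\|_2^2 > \sigma^2 r/(100 n)$ for every $j\notin \improw_t$. Since $\Pi_j \in W_t$ for $j\in \improw_t$ (so $\|P_t\Pi_j\|_2^2 = \|\Pi_j\|_2^2 = \sigma^2 r/n$ by Lemma~\ref{lem:row_norm}), summing yields
\[
\|P_t\Pi\|_F^2 = \sum_{j=1}^n \|P_t \Pi_j\|_2^2 > \frac{\sigma^2 r}{100 n}(n-t) + \frac{\sigma^2 r}{n} t = \frac{\sigma^2 r}{100 n}(n + 99 t).
\]
On the other hand, by the spectral form $\Pi\Pi^\top = \sigma^2 H\,\mathrm{diag}(\mathbf{1}_r,0)\,H^\top$, the matrix $\Pi\Pi^\top$ has operator norm exactly $\sigma^2$, so $P_t\Pi\Pi^\top$ has rank at most $t$ and operator norm at most $\sigma^2$, giving the upper bound $\|P_t\Pi\|_F^2 = \tr(P_t\Pi\Pi^\top) \le \sigma^2 t$. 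Combining the two inequalities produces $100 n t > r(n + 99t)$, i.e., $t(100 n - 99 r) > rn$, which since $r \le n$ forces $t > rn/(100 n - 99 r) \ge r/100$, contradicting $t < r/100$.

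The main (and essentially only) obstacle is getting the right trace bound on $P_t\Pi\Pi^\top$; all nonzero eigenvalues of $\Pi\Pi^\top$ equal $\sigma^2$, which is exactly what makes $\tr(P_t\Pi\Pi^\top)\le \sigma^2 t$ tight enough to beat the lower bound. Everything else is bookkeeping of the Gram--Schmidt step and of the norm-retention threshold.
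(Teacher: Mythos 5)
Your proof is correct and follows essentially the same route as the paper: a greedy Gram--Schmidt-style selection of rows, justified by a Frobenius-mass counting argument that exploits the flat spectrum of $\Pi$ (all $r$ nonzero singular values equal $\sigma$) together with the equal row norms from Lemma~\ref{lem:row_norm}. The only difference is cosmetic: where the paper lower-bounds the residual mass $\|(I-P_t)\Pi\|_F^2 \ge \|\Pi\|_F^2 - t\sigma^2$ via Eckart--Young--Mirsky (Theorem~\ref{thm:svd}) and then averages to exhibit a good new row, you upper-bound the complementary quantity $\|P_t\Pi\|_F^2 = \operatorname{tr}(P_t\Pi\Pi^T) \le \sigma^2 t$ by an elementary rank/operator-norm trace bound and run the same count as a contradiction, and these are equivalent since $\|\Pi\|_F^2 = \|P_t\Pi\|_F^2 + \|(I-P_t)\Pi\|_F^2$.
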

\begin{proof}
We show how to construct such a set $\improw$.
Suppose that we have found a set $\improw$ with size strictly less than $r / 100$ with $\Pi_i = R_i + P_i$ satisfying the stated constraints. We shall show how to increase the size of $\improw$ by one.

Let $\Pi = S + Q$. Here, for each $i \in [n]$ we have $\Pi_i = S_{i} + Q_{i}$, where $Q_{i}$ is the orthogonal projection of $\Pi_i$ onto the subspace spanned by $\left \{ R_{j} \right\}_{j \in \improw}$ and $S_i = \Pi_i - Q_i$.
Notice that for all $j \in \improw$ we have $Q_j = \Pi_j$ and $S_j = 0$.
Since $\|\Pi\|_F^2 = r  \sigma ^2$ and $\rank(Q) \le |\improw|$, by Theorem \ref{thm:svd} we have 
\[
\sum_{i \in [n]} \|S_i\|_2^2  = \|S\|_F^2 = \|\Pi - Q\|_F^2 \ge \|\Pi\|_F^2 - |\improw| \cdot  \sigma ^2 > \frac{99}{100} \|\Pi\|_F^2 = \frac{99}{100}\sum_{i \in [n]} \|\Pi_i\|_2^2 .
\]
Thus, by averaging, there exists $i \notin \improw$ such that $ \|S_{i}\|_2^2  > 99 / 100 \| \Pi_i\|_2^2$. 
We add $i$ into $\improw$ and set $R_i$ in \eqref{equ:Pi_decomp} to be $S_i$ and $P_i$ to be $Q_i$.
It is easy to verify that the stated constraints still hold. 
We continue this process inductively until $|\improw| = r / 100$.
\end{proof}

\begin{lemma}\label{lem:small_error}
Suppose that $e \in \mathbb{R}^{n}$ satisfies $\|e\|_{\infty} \le 0.1 \sigma \sqrt{r / n}$. 
Let $x \in \mathbb{R}^n$ be a random vector defined as
$$
x = \sum_{i \in \improw} s_i \cdot \frac{R_i}{\|R_i\|_2},
$$
where $\{s_i\}_{i \in \improw}$ is a set of \iid Rademacher random variables. 
Here the set $\improw$ and the orthogonal vectors $\{R_i\}_{i \in \improw}$ are as defined in Lemma \ref{lem:gram_schmidt}.
For each $i \in \improw$, it holds that
$$
\Pr_x\left\{\sign\left( (\Pi x+e)_i \right) = \sign(s_i)\right\}\geq \frac{4}{5}.
$$
\end{lemma}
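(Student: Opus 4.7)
The strategy is a direct computation that exploits the orthogonality built into the decomposition from Lemma~\ref{lem:gram_schmidt}. Write $(\Pi x + e)_i = \langle \Pi_i, x\rangle + e_i$ and use $\Pi_i = R_i + P_i$ to split this as
\[
(\Pi x + e)_i = \langle R_i, x\rangle + \langle P_i, x\rangle + e_i.
\]
Because $x = \sum_{j \in \improw} s_j R_j/\|R_j\|_2$ and the $R_j$'s are pairwise orthogonal, the ``signal'' term satisfies $\langle R_i, x\rangle = s_i \|R_i\|_2$, whose magnitude is at least $\sqrt{99/100}\,\sigma\sqrt{r/n}$ by Lemma~\ref{lem:gram_schmidt}. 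So the game reduces to showing that the ``noise'' $\langle P_i, x\rangle + e_i$ has magnitude strictly less than $\|R_i\|_2$ with probability at least $4/5$ over $\{s_j\}_{j \in \improw}$.

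The deterministic error contributes at most $|e_i| \le 0.1\sigma\sqrt{r/n}$. For the random part, since $P_i$ lies in $\spa\{R_j : j \in \improw \setminus \{i\}\}$, expand $P_i = \sum_{j \neq i} c_{ij} R_j$ so that $\langle P_i, x\rangle = \sum_{j \neq i} c_{ij} \|R_j\|_2\, s_j$ is a Rademacher sum in the variables $\{s_j\}_{j \neq i}$, which are independent of $s_i$. By orthogonality of the $R_j$'s, its variance equals $\sum_{j \neq i} c_{ij}^2 \|R_j\|_2^2 = \|P_i\|_2^2$. Since Lemma~\ref{lem:gram_schmidt} constructs $R_i$ as the component of $\Pi_i$ orthogonal to $\spa\{R_j\}_{j\neq i}$, we have $R_i \perp P_i$, and Pythagoras applied to $\Pi_i = R_i + P_i$ combined with Lemma~\ref{lem:row_norm} gives $\|P_i\|_2^2 = \|\Pi_i\|_2^2 - \|R_i\|_2^2 \le \sigma^2 r/(100n)$.

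A one-line Chebyshev bound then controls $|\langle P_i, x\rangle|$:
\[
\Pr\bigl\{|\langle P_i, x\rangle| \geq 0.8\,\sigma\sqrt{r/n}\bigr\} \;\leq\; \frac{\|P_i\|_2^2}{(0.8\,\sigma\sqrt{r/n})^2} \;\leq\; \frac{1}{64}.
\]
On the complementary event, $|\langle P_i, x\rangle + e_i| \le 0.8\sigma\sqrt{r/n} + 0.1\sigma\sqrt{r/n} = 0.9\sigma\sqrt{r/n} < \sqrt{99/100}\,\sigma\sqrt{r/n} \le \|R_i\|_2$, so the signal term $s_i \|R_i\|_2$ dominates and $\sign((\Pi x + e)_i) = \sign(s_i)$. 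This event has probability at least $1 - 1/64 \geq 4/5$, as required.

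There is no serious obstacle: the argument is a mechanical unpacking of Lemma~\ref{lem:gram_schmidt} together with an elementary second-moment estimate. The only subtle point to check is that $R_i \perp P_i$ (so Pythagoras applies), which is immediate from the construction of $R_i$ as the orthogonal complement of $\spa\{R_j\}_{j \neq i}$ inside $\Pi_i$.
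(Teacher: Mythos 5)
Your proof is correct and follows essentially the same route as the paper: decompose $(\Pi x+e)_i$ into the signal $s_i\|R_i\|_2$, the projection noise $\langle P_i,x\rangle$, and the deterministic error $e_i$, bound the noise by a moment estimate using $\|P_i\|_2\le \tfrac{1}{10}\|\Pi_i\|_2$ (Lemmas~\ref{lem:gram_schmidt} and~\ref{lem:row_norm}), and conclude that the signal dominates. The only difference is cosmetic: the paper applies Markov to $\E|\langle P_i,x\rangle|\le\|P_i\|_2$ to get failure probability $1/5$, while you apply Chebyshev to the second moment and get the slightly stronger $1/64$; both suffice for the stated $4/5$ bound.
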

\begin{proof}
For each $i \in \improw$, we have
$$
 \langle \Pi_i, x\rangle =  \langle R_{i}, x\rangle +  \langle P_{i}, x\rangle = s_i \cdot \|R_i\|_2 + \sum_{j \in \improw \setminus\{i\}} s_j \cdot \|\proj_{R_j} \Pi_i\|_2.
$$
We first analyze the second term. 
$$
\E \left| \langle P_{i}, x\rangle \right| \leq  \left(\sum_{j \in \improw \setminus \{i\}} \|\proj_{R_j} \Pi_i\|_2^2 \right)^{1/2} = \| P_{i} \|_2 \le  \frac{1}{10}\|\Pi_i\|_2.
$$
By Markov's inequality, with probability at least $4/5$, we have $\left| \langle P_{i}, x\rangle \right| \le  \|\Pi_i\|_2 / 2$. 

Recall that $\|R_{i}\|_2 \ge 99 / 100 \|\Pi_i\|_2$ (Lemma~\ref{lem:gram_schmidt}) and 
$\|\Pi_i\|_2 = \sigma \sqrt{r / n}$ (Lemma~\ref{lem:row_norm}). It happens with probability at least $4/5$ that $|e_i| +  \left|  \langle P_{i}, x\rangle \right| <  \left|   \langle R_{i}, x\rangle  \right|$, in which case we have $\sign\left( (\Pi x+e)_i \right) = \sign(s_i)$.
\end{proof}

\subsection{Space Lower Bound on $Q_p$}\label{sec:comm_lb}
In this section, we describe a reduction from the subspace sketch problem to the $\mathsf{INDEX}$ problem, a classical problem in communication complexity. We shall rephrase the problem in the context of a data structure. The $\mathsf{INDEX}$ data structure stores an input string $s \in \bcube^n$ and supports a query function, which receives an input $i \in [n]$ and outputs $s_i \in \{-1, 1\}$ which is the $i$-th bit of the underlying string. To prove the lower bound for the subspace sketch problem, we need the following lower bound for the distributional $\mathsf{INDEX}$ problem.

\begin{lemma}[\cite{miltersen1998data}]\label{lem:lb_game}
In the $\mathsf{INDEX}$ problem, suppose that the underlying string $s$ is drawn uniformly from $\bcube^n$ and the input $i$ of the query function is drawn uniformly from $[n]$. Any (randomized) data structure for $\mathsf{INDEX}$ that succeeds with probability at least $2 / 3$ requires $\Omega(n)$ bits of space, where the randomness is taken over both the randomness in the data structure and the randomness of $s$ and $i$.
\end{lemma}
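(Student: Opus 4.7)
The plan is to prove the lemma via a standard one-way communication / information-theoretic argument. View any data structure $D$ as a (randomized) message from an Alice who holds $s \in \{-1,1\}^n$ to a Bob who, on input $i \in [n]$, outputs a guess $\hat{s}_i(D,i)$ of $s_i$. The space used is at least $H(D)$, so it suffices to show $H(D) = \Omega(n)$.

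First I would reduce to the deterministic case by Yao's principle: since the average success probability over $(s,i)$ and the internal randomness is at least $2/3$, there is a fixing of the internal coins for which the deterministic data structure still answers correctly with probability $\geq 2/3$ over the joint distribution of $(s,i)$. Averaging over $i$, for a random $i \in [n]$ we have $\Pr[\hat{s}_i(D,i)\neq s_i] \leq 1/3$, and by a Markov-type argument one can also show that the per-coordinate error probabilities $p_i := \Pr[\hat{s}_i(D,i)\neq s_i]$ average to at most $1/3$.

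Next I would apply Fano's inequality coordinate-wise. Because $s_i \in \{-1,1\}$ is uniform and $\hat{s}_i$ is a function of $D$ (and the fixed $i$), Fano gives $H(s_i \mid D) \leq H_2(p_i)$, where $H_2$ denotes binary entropy. Using $H(s) = n$, independence of the $s_i$'s, and subadditivity of conditional entropy,
\begin{align*}
I(s;D) \;=\; n - H(s\mid D) \;\geq\; n - \sum_{i=1}^n H(s_i\mid D) \;\geq\; n - \sum_{i=1}^n H_2(p_i).
\end{align*}
By concavity of $H_2$ and $\tfrac{1}{n}\sum_i p_i \leq 1/3$, we have $\tfrac{1}{n}\sum_i H_2(p_i) \leq H_2(1/3) < 1$, so $I(s;D) \geq n(1 - H_2(1/3)) = \Omega(n)$. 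Since $H(D) \geq I(s;D)$, and the data structure must use at least $H(D)$ bits of storage, the desired $\Omega(n)$ space lower bound follows.

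The only slightly delicate step is the reduction from average-case guarantee over $(s,i,\text{randomness})$ to a deterministic data structure that still has average error $\leq 1/3$ coordinate-wise (in the sense needed for Fano); this is handled by fixing the random coins to a good setting via averaging, so I do not anticipate a genuine obstacle. The rest of the argument is a routine application of Fano's inequality plus subadditivity of entropy, and the resulting constant $1 - H_2(1/3) > 0$ yields the linear bound.
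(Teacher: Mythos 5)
Your proof is correct. Note that the paper does not prove this lemma at all: it is quoted verbatim from Miltersen--Nisan--Safra--Wigderson \cite{miltersen1998data}, where the underlying lower bound is obtained through (asymmetric) one-way communication complexity arguments rather than being spelled out as you do. Your argument is the standard, self-contained information-theoretic route: fix the coins by averaging against the fixed input distribution (you invoke Yao's principle, but a plain averaging over the internal randomness suffices, and no Markov-type argument is actually needed either --- since $i$ is uniform, the average of the per-coordinate error probabilities $p_i$ \emph{equals} the overall error probability, which is at most $1/3$), then apply Fano's inequality coordinate-wise together with subadditivity of conditional entropy and concavity plus monotonicity of $H_2$ on $[0,1/2]$ to get $I(s;D)\geq n\bigl(1-H_2(1/3)\bigr)$, and finally use that an $m$-bit structure satisfies $H(D)\leq m$. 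All of these steps are sound, and the constant $1-H_2(1/3)>0$ indeed yields the claimed $\Omega(n)$ bound; the only implicit points worth making explicit are that after fixing the coins the estimator $\hat{s}_i$ is a deterministic function of $D$ for each fixed $i$ (which is what Fano requires), and that the bound $H(D)\leq m$ is what converts mutual information into a space lower bound. Compared with the communication-complexity derivation in the cited reference, your proof is more elementary and gives an explicit constant, at the cost of being specialized to the distributional (uniform $s$, uniform $i$) setting --- which is exactly the form the lemma is stated in, so nothing is lost for the paper's purposes.
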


Throughout the reduction, $d$ is a fixed parameter with value to be determined later.
For the matrix $M^{(d, p)}$, we consider its spectrum-truncated version \[
\tilde{M}^{(d, p)} \triangleq \had \mathrm{diag}(\diag_1, \diag_2. \ldots, \diag_{\midsize}, 0, 0, \ldots, 0) (\had)^T.
\]

\begin{lemma}\label{lem:ortho_with_A}
Each row of $\tilde{M}^{(d, p)}$ is orthogonal to all eigenvectors associated with eigenvalues other than $\diag_1, \diag_2, \ldots, \diag_{\midsize}$.
\end{lemma}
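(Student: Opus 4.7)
The plan is to read off the lemma directly from the definition of $\tilde{M}^{(d,p)}$ as a truncation of the spectral decomposition of $M^{(d,p)}$, so no hard work is needed beyond keeping the eigenbases and indices straight. First I would recall from Lemma~\ref{lem:decomp} that $M^{(d,p)} = \had\diag(\had)^T$ where $\had$ is the normalized Hadamard matrix, so the columns $h_1, h_2, \ldots, h_{2^d}$ of $\had$ form an orthonormal eigenbasis of $M^{(d,p)}$, with $h_j$ having eigenvalue $\diag_j$. By the convention fixed just before Lemma~\ref{lem:critical_identity}, the first $\midsize$ diagonal entries of $\diag$ are exactly the $\pm \midvalue$ entries, while the remaining entries are different values. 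Thus ``the eigenvectors associated with eigenvalues other than $\diag_1, \ldots, \diag_{\midsize}$'' is precisely the set $\{h_j : j > \midsize\}$.

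Next, let $\tilde\Lambda = \operatorname{diag}(\diag_1, \ldots, \diag_{\midsize}, 0, \ldots, 0)$, so that by definition $\tilde{M}^{(d,p)} = \had \tilde\Lambda (\had)^T$. To verify orthogonality of every row of $\tilde{M}^{(d,p)}$ with $h_j$ for any $j > \midsize$, it suffices to show $\tilde{M}^{(d,p)} h_j = 0$, since the $i$-th coordinate of this product is exactly $\langle (\tilde M^{(d,p)})_i, h_j\rangle$. Using the orthonormality of the columns of $\had$, we have $(\had)^T h_j = e_j$, and hence
\[
\tilde{M}^{(d,p)} h_j = \had\, \tilde\Lambda\, (\had)^T h_j = \had\, \tilde\Lambda\, e_j.
\]
For $j > \midsize$ the $j$-th diagonal entry of $\tilde\Lambda$ is zero, so $\tilde\Lambda e_j = 0$ and therefore $\tilde M^{(d,p)} h_j = 0$, which is exactly what was needed.

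There is no genuine obstacle here; the only thing to be careful about is the indexing convention (the permutation used to put the $\pm\midvalue$ singular values in the first $\midsize$ positions of $\diag$) so that ``other eigenvalues'' and ``zeroed-out diagonal entries of $\tilde\Lambda$'' refer to the same set of indices. Once that bookkeeping is explicit, the statement is just the observation that a truncation of a symmetric matrix in its own eigenbasis annihilates the discarded eigenvectors.
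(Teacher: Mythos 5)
Your proof is correct and is essentially the paper's own argument: both read the statement off the spectral decomposition $\tilde M^{(d,p)} = \had\,\mathrm{diag}(\diag_1,\ldots,\diag_{\midsize},0,\ldots,0)\,(\had)^T$ from Lemma~\ref{lem:decomp}, the paper by noting $v_i^T w = 0$ for an eigenvector $w$ of a different eigenvalue, and you by computing $\tilde M^{(d,p)} h_j = \had\,\tilde\Lambda\,e_j = 0$ for $j > \midsize$. The only nitpick is that eigenvectors for other eigenvalues are linear combinations of the columns $h_j$ with $j>\midsize$ rather than literally the columns themselves, but your computation covers this immediately by linearity.
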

\begin{proof}
Let $v_1,\dots,v_{2^d}$ be the columns of $\had$. Then $\tilde M^{(d, p)} = \sum_{i=1}^{\midsize} \Lambda_i^{(d, p)} v_iv_i^T$. Let $w$ be an eigenvector corresponding to another eigenvalue. Then
\[
\tilde M^{(d, p)}w = \sum_{i=1}^{\midsize} \Lambda_i^{(d, p)} v_i (v_i^Tw) = 0,
\]
since $v_i$ and $w$ are orthogonal as they are associated with distinct eigenvalues.
\end{proof}
Now we invoke Lemma \ref{lem:gram_schmidt} on the matrix $\tilde{M}^{(d, p)}$ and obtain a set $\improw \subseteq [2^d]$ and a set of orthogonal vectors $\{R_i\}_{i \in \improw}$. We shall encode $|\improw|$ random bits in $A$ and show how to recover them.

Let
\[
x = \sum_{i \in \improw} s_i \cdot \frac{R_i}{\|R_i\|_2}.
\]
By Lemma \ref{lem:r_bound}, with probability $1 - \exp(-\Omega(d))$, it holds that $\|x\|_{\infty} \le 3\sqrt{d}$. 
We condition on $\|x\|_{\infty} \le 3\sqrt{d}$ in the rest the proof, since we can include the alternative case $\|x\|_\infty > 3\sqrt{d}$ in the overall failure probability.

Next we define a vector $y\in \R^{2^d}$ to be $y_i = ( x_i + \shiftr)^{1/p}$, where $\shiftr = 5\sqrt{d}$ is a constant that depends only on $d$.
Clearly, it holds for all $i \in [2^d]$ that $2\sqrt{d} \le y_i^p \le 8 \sqrt{d}$. Round each entry of $y$ to its nearest integer multiple of $\delta = 1/(p(8\sqrt{d})^{1-1/p}2^d)$, obtaining $\tilde{y}$. 
A simple calculation using the mean-value theorem shows that for all $i \in [2^d]$, 
\begin{equation}\label{equ:error_tilde_x}
|\tilde{y}_i^p -  ( x_i + \shiftr)| = |\tilde{y}_i^p - y_i^p| \leq p(8\sqrt{d})^{\frac{p-1}{p}}\delta \le 2^{-d}. 
\end{equation}

Finally we construct the matrix $A \in \mathbb{R}^{2^d \times d}$ to be used in the $\ell_p$ subspace sketch problem. 
The $j$-th row of $A$ is the $j$-th vector of $\bcube^d$, scaled by $\tilde{y}_j$.

\begin{lemma} The matrix $A$ constructed above for the $\ell_p$ subspace sketch problem satisfies $\kappa(A)\leq C$ for some constant $C$ that depends on $p$ only.
\end{lemma}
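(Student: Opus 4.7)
The plan is to directly compute $\|Ax\|_2^2$ for an arbitrary unit vector $x \in \mathbb{R}^d$ and to show it lies in an interval $[c_1(p), c_2(p)] \cdot 2^d$, from which $\kappa(A) = \sigma_{\max}(A)/\sigma_{\min}(A)$ is immediately bounded by $\sqrt{c_2(p)/c_1(p)}$.

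Since the $j$-th row of $A$ is $\tilde{y}_j \cdot j$ with $j \in \{-1,1\}^d$, we have
\[
\|Ax\|_2^2 = \sum_{j \in \{-1,1\}^d} \tilde{y}_j^2 \langle j,x\rangle^2.
\]
The key orthogonality identity is that for any $x \in \mathbb{R}^d$,
\[
\sum_{j \in \{-1,1\}^d} \langle j,x\rangle^2 = \sum_{k,l} x_k x_l \sum_{j \in \{-1,1\}^d} j_k j_l = 2^d \|x\|_2^2,
\]
since $\sum_j j_k j_l = 2^d \mathbf{1}_{\{k=l\}}$. Thus if all $\tilde{y}_j$ were equal to a common value $c$, then $\|Ax\|_2^2 = c^2 \cdot 2^d$ exactly. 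The weights $\tilde{y}_j$ are not identical, but by (\ref{equ:error_tilde_x}) together with the bound $2\sqrt{d} \le y_j^p \le 8\sqrt{d}$, each $\tilde{y}_j^p$ lies in $[2\sqrt{d} - 2^{-d},\, 8\sqrt{d} + 2^{-d}]$, so
\[
(2\sqrt{d} - 2^{-d})^{2/p} \le \tilde{y}_j^2 \le (8\sqrt{d} + 2^{-d})^{2/p}.
\]

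Combining with the identity above and taking $\|x\|_2 = 1$ yields
\[
(2\sqrt{d} - 2^{-d})^{2/p} \cdot 2^d \le \|Ax\|_2^2 \le (8\sqrt{d} + 2^{-d})^{2/p} \cdot 2^d.
\]
Therefore
\[
\kappa(A)^2 \le \left(\frac{8\sqrt{d} + 2^{-d}}{2\sqrt{d} - 2^{-d}}\right)^{2/p} \le 4^{2/p}(1 + o(1)),
\]
which is a constant depending only on $p$, as desired. No step here should be a serious obstacle; the only care needed is in tracking the discretization error from (\ref{equ:error_tilde_x}), which is absorbed trivially since $2^{-d} \ll \sqrt{d}$.
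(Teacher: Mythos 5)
Your proof is correct and follows essentially the same route as the paper: both arguments reduce to the observation that $A$ consists of the rows of the $\pm 1$ Boolean-cube matrix rescaled by factors whose ratio is bounded by roughly $4^{1/p}$, and that the unscaled matrix satisfies $\|Bx\|_2 = \Theta(2^{d/2}\|x\|_2)$. The only cosmetic difference is that you use the exact orthogonality identity $\sum_j \langle j,x\rangle^2 = 2^d\|x\|_2^2$ where the paper invokes Khintchine's inequality (which for the second moment is an equality anyway), and you track the $2^{-d}$ rounding error explicitly, which is slightly more careful than the paper's statement that the scalings lie in $[(2\sqrt{d})^{1/p},(8\sqrt{d})^{1/p}]$.
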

\begin{proof}
Let $B$ be the $2^d\times d$ matrix whose rows are all vectors in $\bcube^d$. Then, 
\[
\|Bx\|_2^2 = 2^d \E\left|\sum_{i=1}^d s_i x_i\right|^2,
\] 
where $s_1,\dots,s_d$ is a Rademacher sequence. It follows from Khintchine's inequality that 
\begin{equation}\label{eqn:khintchine_B}
C_1 2^{d/2}\|x\|_2 \leq \|Bx\|_2 \leq C_2 2^{d/2}\|x\|_2
\end{equation}
for some constants $C_1,C_2$.
Notice that the rows of $A$ are rescaled rows of $B$ with the scaling factors in $[(2\sqrt{d})^{1/p},(8\sqrt{d})^{1/p}]$. Hence $\kappa(A)\leq C$
for some constant $C$ that depends on $p$ only.
\end{proof}

The recovery algorithm is simple. The vector to be used for querying the data structure is the $i$-th vector on the Boolean cube $\bcube^d$, where $i \in \improw$. Given $Q_p(i)$, we guess the sign of $s_i$ to be just the sign of $Q_p(i) - \langle M^{(d, p)}_{i}, \shiftr \cdot \mathbf{1} \rangle$. 
Next we prove the correctness of the recovery algorithm.

The guarantee of the subspace sketch problem states that, with probability at least $0.9$, it holds simultaneously for all $i \in \bcube^d$ that
\begin{equation}\label{eqn:query_guarantee}
\|Ai\|_p^p \le Q_p(i) \le (1 + \varepsilon) \|Ai\|_p^p.
\end{equation}
We condition on this event in the remaining part of the analysis.

First we notice that for any $i \in \bcube^d$,
\begin{equation}\label{equ:exp_Ai}
\|Ai\|_p^p = \sum_{j = 1}^{2^d} |\langle A_{j}, i \rangle|^p = \sum_{j \in \bcube^d}  |\langle i,\tilde{y}_j \cdot  j\rangle|^p = \sum_{j \in \bcube^d}  \tilde{y}_j^p |\langle i,   j\rangle|^p .
\end{equation}

Next we give an upper bound on the value of $\|A  i\|_p^p$, for all $i \in \bcube^d$.
\begin{lemma}
For each $i \in \bcube^d$, 
the matrix $A$ constructed for the $\ell_p$ subspace sketch problem satisfies
$$
\|Ai\|_p^p \le 2^d \cdot (8d^{1.5})^p.
$$
\end{lemma}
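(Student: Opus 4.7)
The plan is a direct termwise calculation using the expansion $\|Ai\|_p^p = \sum_{j\in\bcube^d} \tilde{y}_j^p\, |\langle i,j\rangle|^p$ already recorded in \eqref{equ:exp_Ai}. Since there are $2^d$ summands, it suffices to prove a uniform upper bound on a single term and then multiply through by $2^d$, tracking the $d$-dependence carefully enough to land on $(8d^{1.5})^p$.

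First I would bound $\tilde{y}_j^p$. By construction, $\tilde{y}_j$ is the rounding of $y_j = (x_j + \shiftr)^{1/p}$ to the grid of multiples of $\delta$, so the mean-value estimate \eqref{equ:error_tilde_x} gives $|\tilde{y}_j^p - (x_j + \shiftr)| \le 2^{-d}$. Recall $\shiftr = 5\sqrt{d}$ and that we have conditioned on $\|x\|_\infty \le 3\sqrt{d}$ (the high-probability event from Lemma~\ref{lem:r_bound}), so $x_j + \shiftr \in [2\sqrt{d}, 8\sqrt{d}]$. Combining these yields $\tilde{y}_j^p \le 8\sqrt{d} + 2^{-d}$, which is at most $8\sqrt{d}(1+o(1))$ for any reasonable $d$.

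Next I would bound $|\langle i,j\rangle|^p$. Since $i,j \in \bcube^d$, each of the $d$ coordinatewise products $i_k j_k$ lies in $\{-1,1\}$, so $|\langle i,j\rangle| \le d$ trivially, giving $|\langle i,j\rangle|^p \le d^p$. Putting the two bounds together,
\[
\tilde{y}_j^p\, |\langle i,j\rangle|^p \;\le\; (8\sqrt{d}+2^{-d})\cdot d^p \;\le\; (8d^{1.5})^p
\]
(the last inequality holds for $p\ge 1$; for $p<1$ one uses that $\tilde{y}_j \le (8\sqrt{d})^{1/p}(1+o(1))$ so that $\tilde{y}_j\cdot|\langle i,j\rangle| \le (8\sqrt d)^{1/p}\cdot d$, and then raising to the $p$-th power gives the same bound $(8d^{1.5})^p$ up to the negligible rounding slack). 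Summing over the $2^d$ choices of $j\in\bcube^d$ yields $\|Ai\|_p^p \le 2^d \cdot (8d^{1.5})^p$, as claimed.

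I do not expect a genuine obstacle here: the lemma is a quantitative restatement of the fact that each row of $A$ has $\ell_\infty$ norm $\widetilde{O}(1)$ (up to the $y_j$ rescaling) and that $\bcube^d$ vectors have pairwise inner products at most $d$. The only place where care is needed is to handle both regimes $p\ge 1$ and $p<1$ uniformly when passing between $\tilde{y}_j$ and $\tilde{y}_j^p$ via the rounding bound \eqref{equ:error_tilde_x}; this is purely bookkeeping and does not require any new ingredients beyond those already established above.
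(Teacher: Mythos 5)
Your proposal follows exactly the paper's route: the paper's proof is the one-line observation that each term of the expansion \eqref{equ:exp_Ai} is at most $(8d^{1.5})^p$, and summing the $2^d$ terms gives the claim. For $p \ge 1$ your termwise bookkeeping ($\tilde y_j^p \le 8\sqrt{d}+2^{-d}$, $|\langle i,j\rangle|^p \le d^p$, and $8\sqrt{d}\,d^p \le 8^p d^{3p/2}$) is correct and is the same calculation the paper leaves implicit.

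The parenthetical handling of $p<1$, however, does not work as written: from $\tilde y_j \le (8\sqrt d)^{1/p}(1+o(1))$ and $|\langle i,j\rangle|\le d$ you get, after raising to the $p$-th power, the per-term bound $8\sqrt{d}\,d^p$, and for $p<1$ this \emph{exceeds} $(8d^{1.5})^p = 8^p d^{3p/2}$ by a factor of order $8^{1-p} d^{(1-p)/2}$, so the inequality you assert is reversed rather than off by ``negligible rounding slack.'' (Indeed, for small $p$ even the summed bound $2^d(8d^{1.5})^p$ is problematic, since already $\sum_j |\langle i,j\rangle|^p = \Theta(2^d d^{p/2})$ and $\tilde y_j^p \ge \sqrt d$, so $\|Ai\|_p^p = \Omega(2^d d^{(p+1)/2})$, which outgrows $2^d d^{3p/2}$ when $p<1/2$.) This is not a defect specific to your write-up --- the paper's own per-term claim has the same slippage for $p<1$ --- and it is harmless downstream: the honest bound $\|Ai\|_p^p \le 2^d\cdot\big(8\sqrt{d}+2^{-d}\big)d^p$, which your first two estimates already give, is a $\poly(d)$ bound and is all that is needed in \eqref{equ:calc_d}, where only the $2^d$ factor and a $\poly(d)$ loss matter. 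So either restrict the stated inequality to $p\ge 1$, or replace $(8d^{1.5})^p$ by $8\sqrt{d}\,d^p$ (equivalently $(8d^{1.5})^{\max\{p,1\}}$) and note that the rest of the argument is unaffected.
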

\begin{proof}
Each term in the summation \eqref{equ:exp_Ai}
is upper bounded by $(8d^{1.5})^p$,  which implies the stated lemma. 
\end{proof}
Combining the preceding lemma with the query guarantee~\eqref{eqn:query_guarantee}, the preceding lemma implies that, it holds for all $i \in \mathbb{R}^{2^d}$ that 
$$
|Q_p(i) - \|Ai\|_p^p| \le \varepsilon \cdot  2^d \cdot (8d^{1.5})^p.
$$

On the other hand, by \eqref{equ:error_tilde_x} and \eqref{equ:exp_Ai},
$$
 \left| \|Ai\|_p^p - \langle M^{(d, p)}_{i}, (x + \mathbf{1} \cdot \shiftr) \rangle \right| \le \sum_{j \in \bcube^d} |\langle i, j \rangle|^p \cdot  |\tilde{y}_i^p -  ( x_i + \shiftr)| \le d^{p}.
$$ 
Thus by the triangle inequality, 
$$
\left| (Q_p(i) - \langle M^{(d, p)}_{i}, \shiftr \cdot \mathbf{1} \rangle ) - ( \langle M^{(d, p)}_{i}, x\rangle)  \right| \le \varepsilon \cdot  2^d \cdot (8d^{1.5})^p + d^p.
$$
Notice that $x$ is a linear combination of rows of $\tilde{M}^{(d, p)}$. By Lemma~\ref{lem:ortho_with_A}, 
$$
M^{(d, p)} x = \tilde{M}^{(d, p)} x.
$$

By Lemma~\ref{lem:small_error}, if
\begin{equation}\label{equ:calc_d}
\varepsilon \cdot  2^d \cdot (8d^{1.5})^p + d^p  \le  0.1 \midvalue \sqrt{\midsize} / 2^{d/2},
\end{equation}
then with probability $4/5$, $(Q_p(i) - \langle M^{(d, p)}_{i}, \shiftr \cdot \mathbf{1} \rangle )$ has the same sign as $\left( \tilde{M}^{(d, p)} x \right)_i$, in which case we recover the correct sign. 
By Lemma~\ref{lem:lb_game}, the size of $Q_p$ is lower bounded by $\Omega(|\improw|)$. 

Now for each $\varepsilon > 0$ and $p\in (0,\infty)\setminus 2\Z$, by Lemma \ref{lem:fourier_coeff}, \eqref{equ:calc_d} can be satisfied by setting
$$
2^{d/2} \geq \frac{\sin(p \pi / 2)}{\varepsilon \cdot  \polylog(1 / \varepsilon)},
$$
which implies a space complexity lower bound of
$$
\Omega(|\improw|) = \Omega(\midsize) = \Omega(2^d / \sqrt{d}) = \Omega \left( \frac{1}{\varepsilon^2 \cdot  \polylog(1 / \varepsilon)}\right)
$$
bits.

Formally, we have proved the following theorem.
\begin{theorem}\label{thm:comm_lb_eps}
Let $p\in (0,\infty)\setminus 2\Z$. There exist constants $C \in (0,1]$ and $\eps_0 > 0$ that depend only on $p$ such that the following holds. Let $d_0 = 2\log_2(C/(\eps\polylog(1/\eps))$. For any $\eps \in (0,\eps_0)$, $d \geq d_0$ and $n\geq 2^{d_0}$, any data structure for the $\ell_p$ subspace sketch requires $\Omega \left( \frac{1}{\varepsilon^2 \cdot  \polylog(1 / \varepsilon)}\right)$ bits.
The lower bound holds even when $\kappa(A)\leq K$ for some constant $K$ that only depends on $p$.
\end{theorem}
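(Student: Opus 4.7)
The plan is to synthesize the pieces already developed in Sections~\ref{sec:spec}--\ref{sec:comm_lb} into a reduction from the $\mathsf{INDEX}$ problem. I would first pick $d$ to be the smallest integer in $8\Z$ satisfying $2^{d/2} \ge C/(\eps \polylog(1/\eps))$ for a constant $C = C(p)$ to be fixed later; by Corollary~\ref{cor:singular_value_lb} this guarantees $\midvalue = \Omega(2^{d/2}|\sin(p\pi/2)|/\sqrt d)$, and by Lemma~\ref{lem:fourier_coeff} there are $\midsize \ge \binom{d}{d/2} = \Omega(2^d/\sqrt d)$ eigenvalues of $M^{(d,p)}$ of this magnitude. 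The goal is then to embed $|\improw| = \midsize/100 = \widetilde\Omega(1/\eps^2)$ random bits into $A$ in such a way that a single $Q_p$-query recovers any one of them with constant probability; invoking Lemma~\ref{lem:lb_game} over the product of the internal randomness of $Q_p$ and the bits will then yield the bit lower bound.

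Next I would carry out the encoding already sketched in Section~\ref{sec:comm_lb}. Apply Lemma~\ref{lem:gram_schmidt} to the truncated matrix $\tilde M^{(d,p)} = \had \operatorname{diag}(\diag_1,\dots,\diag_{\midsize},0,\dots,0)(\had)^T$ to obtain the index set $\improw$ and the orthogonal vectors $\{R_i\}_{i\in\improw}$, then draw i.i.d.\ Rademacher signs $\{s_i\}_{i\in\improw}$ and set $x = \sum_{i\in\improw} s_i R_i/\|R_i\|_2$. By Lemma~\ref{lem:r_bound}, $\|x\|_\infty \le 3\sqrt d$ except on an event absorbable into the failure probability. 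Shift by $\shiftr = 5\sqrt d$, take $p$-th roots, and round to multiples of $\delta = 1/(p(8\sqrt d)^{1-1/p}2^d)$ to produce $\tilde y\in\R^{2^d}$ satisfying the discretization bound~\eqref{equ:error_tilde_x}. Finally take $A$ to have the Boolean vector $j\in\bcube^d$, scaled by $\tilde y_j$, as its $j$-th row; the Khintchine-based argument already given shows $\kappa(A) = O_p(1)$.

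The recovery step is the error budget. For any query $i\in\improw$ interpreted as a vector in $\bcube^d$, \eqref{equ:exp_Ai} and~\eqref{equ:error_tilde_x} give $|\,\|Ai\|_p^p - \langle M^{(d,p)}_i, x+\shiftr\mathbf 1\rangle\,|\le d^p$, while~\eqref{eqn:query_guarantee} combined with the crude bound $\|Ai\|_p^p \le 2^d(8d^{1.5})^p$ gives an additive error $\eps\cdot 2^d(8d^{1.5})^p$ from the sketch. Subtracting the deterministic quantity $\langle M^{(d,p)}_i, \shiftr\mathbf 1\rangle$ and using Lemma~\ref{lem:ortho_with_A} to replace $M^{(d,p)}x$ by $\tilde M^{(d,p)}x$, the output $Q_p(i) - \langle M^{(d,p)}_i,\shiftr\mathbf 1\rangle$ equals $(\tilde M^{(d,p)}x)_i$ up to an additive error of at most $\eps\cdot 2^d(8d^{1.5})^p + d^p$. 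Invoking Lemma~\ref{lem:small_error} with $\sigma = \midvalue$ and $r = \midsize$, the sign of this output agrees with $s_i$ with probability $\ge 4/5$ provided this error is at most $0.1\,\midvalue\sqrt{\midsize/2^d}$, i.e., condition~\eqref{equ:calc_d} holds.

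The main obstacle is choosing $d$ so that~\eqref{equ:calc_d} is actually satisfiable: the signal on the right is $\Theta_p(2^{d/2}/\polylog d)$, whereas the noise on the left grows like $\eps\cdot 2^d\cdot\polylog d$ (the $d^p$ term is lower order). Equating them forces $2^{d/2} = \Theta_p(1/(\eps\polylog(1/\eps)))$, which is precisely the definition of $d_0$ in the theorem statement; this is where $C$ and $\eps_0$ must be chosen as functions of $p$ (so that $|\sin(p\pi/2)|$ and the polylog factors are absorbed, and $d_0$ is a positive integer in $8\Z$). With this choice and any $d\ge d_0$ (one can pad with dummy coordinates without affecting $\kappa(A)$), the reduction succeeds with constant probability and Lemma~\ref{lem:lb_game} delivers the desired $\Omega(|\improw|) = \Omega(2^d/\sqrt d) = \Omega(\eps^{-2}/\polylog(1/\eps))$ bit lower bound, completing the proof.
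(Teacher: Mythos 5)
Your proposal is correct and follows essentially the same route as the paper: the spectral analysis of $M^{(d,p)}$ via Lemma~\ref{lem:fourier_coeff} and Corollary~\ref{cor:singular_value_lb}, the Gram--Schmidt extraction of nearly orthogonal rows (Lemma~\ref{lem:gram_schmidt}), the shift-and-round encoding of Rademacher bits into the row scalings of $A$, sign recovery via Lemma~\ref{lem:small_error} under condition~\eqref{equ:calc_d}, and the $\mathsf{INDEX}$ reduction (Lemma~\ref{lem:lb_game}) with $2^{d/2}=\Theta_p(1/(\eps\polylog(1/\eps)))$. The only cosmetic caveat is that ``padding with dummy coordinates'' for $d>d_0$ should be done so as to keep $A$ full rank (e.g., appending well-conditioned blocks rather than zero columns) so the $\kappa(A)=O_p(1)$ claim survives, but this does not affect the substance of the argument.
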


We note that the $\polylog(1/\eps)$ factors in the definition of $d_0$ and the bit lower bound may not have the same exponent.

Next we strengthen the lower bound to $\widetilde{\Omega}(d/\varepsilon^2)$ bits.

\begin{corollary}\label{cor:comm_lb_eps}
Under the assumptions of $C$, $\epsilon_0$, $d$ in Theorem~\ref{thm:comm_lb_eps} and the assumption that $n = \Omega \left( \frac{d}{\varepsilon^2 \cdot  \polylog(1 / \varepsilon)}\right)$, 
any data structure for the $\ell_p$ subspace sketch problem requires $\Omega \left( \frac{d}{\varepsilon^2 \cdot  \polylog(1 / \varepsilon)}\right)$ bits. The $\polylog(1/\eps)$ factors in the two $\Omega$-notations may not have the same exponent.
\end{corollary}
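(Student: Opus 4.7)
The plan is to reduce from $k$ independent copies of the hard instance in Theorem~\ref{thm:comm_lb_eps} via a block-diagonal padding. Let $d_0 = 2\log_2(C/(\eps\polylog(1/\eps)))$ be the dimension from Theorem~\ref{thm:comm_lb_eps} and set $k = \lfloor d/d_0 \rfloor$, so $k = \Omega(d/\log(1/\eps))$. For each $j \in [k]$, independently sample a hard instance $A^{(j)} \in \R^{n_0 \times d_0}$ from the construction in Theorem~\ref{thm:comm_lb_eps}, where $n_0 = 2^{d_0} = \Theta(1/(\eps^2\polylog(1/\eps)))$. Form
\[
A = \begin{pmatrix} A^{(1)} & & & \\ & A^{(2)} & & \\ & & \ddots & \\ & & & A^{(k)} \end{pmatrix},
\]
which lives in $\R^{n \times kd_0}$ with $n = kn_0 = \Theta(d/(\eps^2\polylog(1/\eps)))$, matching the hypothesis on $n$. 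If $kd_0 < d$, append $d - kd_0$ all-zero columns so that $A \in \R^{n \times d}$. Since the singular values of a block-diagonal matrix are the union of the singular values of its blocks, $\kappa(A)$ is still bounded by the constant $K$ of Theorem~\ref{thm:comm_lb_eps} (on the non-trivial column span).

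Now, given any data structure $Q_p$ for the $\ell_p$ subspace sketch problem on $A$ that uses $s$ bits, I would run $k$ independent copies of the recovery algorithm inside the proof of Theorem~\ref{thm:comm_lb_eps}, one per block. To answer a query for the $j$-th block, take the corresponding query vector $x \in \bcube^{d_0}$ from that inner reduction, embed it into $\R^d$ by placing it in the $j$-th block of coordinates and zero-padding elsewhere, and call $Q_p$ on this vector $\tilde x$. The block-diagonal structure gives the clean identity
\[
\|A\tilde x\|_p^p = \|A^{(j)} x\|_p^p,
\]
so the $(1\pm\eps)$-approximation returned by $Q_p$ is precisely a $(1\pm\eps)$-approximation of the quantity needed to recover a bit from the $j$-th block. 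The sign-recovery analysis via Lemma~\ref{lem:small_error} then applies block-by-block, independently.

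By the reduction in Theorem~\ref{thm:comm_lb_eps}, each block encodes $m = \Omega(1/(\eps^2\polylog(1/\eps)))$ random bits and each such bit is recoverable with constant probability above $2/3$ from a single query to $Q_p$. Chaining the blocks, $Q_p$ implements an $\mathsf{INDEX}$ data structure on a string of total length $N = km = \Omega(d/(\eps^2\polylog(1/\eps)))$, succeeding with constant probability $\geq 2/3$ on a uniformly random index (drawn by picking a uniform block followed by a uniform bit in that block). Applying Lemma~\ref{lem:lb_game} yields $s = \Omega(N) = \Omega(d/(\eps^2\polylog(1/\eps)))$, as claimed. No substantive obstacle arises; the only care needed is in the polylog bookkeeping, where the factor $\log(1/\eps) = \Theta(d_0)$ lost in partitioning into blocks of size $d_0$ is absorbed into the final $\polylog(1/\eps)$, which is why the exponents of $\polylog(1/\eps)$ on the two sides of the conclusion are allowed to differ.
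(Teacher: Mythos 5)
Your proposal is correct and is essentially identical to the paper's proof: the paper also pads $\lfloor d/d'\rfloor$ independent copies of the hard instance of Theorem~\ref{thm:comm_lb_eps} into a block-diagonal matrix and observes that the resulting data structure solves an $\mathsf{INDEX}$ instance on $\widetilde\Omega(d/\eps^2)$ bits, invoking Lemma~\ref{lem:lb_game}. Your additional remarks on zero-padding queries, the identity $\|A\tilde x\|_p^p=\|A^{(j)}x\|_p^p$, and the condition number are just spelled-out details of the same argument.
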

\begin{proof}
Let $A'\in \R^{n'\times d'}$ be the hard instance matrix for Theorem~\ref{thm:comm_lb_eps}, where $d' = 2\log_2(C/(\eps\polylog(1/\eps))$ and $n' = 2^{d'}$. We construct a block diagonal matrix $A$ with $b=d/d'$ blocks, each being an independent copy of $A$', so that $A$ has $d$ columns. The number of rows in $A$ is $bn' = \Omega\left(\frac{d}{\eps^2\polylog(1/\eps)}\right)$. In this case, the $\ell_p$ sketch problem on $A'$ requires a data structure of $\widetilde{\Omega}(b/\eps^2) = \Omega \left( \frac{d}{\varepsilon^2 \cdot  \polylog(1 / \varepsilon)}\right)$ bits, since we are now solving the \textsf{INDEX} problem with $\widetilde\Omega(b\cdot 1/\eps^2)$ random bits.
\end{proof}

The corollary above is also true for $p=0$.
\begin{corollary}\label{cor:comm_lb_eps_p=0}
Under the assumptions of $C$, $\epsilon_0$, $d$ and $n$ in Corollary~\ref{cor:comm_lb_eps}, any data structure for the $\ell_0$ subspace sketch problem requires $\Omega \left( \frac{d}{\varepsilon^2 \cdot  \polylog(1 / \varepsilon)}\right)$ bits.
\end{corollary}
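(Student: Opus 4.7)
The plan is to follow the proof of Corollary \ref{cor:comm_lb_eps} line by line, only replacing the construction of the hard-instance matrix $A$. The obstacle specific to $p=0$ is that scaling a row of $A$ by a positive scalar does not change $\|Ax\|_0$, so one cannot encode a real-valued vector through continuous row rescaling as in Section \ref{sec:comm_lb}. Instead, I would encode a non-negative integer vector $r\in \Z_{\geq 0}^{2^d}$ via row multiplicities: form $A$ by stacking $r_j$ copies of the $j$-th vector of $\bcube^d$. Then for any $x\in \bcube^d$,
\[
\|Ax\|_0 = \sum_{j:\,\langle j,x\rangle\neq 0} r_j = (M^{(d,0)} r)_x,
\]
so the entire reduction is driven by the same matrix $M^{(d,0)}$ that appears in Section \ref{sec:1/eps^2 lb}, under the convention $0^0 = 0$.

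The spectral input, the analogue of Corollary \ref{cor:singular_value_lb} for $p=0$, requires a small separate computation. Specializing Lemma \ref{lem:fourier_coeff} to $p=0$ gives
\[
\midvalue = \left|\sum_{\substack{0\leq j\leq d/2\\ j\neq d/4}} (-1)^{j}\binom{d/2}{j}\right|,
\]
and when $d\in 4\Z$ the identity $\sum_{j=0}^{d/2}(-1)^j\binom{d/2}{j}=0$ collapses this to $\binom{d/2}{d/4}=\Omega(2^{d/2}/\sqrt{d})$. With this in hand, the orthogonalization machinery of Section \ref{sec:or} (Lemmas \ref{lem:gram_schmidt}, \ref{lem:ortho_with_A}, \ref{lem:r_bound} and \ref{lem:small_error}) applies to $\tilde M^{(d,0)}$ unchanged: obtain the set $\improw$ with $|\improw|=\Omega(\midsize)$ and orthogonal vectors $\{R_i\}$, and form the Rademacher sum $x=\sum_{i\in\improw} s_i R_i/\|R_i\|_2$, which satisfies $\|x\|_\infty\leq 3\sqrt{d}$ with high probability.

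To turn $x$ into a multiplicity vector, set $y_j = x_j + \shiftr$ with $\shiftr=5\sqrt{d}$ so that $y_j\in[2\sqrt{d},8\sqrt{d}]$, pick the scale $K = 2^d$, and let $r_j = \lfloor K y_j\rfloor$. Writing $r_j = Ky_j + \delta_j$ with $|\delta_j|\leq 1$,
\[
\|Ai\|_0 = (M^{(d,0)} r)_i = K(\tilde M^{(d,0)} x)_i + K\shiftr(M^{(d,0)}\mathbf{1})_i + E_i,\qquad |E_i|\leq 2^d,
\]
where $K(M^{(d,0)} x)_i = K(\tilde M^{(d,0)} x)_i$ by Lemma \ref{lem:ortho_with_A} since $x$ lies in the span of the top $\midsize$ eigenvectors. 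As $\|Ai\|_0 \leq 2^d\cdot K\cdot 8\sqrt{d}$, the $(1\pm\eps)$-guarantee of $Q_0$ contributes additive sketch error at most $O(\eps K 2^d\sqrt{d})$, so the total deviation of $Q_0(i) - K\shiftr(M^{(d,0)}\mathbf{1})_i$ from $K(\tilde M^{(d,0)}x)_i$ is $O(\eps K 2^d\sqrt{d}) + O(2^d)$. This is dominated by $0.1\,K\midvalue\sqrt{\midsize}/2^{d/2}=\Omega(K\cdot 2^{d/2}/d^{3/4})$ precisely when $2^{d/2} = O(1/(\eps\,\polylog(1/\eps)))$, which is the same choice of $d$ used in Theorem \ref{thm:comm_lb_eps}. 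Lemma \ref{lem:small_error} then recovers $s_i$ with constant probability, and the \textsf{INDEX} lower bound (Lemma \ref{lem:lb_game}) yields the $\widetilde\Omega(1/\eps^2)$ bound.

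Finally, the $\widetilde\Omega(d/\eps^2)$ bound stated in the corollary follows from the same block-diagonal padding as in Corollary \ref{cor:comm_lb_eps}: $\|Ax\|_0$ is additive across the blocks of $A=\mathrm{diag}(A^{(1)},\dots,A^{(b)})$ with $b=d/d'$, so a query supported on a single block reduces to the base hard instance on that block. The main obstacle throughout is controlling the additive rounding error $E_i$ introduced by forcing integer multiplicities; this is handled by choosing $K$ large enough (any $K=\Omega(2^{d/2}d^{3/4})$ works, and $K=2^d$ is convenient) so that the signal $K\cdot\midvalue\sqrt{\midsize}/2^{d/2}$ dominates the rounding contribution $O(2^d)$ with the same asymptotic margin as in the $p>0$ case.
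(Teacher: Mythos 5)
Your proposal is correct and follows essentially the same route as the paper's proof: the paper also encodes the shifted vector $x+\shiftr\mathbf{1}$ through row multiplicities at scale $2^d$ (replicating the $j$-th Boolean vector $\tilde y_j/\delta$ times with $\delta=2^{-d}$, which is your $r_j=\lfloor 2^d y_j\rfloor$ up to rounding convention), recovers $s_i$ from the sign of $\delta Q_0(i)-\shiftr W_d$ with the identical error budget $\eps\cdot 8\sqrt d\cdot 2^d+1\leq 0.1\,\midvalue\sqrt{\midsize}/2^{d/2}$, and then pads block-diagonally. Your derivation of $\midvalue=\binom{d/2}{d/4}=\Omega(2^{d/2}/\sqrt d)$ via dropping the $j=d/4$ term from the vanishing alternating sum is the same elementary identity the paper invokes as $\sum_{k=1}^n(-1)^{k+1}\binom{2n}{n+k}=\tfrac12\binom{2n}{n}$.
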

\begin{proof}
The matrix $M^{(d,0)}$ is defined as $(M^{(d,0)})_{i,j} = \mathbf{1}_{\{\langle i,j\rangle\neq 0\}}$. Note that each row of $M^{(d,0)}$ has the same number of $1$s; let $W_d$ denote this number. Observe that Lemma~\ref{cor:singular_value_lb} continues to hold because we have by symmetry
\[
\sum_{k=1}^n (-1)^{k+1} \binom{2n}{n+k} = \frac{1}{2}\binom{2n}{n}\geq c\frac{2^{2n}}{\sqrt n}
\]
for some absolute constant $c > 0$. Let $y_j = x_i + \Delta^{(d)}$, where $x_i$ and $\Delta^{(d)}$ are as defined before. In the construction of $A$, replicate $\tilde y_j$ times (rounded to an integer multiple of $\delta=2^{-d}$) the $j$-th vector of $\bcube^d$. Our guess of the sign $s_i$ is then the sign of $\delta Q_0(i) - \Delta^{(d)} W_d$. Similar to the procedure above, we have that
\[
\delta\left|Q_0(i) - \|Ai\|_0\right| \leq \delta \eps \|Ai\|_0 \leq \eps \cdot 8\sqrt d\cdot 2^d
\]
and
\[
\left|\delta\|Ai\|_0 - \langle M^{(d, 0)}_{i}, (x + \mathbf{1} \cdot \shiftr)\rangle\right| \leq \sum_j |\hat y_j - x_i - \Delta^{(d)}| \mathbf{1}_{\{\langle i,j\rangle\neq 0\}} \leq \delta 2^d = 1.
\]
And therefore it suffices to have
\[
\epsilon\cdot 8\sqrt{d}\cdot 2^d + 1 \leq 0.1 \midvalue \sqrt{\midsize} / 2^{d/2},
\]
which holds when $2^{d/2} = 1/(\eps/\polylog(1/\eps))$ as before. Therefore the analogue of Theorem~\ref{thm:comm_lb_eps} holds and so does the analogue of Corollary~\ref{cor:comm_lb_eps}.
\end{proof}

\begin{remark}\label{rem:eps=0_even d}
The condition that $p \notin 2\mathbb{Z}^+$ is necessary for the lower bound.
When $p\in 2\Z^+$, it is possible to achieve $\eps = 0$  with $O(d^p\log(nd))$ words. Recall that a $d$-dimensional subspace of $\ell_p$ space can be isometrically embedded into $\ell_p^r$ with  $r = \binom{d+p-1}{p}-1$~\cite{handbook:21}. In general the data structure does not necessarily correspond to a linear map and can be of any form. Indeed, there is a much simpler data structure as follows, based on ideas in~\cite{schechtman:tight}. For each $x\in\R^d$, let $y_x\in \R^d$ be defined as $(y_x)_i = ((Ax)_i)^{p/2}$, then $\|y_x\|_2^2 = \|Ax\|_p^p$. Observe that each coordinate $(y_x)_i$ is a polynomial of $d^{p/2}$ terms in $x_1,\dots,x_d$. Form an $n\times d^{p/2}$ matrix $B$, where the $i$-th row consists of the coefficients in the polynomial corresponding to $(y_x)_i$. The data structure stores $B^TB$. To answer the query $Q_p(x)$, one first calculates from $x$ a $d^{p/2}$-dimensional vector $x'$ whose coordinates are all possible monomials of total degree $p/2$. Note that $Bx' = y_x$. Hence one can just answer $Q_p(x) = (x')^T B^T Bx' =  \|Bx'\|_2^2 = \|Ax\|_p^p$ without error. This $Q_p$ does not give an isometric embedding but is much simpler than known isometric embeddings, and the space complexity is $O(d^p\log(nd))$ bits.
\end{remark}

\section{Lower Bounds for $p > 2$} \label{lem:d^{p/2}_lb}
\subsection{Lower Bounds for the Subspace Sketch Problem for $p > 2$}
In this section, we prove a lower bound on the $\ell_p$ subspace sketch problem, in the case that $\eps$ is a constant and $p \ge 2$. 
 We need the following result from coding theory.
\begin{lemma}[\cite{parampalli2013construction}]\label{lem:selb_hard}
For any $p \ge 1$ and $d = 2^k - 1$ for some integer $k$, there exist a set $S \subset \bcube^d$ and a constant $C_p$ depending only on $p$ which satisfy
\begin{enumerate}[topsep=0.5ex,itemsep=-0.5ex,partopsep=1ex,parsep=1ex,label=(\roman*)]
\item $|S| = d^p$;
\item For any $s, t \in S$ such that $s \neq t$, $|\langle s, t\rangle| \le C_p\sqrt{d}$.
\end{enumerate}
\end{lemma}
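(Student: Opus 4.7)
The plan is to take $S$ to be a rescaled image of a suitably chosen binary linear code whose nonzero codewords all have Hamming weight very close to $d/2$. The natural candidate is the dual of a narrow-sense binary BCH code of length $n = d = 2^k - 1$ with designed distance $2t+1$, where $t = \lceil p \rceil$. This dual code $\cC^\perp$ is a linear subspace of $\F_2^d$ whose dimension is at least $kt$ for $t$ small compared to $2^{k/2}$, so $|\cC^\perp| \ge 2^{kt} \ge (2^k)^{p} \ge d^p$, giving us enough codewords to extract condition (i).

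The crucial property I would invoke is the Carlitz--Uchiyama bound (a consequence of Weil's estimate for character sums on algebraic curves), which says that every nonzero codeword $c \in \cC^\perp$ has Hamming weight satisfying
\[
\left| w_H(c) - \tfrac{d+1}{2} \right| \le (t-1)\sqrt{d+1}.
\]
Given this, I would define the sign embedding $\sigma : \F_2^d \to \bcube^d$ by $\sigma(c)_i = (-1)^{c_i}$, and take $S$ to be any $d^p$ vectors from $\sigma(\cC^\perp)$. For distinct $s = \sigma(c_1)$ and $t = \sigma(c_2)$ in $S$, linearity of $\cC^\perp$ ensures that $c_1 \oplus c_2$ is again a \emph{nonzero} codeword of $\cC^\perp$, hence
\[
\langle s, t \rangle = d - 2 w_H(c_1 \oplus c_2) = 2\left(\tfrac{d+1}{2} - w_H(c_1 \oplus c_2)\right) - 1,
\]
which by the Carlitz--Uchiyama estimate is bounded in absolute value by $2(t-1)\sqrt{d+1} + 1 \le C_p \sqrt{d}$ for a constant $C_p$ depending only on $p$. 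This gives condition (ii).

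The main obstacle is simply invoking the Carlitz--Uchiyama weight bound, which is a deep classical fact that I would cite rather than reprove; once it is in hand, the rest is bookkeeping (verifying the dimension bound on $\cC^\perp$ for the chosen $t$, and ensuring that there are at least $d^p$ codewords after possibly removing the zero codeword). Alternatively, one can cite~\cite{parampalli2013construction} directly, where essentially this packaging is carried out. The hypothesis $d = 2^k - 1$ enters precisely because BCH codes of these lengths are the cleanest setting for the Carlitz--Uchiyama bound; for other values of $d$ a similar conclusion could be obtained with minor adjustments but is not needed here.
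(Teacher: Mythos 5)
The paper never proves this lemma: it is imported wholesale from the coding-theory literature via the citation, so there is no in-paper argument to compare against. Your proposal is a correct reconstruction of exactly the kind of construction that citation packages. Taking $S$ to be the sign embedding of the dual of a narrow-sense binary BCH code of length $d=2^k-1$ with designed distance $2t+1$, $t=\lceil p\rceil$, and controlling pairwise inner products via the Carlitz--Uchiyama weight bound is the standard route to a family of $d^p$ vectors in $\bcube^d$ with pairwise inner products $O_p(\sqrt d)$, and your bookkeeping is right: linearity makes $c_1\oplus c_2$ a nonzero codeword, so $|\langle s,t\rangle| = |d-2w_H(c_1\oplus c_2)|\le 2(t-1)\sqrt{d+1}+1\le C_p\sqrt d$. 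Two small points deserve explicit mention. First, the size claim requires the dual dimension to be \emph{exactly} $kt$, which holds once the cyclotomic cosets $C_1,C_3,\dots,C_{2t-1}$ modulo $2^k-1$ are distinct and of full size $k$; this, like the hypothesis under which Carlitz--Uchiyama is usually stated (roughly $2t-2<2^{\lceil k/2\rceil}+1$), is automatic here because $p$ is a constant while $d$ grows -- and the lemma implicitly needs $d$ large relative to $p$ anyway, since otherwise $2^d<d^p$ and no such $S$ can exist, with small $d$ absorbed into $C_p$. Second, the zero codeword need not be removed, since the inner-product bound only uses $c_1\neq c_2$, and ``$|S|=d^p$'' should be read as ``at least $\lceil d^p\rceil$,'' which your count $(d+1)^{\lceil p\rceil}\ge d^p$ delivers. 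With those readings, your argument stands as a self-contained proof modulo the classical Carlitz--Uchiyama bound, which is reasonable to cite rather than reprove.
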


\begin{lemma}\label{lem:selb_hard_set}
For any $p \ge 1, C \ge 1$ and $d = 2^k - 1$ for an integer $k$, there exist a set $S \subset \bcube^d$ with size $|S| = d^{p}$, a set $\mathcal{M} \subset \mathbb{R}^{R \times d}$ for some $R$ and a constant $C_p$ depending only on $p$ which satisfy
\begin{enumerate}[topsep=0.5ex,itemsep=-0.5ex,partopsep=1ex,parsep=1ex,label=(\roman*)]
\item For any $M_1, M_2 \in \mathcal{M}$ such that $M_1 \neq M_2$, there exists $x \in S$ such that $\|Mx\|_p < d / C$ and $\|Mx\|_p \ge d$.
\item $|\mathcal{M}| \ge \exp \left(d^{p / 2} / (C_p C^p) \right)$. 
\end{enumerate}
\end{lemma}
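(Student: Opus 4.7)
The plan is to read the statement with the natural correction of what appears to be a typo: for every pair of distinct matrices $M_1,M_2 \in \mathcal{M}$, there should exist some $x\in S$ witnessing $\|M_1 x\|_p \ge d$ while $\|M_2 x\|_p < d/C$ (or vice versa). The construction will use Lemma~\ref{lem:selb_hard} twice over: the same combinatorial set $S$ serves both as the query set and as the pool from which the rows of matrices in $\mathcal{M}$ are drawn.

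First, I would invoke Lemma~\ref{lem:selb_hard} to obtain a set $S = \{s_1,\ldots,s_{d^p}\} \subset \bcube^d$ with $|\langle s_i,s_j\rangle|\le C_p'\sqrt{d}$ for $i\ne j$, where $C_p'$ depends only on $p$. Set $R := \lfloor d^{p/2}/((C_p')^p C^p)\rfloor$ (assume $d$ is large enough that $R\ge 1$; otherwise the statement is vacuous). For each $R$-element subset $T\subseteq S$, order its elements, say lexicographically, and form the matrix $M_T \in \R^{R\times d}$ whose rows are the vectors in $T$ in that order. Define $\mathcal{M} := \{M_T : T\subseteq S,\ |T|=R\}$, so distinct subsets yield distinct matrices and $|\mathcal{M}|=\binom{d^p}{R}$. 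Using $\binom{N}{k}\ge (N/k)^k$, we obtain $|\mathcal{M}|\ge ((C_p')^p C^p d^{p/2})^R\ge e^R$ for $d$ large, hence $|\mathcal{M}|\ge \exp(d^{p/2}/(C_p C^p))$ for an appropriate constant $C_p$ depending only on $p$ (absorbing the floor and the factor in front of $R$). This verifies~(ii).

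To verify (i), take distinct $M_{T_1},M_{T_2}\in \mathcal{M}$. By symmetry assume $T_1\setminus T_2\ne\emptyset$ and pick $s\in T_1\setminus T_2$; the witness is $x = s$. Since $\langle s,s\rangle = d$ and $s$ is a row of $M_{T_1}$, the contribution of that single row already gives $\|M_{T_1}s\|_p^p\ge d^p$, i.e.\ $\|M_{T_1}s\|_p\ge d$. For $M_{T_2}$, every row $t\in T_2$ satisfies $t\ne s$, so $|\langle t,s\rangle|\le C_p'\sqrt{d}$ by Lemma~\ref{lem:selb_hard}, and therefore $\|M_{T_2}s\|_p^p\le R\cdot (C_p')^p d^{p/2}\le d^p/C^p$ by our choice of $R$, yielding $\|M_{T_2}s\|< d/C$.

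There is no substantial obstacle here: the heavy lifting has already been performed by Lemma~\ref{lem:selb_hard}, and what remains is a parameter-counting exercise. The only mild tension is that $R$ must be chosen simultaneously small enough so that $R\cdot(C_p'\sqrt{d})^p\le d^p/C^p$, yet large enough so that $\binom{d^p}{R}$ still provides an $\exp(\Omega(d^{p/2}))$ lower bound; the displayed choice of $R$ threads this needle and allows the $C$-dependence to appear only in the denominator of the exponent, as required by~(ii).
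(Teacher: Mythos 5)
Your proposal is correct and follows essentially the same route as the paper: apply Lemma~\ref{lem:selb_hard}, set $R \approx d^{p/2}/(C_p C^p)$, take $\mathcal{M}$ to be all $R$-row matrices with distinct rows from $S$, bound $|\mathcal{M}|=\binom{d^p}{R}\ge e^R$, and witness distinctness by a row $s$ of one matrix absent from the other, giving $\|M_1 s\|_p\ge d$ versus $\|M_2 s\|_p\le C_p\sqrt{d}\,R^{1/p}<d/C$. Your reading of condition (i) as a typo (the two bounds apply to $M_1$ and $M_2$ respectively) matches how the paper's proof actually uses it.
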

\begin{proof}
Set $R = d^{p / 2} / (C_p C^{p})$. Then $R\leq d^p/e$. We set $\mathcal{M}$ to be the set of $R \times d$ matrices whose rows are all possible combinations of $R$ distinct vectors in $S$, where $S$ is the set constructed in Lemma \ref{lem:selb_hard}.
Clearly, $|\mathcal{M}| = {d^p \choose R} \ge e^{R}$. 
Furthermore, consider two different $M_1, M_2 \in M$. There exists an $x \in S$ which is a row of $M_1$ but not a row of $M_2$.
Thus, $\|M_1 x\|_p \ge d$ and
\[
\|M_2 x\|_p \le C_p\sqrt{d} \cdot R^{1 / p} < d / C. \qedhere
\]
\end{proof}

\begin{theorem}\label{thm:selb}
Solving the $\ell_p$ subspace sketch problem requires $\widetilde{\Omega}(d^{p / 2})$ bits when $0 < \eps < 1$ and $p \ge 2$ are constants and $n=\Omega(d^{p/2})$.
\end{theorem}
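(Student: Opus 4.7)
\medskip

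\noindent\textbf{Proof proposal.}
The plan is to implement the encoding strategy sketched in the introduction, using the coding-theoretic construction from Lemma~\ref{lem:selb_hard}. Fix a constant $c_p>0$ to be chosen, and set $n_0 = c_p d^{p/2}$. Let $S\subset\bcube^d$ be the set of size $d^p$ with pairwise inner products bounded by $C_p\sqrt d$ given by Lemma~\ref{lem:selb_hard}, and let $\mathcal{A}$ be the family of matrices obtained by stacking $n_0$ distinct vectors of $S$ as rows (and padding with zero rows up to the actual $n=\Omega(d^{p/2})$, which does not change $\|Ax\|_p$). Then
\[
\log_2|\mathcal{A}|=\log_2\binom{d^p}{n_0}=\Theta(n_0\log(d^p/n_0))=\widetilde\Theta(d^{p/2}),
\]
so showing that a subspace sketch allows recovery of a random $A\in\mathcal{A}$ (up to a $\polylog(d)$ factor) will give the claimed bit lower bound.

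The core calculation is a case split on $x\in S$. If $x$ equals some row $A_i$ of $A$, then the $i$-th term in $\sum_{j=1}^{n_0}|\langle A_j,x\rangle|^p$ contributes $d^p$, so
\[
d^p \le \|Ax\|_p^p \le d^p + (n_0-1)(C_p\sqrt d)^p = (1+c_p C_p^p)\, d^p.
\]
If instead $x\in S$ is \emph{not} a row of $A$, then Lemma~\ref{lem:selb_hard} bounds every term by $(C_p\sqrt d)^p$, giving
\[
\|Ax\|_p^p \le n_0\,(C_p\sqrt d)^p = c_p C_p^p\, d^p.
\]
Choosing $c_p$ small enough (depending on $p$ and $\eps$) makes the ratio of the two upper-case values larger than $(1+\eps)^{2p}$, so any $(1\pm\eps)$-approximation $Q_p(x)$ of $\|Ax\|_p^p$ determines, without error, whether $x$ is a row of $A$. (For $p$ constant, a $(1\pm\eps)$ approximation to $\|Ax\|_p$ is a $(1\pm O(\eps))$ approximation to $\|Ax\|_p^p$, so this is without loss of generality.)

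To pass from the for-all guarantee just used to the for-each model in Definition~\ref{def:SS}, run $k=\Theta(\log d)$ independent copies of the putative sketch and take, for each $x\in S$, the median of the $k$ estimates. By a Chernoff bound each of the $|S|=d^p$ queries is now answered within a $(1\pm\eps)$ factor with probability $1-d^{-\Omega(1)}$; a union bound over $S$ shows that simultaneously for every $x\in S$ the ``row / not row'' classification above is correct with probability at least $2/3$. Hence the combined sketch identifies the random $A\in\mathcal{A}$ with constant probability, so by a standard Yao/Fano argument it must use $\Omega(\log_2|\mathcal{A}|)=\widetilde\Omega(d^{p/2})$ bits, and a single copy therefore uses $\widetilde\Omega(d^{p/2})$ bits as well (absorbing the $\log d$ factor into $\widetilde\Omega$).

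The only mildly delicate point I foresee is tuning $c_p$ so that the multiplicative gap between the two regimes of $\|Ax\|_p^p$ genuinely exceeds the $(1+\eps)$ slack; this is an elementary computation using only Lemma~\ref{lem:selb_hard} and the fact that $\eps,p$ are constants. The rest is a clean encoding/repetition argument, entirely analogous to the arguments sketched after Lemma~\ref{lem:selb_hard_set} for the for-all setting.
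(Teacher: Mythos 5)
Your proposal is correct and follows essentially the same route as the paper: the same coding-theoretic set $S$ from Lemma~\ref{lem:selb_hard}, the same family of $\widetilde\Theta(d^{p/2})$-row matrices with the $d^p$ versus $c_pC_p^p d^p$ gap (the paper packages this as Lemma~\ref{lem:selb_hard_set}), recovery by querying all of $S$, an information-theoretic count of $\Omega(\log\binom{d^p}{\Theta(d^{p/2})})$ bits, and the same $O(\log d)$-fold repetition with a union bound over the $d^p$ queries to handle the for-each guarantee. The only differences are cosmetic (you recover the full row set rather than distinguishing pairs, and you make the zero-padding to $n=\Omega(d^{p/2})$ explicit), so no further changes are needed.
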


\begin{proof}
We first prove a lower bound for randomized data structures for the $\ell_p$ subspace sketch problem with failure probability $d^{-p} / 100$. Let $\mathcal{M}\subset \R^{R\times d}$ and $S\subset \bcube^d$ be as constructed in Lemma~\ref{lem:selb_hard_set}.
Choose a matrix $M$ from $\mathcal{M}$ uniformly at random. 
Since for each $x \in  \bcube^d$, with probability at least $1 - d^{-p}  / 100$, 
\begin{equation}\label{equ:approx}
\|Mx\|_p^p \le Q_p(x) \le (1+ O(\eps)) \|Mx\|_p^p,
\end{equation}
by a union bound, with probability at least $0.99$, \eqref{equ:approx} holds simultaneously for all $x \in S$. It follows from Lemma~\ref{lem:selb_hard_set}(i) that by querying $\|Mx\|_p$ for all $x\in S$, one can distinguish all different $M \in \mathcal{M}$. A standard information-theoretic argument leads to a lower bound of $\Omega(\log |\mathcal{M}|) = \Omega(d^{p / 2})$.

For randomized data structures for the $\ell_p$ subspace sketch problem with constant failure probability, a standard repetition argument implies that the failure probability can be reduced to $d^{-p} / 100$ using $O(\log d)$ independent repetitions. Therefore a lower bound of $\widetilde{\Omega}(d^{p / 2})$ bits follows.
\end{proof}

\begin{remark}\label{rem:d^{p/2}_foreach_tight}
The lower bound in Theorem \ref{thm:selb} is nearly optimal.
To obtain an $\ell_p$ subspace sketch with constant $\eps$ and $\widetilde{O}(d^{p / 2})$ bits, one can first apply Lewis weights sampling \cite{lewis_sampling} to reduce the size of $A$ to $\widetilde{O}(d^{p / 2}) \times d$, and then apply the embedding in  \cite{gw18} to further reduce the number of rows of $A$ to $\widetilde{O}(d^{(p / 2) \cdot (1 - 2 / p)}) = \widetilde{O}(d^{p / 2 - 1} )$. 
Therefore the data structure takes $\widetilde{O}(d^{p/2})$ bits to store.
\end{remark}

\subsection{Lower Bounds for the For-All Version} \label{sec:forall_lb_poly(d)}
In this section, we prove a lower bound on the for-all version of the $\ell_p$ subspace sketch problem for the case of $p \geq 2$ and constant $\eps$. 
In the for-all version of the $\ell_p$ subspace sketch problem, the data structure $Q_p$ is required to, with probability at least $0.9$, satisfy $Q_p(x) = (1\pm \eps)\|Ax\|_p$ simultaneously for all $x\in\R^d$. 
\subsubsection{Lower Bound for $p \geq 2$}\label{sec:p>=2}
Throughout this section we assume that $p\geq 2$ is a constant.

Let $N = c_pd^{p/2}$ in this section, where $c_p > 0$ is a constant that depends only on $p$. Denote the unit ball in $\ell_p^n$ by $B_p^n$. For each $x\in B_2^n$, we define a function $f_x : \R^{N \times d} \to \R$ by 
\[
f_x(A) = \|Ax\|_p.
\]

\begin{lemma}\label{lem:f_property}
The function $f_x(\cdot)$ satisfies the following properties:
\begin{enumerate}[topsep=0.5ex,itemsep=-0.5ex,partopsep=1ex,parsep=1ex,label=(\roman*)]
\item $\E[f_x(A)] \leq Cc_p^{1/p}\sqrt{p}\sqrt{d}$, where entries of $A$ are \iid Rademacher random variables and $C$ is a constant that depends only on $c_p$;
\item $f_x(\cdot)$ is $1$-Lipschitz with respect to the Frobenius norm;
\item $f_x(\cdot)$ is a convex function.
\end{enumerate}
\end{lemma}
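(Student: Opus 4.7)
The plan is to verify each of the three properties by reducing to standard facts about $\ell_p$ norms on $\R^N$; property (i) uses Khintchine's inequality (Lemma~\ref{lem:khintchine}) combined with Jensen, while (ii) and (iii) are one-line consequences of basic norm inequalities. Throughout, I will use that $\|x\|_2\leq 1$ (since $x$ lies in the unit $\ell_2$ ball) and that $N = c_p d^{p/2}$.

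For property (iii), convexity is immediate from linearity of the map $A\mapsto Ax$ and the triangle inequality for $\|\cdot\|_p$: for any $\lambda\in[0,1]$,
\[
f_x(\lambda A+(1-\lambda)B) = \|\lambda Ax+(1-\lambda)Bx\|_p \leq \lambda\|Ax\|_p+(1-\lambda)\|Bx\|_p.
\]
For property (ii), the reverse triangle inequality gives $|f_x(A)-f_x(B)|\leq \|(A-B)x\|_p$. Since $p\geq 2$, we have the pointwise inequality $\|y\|_p\leq \|y\|_2$ for every $y\in\R^N$. Combining with $\|(A-B)x\|_2\leq \|A-B\|_2\|x\|_2\leq \|A-B\|_F$ (using that the spectral norm is bounded by the Frobenius norm and $\|x\|_2\leq 1$) yields $|f_x(A)-f_x(B)|\leq \|A-B\|_F$, i.e. $f_x$ is $1$-Lipschitz with respect to the Frobenius norm.

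For property (i), the key observation is that each coordinate $(Ax)_i=\sum_{j=1}^d A_{ij}x_j$ is a Rademacher sum with coefficients $(x_j)_{j=1}^d$ whose squares sum to $\|x\|_2^2\leq 1$. Applying Khintchine's inequality row by row gives an absolute constant $B>0$ such that
\[
\E|(Ax)_i|^p \leq (B\sqrt{p})^p \|x\|_2^p \leq (B\sqrt{p})^p,
\]
and summing over $i\in[N]$ yields $\E\|Ax\|_p^p \leq N(B\sqrt{p})^p = c_p d^{p/2}(B\sqrt{p})^p$. By concavity of $t\mapsto t^{1/p}$ and Jensen's inequality,
\[
\E f_x(A) = \E\|Ax\|_p \leq \bigl(\E\|Ax\|_p^p\bigr)^{1/p} \leq B\sqrt{p}\cdot c_p^{1/p}\sqrt{d},
\]
which is the claimed bound.

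There is no substantive obstacle here: (ii) and (iii) are routine, and the only care required in (i) is to apply Khintchine's inequality coordinatewise rather than trying to bound $\|Ax\|_p$ directly, together with the Jensen step to move from the $p$-th moment to the first moment without losing a factor depending on $d$.
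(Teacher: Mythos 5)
Your proof is correct and follows essentially the same route as the paper: Khintchine's inequality applied coordinatewise plus Jensen for (i), the chain $\|(A-B)x\|_p \le \|(A-B)x\|_2 \le \|A-B\|_2\|x\|_2 \le \|A-B\|_F$ for (ii), and convexity of the $\ell_p$ norm composed with the linear map $A\mapsto Ax$ for (iii). The only difference is cosmetic: you make explicit the reverse triangle inequality step and the use of $\|x\|_2\le 1$, which the paper leaves implicit.
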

\begin{proof}
By Khintchine's inequality, $(\E|(Ax)_i|^p)^{1/p} \leq C\sqrt{p} \|x\|_2 = C \sqrt{p}$, where $C$ is an absolute constant. It follows that $\E\|Ax\|_p^p \leq N (C\sqrt{p})^p$ and by Jensen's inequality, $\E\|Ax\|_p \leq (\E\|Ax\|_p^p)^{1/p} \leq N^{1/p} C\sqrt{p} = Cc_p^{1/p}\sqrt{p} \sqrt{d}$, which implies (i).
To prove (ii), note that 
$$
f_x(A - B) = \|Ax - Bx\|_p \le \|Ax - Bx\|_2 \le \|A - B\|_2 \le \|A - B\|_F.
$$
(iii) is a simple consequence of the convexity of the $\ell_p$ norm. 
\end{proof}

The following lemma is a direct application of Talagrand's concentration inequality (Lemma~\ref{lem:talagrand}) with Lemma~\ref{lem:f_property}.\begin{lemma}\label{lem:f_concentrate} Let $A \in \R^{N \times d}$ and $x \in \R^d$ have \iid Rademacher random variables. It holds that
\[
\Pr_{A, x}\left\{f_x(A) \geq C c_p^{1/p} \sqrt{p} d\right\} \le e^{-cd},
\]
where $C$ is an absolute constant and $c_p$ is a constant depending only on $p$.
\end{lemma}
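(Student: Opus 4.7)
The plan is to reduce to a concentration statement for a single fixed $x$ and then integrate out the randomness of $x$. First I would observe that Lemma~\ref{lem:f_property} rescales cleanly: for any fixed $x$ with $\|x\|_2 = \sqrt{d}$ (in particular, any $x \in \bcube^d$), the map $A \mapsto f_x(A)$ is convex and $\sqrt{d}$-Lipschitz with respect to the Frobenius norm, because $\|(A-B)x\|_p \leq \|(A-B)x\|_2 \leq \|A-B\|_F \cdot \|x\|_2$. Moreover, applying Khintchine's inequality (Lemma~\ref{lem:khintchine}) row by row gives, for Rademacher $A$,
\[
\E\|Ax\|_p^p \leq N\bigl(B\sqrt{p}\|x\|_2\bigr)^p = c_p d^{p/2}\bigl(B\sqrt{p}\sqrt{d}\bigr)^p,
\]
so by Jensen $\E f_x(A) \leq Bc_p^{1/p}\sqrt{p}\cdot d$.

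Next, for such a fixed $x$, I would set $g_x(A) = f_x(A)/\sqrt{d}$, which is convex and $1$-Lipschitz in the Frobenius norm. Viewing the $Nd$ Rademacher entries of $A$ as a vector in $[-1,1]^{Nd}$, Talagrand's inequality (Lemma~\ref{lem:talagrand}) yields
\[
\Pr_A\bigl\{g_x(A) - \E g_x(A) \geq t\bigr\} \leq e^{-t^2/8}
\]
for every $t \geq 0$. Choosing $t = Bc_p^{1/p}\sqrt{p}\sqrt{d}$ (the same order as $\E g_x(A)$) gives
\[
\Pr_A\bigl\{f_x(A) \geq 2Bc_p^{1/p}\sqrt{p}\cdot d\bigr\} \leq \exp\!\left(-\frac{B^2 c_p^{2/p} p}{8}\,d\right) \leq e^{-cd}
\]
for an appropriate constant $c > 0$ (depending on $p$, which is fixed). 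Setting $C = 2B$ matches the form in the statement.

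Finally, since the bound holds uniformly in $x$ over all sign patterns, integrating over the Rademacher distribution of $x$ via Fubini gives the same bound for the joint probability $\Pr_{A,x}$, completing the proof. I do not foresee a real obstacle here: the whole argument is a one-line application of Talagrand's inequality once the Lipschitz constant, convexity, and expectation have been read off from Lemma~\ref{lem:f_property}. The only bookkeeping subtlety is the $\sqrt{d}$ factor that arises because $\|x\|_2 = \sqrt{d}$ rather than $1$, which is absorbed by rescaling $f_x$ to $g_x = f_x/\sqrt{d}$ before invoking Talagrand.
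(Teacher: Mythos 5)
Your proposal is correct and follows essentially the same route as the paper: you rescale so that the relevant function is convex and $1$-Lipschitz (the paper normalizes $x$ to $\hat x = x/\sqrt d$, you equivalently rescale $f_x$ to $g_x = f_x/\sqrt d$), use the expectation bound from Lemma~\ref{lem:f_property} (which you re-derive via Khintchine and Jensen), and then apply Talagrand's inequality with deviation of order the mean, finally passing from the fixed-$x$ bound to the joint probability over $(A,x)$.
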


\begin{proof}
Let $\hat{x} = x / \sqrt{d}$. We have $\|\hat{x}\|_2 = 1$.
By Lemma \ref{lem:f_property} and Lemma \ref{lem:talagrand}, we have
\[
\Pr_{A}\left\{f_{\hat{x}}(A) \geq C c_p^{1/p} \sqrt{p} \sqrt{d}\right\} \le e^{-cd}.
\]
Since $f_x(A) = \sqrt{d} f_{\hat{x}}(A)$, we have
\[
\Pr_{A}\left\{f_x(A) \geq C c_p^{1/p} \sqrt{p} d \right\} \le e^{-cd},
\]
which implies the stated lemma.
\end{proof}
\begin{lemma}\label{lem:pm}
There exists a multiset $\mathcal{S} \subseteq \{+1, -1\}^{N \times d}$ such that
\begin{enumerate}[topsep=0.5ex,itemsep=-0.5ex,partopsep=1ex,parsep=1ex,label=(\roman*)]
\item $|\mathcal{S}| \geq \exp(c_1Nd)$;
\item For any $S, T \in \mathcal{S}$ such that $S\neq T$, there exists $i \in [N]$, such that $\|ST_i\|_p \leq C c_p^{1/p} \sqrt{p} d$;
\item When $p > 2$, for any $S\in \mathcal{S}$, $\kappa(S)\leq 2$.
\end{enumerate}
\end{lemma}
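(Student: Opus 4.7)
The plan is the probabilistic method: sample $M := \exp(2 c_1 Nd)$ independent matrices $A_1, \dots, A_M \in \bcube^{N \times d}$ with \iid Rademacher entries for a small constant $c_1 > 0$, and then discard every matrix that either participates in a ``bad'' ordered pair or has large condition number. I would aim to show that fewer than $M/2$ matrices are discarded in expectation, so the surviving collection $\mathcal{S}$ has size at least $M/2 \geq \exp(c_1 Nd)$ and simultaneously satisfies (i)--(iii).

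For two independent Rademacher matrices $S, T$, define the bad event
\[
B_{S,T} := \bigl\{\forall i \in [N] : \|S T_i\|_p > C c_p^{1/p} \sqrt{p}\, d\bigr\},
\]
where $C$ is an absolute constant to be chosen large. The target estimate is $\Pr_{S,T}[B_{S,T}] \leq \exp(-c_2 Nd)$ with $c_2 > 2 c_1$. Conditioning on $S$ and using that the rows $T_1, \dots, T_N$ are \iid Rademacher, this probability factors as $\E_S p(S)^N$, where $p(S) := \Pr_{T_1}\{\|S T_1\|_p > C c_p^{1/p} \sqrt{p}\, d \mid S\}$. For every fixed $t \in \bcube^d$ the map $S \mapsto \|St\|_p$ is convex and $\sqrt{d}$-Lipschitz in the Frobenius norm, with expectation at most $C_0 c_p^{1/p} \sqrt{p}\, d$ by Lemma~\ref{lem:f_property}(i) and Khintchine, so Talagrand's inequality (Lemma~\ref{lem:talagrand}) with $C$ chosen sufficiently large relative to $C_0$ gives $\Pr_S\{\|St\|_p > C c_p^{1/p} \sqrt{p}\, d\} \leq \exp(-\alpha d)$ for some $\alpha = \alpha(p) > 0$; integrating over $t$ then yields $\E_S p(S) \leq \exp(-\alpha d)$. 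The delicate step---which I expect to be the main obstacle---is to upgrade this mean bound into a concentration statement of the form $\Pr_S\{p(S) > 2 \E_S p(S)\} \leq \exp(-\Omega(Nd))$. I would attempt this via a bounded-differences (McDiarmid) argument on $S \mapsto p(S)$, exploiting that flipping a single entry of $S$ moves $\|S T_1\|_p$ by at most $2$, so $p(S)$ changes by no more than the $T_1$-probability of a boundary slice of width $O(1)$ about the threshold, which is itself of order $\exp(-\Omega(d))$ for typical $S$.

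For (iii), since $p > 2$ gives $N = c_p d^{p/2} \gg d$, Lemma~\ref{lem:singular_values} with the deviation parameter chosen to dominate the $C\sqrt{d}$ correction yields $\kappa(A) \leq 2$ for a Rademacher $A$ with probability $1 - \exp(-\Omega(N))$. Assembling the pieces: the expected number of ordered pairs satisfying $B_{S,T}$ is at most $M^2 \exp(-c_2 Nd) \leq M/4$ by the choice $c_2 > 2 c_1$, and the expected number of matrices violating (iii) is at most $M \exp(-\Omega(N)) \leq M/4$; a single Markov application furnishes a realization in which at most $M/2$ matrices are defective, so the surviving set $\mathcal{S}$ has $|\mathcal{S}| \geq M/2 \geq \exp(c_1 Nd)$, pairwise satisfies (ii), and individually satisfies (iii). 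The central difficulty is genuinely the concentration of $p(S)$: a naive Markov application alone only gives $\Pr_S\{p(S) > \exp(-\alpha d / 2)\} \leq \exp(-\alpha d / 2)$, and the atypical set would then swamp the desired estimate, yielding only an $\exp(\Omega(d))$-sized family rather than the required $\exp(\Omega(Nd))$.
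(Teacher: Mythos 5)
Your overall probabilistic-method framing, the per-pair ingredient (Talagrand via Lemma~\ref{lem:f_property} and Lemma~\ref{lem:talagrand}, as in Lemma~\ref{lem:f_concentrate}), and the treatment of (iii) via Lemma~\ref{lem:singular_values} all match the paper. But there is a genuine gap exactly at the step you flag as delicate. Writing $p(S)=\Pr_{T_1}\{\|ST_1\|_p> Cc_p^{1/p}\sqrt{p}\,d\}$, your scheme of discarding every matrix that participates in a bad \emph{pair} forces you to bound the unconditional quantity $\Pr_{S,T}[B_{S,T}]=\E_S[p(S)^N]$ by $\exp(-\Omega(Nd))$, and this really does require a statement like $\Pr_S\{p(S)\gg \E_S p(S)\}\le \exp(-\Omega(Nd))$. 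The McDiarmid route you propose cannot deliver it: McDiarmid needs a \emph{uniform} bound on the single-entry differences of $S\mapsto p(S)$, and your estimate that the boundary slice has $T_1$-probability $\exp(-\Omega(d))$ holds only for ``typical'' $S$, which is circular; the worst-case difference is only $O(1)$ (and even for structured $S$ it is at best polynomially small), so with $Nd$ coordinates and a target deviation of order $e^{-\alpha d}$ the inequality is vacuous. Worse, the target concentration is intrinsically borderline: for the $2^d$ matrices all of whose rows are equal, $\|ST_1\|_p = N^{1/p}|\langle S_1,T_1\rangle|$ and $p(S)=\Theta(1)$, so $\Pr_S\{p(S)>2\E_S p(S)\}\ge 2^{-(N-1)d}$; any valid bound must have an exponential rate below $\ln 2$, which no soft bounded-differences argument will certify.

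The paper avoids this issue entirely by making the discarding asymmetric: define $\bado=\{S: p(S)\ge 3e^{-cd}\}$, note that Lemma~\ref{lem:f_concentrate} plus Markov gives $\Pr_S\{S\in\bado\}\le 1/3$ (no concentration of $p(S)$ is needed beyond this), sample $\exp(c_2Nd)$ matrices, and discard the sampled matrices that lie in $\bado$ rather than the pairs they ruin. For a pair whose \emph{first} coordinate $S$ is outside $\bado$, the probability that all $N$ independent Rademacher rows of $T$ violate the threshold is at most $(3e^{-cd})^N=\exp(-\Omega(Nd))$, so a union bound over all pairs with good first coordinate succeeds; the exponent $Nd$ comes from the row-independence of $T$, not from any rarity of atypical $S$. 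This also corrects your closing diagnosis: Markov at the constant level is not too weak per se — it is exactly enough once you discard bad matrices instead of bad pairs. To repair your write-up, replace the unconditional bound on $\Pr[B_{S,T}]$ and the McDiarmid step with this conditioning on the first coordinate being good (and keep your treatment of (i) and (iii), which is fine).
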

\newcommand{\bad}{\mathsf{Bad}}
\begin{proof}
We first define a set of bad matrices $\bad \subseteq \{+1, -1\}^{N \times d}$ to be
\[
\bad = \left\{A \in  \{+1, -1\}^{N\times d}: \Pr_x\left\{\|Ax\|_p \ge C c_p^{1/p} \sqrt{p} d\right\} \ge 3 e^{-cd}\right\},
\]
where $x \in \{+1, -1\}^d$ is an \iid Rademacher vector and $C_p, c$ are the same constants in Lemma~\ref{lem:f_concentrate}.
It follows from Lemma \ref{lem:f_concentrate} that
\[
\Pr_A \{A \in \bad\} \le \frac{1}{3},
\]
since otherwise
\[
\Pr_{A, x}\left\{f_x(A) \geq C c_p^{1/p} \sqrt{p} d\right\}  \ge \Pr_A \{A \in \bad\} \cdot \Pr_{x}\left\{f_x(A) \geq C c_p^{1/p} \sqrt{p} d \mid A \in \bad \right\} > e^{-cd}.
\]
Let the multiset $\mathcal{T} \subseteq \{+1, -1\}^{N \times d}$ of size $|\mathcal{T}| = \exp(c_2 Nd)$ consist of independent uniform samples of matrices in $\{+1,-1\}^{N\times d}$. We define three events as follows.
\begin{itemize}[itemsep=0pt]
	\item $\mathcal{E}_1$: $|\mathcal{T} \setminus \bad|\geq |\mathcal{T}|/3$;
	\item $\mathcal{E}_2$: For each $S \in \mathcal{T} \setminus \bad$ and each $T \in \mathcal{T} \setminus \{S\}$, there exists some $i \in [N]$ such that $\|ST_i\|_p \leq Cpd$.
	\item $\mathcal{E}_3$: There are at least $(5/6)|\mathcal{T}|$ matrices $T\in \mathcal{T}$ such that $\kappa(T)\leq 2$.
\end{itemize}
We analyze the probability of each event below.

First, notice that $\E|\mathcal{T} \cap \bad| \leq |\mathcal{T}|/3$. Thus, by Markov's inequality we have $\Pr(|\mathcal{T} \cap \bad| \geq 2|\mathcal{T}| / 3) \le 1/2$, which implies $\Pr(\mathcal{E}_1^c) \leq 1/2$.

Next, consider a fixed matrix $S \in \{+1, -1\}^{N \times d} \setminus \bad$. For a random matrix $T  \in \{+1, -1\}^{N \times d}$ whose entries are \iid Rademacher random variables, for each row $T_i$ of $T$, by definition of $\bad$, we have
\[
\Pr\left\{\|ST_i\|_p \ge C c_p^{1/p} \sqrt{p} d\right\} \le 3e^{-cd}.
\]
Since the rows of $T$ are independent, 
\[
\Pr\left\{\|ST_i\|_p \ge C c_p^{1/p} \sqrt{p} d, ~ \forall i \in [N]\right\} \le 3^N e^{-cNd} \leq e^{-c'Nd}.
\]
Choosing appropriate constants for $C$ and $c$ (and thus $c'$) allows for a union bound over all pairs $S \in \mathcal{T} \setminus \bad$ and $T \in \mathcal{T} \setminus \{S\}$, and we have $\Pr(\mathcal{E}_2^c) \le 1/3$.

Last, for the condition number, recall the classical result that for a random matrix $T$ of \iid Rademacher entries, it holds with probability $\geq 1-\exp(-c_3d)$ that $s_{\min}(T)\geq \sqrt{N}-c_4\sqrt{d}$ and $s_{\min}(T)\geq \sqrt{N}+c_4\sqrt{d}$, which implies that $\kappa(T)\leq (\sqrt{N}+c_4\sqrt{d})/(\sqrt{N}-c_4\sqrt{d})\leq 2$ when $d$ is sufficiently large. Letting $\mathcal{T}_1 = \{T\in |\mathcal{T}|: \kappa(T) > 2\}$, we have $\E|\mathcal{T}_1| \leq e^{-c_3d}|\mathcal{T}| $. Thus by a Markov bound, $\Pr\{|\mathcal{T}_1|\geq 6e^{-c_3d}|\mathcal{T}|\}\leq 1/10$, and thus $\Pr(\mathcal{E}_3^c)\leq 1/10$.

Since $\Pr(\mathcal{E}_1^c)+\Pr(\mathcal{E}_2^c)+\Pr(\mathcal{E}_3^c) < 1$, there exists a set $\mathcal{T}$ for which all $\mathcal{E}_1$, $\mathcal{E}_2$, $\mathcal{E}_3$ hold. Taking $\mathcal{S}$ to be the well-conditioned matrices in $\mathcal{T} \setminus \bad$, we see that $\mathcal{S}$ satisfies conditions (i)--(iii).
\end{proof}

\begin{theorem}\label{thm:comm_lb_p>=2}
The for-all version of the $\ell_p$ subspace sketch problem requires $\Omega((d / p)^{p / 2} \cdot d)$ bits to solve when $p \ge 2$ and $\eps < 1$ are constants and $n=\Omega(d^{p/2})$.
The lower bound holds even when $\kappa(A) \le 2$ if $p > 2$, and all entries in $A$ are in $\{+1, -1\}$.
\end{theorem}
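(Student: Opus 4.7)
The plan is a direct encoding argument built on Lemma~\ref{lem:pm}. Set the constant $c_p$ inside $N = c_p d^{p/2}$ small enough that $(1+\eps)\, Cc_p^{1/p}\sqrt{p} < 1-\eps$, where $C$ is the absolute constant from Lemma~\ref{lem:pm}(ii); since $\eps$ is fixed this forces $c_p = \Theta(1/p^{p/2})$, hence $N = \Theta((d/p)^{p/2})$. Let $\mathcal{S} \subset \{+1,-1\}^{N \times d}$ be the set produced by Lemma~\ref{lem:pm} with this choice, so that $|\mathcal{S}| \geq \exp(c_1 N d)$ and (for $p > 2$) every member is $2$-well-conditioned.

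Pick $M \in \mathcal{S}$ uniformly at random and build $Q_p$ on input $A = M$; by assumption, with probability at least $0.9$ we have $Q_p(x) = (1\pm \eps)\|Mx\|_p$ simultaneously for all $x\in\R^d$. To decode $M$, I query $Q_p(T_i)$ for every candidate $T\in\mathcal{S}$ and every row $T_i$ of $T$, and output the unique $T\in\mathcal{S}$ for which $Q_p(T_i) \geq (1-\eps) d$ holds for all $i\in [N]$. The two cases of the decoder are:
\begin{itemize}[topsep=0.5ex,itemsep=-0.5ex]
	\item If $T=M$: the $i$-th entry of $MT_i$ equals $\|T_i\|_2^2 = d$, so $\|MT_i\|_p \geq d$, and the for-all guarantee gives $Q_p(T_i) \geq (1-\eps)d$ for every $i$.
	\item If $T\neq M$: by Lemma~\ref{lem:pm}(ii) there is some row $T_i$ with $\|MT_i\|_p \leq Cc_p^{1/p}\sqrt{p}\,d$, so $Q_p(T_i) \leq (1+\eps)Cc_p^{1/p}\sqrt{p}\,d < (1-\eps)d$ by the choice of $c_p$.
\end{itemize}
Hence, on the event that $Q_p$ is simultaneously accurate (probability $\geq 0.9$), the decoder recovers $M$ exactly. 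A standard Fano-type argument then implies that the bit length of $Q_p$ must be $\Omega(\log |\mathcal{S}|) = \Omega(Nd) = \Omega((d/p)^{p/2}\cdot d)$. The side conditions $\kappa(A)\leq 2$ (for $p>2$) and $A\in\{+1,-1\}^{N\times d}$ are inherited from Lemma~\ref{lem:pm}(iii) and from $\mathcal{S}\subseteq \{+1,-1\}^{N\times d}$, respectively.

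Essentially all of the work in this theorem lives in Lemma~\ref{lem:pm} (the Talagrand-based concentration together with the condition-number control via Lemma~\ref{lem:singular_values}); the present argument is just a packaging step. The only subtle point is calibrating $c_p$ so that the gap between the ``diagonal'' lower bound $d$ and the ``off-diagonal'' upper bound $Cc_p^{1/p}\sqrt{p}\,d$ is wider than the multiplicative slack $(1\pm\eps)$, which is what turns the raw $d^{p/2}$ dimension count into the sharper $(d/p)^{p/2}$ dependence quoted in the theorem.
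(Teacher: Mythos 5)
Your proposal is correct and follows essentially the same route as the paper: draw $A$ uniformly from the multiset $\mathcal{S}$ of Lemma~\ref{lem:pm}, use the gap between $\|AA_i\|_p \ge d$ on its own rows and the small value $\|AT_i\|_p \le Cc_p^{1/p}\sqrt{p}\,d$ on some row of any other candidate (with $c_p$ chosen small enough to beat the $(1\pm\eps)$ slack) to recover $A$ exactly from the for-all sketch, and conclude $\Omega(\log|\mathcal{S}|) = \Omega(Nd) = \Omega((d/p)^{p/2}\cdot d)$ bits by the standard information-theoretic argument, with the $\kappa(A)\le 2$ and $\{+1,-1\}$-entry conditions inherited from the lemma. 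The only difference is cosmetic (you query rows of each candidate $T$ rather than phrasing the separation via rows of $A$, and you make the calibration of $c_p$ against $\eps$ explicit, which the paper leaves as ``provided $c_p$ is small enough'').
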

\begin{proof}
Choose a matrix $A$ uniformly at random from the set $\mathcal{S}$ in Lemma~\ref{lem:pm}. Suppose that $Q_p : \R^d \to \R$ satisfies 
$$
\|Ax\|_p^p \le Q_p(x) \le (1 + O(\eps)) \|Ax\|_p^p, \quad x\in\R^d.
$$
For any row $A_i \in \R^d$ of $A$, interpreted as a column vector, $\|A A_i\|_p \ge d$, whereas for any $B \in \mathcal{S} \setminus \{A\}$, there exists a row $A_i$ of $A$ such that $\|BA_i\|_p \le C c_p^{1/p} \sqrt{p} d < d / 3$, provided that $c_p$ (and thus $C$) is small enough. Thus, by appropriate choice of the constants in Lemma~\ref{lem:pm}, we can use $Q_p$ to determine which matrix $A \in \mathcal{S}$ has been chosen. By Property (ii) of the set $\mathcal{S}$, it must hold that all elements of $\mathcal{S}$ are distinct from each other. It then follows from a standard information-theoretic argument that the size of the data structure for the $\ell_p$ sketch problem is 
lower bounded by $\Omega(\log |\mathcal{S}|) = \Omega(Nd) = \Omega((d / p)^{p / 2} \cdot d)$.
\end{proof}

\subsubsection{Lower Bound for $1\leq p \le 2$}\label{sec:p<2}
The lower bound for $1\leq p < 2$ follows from the lower bound for $p=2$ by embedding $\ell_p$ into $\ell_2$. It is known that $\ell_2^n$ $K$-embeds into $\ell_p^m$ for some $m \leq cn$, where $c = c(p)$ and $K = K(p)$ are constants that depend only on $p$. Furthermore, the embedding $T:\ell_2^n\to \ell_p^m$ can be realized using a rescaled matrix of \iid Rademacher entries (with high probability). See~\cite[Section 2.5]{matousek} for a proof for $p=1$, which can be generalized easily to a general $p$. 
Thus, one can reduce the for-all version of the $\ell_2$ subspace sketch problem to the for-all version of the $\ell_p$ subspace sketch with $1 \le p \le 2$.
Thus the lower bound of $\Omega(d^2)$ also holds when $1 \le p \le 2$.

\section{Linear Embeddings}
In this section, our goal is to show that isomorphic embeddings into low-dimensional spaces induce solutions to the subspace sketch problem. Therefore a lower bound on the subspace sketch problem implies a lower bound on the embedding dimension.

\begin{theorem}\label{thm:conversion_to_dimension_lb}
Let $p,q\geq 1$, $\eps > 0$ and $A\in \R^{N\times d}$ with full column rank. Let $E\subseteq \ell_p^N$ be the column space of $A$ and suppose that $T: E\to \ell_q^n$ is a $(1+\eps )$-isomorphic embedding. Then there exists a data structure for the for-all version of the $\ell_p$ subspace sketch problem on $A$ with approximation ratio $1 \pm 6 p \varepsilon$ and $O(nd\log(N^{|1/p-1/2|}dn\kappa(A)/\eps))$ bits. 
\end{theorem}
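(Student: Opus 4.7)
The plan is to take $B := TA \in \R^{n \times d}$ as the sketch matrix and to store a coarse entrywise discretization $\tilde B$ of $B$, each entry being rounded to the nearest integer multiple of a step size $\delta > 0$ to be determined. The data structure answers a query $x \in \R^d$ by returning $Q_p(x) := \|\tilde B x\|_q^p$. Since $T$ is a $(1+\eps)$-isomorphic embedding of the column space $E$ of $A$ and $Ax \in E$ for every $x$, we have $\|Ax\|_p \le \|Bx\|_q \le (1+\eps)\|Ax\|_p$, so $\|Bx\|_q^p$ already approximates $\|Ax\|_p^p$ to within a factor of $(1+\eps)^p \le 1 + 3p\eps$ before any rounding. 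The remaining task is to pick $\delta$ small enough that $\|\tilde B x\|_q = (1 \pm \eps)\|Bx\|_q$ uniformly in $x$.

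For this I would bound $\|x\|_1$ in terms of $\|Ax\|_p$ by chaining $\|x\|_1 \le \sqrt d\,\|x\|_2 \le \sqrt d\,\|Ax\|_2/\sigma_{\min}(A)$ with the elementary norm comparison $\|y\|_2 \le N^{\max(0,\,1/2-1/p)}\|y\|_p$ on $\R^N$ to obtain
\[
\|x\|_1 \le \frac{\sqrt d\, N^{\max(0,\,1/2-1/p)}}{\sigma_{\min}(A)}\,\|Ax\|_p.
\]
Since $|((\tilde B - B)x)_i| \le \delta\|x\|_1$, we get $\|(\tilde B - B)x\|_q \le n^{1/q}\delta\|x\|_1$, and taking $\delta = c\,\eps\,\sigma_{\min}(A)/(n^{1/q}\sqrt d\, N^{\max(0,\,1/2-1/p)})$ for a small absolute constant $c$ forces $\|\tilde B x\|_q = (1 \pm \eps)\|Bx\|_q$ for every $x$. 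Combining this with the embedding bound yields $Q_p(x) = (1 \pm 6p\eps)\|Ax\|_p^p$ uniformly in $x$, so the for-all guarantee is met deterministically.

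For the bit complexity, the dual norm comparison $\|y\|_p \le N^{\max(0,\,1/p-1/2)}\|y\|_2$ together with $\|Ae_j\|_2 \le \sigma_{\max}(A)$ yields the entrywise bound $|B_{ij}| \le \|Be_j\|_q \le (1+\eps)\|Ae_j\|_p \le 2\, N^{\max(0,\,1/p-1/2)}\,\sigma_{\max}(A)$. Thus each entry of $\tilde B$ lies in an interval of length $O(N^{\max(0,\,1/p-1/2)}\sigma_{\max}(A))$ and is an integer multiple of $\delta$, so it takes $O(\log(\|B\|_{\max}/\delta))$ bits to encode. For any fixed $p$, exactly one of the two exponents $\max(0,\,1/2-1/p)$ and $\max(0,\,1/p-1/2)$ is nonzero and their sum equals $|1/p - 1/2|$, so the combined $N$-exponent in $\log(\|B\|_{\max}/\delta)$ is $|1/p - 1/2|$, giving $O(\log(N^{|1/p-1/2|}\, dn\,\kappa(A)/\eps))$ bits per entry and $O(nd\log(N^{|1/p-1/2|}\, dn\,\kappa(A)/\eps))$ bits in total.

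The only subtle point in the argument is precisely this last bookkeeping: a naive estimate that uses $N^{|1/p-1/2|}$ both to bound $\|x\|_1$ and to bound $|B_{ij}|$ would double the $N$-exponent in the log; the right observation is that the two comparisons are governed by opposite sides of the threshold $p = 2$, so only one of them is nontrivial for any given $p$. All other aspects --- the deterministic rounding, the uniform-in-$x$ correctness, and the absence of any probabilistic argument --- follow directly once $\delta$ is chosen as above.
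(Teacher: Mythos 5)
Your proposal is correct and follows essentially the same route as the paper: store an entrywise rounding $\tilde B$ of $B=TA$ with step size $\delta \approx \eps\,\sigma_{\min}(A)/(n^{1/q}\sqrt{d}\,N^{\max(0,1/2-1/p)})$, bound the rounding error against $\|Ax\|_p$ via the singular values of $A$ and the $\ell_2$--$\ell_p$ comparison, and count bits using the entry bound $N^{\max(0,1/p-1/2)}\sigma_{\max}(A)$, with the two $N$-exponents combining to $N^{|1/p-1/2|}$ exactly as in the paper's $D_1D_2$ bookkeeping. The only differences are cosmetic (keeping $\sigma_{\min},\sigma_{\max}$ explicit rather than normalizing $A$, and passing through $\|x\|_1\le\sqrt{d}\|x\|_2$), plus the trivially handled point of also storing $\delta$.
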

\begin{proof}
Without loss of generality, we may assume that $\frac{1}{\kappa(A)}\|x\|_2\leq \|Ax\|_2\leq \|x\|_2$. Let $B\in \R^{n\times d}$ be such that $B = TA$. Then $\|Ax\|_p^p \leq \|Bx\|_q^p \leq (1+\eps)^p\|Ax\|_p^p$. Round each entry of $B$ to an integer multiple of $\delta = \eps/(D_1 n^{1/q} d^{1/2} \kappa(A))$, where $D_1 = \max\{1, N^{1/2-1/p}\}$, obtaining $\tilde B$. First we claim that the rounding causes only a minor loss, 
\begin{equation}\label{eqn:tilde B preserves}
(1-\eps)^p\|Bx\|_q^p \leq \|\tilde Bx\|_q ^p \leq (1+\eps)^p\|Bx\|_q^p.
\end{equation}
Indeed, write $B = \tilde B + \Delta B$, where each entry of $\Delta B$ is bounded by $\delta$. Then
\[
\|(\Delta B)x\|_q \leq
n^{1/q}d^{1/2} \delta \|x\|_2 \leq n^{1/q} d^{1/2} \delta \|x\|_2 \leq n^{1/q} d^{1/2} \delta\cdot \kappa(A)\cdot \|Ax\|_2 \leq \eps \|Ax\|_p \le \eps \|Bx\|_q.
\]
This proves \eqref{eqn:tilde B preserves} and so
\[
(1- \eps)^p\|Ax\|_p^p \leq \|\tilde Bx\|_q^p \leq (1+\eps)^{2p}\|Ax\|_p^p,
\]
which implies that the matrix $\tilde B$ can be used to solve the $\ell_p$ subspace sketch problem on $A$ with approximation ratio $1 \pm 6p\eps$. Since 
\[
\|Bx\|_q\leq (1+\eps)^p\|Ax\|_p \leq (1+\eps)^pD_2\|Ax\|_2 \leq (1+\eps)^p D_2 \|x\|_2,
\]
where $D_2=\max\{1,N^{1/p-1/2}\}$, each entry of $B$ is at most $eD_2$. Hence, after rounding, each entry of $\tilde B$ can be described in $O(\log(D_2/\delta)) = O(\log(dnD\kappa(A)/\epsilon))$ bits, where $D = D_1D_2 = N^{|1/2-1/p|}$. The matrix $\tilde B$ can be described in $O(nd\log(D_2/\delta))$ bits. The value of $\delta$ can be described in $O(\log(1/\delta))$ bits, which is dominated by the complexity for describing $\tilde B$.
Therefore the size of the data structure is at most $O(nd\log(D_2/\delta)) = O(nd\log(Ddn\kappa(A)/\epsilon))$ bits.
\end{proof}

The dimension lower bound for linear embeddings now follows as a corollary from combining the preceding theorem with Theorem~\ref{thm:comm_lb_eps}, where we choose $d = C\log(1/\eps)$ and note that $N = O(1/\epsilon^2)$ and $\kappa(A) = O(1)$ in our hard instance.
\begin{corollary}\label{cor:dim_lb} Let $p\in [1,\infty)\setminus 2\Z$ and suppose that $d\geq C\log(1/\eps)$. It holds that
\[
	N_p(d,\eps)\geq c_p \cdot 1/(\eps^2 \cdot \polylog(1/\eps)),
\]
where $c_p > 0$ is a constant that depends only on $p$ and $C>0$ is an absolute constant.
\end{corollary}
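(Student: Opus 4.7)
The plan is to combine Theorem~\ref{thm:conversion_to_dimension_lb}, which turns a linear isomorphic embedding into a bit-bounded subspace sketch data structure, with the bit lower bound of Theorem~\ref{thm:comm_lb_eps}. A contradiction between the two yields the desired dimension lower bound. A preliminary reduction will be useful: $N_p(d,\eps)$ is monotone non-decreasing in $d$, because any $d_0$-dimensional subspace $E_0$ of $L_p$ with $d_0\leq d$ embeds (isometrically) into some $d$-dimensional super-space $E$, so that a $(1+\eps)$-embedding of $E$ into $\ell_p^{N_p(d,\eps)}$ restricts to one of $E_0$. It thus suffices to establish the bound at the single value $d = d_0 := 2\log_2(C_0/(\eps\polylog(1/\eps)))$ supplied by Theorem~\ref{thm:comm_lb_eps}.

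Next, I would fix the hard instance $A \in \R^{N\times d_0}$ of Theorem~\ref{thm:comm_lb_eps}, which satisfies $N = 2^{d_0} = \widetilde{O}(1/\eps^2)$ and $\kappa(A) \leq K_p$ for a constant $K_p$ depending only on $p$, let $E = \mathrm{col}(A)\subseteq \ell_p^N$, set $\eps_0 = \eps/(6p)$, and assume that $T:E\to \ell_p^n$ is a $(1+\eps_0)$-embedding with the smallest possible $n$, namely $n = N_p(d_0,\eps_0)$. Feeding $A$ and $T$ (with $q=p$) into Theorem~\ref{thm:conversion_to_dimension_lb} produces a for-all (hence also for-each) data structure $Q_p$ that solves the $\ell_p$ subspace sketch problem on $A$ with approximation $1\pm 6p\eps_0 = 1\pm\eps$ using $O\!\left(n\,d_0\log\!\left(N^{|1/p-1/2|} d_0 n\, \kappa(A)/\eps_0\right)\right)$ bits.

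Finally, I would compare this upper bound with the $\widetilde{\Omega}(1/\eps^2)$ bit lower bound of Theorem~\ref{thm:comm_lb_eps}. Substituting $d_0 = \Theta(\log(1/\eps))$, $N = \widetilde{O}(1/\eps^2)$, $\kappa(A)=O(1)$, and $\eps_0 = \Theta(\eps)$ collapses the inner logarithm to $O(\log(n/\eps))$, yielding
\[
n\,\log(n/\eps) \;\geq\; \widetilde{\Omega}(1/\eps^2)/\log(1/\eps) \;=\; \widetilde{\Omega}(1/\eps^2).
\]
The one mildly annoying point is that $n$ appears on both sides; however, $\log n$ contributes at most a polylogarithmic factor in $1/\eps$ (since certainly $n\leq \poly(1/\eps)$ would otherwise contradict the inequality at an even earlier stage, and $n\geq \poly(1/\eps)$ gives $\log n = O(\log(1/\eps))$), so this self-reference is absorbed into the $\polylog(1/\eps)$ denominator. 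Rearranging produces $n \geq c'_p/(\eps^2\polylog(1/\eps))$, and relabelling $\eps_0$ as $\eps$ transfers the factor of $6p$ into the $p$-dependent constant $c_p$, establishing the claim. The only real work is the bookkeeping in this last step; the conceptual content is entirely in the two theorems being combined.
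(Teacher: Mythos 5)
Your proposal is correct and takes essentially the same route as the paper: the paper's proof is exactly the combination of Theorem~\ref{thm:conversion_to_dimension_lb} with the bit lower bound of Theorem~\ref{thm:comm_lb_eps}, instantiated at $d = \Theta(\log(1/\eps))$ with $N = \widetilde{O}(1/\eps^2)$ and $\kappa(A) = O(1)$ for the hard instance. Your additional bookkeeping (monotonicity of $N_p(d,\eps)$ in $d$, rescaling $\eps$ by the factor $6p$, and absorbing the $\log n$ self-reference into the $\polylog(1/\eps)$ factor) simply makes explicit what the paper leaves implicit.
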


\begin{remark}\label{rem:N_1_bounds}
It is not clear how much the assumption $d\geq C\log\frac{1}{\eps}$ can be weakened. The best known results for $p=1$ are as follows~\cite{handbook:21}.
\[
N_1(d,\eps)\leq \begin{cases}
									c_2 \eps^{-1/2}, & \quad d=2;\\
									c(d)\left(\frac{1}{\eps^2}\log\frac{1}{\eps}\right)^{(d-1)/(d+2)}, & \quad d=3,4;\\
									c(d)\left(\frac{1}{\eps^2}\right)^{(d-1)/(d+2)}, &\quad d\geq 5,
									\end{cases}
\]
which is substantially better than $1/(\eps^2\polylog(1/\eps))$ for constant $d$. In a similar lower bound~\cite{bourgain:zonoid_lb}
\[
N_1(d,\eps)\geq c(d)\eps^{-2(d-1)/(d+2)},
\]
the constant $c(d)\approx e^{-c'd\ln d}$, so the lower bound is nontrivial for $d$ up to $O\left(\log\frac{1}{\eps}/\log\log\frac{1}{\eps}\right)$. Since $N_1(d,\eps)$ is increasing, optimizing $d$ w.r.t.~$\eps$ yields that $N_1(d,\eps) = \Omega\big(\eps^{-2}\exp(-c''\sqrt{\ln(1/\eps)\ln\ln(1/\eps)})\big)$ for all $d = \Omega\big(\sqrt{\ln(1/\eps)}\big)$.  Our result improves the lower bound to $\eps^{-2}/\polylog(1/\epsilon)$ for larger $d$ and, more importantly, works for general $p\geq 1$ that is not an even integer.
\end{remark}

\begin{remark}
In the case of $p > 2$, it is an immediate corollary from Theorem~\ref{thm:comm_lb_p>=2} that $N_p(d,\epsilon) = \Omega(d^{p/2}/\log d)$, which recovers the known (and nearly tight) lower bound up to a logarithmic factor.
\end{remark}

\section{Sampling-based Embeddings}

Our goal in this section is to prove the following lower bound.

\begin{theorem}\label{thm:sampling_based}
Let $p\geq 1$ and $p\notin 2\Z$. Suppose that $Q_p(x) = \|TAx\|_p^p$ solves the for-all version of the $\ell_p$ subspace sketch problem on $A$ and $\eps$ for some $T\in \R^{m\times n}$ such that each row of $T$ contains exactly one non-zero element. Then it must hold that $m\geq c_p d/(\epsilon^2\polylog(1/\epsilon))$, provided that $d\geq C \log(1/\epsilon)$, where $c_p > 0$ is a constant that depends only on $p$ and $C > 0$ is an absolute constant.
\end{theorem}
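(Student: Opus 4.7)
The plan is to reduce to the bit-complexity lower bound of Corollary~\ref{cor:comm_lb_eps}, exploiting the block-diagonal structure of that hard instance: a sampling-based sketch on block-diagonal data inherits the same block sparsity, which shrinks the natural data-structure size from $\Theta(md)$ to $\Theta(md')$ bits (up to polylog factors), where $d' = \Theta(\log(1/\eps))$ is the block width.

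First I would take the block-diagonal hard matrix $A \in \R^{n\times d}$ from the proof of Corollary~\ref{cor:comm_lb_eps}, with $b = d/d'$ diagonal blocks each an independent copy of the construction in Section~\ref{sec:comm_lb}, $n' = 2^{d'} = \widetilde\Theta(1/\eps^2)$ rows per block, entries of $A$ bounded by $\poly(d')$, and $\kappa(A) = O(1)$. Corollary~\ref{cor:comm_lb_eps} then asserts that any data structure for the (for-each, hence also for-all) $\ell_p$ subspace sketch on this $A$ requires $\widetilde\Omega(d/\eps^2)$ bits.

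The key structural observation is that, because each row of $T$ has a single nonzero, the $i$th row of $B := TA$ equals $c_i A_{j_i}$, a scaled copy of a single row of $A$. Since $A$ is block diagonal, each such row is supported on exactly the $d'$ coordinates of one block, so $B$ has at most $md'$ nonzero entries and its sparsity pattern is described by one block index per row. I would then mimic the rounding argument in the proof of Theorem~\ref{thm:conversion_to_dimension_lb}, rounding each nonzero entry of $B$ to a multiple of $\delta = \eps/\poly(nd\kappa(A))$ so that the resulting $\tilde B$ satisfies $\|\tilde B x\|_p^p = (1\pm O(p\eps))\|Ax\|_p^p$ for all $x$. A preliminary bound $|c_i| \leq \poly(nd/\eps)$, obtained by applying the for-all sketch guarantee to $x = A_{j_i}/\|A_{j_i}\|_2$ and using $\kappa(A) = O(1)$, makes each rounded nonzero entry describable in $O(\polylog(nd/\eps))$ bits, so the total data-structure size is $O(md' \cdot \polylog(nd/\eps))$ bits.

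Comparing to the $\widetilde\Omega(d/\eps^2)$ lower bound gives $md' \cdot \polylog(nd/\eps) = \widetilde\Omega(d/\eps^2)$; since $d' = \Theta(\log(1/\eps))$ and $\log n = O(\polylog(1/\eps) + \log d)$, absorbing logarithmic factors into $\widetilde\Omega$ yields $m = \widetilde\Omega(d/\eps^2)$, as required. I expect the main obstacle to be the rounding step: one must ensure that the for-all approximation on the entire column space of $A$ survives entrywise rounding of $B$, which amounts to combining the entrywise bound on $B$ with the Frobenius-to-spectral-norm perturbation estimate already used in Theorem~\ref{thm:conversion_to_dimension_lb} and verifying that $\kappa(A) = O(1)$ is strong enough to keep the perturbation subsumed by the slack in the sketch guarantee.
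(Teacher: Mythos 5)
Your proposal is correct and follows essentially the same route as the paper's proof: both arguments take the block-diagonal hard instance behind Corollary~\ref{cor:comm_lb_eps}, exploit the fact that each row of $TA$ is a scaled row of $A$ supported on a single $\Theta(\log(1/\eps))$-wide block to encode the (rounded) sketch in $\widetilde{O}(m\log(1/\eps))$ bits, and then compare against the $\widetilde{\Omega}(d/\eps^2)$ bit lower bound. The differences are only in bookkeeping: the paper rounds the entries of $T$ to multiples of $\eps^{1/p}$ and bounds the total bit-length of the scalars in aggregate, via canonical-basis queries and an AM--GM argument (which forces a separate treatment of the case $m<k$), whereas you round $TA$ following the scheme of Theorem~\ref{thm:conversion_to_dimension_lb} and bound each scalar $|t_i|$ individually with the query $A_{j_i}/\|A_{j_i}\|_2$; this is equally valid, avoids the case split, and costs only additional $\polylog(nd/\eps)$ factors (on the hard instance $\log n = O(\log d + \log(1/\eps))$), which is the same level of looseness as the paper's own accounting of the positional information of $T$'s nonzero entries.
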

\begin{proof}
Let $A$ be the hard instance matrix for the $\widetilde\Omega(1/\eps^2)$ lower bound in Corollary~\ref{cor:dim_lb}. Recall that $A$ is a diagonal matrix with $k = \Theta(d/\log(1/\eps))$ diagonal blocks. Each block has dimension $2^s\times s$ with $2^{s/2} = 1/\eps^{1-o(1)}$. Furthermore, each block can be written as $DB$, where $D$ is a $2^s\times 2^s$ diagonal matrix and $B$ is the $2^s\times s$ matrix whose rows are all vectors in $\{-1,1\}^s$, and each entry in $D$ has magnitude in $[(2\sqrt{s})^{1/p},(8\sqrt{s})^{1/p}]$. The matrix $B$ can be described using $O(s2^s)$ bits and the matrix $D$ using $O(2^s\log s)$ bits. Without loss of generality we may assume that each nonzero entry of $T$ is an integer multiple of $\eps^{1/p}$, since the loss of rounding, by the triangle inequality, is at most $\eps\|Ax\|_p^p$. Next, we shall bound the number of bits needed to describe $T$.

Letting $x=e_j$ be a canonical basis vector,
\[
(1+\epsilon)\|Ax\|_p^p \geq \|TAx\|_p^p = \sum_{i=1}^m |t_i A_{i,j}|^p\geq 2\sqrt{s} \sum_{i=1}^m |t_i|^p.
\]
On the other hand,
\[
\|Ax\|_p^p \leq k\cdot 8\sqrt{s} \|Bx\|_p^p\leq 8 k \sqrt{s}(2^{s/2}\|Bx\|_2)^p \leq C'k \sqrt{s}2^{sp},
\]
where we used \eqref{eqn:khintchine_B} for the last inequality. It follows from the AM-GM inequality that
\[
\sum_i \log\frac{|t_i|}{\epsilon^{1/p}} = \log \prod_i\frac{|t_i|}{\epsilon^{1/p}} \leq \frac{m}{p}\log\left(\frac{1}{m} \sum_i \frac{|t_i|^p}{\eps} \right) \leq C'' m\left(s + \log\frac{1}{\eps} + \log\frac{k}{m}\right)\leq C'''ms,
\]
that is, $T$ can be described using $O(ms)$ bits, provided that $m\geq k$. Therefore $TA$ can be described in $O(s2^s + 2^s\log s + ms) = O((m+1/\eps^2)\log(1/\epsilon))$ bits. Combining with the lower bound of $\Omega(d/(\eps^2\poly\log(1/\eps))$ bits, we see that $m=\Omega(d/(\eps^2\polylog(1/\eps)))$.

A similar argument shows that when $m < k$, the matrix $T$ can be described in $O(d)$ bits, which leads to a contradiction to the lower bound. Hence it must hold that $m\geq k$ and, as we proved above, $m=\Omega(d/(\eps^2\polylog(1/\eps)))$.
\end{proof}

The lower bound for the (for-each) $\ell_p$ subspace sketch problem loses further a factor of $\log d$.
\begin{corollary} 
Let $p\geq 1$ and $p\notin 2\Z$. Suppose that $Q_p(x) = \|TAx\|_p^p$ solves the (for-each) version of the $\ell_p$ subspace sketch problem on $A$ and $\eps$ for some $T\in \R^{m\times n}$ such that each row of $T$ contains exactly one non-zero element. Then it must hold that $m\geq c_p d/(\epsilon^2\cdot\log d\cdot\polylog(1/\epsilon))$, provided that $d\geq C\log(1/\epsilon)$, where $c_p > 0$ is a constant that depends only on $p$ and $C > 0$ is an absolute constant.
\end{corollary}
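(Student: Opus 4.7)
The plan is to lift the bit-complexity argument of Theorem~\ref{thm:sampling_based} to the for-each setting by invoking the for-each bit lower bound of Corollary~\ref{cor:comm_lb_eps} in place of the for-all counterpart that Theorem~\ref{thm:sampling_based} uses. First I would instantiate the block-diagonal hard instance $A \in \R^{bn'\times d}$ of Corollary~\ref{cor:comm_lb_eps}, with $b=\Theta(d/\log(1/\eps))$ diagonal blocks, each of dimension $n'\times d_0$ where $d_0=\Theta(\log(1/\eps))$ and $n'=2^{d_0}=\Theta(\polylog(1/\eps)/\eps^2)$. By Corollary~\ref{cor:comm_lb_eps}, any data structure (sampling-based or otherwise) solving the for-each $\ell_p$ subspace sketch problem on this $A$ requires $\Omega(d/(\eps^2\polylog(1/\eps)))$ bits.

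Next I would upper bound the description length of the sampling-based data structure $Q_p(x)=\|TAx\|_p^p$ with $m$ sampled rows. Mirroring the accounting in Theorem~\ref{thm:sampling_based}, round each non-zero entry of $T$ to an integer multiple of $\eps^{1/p}$ (a rounding error absorbed into $\eps\|Ax\|_p^p$ by the triangle inequality), probe with basis vectors $x=e_j$ to bound $\sum_i|t_i|^p$, and apply AM-GM to conclude that the scaling factors of $T$ contribute $O(m\log(1/\eps))$ bits in total. Each of the $m$ column indices of non-zero entries of $T$ takes $O(\log(bn'))=O(\log d+\log(1/\eps))=O(\log(d/\eps))$ bits, and the corresponding sampled row of $A$---one Boolean row of the factor $B$ plus a single entry of the diagonal factor $D$---can be encoded in $O(d_0+\log d_0)=O(\log(1/\eps))$ bits. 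Summing over the $m$ rows, the data structure occupies $O(m\log(d/\eps))$ bits.

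Combining the upper and lower bounds yields $m\log(d/\eps)\geq\Omega(d/(\eps^2\polylog(1/\eps)))$, hence $m\geq c_p d/(\eps^2\log(d/\eps)\polylog(1/\eps))$. Since $\log(d/\eps)\leq\log d+\log(1/\eps)$, in the regime $\log d\geq\log(1/\eps)$ this gives the claimed $m\geq c_p d/(\eps^2\log d\polylog(1/\eps))$, while in the opposite regime the stronger bound $m\geq c_p d/(\eps^2\polylog(1/\eps))$ already subsumes the claim. The primary technical obstacle is the adaptation of the AM-GM bound on the scaling factors of $T$ to the for-each model: whereas the for-all setting of Theorem~\ref{thm:sampling_based} provides $\|TAe_j\|_p^p=(1\pm\eps)\|Ae_j\|_p^p$ for every basis vector simultaneously, the for-each setting only gives this with constant probability per $j$, which one can handle either by WLOG truncating $T$'s entries to precision $\poly(nd/\eps)$ or by amplifying the per-basis-vector guarantee via $O(\log d)$ independent repetitions of the sketch---either route accounting for the additional $\log d$ factor that distinguishes the final bound from Theorem~\ref{thm:sampling_based}.
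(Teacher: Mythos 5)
Your proposal follows essentially the same route as the paper: rerun the description-length accounting from Theorem~\ref{thm:sampling_based} against the (for-each) bit lower bound of Corollary~\ref{cor:comm_lb_eps}, and repair the only step that used the for-all guarantee---the $(1+\eps)$-approximations on the canonical basis vectors needed for the AM--GM bound on the scaling factors $|t_i|$---by taking $O(\log d)$ independent copies of the sketch, which is precisely where the paper's extra $\log d$ factor comes from. Of your two proposed patches, only the repetition/amplification argument is the paper's (and the justified) one; the ``WLOG truncate $T$'s entries'' shortcut does not follow from the hypotheses, so the amplification route should be taken as the actual proof.
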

\begin{proof}
Observe that we used the approximation to $\|Ae_i\|_p^p$ for each canonical basis vector $e_i$ in the proof of Theorem~\ref{thm:sampling_based}, which holds with a constant probability if we make $O(\log d)$ independent copies of the (randomized) data structure. This incurs a further loss of a $\log d$ factor in the lower bound.
\end{proof}


\section{Oblivious Sketches}\label{sec:OSE}

An oblivious subspace embedding for $d$-dimensional subspaces $E$ in $\ell_p^n$ is a distribution on linear maps $T:\ell_p^n\to\ell_p^m$ such that it holds for any $d$-dimensional subspace $E\subseteq \ell_p^n$ that
\[
\Pr_T\left\{ (1-\eps)\|x\|_p \leq \|T x\|_p \leq (1+\eps)\|x\|_p,\ \forall x\in E \right\} \geq 0.99.
\]

More generally, an \emph{oblivious sketch} is a distribution on linear maps $T:\ell_p^n\to \R^m$, accompanied by a recovery algorithm $\mathcal{A}$, such that it holds for any $d$-dimensional subspace $E\subseteq \ell_p^n$ that
\[
\Pr_T\left\{ (1-\eps)\|x\|_p \leq \mathcal{A}(T x) \leq (1+\eps)\|x\|_p,\ \forall x\in E \right\} \geq 0.99.
\]
It is clear that an oblivious embedding is a special case of an oblivious sketch, where $\mathcal{A}(Tx) = \|Tx\|_p$.

In this section we shall show that when $1\leq p < 2$, any oblivious sketch requires $m = \widetilde{\Omega}(d/\eps^2)$.

Before proving the lower bound, let us prepare some concentration results. We use $\bS^{n-1}$ to denote the unit sphere in $(\R^n, \|\cdot\|_2)$. First, observe that the norm function $x\mapsto \|x\|_p$ is a Lipschitz function of Lipschitz constant $\max\{1,n^{1/p-1/2}\}$. Also note that $(\E_{g\sim N(0,I_n)} \|g\|_p^p)^{1/p} = \beta_p n^{1/p}$, where $\beta_p = (\E_{g\sim N(0,1)} |g|^p)^{1/p}$. Standard Gaussian concentration (Lemma~\ref{lem:gaussian_lipschitz}) leads to the following:
\begin{lemma}
Let $p\geq 1$ be a constant and $g\sim N(0,I_n)$. It holds with probability at least $1-\exp(-c\epsilon^2 n^{\min\{1,2/p\}})$ that $(1-\eps) \beta_p n^{1/p} \leq \|g\|_p\leq (1+\eps) \beta_p n^{1/p}$, where $c=c(p)>0$ is a constant that depends only on $p$.
\end{lemma}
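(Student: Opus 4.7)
The plan is a direct application of the Gaussian concentration inequality (Lemma~\ref{lem:gaussian_lipschitz}) to the normalized norm function. Let $L=\max\{1,n^{1/p-1/2}\}$ denote the $\|\cdot\|_2\!\to\!\R$ Lipschitz constant of $x\mapsto\|x\|_p$ (which is precisely the norm-comparison constant noted in the paragraph preceding the statement). Define $f(x)=\|x\|_p/L$, which is $1$-Lipschitz with respect to the Euclidean norm. Then $(\E f(g)^p)^{1/p}=\beta_pn^{1/p}/L$, using the paper's computation that $(\E\|g\|_p^p)^{1/p}=\beta_pn^{1/p}$.

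Next I would apply Lemma~\ref{lem:gaussian_lipschitz} to $f$ with deviation parameter $t=\eps\beta_pn^{1/p}/L$. This gives
\[
\Pr\left\{\left|\,\|g\|_p/L-\beta_pn^{1/p}/L\,\right|\geq \eps\beta_pn^{1/p}/L\right\}\leq 2\exp(-ct^2),
\]
and multiplying through by $L$ inside the event yields exactly $(1-\eps)\beta_pn^{1/p}\leq\|g\|_p\leq (1+\eps)\beta_pn^{1/p}$ on the complement. So it remains to evaluate the exponent $ct^2=c\eps^2\beta_p^2\, n^{2/p}/L^2$.

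I would then do a short case analysis. When $1\leq p\leq 2$ we have $L=n^{1/p-1/2}$, hence $n^{2/p}/L^2=n^{2/p}/n^{2/p-1}=n$, and since $\min\{1,2/p\}=1$ in this regime the exponent is $c\eps^2 n=c\eps^2 n^{\min\{1,2/p\}}$. When $p\geq 2$ we have $L=1$, hence $n^{2/p}/L^2=n^{2/p}$, and $\min\{1,2/p\}=2/p$, so again the exponent is $c\eps^2 n^{\min\{1,2/p\}}$. Folding the constant $\beta_p^2$ (which depends only on $p$) and the factor of $2$ into a new constant $c=c(p)>0$ gives the claimed bound.

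There is essentially no obstacle; the only delicate point is matching the two regimes of the Lipschitz constant with the two regimes of the exponent $\min\{1,2/p\}$, and the case analysis above shows they align exactly. The reliance on Lemma~\ref{lem:gaussian_lipschitz} is crucial because the centering is by the $L^p$-mean $(\E f^p)^{1/p}$ (which is what we can compute explicitly for $\|g\|_p$) rather than by $\E f$, so no additional step is needed to compare $\E\|g\|_p$ with $(\E\|g\|_p^p)^{1/p}$.
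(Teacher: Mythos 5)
Your proposal is correct and follows exactly the route the paper intends: it observes that $x\mapsto\|x\|_p$ is Lipschitz with constant $\max\{1,n^{1/p-1/2}\}$ with respect to $\|\cdot\|_2$, computes $(\E\|g\|_p^p)^{1/p}=\beta_p n^{1/p}$, and invokes Lemma~\ref{lem:gaussian_lipschitz}, with your two-regime check of the exponent matching $\min\{1,2/p\}$ being precisely the bookkeeping the paper leaves implicit.
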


Suppose that $G$ is an $n\times d$ Gaussian random matrix of \iid $N(0,1)$ entries. Observe that for a fixed $x\in \bS^{d-1}$, $Gx\sim N(0,I_n)$. A typical $\eps$-net argument on $\bS^{d-1}$ allows us to conclude the following lemma. We remark that this gives Dvoretzky's Theorem for $\ell_p$ spaces.
\begin{lemma}
Let $1\leq p < 2$ be a constant and $G$ be an $n\times d$ Gaussian random matrix. There exist constants $C = C(p) > 0$ and $c = c(p) > 0$ such that whenever $n\geq Cd\log(1/\eps)/\eps^2$, it holds $\Pr\left\{ (1-\eps)\beta_p n^{1/p} \leq \|Gx\|_p \leq (1+\eps)\beta_p n^{1/p},\ \forall x\in \bS^{d-1} \right\}\geq 1 - 2\exp(-c \eps^2 n)$.
\end{lemma}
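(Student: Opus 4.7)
The plan is to establish the uniform bound via a standard $\delta$-net argument on $\bS^{d-1}$, combined with the pointwise Gaussian concentration from the preceding lemma and a Lipschitz bound for $x \mapsto \|Gx\|_p$.

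First, I would control the Lipschitz constant of $x \mapsto \|Gx\|_p$ from $(\R^d,\|\cdot\|_2)$ to $\R$. For $p\in[1,2)$, the identity embedding $\ell_2^n\hookrightarrow\ell_p^n$ has norm $n^{1/p-1/2}$, so for any $y,z\in\R^d$,
\[
\bigl|\|Gy\|_p - \|Gz\|_p\bigr| \le \|G(y-z)\|_p \le n^{1/p-1/2}\,\|G\|_{2\to 2}\,\|y-z\|_2.
\]
By standard concentration of singular values for Gaussian matrices (analogous to Lemma~\ref{lem:singular_values}), $\|G\|_{2\to 2}\le 2\sqrt{n}$ with probability at least $1-2\exp(-c_1 n)$ whenever $n \ge C_0 d$, which gives a Lipschitz constant $L \le 2n^{1/p}$ on this event.

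Next, fix a $\delta$-net $\mathcal{N}\subset\bS^{d-1}$ with $\delta = \eps\beta_p/4$, so that any sphere point and its nearest net point produce $\ell_p$-norms of $G(\cdot)$ differing by at most $(\eps/2)\beta_p n^{1/p}$. A standard covering argument yields $|\mathcal{N}|\le(3/\delta)^d = \exp(O(d\log(1/\eps)))$, with the $\beta_p$ absorbed into the implicit constant. For each fixed $x\in\mathcal{N}$, $Gx\sim N(0,I_n)$, and since $\min\{1,2/p\}=1$ for $p<2$, the preceding lemma gives
\[
\Pr\bigl\{(1-\eps/2)\beta_p n^{1/p}\le\|Gx\|_p\le(1+\eps/2)\beta_p n^{1/p}\bigr\}\ge 1-\exp(-c_0\eps^2 n).
\]
A union bound over $\mathcal{N}$ fails with probability at most $\exp(O(d\log(1/\eps)) - c_0\eps^2 n)$, which for $n\ge Cd\log(1/\eps)/\eps^2$ and $C=C(p)$ sufficiently large is bounded by $\exp(-c\eps^2 n)$. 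Intersecting with the singular-value event, for any $x\in\bS^{d-1}$ we pick $x_0\in\mathcal{N}$ with $\|x-x_0\|_2\le\delta$ and get $\bigl|\|Gx\|_p-\|Gx_0\|_p\bigr|\le L\delta\le(\eps/2)\beta_p n^{1/p}$, yielding the desired $(1\pm\eps)\beta_p n^{1/p}$ bound simultaneously for all $x\in\bS^{d-1}$.

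The key balancing act is that the Lipschitz constant $L = O(n^{1/p})$ has the same order as the target $\beta_p n^{1/p}$, so the net spacing can be taken of order $\eps$ independent of $n$; this keeps the net size at $\exp(O(d\log(1/\eps)))$, which is exactly what the hypothesis $n\geq Cd\log(1/\eps)/\eps^2$ absorbs against the pointwise tail $\exp(-c_0\eps^2 n)$. No ingredients beyond the preceding lemma and a Gaussian analogue of Lemma~\ref{lem:singular_values} are needed; the main (mild) obstacle is simply verifying that the constants line up, in particular that the Lipschitz approximation error is comparable to, and controllable against, the multiplicative tolerance $\eps\beta_p n^{1/p}$.
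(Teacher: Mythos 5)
Your proof is correct and is exactly the argument the paper intends: the paper merely states that ``a typical $\eps$-net argument on $\bS^{d-1}$'' yields the lemma from the preceding pointwise Gaussian concentration bound, and your write-up fills in the standard details (Lipschitz control of $x\mapsto\|Gx\|_p$ via $\|v\|_p\le n^{1/p-1/2}\|v\|_2$ and the operator-norm bound $\|G\|_{2\to 2}=O(\sqrt{n})$, a net of mesh $\Theta_p(\eps)$, and a union bound absorbed by $n\geq Cd\log(1/\eps)/\eps^2$). No substantive difference from the paper's approach.
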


Now, consider two distributions on $n\times d$ matrices, where $n = \Theta(d\epsilon^{-2}\log(1/\eps))$. The first distribution $\cL_1$ is just the distribution of a Gaussian random matrix $G$ of \iid $N(0,1)$ entries, and the second distribution $\cL_2$ is the distribution of $G+\sigma uv^T$, where $G$ is the Gaussian random matrix of \iid $N(0,1)$ entries, $u\sim N(0,I_n)$ and $v\sim N(0,I_d)$ and $\sigma = \alpha\sqrt{\eps/d}$ for some constant $\alpha$ to be determined later, and $G$, $u$ and $v$ are independent. 

\begin{theorem}\label{thm:OSE_lb}
 Let $1\leq p < 2$ be a constant. Suppose that $S\in \R^{m\times n}$ is an oblivious sketch for $d$-dimensional subspaces in $\ell_p^n$, where $n=\Theta(d\eps^{-2}\log(1/\eps))$. It must hold that $m \geq c d/\eps^2$, where $c=c(p)>0$ is a constant depending only on $p$.
\end{theorem}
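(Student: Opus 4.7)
The plan is to reduce the oblivious sketching question to a binary hypothesis testing problem between the two distributions $\cL_1$ and $\cL_2$ defined above, and then to bound the chi-square divergence of the sketched observation under the two laws. A standard Fubini/Markov argument lets me fix a deterministic linear sketch $S \in \R^{m\times n}$ (and recovery algorithm $\cA$) that preserves norms for at least a $0.9$-fraction of $A$ drawn from each of $\cL_1$ and $\cL_2$.

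\textbf{Distinguishability from the sketch.} Under $\cL_1$, the Dvoretzky-type lemma stated above gives, with probability $1 - 2e^{-c\eps^2 n}$ over $A = G$, that $\|Ax\|_p \in (1\pm\eps)\beta_p n^{1/p}$ for every $x \in \bS^{d-1}$, so the sketch guarantee forces $\sup_{x\in\bS^{d-1}}\cA(SAx) \leq (1 + 3\eps)\beta_p n^{1/p}$. Under $\cL_2$ I take the spike direction $x = v/\|v\|_2$ and use the identity $(G + \sigma uv^T)v = Gv + \sigma\|v\|_2^2 u$ together with independence of $Gv \mid v$ and $u$ to get the distributional equality $Ax \stackrel{d}{=} \sqrt{1 + \sigma^2\|v\|_2^2}\,g$ with $g\sim N(0, I_n)$. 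Concentration of $\|v\|_2^2$ around $d$ and of $\|g\|_p$ around $\beta_p n^{1/p}$ then yields $\|Ax\|_p \geq (1 + \tfrac{1}{2}\alpha^2\eps - o(\eps))\beta_p n^{1/p}$; for $\alpha$ a sufficiently large absolute constant the two intervals for $\sup_x \cA(SAx)/\beta_p n^{1/p}$ become disjoint, so a simple threshold test succeeds with constant advantage.

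\textbf{Chi-square divergence.} Because $S$ is linear and the information content of $SA$ is invariant under invertible post-processing, I may assume $SS^T = I_m$. Then under $\cL_1$ the observation $SA = SG$ is iid $N(0,1)^{m\times d}$, while under $\cL_2$ it equals $SG + \sigma(Su) v^T$ with $Su \sim N(0, I_m)$ independent of $SG$ and $v \sim N(0, I_d)$, i.e., the rank-one spiked Gaussian matrix model with signal strength $\sigma$. A direct likelihood-ratio computation (expanding the density and integrating out $SG$) gives
\begin{equation*}
\E_{\cL_1}[L(SA)^2] = \E \exp\bigl(\sigma^2\langle u_1, u_2\rangle\langle v_1, v_2\rangle\bigr),
\end{equation*}
where $u_1, u_2 \sim N(0, I_m)$ and $v_1, v_2 \sim N(0, I_d)$ are independent; in the regime $\sigma^4 md = O(1)$ this evaluates to $1 + \Theta(\sigma^4 md)$. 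Substituting $\sigma^2 = \alpha^2\eps/d$ gives $\chi^2(S\cL_2, S\cL_1) = O(\alpha^4\eps^2 m/d)$, and Pinsker's inequality then forces $m = \Omega(d/\eps^2)$ in order for this divergence to accommodate the constant advantage established above.

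The main obstacle I anticipate is the second-moment computation: conditioning on $u_2, v_2$ gives the Gaussian-product MGF $\E\exp(t Z_1 Z_2) = (1 - t^2)^{-1/2}$, which is finite only for $|t| < 1$, so one needs to truncate on the high-probability event $\sigma^4\|u_2\|_2^2\|v_2\|_2^2 \leq 1/2$ and verify, via standard $\chi^2$-concentration for Gaussian norms, that the truncated expectation admits the desired expansion $1 + \Theta(\sigma^4 md)$. Everything else---the Dvoretzky lemma, the Fubini/Markov reduction to a deterministic $S$, and Pinsker's inequality---is routine.
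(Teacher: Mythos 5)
Your construction of the hard pair $\cL_1,\cL_2$, the choice $\sigma=\alpha\sqrt{\eps/d}$, and the Dvoretzky-based separation of $\sup_{x\in\bS^{d-1}}\mathcal{A}(SAx)$ under the two laws are exactly the paper's argument. Where you diverge is the final information-theoretic step: the paper simply cites a known lower bound for linear sketches distinguishing $G$ from $G+\sigma uv^T$ (giving $md\geq c/\sigma^4$, hence $m\geq c'd/\eps^2$), whereas you reprove that ingredient from scratch by a second-moment computation, after normalizing to $SS^T=I_m$ so that $SA$ is an iid Gaussian matrix under $\cL_1$ and a rank-one spiked Gaussian matrix with spike strength $\sigma$ under $\cL_2$. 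This is a legitimate and self-contained route: the averaging/Markov reduction to a fixed $S$ in the support is sound (the target dimension $m$ is the same for every $S$), the orthonormalization is an invertible post-processing and so loses nothing, and your likelihood-ratio identity $\E_{\cL_1}[L^2]=\E\exp\bigl(\sigma^2\langle u_1,u_2\rangle\langle v_1,v_2\rangle\bigr)$ with $u_i\sim N(0,I_m)$, $v_i\sim N(0,I_d)$ is the correct second moment for this model. What your approach buys is independence from the cited theorem (and it only needs the special structure $A\mapsto SA$, which is all that is required here); what it costs is that you must carry out the conditional second-moment argument yourself, which is the one genuinely delicate point.

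On that point, be precise: as stated, the unconditional $\chi^2$ is not ``$1+\Theta(\sigma^4md)$'' in the regime $\sigma^4md=O(1)$ --- it is infinite, because conditionally on $(u_2,v_2)$ the Gaussian-product MGF $(1-t^2)^{-1/2}$ diverges on the positive-probability event $\sigma^2\|u_2\|_2\|v_2\|_2\geq 1$. You anticipate this, and the standard repair works, but it should be phrased as replacing $\cL_2$ (not just the expectation) by its conditional version $\tilde\cL_2$ given $\{\|u\|_2^2\leq Km,\ \|v\|_2^2\leq Kd\}$: then $\mathrm{TV}(\cL_2,\tilde\cL_2)\leq e^{-\Omega(\min(m,d))}$, and the argument must be run as a contradiction --- assume $m\leq c\,d/\eps^2$ with $c=c(\alpha)$ small enough that $\sigma^2\sqrt{K^2md}\leq 1/2$ on the truncation event, so that $(1-t^2)^{-1/2}\leq 1+t^2$ applies and the truncated second moment is $1+O(\sigma^4md)=1+O(\alpha^4\eps^2 m/d)$, hence the total variation between the sketched laws is too small to support the constant-advantage threshold test. (Also, the step from $\chi^2$ to total variation is $\mathrm{TV}\leq\frac12\sqrt{\chi^2}$, not literally Pinsker, though Pinsker plus $\mathrm{KL}\leq\log(1+\chi^2)$ gives the same conclusion.) With these points made explicit, your proof is correct and matches the paper's bound $m\geq c\,d/\eps^2$.
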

\begin{proof}
It follows from the preceding lemma that, if $A\sim \cL_1$, we have $\sup_{x\in \bS^{d-1}} \|Ax\|_p \leq (1+\eps)\beta_p n^{1/p}$ with probability at least $0.999$ with an appropriate choice of constant in the $\Theta$-notation of $n$. Next we consider the supremum of $\|Ax\|_p$ when $A\sim \cL_2$. Observe that
\[
\sup_{x\in \bS^{d-1}} \left\|(G+\sigma uv^T)x\right\|_p\geq \left\|(G+\sigma uv^T)\frac{v}{\|v\|_2}\right\|_p = \left\|G\frac{v}{\|v\|_2} + \sigma u \|v\|_2 \right\|_p.
\]
Since $v\sim N(0,I_d)$, the direction $v/\|v\|_2\sim \Unif(\bS^{d-1})$ and the magnitude $\|v\|_2$ are independent, and by rotational invariance of the Gaussian distribution, $Gx\sim N(0,I_d)$ for any $x\in \bS^{d-1}$. Hence
\[
\left\|G\frac{v}{\|v\|_2} + \sigma u \|v\|_2 \right\|_p \eqdist \|u_1 + \sigma t u_2\|_p \eqdist \sqrt{1+\sigma^2 t^2}\|u\|_p,
\]
where $t$ follows the distribution of $\|v\|_2$ and $u_1,u_2$ are independent $N(0,I_n)$ vectors. Applying the preceding two lemmata, we see that with probability at least $0.998$, it holds that $t\geq 0.99\sqrt{d}$ and $\|u\|_p\geq (1-\eps)\beta_p n^{1/p}$. Therefore, when $A\sim \cL_2$, with probability at least $0.998$, we have $\sup_{x\in \bS^{d-1}}\|Ax\|_p\geq \sqrt{1+0.99^2\alpha^2\eps}(1-\eps)\beta_p n^{1/p}\geq (1+4\eps)\beta_p n^{1/p}$, for an appropriate choice of $\alpha$.

Therefore with the corresponding recovery algorithm $\mathcal{A}$,
\begin{gather*}
\Pr_{A\sim \cL_1,S}\left\{\sup_{x\in \bS^{n-1}} \mathcal{A}(SAx) \leq (1+\eps)^2\beta_p n^{1/p}\right\}\geq 0.9,\\
\Pr_{A\sim \cL_2,S}\left\{\sup_{x\in \bS^{n-1}} \mathcal{A}(SAx) \geq (1+4\eps)(1-\eps)\beta_p n^{1/p}\right\}\geq 0.9,
\end{gather*}
which implies that the linear sketch $S$ can be used to distinguish $\cL_1$ from $\cL_2$ by evaluating $\sup_{x\in \bS^{d-1}} \mathcal{A}(SAx)$. It then follows from~\cite[Theorem 4]{LW16:random} that the size of the sketch $md\geq c/\sigma^4 = c'd^2/\eps^2$ for some absolute constants $c,c'>0$, and thus $m\geq c'd/\eps^2$.
\end{proof}
%

\section{Lower Bounds for $M$-estimators}\label{sec:m-est}
The main theorem of this section is the following.

\begin{theorem}\label{thm:power_function}
Suppose there exist $\alpha,\lambda>0$ and $p\in (0,\infty)\setminus 2\Z$ such that $\phi(t/\lambda)\sim \alpha |t|^p$ as $t\to \infty$ or $t\to 0$. Then the subspace sketch problem for $\Phi(x) = \sum_{i = 1}^n \phi(x_i)$ requires $\Omega(d/(\eps^2\polylog(1/\eps)))$ bits when $d\geq C_1\log(1/\eps)$ and $n\geq C_2 d/(\eps^2\polylog(1/\eps))$ for some absolute constants $C_1,C_2 > 0$.
\end{theorem}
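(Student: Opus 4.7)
The plan is to reduce from the $\ell_p$ subspace sketch lower bound in Corollary~\ref{cor:comm_lb_eps}: given a data structure $Q_\Phi$ for the $\Phi$-subspace sketch, I will show how to use it to approximate $\|A_0 x\|_p^p$ on the specific hard instance $A_0$ from Section~\ref{sec:1/eps^2 lb} for all Boolean queries $x\in \{-1,1\}^s$, after which the claimed $\widetilde\Omega(d/\eps^2)$ lower bound carries over immediately. The core idea is that, thanks to the asymptotic hypothesis, an appropriate rescaling of $A_0$ makes each value $\phi((MA_0 x)_j)$ a good multiplicative approximation to $\alpha\lambda^p|(A_0 x)_j|^p$ on the nonzero entries of $A_0 x$, while the zero entries contribute a query-independent additive offset that can be subtracted off.

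Consider first $\phi(t/\lambda)\sim \alpha|t|^p$ as $t\to\infty$, equivalently $\phi(s)/|s|^p\to\alpha\lambda^p$ as $|s|\to\infty$; fix a threshold $S_\eps$ for which $|\phi(s)-\alpha\lambda^p|s|^p|\le \eps\alpha\lambda^p|s|^p$ whenever $|s|\ge S_\eps$. Recall that the rows of $A_0$ have the form $\tilde y_j\cdot j$ with $j\in \{-1,1\}^s$, $\tilde y_j\in [(2\sqrt s)^{1/p},(8\sqrt s)^{1/p}]$, and $s=\Theta(\log(1/\eps))$. For any query $x\in\{-1,1\}^s$ every nonzero entry of $A_0 x$ has absolute value at least $2(2\sqrt s)^{1/p}$, so rescaling to $A:=MA_0$ with $M=S_\eps/(2(2\sqrt s)^{1/p})$ forces $|(Ax)_j|\ge S_\eps$ on those entries. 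Crucially, the number of $j$ with $\langle j,x\rangle=0$ equals $\binom{s}{s/2}$ for every $x\in\{-1,1\}^s$, so the zero entries of $A_0 x$ contribute a constant $c_0=\phi(0)\binom{s}{s/2}$ that does not depend on $x$. Combining these two facts,
\[
\Phi(Ax) - c_0 \;=\; (1\pm\eps)\,\alpha\lambda^p M^p\, \|A_0 x\|_p^p,
\]
so any $(1\pm\eps)$-approximation to $\Phi(Ax)$ yields a $(1\pm O(\eps))$-approximation to $\|A_0 x\|_p^p$, and Corollary~\ref{cor:comm_lb_eps} then forces $Q_\Phi$ to use $\widetilde\Omega(d/\eps^2)$ bits; the dependence on $d$ is obtained by the same block-diagonal padding used in Corollary~\ref{cor:comm_lb_eps}. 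The symmetric case $\phi(t/\lambda)\sim\alpha|t|^p$ as $t\to 0$ is handled by the opposite choice: a small $M$ so that $|(Ax)_j|$ lies below the near-zero threshold for \emph{every} $j$, where now $\phi(0)=0$ (forced by the asymptotic) and the zero-entry contribution vanishes automatically.

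The main obstacle is the $O(\log(nd))$-bit entry requirement of Definition~\ref{def:SS}. The scaling $M$ is controlled by $S_\eps$, which depends on the convergence rate of $\phi(s)/|s|^p\to\alpha\lambda^p$; for all the $M$-estimators appearing in Theorem~\ref{thm:$M$-estimator_intro} this convergence is polynomial, so $S_\eps=\poly(1/\eps)$ and $\log S_\eps=O(\log(1/\eps))=O(\log(nd))$ in the regime $n=\widetilde\Omega(\eps^{-2}d)$. Rounding each entry of $MA_0$ to the nearest integer multiple of $2^{-\Theta(\log(nd/\eps))}$ therefore produces a valid hard instance, and the rounding error perturbs each coordinate of $Ax$ by at most $s\cdot 2^{-\Theta(\log(nd/\eps))}$, negligible compared to the $(1\pm\eps)$ slack already present. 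Verifying this quantitative bookkeeping carefully, and handling the mild asymmetry between the $t\to\infty$ and $t\to 0$ regimes, is where the technical work lies.
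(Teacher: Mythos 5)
Your proposal is correct and follows essentially the same route as the paper's proof: take the hard instance $A_0$ from Theorem~\ref{thm:comm_lb_eps}, use the fact that every nonzero entry of $A_0x$ for Boolean $x$ has magnitude at least $\Delta=\Omega(\log^{1/(2p)}(1/\eps))$, rescale so these entries land in the asymptotic regime where $\phi$ is a $(1\pm\eps)$-multiple of $\alpha\lambda^p|t|^p$, and then invoke Corollary~\ref{cor:comm_lb_eps} with its block-diagonal padding. Your two refinements — subtracting the query-independent offset $c_0=\phi(0)\binom{s}{s/2}$ (the paper implicitly takes $\phi(0)=0$) and the explicit bit-precision/rounding bookkeeping for the $O(\log(nd))$-bit entry constraint — are correct additions but do not change the underlying argument.
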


\begin{proof}
We reduce the problem to the $\ell_p$ subspace sketch problem. We prove the statement in the case of $t\to\infty$ below. The proof for the case of $t\to 0$ is similar.

For a given $\eps > 0$, there exists $M$ such that $(1-\eps)\alpha |t|^p \leq \phi(t/\lambda) \leq (1+\eps)\alpha|t|^p$ for all $|t|\geq M$. Let $A$ be our hard instance for the $\ell_p$ subspace sketch problem in Theorem~\ref{thm:comm_lb_eps}. Then each row of $A$ is a $\{-1,1\}$-vector scaled by a factor of $\tilde y_i\geq \Delta$ for some $\Delta =\Omega\big( \log^{1/(2p)}(1/\eps)\big)$. One can recover a random sign used in the construction of $A$ by querying $Ax$ for a $\{-1,1\}$-vector $x$. Therefore, if $(Ax)_i\neq 0$, it must hold that $|(Ax)_i|\geq \Delta$. This implies that there exists a scaling factor $\beta = M/\Delta$ such that $(1-\eps)\alpha\|\beta Ax\|_p^p \leq \Phi(\lambda^{-1}\beta Ax) \leq (1+\eps)\alpha\|\beta Ax\|_p^p$, that is, $\alpha^{-1}\beta^{-p}\Phi(\lambda^{-1}\beta Ax)$ is a $(1\pm \eps)$-approximation to $\|Ax\|_p^p$ for $\{-1,1\}$-vectors $x$. The conclusion follows from Theorem~\ref{cor:comm_lb_eps} (which plants independent copies of hard instance $A$ in diagonal blocks) and a rescaling of $\eps$.
\end{proof}

We have the following immediate corollary.
\begin{corollary}\label{cor:m_estimator1} The subspace sketch problem for $\Phi$ requires $\Omega(d/(\eps^2\polylog(1/\eps)))$ bits when $d\geq C_1\log(1/\eps)$ and $n\geq C_2 d/(\eps^2\polylog(1/\eps))$  for some absolute constants $C_1, C_2 > 0$ for the following functions $\phi$:
\begin{itemize}[topsep=0.5ex,itemsep=-0.5ex,partopsep=1ex,parsep=1ex]
	\item ($L_1$-$L_2$ estimator) $\phi(t) = 2(\sqrt{1+t^2/2}-1)$;
	\item (Huber estimator) $\phi(t) = t^2/(2\tau)\cdot \mathbf{1}_{\{|t|\leq \tau\}} + (|t|-\tau/2)\cdot\mathbf{1}_{\{|t| > \tau\}}$;
	\item (Fair estimator) $\phi(t) = \tau^2(|x|/\tau - \ln(1+|t|/\tau))$;
	\item (Tukey loss $p$-norm) $\phi(t) = |t|^p\cdot \mathbf{1}_{\{|t|\leq \tau\}} + \tau^p\cdot\mathbf{1}_{\{|t| > \tau\}}$.

\end{itemize}
\end{corollary}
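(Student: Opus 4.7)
The plan is to simply verify the hypothesis of Theorem~\ref{thm:power_function} for each of the four listed estimators, that is, to exhibit constants $\alpha,\lambda>0$ and an exponent $p\in(0,\infty)\setminus 2\Z$ such that $\phi(t/\lambda)\sim \alpha|t|^p$ either as $t\to\infty$ or as $t\to 0$. Once this is in place, the lower bound is immediate from Theorem~\ref{thm:power_function}, which already encapsulates the reduction from the $\ell_p$ subspace sketch problem by rescaling the hard instance so that every nonzero entry of $Ax$ (for a $\{-1,1\}$-query $x$) lies in the regime where $\phi$ behaves like $\alpha|t|^p$ up to a multiplicative $(1\pm\eps)$ factor.

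For the first three estimators I take $\lambda=1$ and the $t\to\infty$ branch, obtaining $p=1\notin 2\Z^+$. For the $L_1$-$L_2$ estimator, expanding $\sqrt{1+t^2/2}=|t|/\sqrt 2+O(1/|t|)$ gives $\phi(t)=\sqrt{2}|t|-2+O(1/|t|)$, so $\alpha=\sqrt 2$. For the Huber estimator, $\phi(t)=|t|-\tau/2$ exactly for $|t|>\tau$, so $\alpha=1$. For the Fair estimator, $\phi(t)=\tau|t|-\tau^2\ln(1+|t|/\tau)$ with the logarithmic term being $o(|t|)$, so $\alpha=\tau$.

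For the Tukey loss $p$-norm with $p\notin 2\Z$, I instead use the $t\to 0$ branch, which is actually \emph{exact}: $\phi(t)=|t|^p$ identically on $[-\tau,\tau]$, so $\alpha=\lambda=1$ works trivially. Here the reduction in the proof of Theorem~\ref{thm:power_function} must be adapted to scale the hard instance \emph{down} rather than up, by a factor $\beta=\tau/M_{\max}$ where $M_{\max}$ is an upper bound on $|(Ax)_i|$ for $\{-1,1\}$-vectors $x$ (which is $O(d^{1+1/p})$ for the hard instance matrix of Theorem~\ref{thm:comm_lb_eps}); after this rescaling every entry of $\beta Ax$ lies in $[-\tau,\tau]$ and $\beta^{-p}\Phi(\beta Ax)=\|Ax\|_p^p$ without any approximation loss at all.

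I do not anticipate any serious obstacle beyond checking the asymptotics; the only mild subtlety is that the proof of Theorem~\ref{thm:power_function} as written spells out only the $t\to\infty$ branch and merely asserts the $t\to 0$ branch is analogous, so for the Tukey case I would briefly note the direction-reversed rescaling described above to make the argument complete.
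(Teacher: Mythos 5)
Your proposal is correct and follows exactly the route the paper intends: the corollary is obtained by verifying the hypothesis of Theorem~\ref{thm:power_function} for each estimator ($p=1$ with the $t\to\infty$ branch for the $L_1$-$L_2$, Huber, and Fair estimators, and the exact $t\to 0$ branch for the Tukey loss $p$-norm with $p\notin 2\Z$), and your asymptotic computations are all right. Your added remark on how the $t\to 0$ case of Theorem~\ref{thm:power_function} rescales the hard instance downward (with the scale factor only a $\poly(d)=\polylog(1/\eps)$ quantity, so nothing is lost) is precisely the "similar" argument the paper leaves implicit, so there is no gap.
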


Now we prove the $\Omega(d/(\eps^2\polylog(1/\eps)))$ lower bound for the subspace sketch problem for the Cauchy estimator $\phi(t) = (\tau^2/2)\ln(1 + (t/\tau)^2)$. First consider an auxiliary function $\phi_{\text{aux}}(t) = \ln |x|\cdot \mathbf{1}_{\{|x|\geq 1\}}$, for which we shall have also an $\Omega(d/(\eps^2\polylog(1/\eps))$ lower bound by following the approach in Section~\ref{sec:1/eps^2 lb} with some changes we highlight below. Instead of $M_{i,j}^{(d,p)} = |\langle i,j\rangle|^p$, we shall define $M_{i,j}^{(d,p)} = \phi_{\text{aux}}(\langle i,j\rangle)$, and we proceed to define $N^{(d)}$ and $\Lambda_0^{(d,p)}$ in the same manner. The following lemma is similar to Corollary~\ref{cor:singular_value_lb}, showing that this new matrix $M^{(d,p)}$ also has large singular values. The proof is postponed to Section~\ref{sec:proof_log_matrix_singular_value}.

\begin{lemma}\label{lem:log_matrix_singular_value} Suppose that $d\in 8\Z$. Then $\Lambda_0^{(d,p)} \geq c2^{d/2}/\sqrt{d}$ for some absolute constant $c>0$.
\end{lemma}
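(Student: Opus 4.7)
My plan is to follow the same route as Corollary~\ref{cor:singular_value_lb}: use Lemma~\ref{lem:fourier_coeff} to reduce the bound to an explicit alternating sum, and then evaluate that sum via an integral identity. With $d=4n$ (so $n$ is a positive even integer since $d\in 8\Z$), the substitution $k=n-i/2$ makes $|d-2i|=4|k|$ and turns the Fourier coefficient formula of Lemma~\ref{lem:fourier_coeff} into
\[
  \Lambda_0^{(d,p)} = 2\,\bigl|\widetilde F\bigr|, \qquad \widetilde F := \sum_{k=1}^{n}(-1)^{k+1}\binom{2n}{n+k}\ln(4k),
\]
where the $k=0$ term is killed by the indicator in $\phi_{\text{aux}}$ and for every other $k$ one has $|d-2i|=4|k|\geq 4$. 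Writing $\ln(4k)=\ln 4+\ln k$ and using $\sum_{k=1}^{n}(-1)^{k+1}\binom{2n}{n+k}=\tfrac12\binom{2n}{n}$ (standard, from $(1-1)^{2n}=0$ and symmetry) yields $\widetilde F=(\ln 4)F(0)+F'(0)$, where $F(p) := \sum_{k=1}^n(-1)^{k+1}\binom{2n}{n+k}k^p$ is the sum of Lemma~\ref{lem:critical_identity}, so $F(0)=\tfrac12\binom{2n}{n}=\Theta(2^{d/2}/\sqrt d)$. It therefore suffices to show $|\widetilde F|\geq c\,F(0)$ for some absolute $c>0$.

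I plan to convert $\widetilde F$ into an integral by applying Frullani's identity $\ln(4k)=\int_0^\infty(e^{-u}-e^{-4ku})/u\,du$, swapping sum and integral, and substituting $v=4u$. After dividing by $F(0)$,
\[
  \frac{\widetilde F}{F(0)} = \int_0^{\infty}\frac{e^{-v/4}-\widetilde E_n(v)}{v}\,dv,\qquad \widetilde E_n(v) := \frac{2}{\binom{2n}{n}}\sum_{k=1}^{n}(-1)^{k+1}\binom{2n}{n+k}\,e^{-kv}.
\]
The integrand is finite at $v=0$ since $\widetilde E_n(0)=1=e^{-v/4}|_{v=0}$ and Taylor expansion gives the limit $2F(1)/F(0)-1/4 = 3/4 + 1/(4(2n-1))\in[3/4,7/4]$; it decays exponentially at infinity since $|\widetilde E_n(v)|\leq 2e^{-v}$ by the alternating-series bound (using $\binom{2n}{n+k}/\binom{2n}{n}\leq 1$). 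For each fixed $k$, $\binom{2n}{n+k}/\binom{2n}{n}=\prod_{i=1}^k (n-i+1)/(n+i)\to 1$ as $n\to\infty$, so $\widetilde E_n(v)\to 2\sum_{k\geq 1}(-1)^{k+1}e^{-kv}=2/(e^v+1)$ pointwise, and dominated convergence yields $\lim_{n\to\infty}\widetilde F/F(0) = \int_0^\infty[e^{-v/4}-2/(e^v+1)]\,v^{-1}\,dv$.

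It remains to evaluate this limit integral, which I plan to do via a Mellin transform. For $\Re s>0$ one has $\int_0^\infty v^{s-1}e^{-v/4}\,dv=4^s\Gamma(s)$ and $\int_0^\infty v^{s-1}/(e^v+1)\,dv=(1-2^{1-s})\Gamma(s)\zeta(s)$ (the Dirichlet $\eta$-function identity). Thus $J(s):=\int_0^\infty[e^{-v/4}-2/(e^v+1)]v^{s-1}\,dv = \Gamma(s)\bigl[4^s-2(1-2^{1-s})\zeta(s)\bigr]$; Taylor-expanding at $s=0$ using $\zeta(0)=-\tfrac12$ and $\zeta'(0)=-\tfrac12\ln(2\pi)$ makes the bracket equal to $s\ln(8/\pi)+O(s^2)$, and multiplying by $\Gamma(s)=1/s-\gamma+O(s)$ gives $J(0)=\ln(8/\pi)>0$. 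Consequently $\widetilde F/F(0)\to\ln(8/\pi)$, so for all $n$ larger than some absolute $n_0$ we have $|\widetilde F|\geq\tfrac12\ln(8/\pi)\,F(0)$, which gives $\Lambda_0^{(d,p)}\geq c\cdot 2^{d/2}/\sqrt d$ for an absolute $c>0$; the finitely many remaining small-$d$ cases (e.g.\ $d=8$ gives $\widetilde F=5\ln 2>0$) are handled by direct computation. The main technical obstacle I anticipate is making the dominated-convergence step uniform in $n$ near $v=0$, which requires a Taylor expansion of $\widetilde E_n$ at $v=0$ with error bounded independently of $n$; here the identity $F(2)=F(4)=\cdots=F(2n-2)=0$ (from Lemma~\ref{lem:critical_identity}, since $\sin(\pi p/2)=0$ at even integer $p$) kills all even-order terms below degree $2n$ and substantially simplifies this bound.
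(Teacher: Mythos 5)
Your route is genuinely different from the paper's and, as a plan, it is sound. The paper never forms a Frullani/Mellin representation: it differentiates the identity of Lemma~\ref{lem:critical_identity} (in the form \eqref{eqn:aux2}) with respect to $p$, lets $p\to 0^+$ using the reflection identity and $\Gamma'(-p)/\Gamma(-p)=1/p-\gamma+o(1)$, and thereby expresses the alternating log-sum as a limit of integrals of $\sin^{2n}t\,\ln t/t^{p+1}$ with digamma/zeta correction terms; it then lower-bounds that limit by $c/\sqrt n$ through direct estimates (splitting at $\pi$, Laplace-type bounds $\int_0^\pi\sin^{2n}t\,dt\sim\sqrt{\pi/n}$, and explicit numerical constants such as $2/\sqrt\pi<1.28$). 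Your approach instead normalizes by $F(0)=\tfrac12\binom{2n}{n}$, identifies the limit of $\widetilde F/F(0)$ as an explicit integral, and evaluates it in closed form as $\ln(8/\pi)>0$; a pleasant feature is that you keep the $\ln 4$ offset explicit, so the genuine cancellation between $(\ln 4)F(0)$ and $F'(0)\approx-\ln(\pi/2)F(0)$ is visible and resolves favorably, whereas the paper folds a $\ln 2$ into the digamma term and works with $\sum_k(-1)^k\binom{2n}{n+k}\ln k$. Your reduction $\Lambda_0^{(d,p)}=2|\widetilde F|$, the Frullani step (the sum is finite, so the interchange is immediate), the bound $|\widetilde E_n(v)|\le 2e^{-v}$, the pointwise limit $2/(e^v+1)$, and the Mellin evaluation via $\zeta(0)=-\tfrac12$, $\zeta'(0)=-\tfrac12\ln(2\pi)$ are all correct.

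Two caveats on the part you flag as the remaining work. First, the vanishing of $F(2),\dots,F(2n-2)$ does not by itself give the uniform-in-$n$ domination near $v=0$: the odd-order coefficients are the ones you must control. They can be bounded from the integral representation of Lemma~\ref{lem:critical_identity}, e.g.\ $|F(m)|\le C\,2^{2n-m}m!\,\big(n^{-1/2}+\text{small}\big)$ for odd $m\le n$ and $|F(m)|\le C\,2^{2n-m}m!$ for $n<m<2n$, which makes the odd tail $O(v^3)$ uniformly in $n$ on, say, $v\in(0,\tfrac12]$; the $m\ge 2n$ tail needs the crude bound $|F(m)|\le \tfrac12 4^n n^m$ and works only for $v$ bounded below roughly $1/2$, so split the integral there and use the exponential bounds beyond. (Alternatively, since you only need a lower bound, pointwise nonnegativity of $e^{-v/4}-\widetilde E_n(v)$ plus Fatou would bypass domination.) Second, "finitely many small-$d$ cases by direct computation" presupposes an explicit threshold $n_0$, so the limit step must be made quantitative; the paper's proof is similarly asymptotic, so this is not a disadvantage relative to it, but it is not free. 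Finally, a harmless slip: the limit of the integrand at $v=0$ is $F(1)/F(0)-\tfrac14=\tfrac14+\tfrac1{2(2n-1)}$ (using $F(1)/F(0)=n/(2n-1)$ and $\binom{2n}{n}=2F(0)$), not $2F(1)/F(0)-\tfrac14$; this does not affect the argument, which only needs boundedness there.
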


Therefore, the entire lower bound argument in Corollary~\ref{cor:comm_lb_eps_p=0} goes through. 
We can then conclude that the subspace sketch problem for $\Phi_{\text{aux}}(x) = \sum_{i = 1}^n \phi_{\text{aux}}(x_i)$ requires $\Omega(d/(\eps^2\polylog(1/\eps))$ bits. Now, for the Cauchy estimator $\phi(t) = (\tau^2/2)\ln(1 + (t/\tau)^2)$, note that $(1-\eps) \tau^2 \phi_{\text{aux}}(t) \leq \phi(\tau \cdot t)  \leq (1+\eps) \tau^2 \phi_{\text{aux}}(t)$ for all sufficiently large $t$. It follows from a similar argument to the proof of Theorem~\ref{thm:power_function} that the same lower bound continues to hold for the subspace sketch problem for the Cauchy estimator. 
\begin{corollary}\label{cor:m_estimator2}
The subspace sketch problem for $\Phi$ requires $\Omega(d/(\eps^2\polylog(1/\eps)))$ bits for the Cauchy estimator $\phi(t) = (\tau^2/2)\ln(1+(t/\tau)^2)$, 
when $d\geq C_1\log(1/\eps)$ and $n\geq C_2 d/(\eps^2\polylog(1/\eps))$ for some absolute constants $C_1,C_2 > 0$.
\end{corollary}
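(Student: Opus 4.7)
The plan is to reduce the subspace sketch problem for the Cauchy estimator $\phi(t)=(\tau^2/2)\ln(1+(t/\tau)^2)$ to the one for the auxiliary estimator $\phi_{\mathrm{aux}}(t)=\ln|t|\cdot\mathbf{1}_{\{|t|\ge 1\}}$. The driving asymptotic is
\[
\phi(\tau t)=\tfrac{\tau^2}{2}\ln(1+t^2)=\tau^2\ln|t|+O(\tau^2/t^2),\qquad |t|\to\infty,
\]
so for any target accuracy $\eps>0$ there is a threshold $M=M(\eps)$ with $(1-\eps)\tau^2\phi_{\mathrm{aux}}(t)\le \phi(\tau t)\le (1+\eps)\tau^2\phi_{\mathrm{aux}}(t)$ whenever $|t|\ge M$. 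This places Corollary~\ref{cor:m_estimator2} in exactly the template of Theorem~\ref{thm:power_function}: starting from a hard instance $A$ for the $\phi_{\mathrm{aux}}$-subspace sketch problem, every nonzero coordinate of $Ax$ (for $x\in\{-1,1\}^d$) has magnitude at least $\Delta=\Omega(\log^{1/(2p)}(1/\eps))$, so rescaling by $\beta=M(\eps)/\Delta$ forces $|\beta(Ax)_i|\ge M(\eps)$ whenever nonzero, and a $(1\pm\eps)$-approximation to $\sum_i\phi(\tau\beta(Ax)_i)$ yields a $(1\pm O(\eps))\tau^2$-approximation to $\sum_i\phi_{\mathrm{aux}}(\beta(Ax)_i)$. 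Rescaling $\eps$ by a constant then finishes the reduction.

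To establish the $\widetilde{\Omega}(d/\eps^2)$ lower bound for $\phi_{\mathrm{aux}}$, I would rerun Section~\ref{sec:1/eps^2 lb} with the matrix $M^{(d,p)}_{i,j}=\phi_{\mathrm{aux}}(\langle i,j\rangle)$ in place of $|\langle i,j\rangle|^p$. Since the entries depend only on the Hamming weight of $Ti+Tj$, Lemma~\ref{lem:decomp} still yields the spectral decomposition $M^{(d,p)}=H^{(d)}\Lambda^{(d,p)}(H^{(d)})^T$, and Lemma~\ref{lem:fourier_coeff} shows that the eigenvalues indexed by weight-$d/2$ vectors in $\F_2^d$ share a common magnitude
\[
\Lambda_0^{(d,p)}=\Bigl|\sum_{\substack{0\le i\le d\\ i\text{ even}}}(-1)^{i/2}\binom{d/2}{i/2}\phi_{\mathrm{aux}}(d-2i)\Bigr|.
\]
Given Lemma~\ref{lem:log_matrix_singular_value}, i.e., $\Lambda_0^{(d,p)}\ge c\cdot 2^{d/2}/\sqrt{d}$, the orthogonalization procedure of Lemma~\ref{lem:gram_schmidt} and the sign-recovery argument of Lemma~\ref{lem:small_error} apply verbatim; the additive-error condition~\eqref{equ:calc_d} is met once $2^{d/2}=\polylog(1/\eps)/\eps$, and block-diagonal padding as in Corollary~\ref{cor:comm_lb_eps} lifts the bound to $\Omega(d/(\eps^2\polylog(1/\eps)))$ bits.

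The main obstacle is therefore Lemma~\ref{lem:log_matrix_singular_value}, namely showing
\[
\Bigl|\sum_{\substack{0\le i\le d\\ i\text{ even}}}(-1)^{i/2}\binom{d/2}{i/2}\ln|d-2i|\cdot\mathbf{1}_{\{|d-2i|\ge 1\}}\Bigr|=\Omega\bigl(2^{d/2}/\sqrt{d}\bigr).
\]
Unlike the power-function case, where Lemma~\ref{lem:critical_identity} delivers a clean integral representation, the logarithm demands a different route. My suggested approach is to differentiate identity~\eqref{eqn:critical_identity} at $p=0^+$: since $\partial_p|k|^p|_{p=0}=\ln|k|$, the $p$-derivative of the LHS at $0$ equals, up to the omitted $i=d/2$ term (a harmless $O(2^{d/2}/\sqrt{d})$ correction), the target sum. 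Differentiating the RHS is delicate because $\sin(p\pi/2)$ has a simple zero at $p=0$ while $\int_0^\infty\sin^{2n}t/t^{p+1}\,dt$ has a simple pole with residue $a_n=\binom{2n}{n}/2^{2n}=\Theta(1/\sqrt{n})$; these two singularities cancel and produce a finite derivative of order $2^{2n}/\sqrt{n}=2^{d/2}/\sqrt{d}$. The remaining work is to justify differentiation under the integral sign, to verify that the cancellation does not accidentally produce zero (the analogue of the non-vanishing of $|\sin(p\pi/2)|$ away from even integers in Corollary~\ref{cor:singular_value_lb}), and to absorb the $i=d/2$ truncation into a strictly lower-order term. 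Once Lemma~\ref{lem:log_matrix_singular_value} is established, all remaining steps are bookkeeping within templates already set out in the paper.
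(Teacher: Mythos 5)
Your overall route coincides with the paper's: reduce the Cauchy estimator to the auxiliary estimator $\phi_{\mathrm{aux}}(t)=\ln|t|\cdot\mathbf{1}_{\{|t|\geq 1\}}$ via the asymptotic equivalence $\phi(\tau t)=(1\pm\eps)\tau^2\phi_{\mathrm{aux}}(t)$ for large $|t|$ together with the rescaling device of Theorem~\ref{thm:power_function}, rerun Section~\ref{sec:1/eps^2 lb} with $M^{(d,p)}_{i,j}=\phi_{\mathrm{aux}}(\langle i,j\rangle)$ (Lemmas~\ref{lem:decomp}, \ref{lem:fourier_coeff}, \ref{lem:gram_schmidt}, \ref{lem:small_error} and the padding of Corollary~\ref{cor:comm_lb_eps} carry over), and reduce everything to the singular-value bound of Lemma~\ref{lem:log_matrix_singular_value}, which you propose to prove by differentiating the identity of Lemma~\ref{lem:critical_identity} in $p$ and letting $p\to 0^+$, exploiting the cancellation between the zero of $\sin(\pi p/2)$ and the pole of $\int_0^\infty \sin^{2n}t\,/\,t^{p+1}\,dt$. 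This is exactly the paper's method (it differentiates \eqref{eqn:aux2} and then sends $p\to 0^+$).

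The gap is precisely at the step you defer as ``remaining work.'' Verifying that the cancellation does not produce zero is not bookkeeping; it is the entire content of Section~\ref{sec:proof_log_matrix_singular_value}. After the zero/pole cancellation one is left with a difference of competing terms, each of order $2^{2n}/\sqrt{n}$, and the paper must track the constants ($\gamma$, $\ln 2$, $\ln\pi$, coming from $\Gamma'(-p)/\Gamma(-p)=-1/p-\gamma+o(1)$ and from $(2k)^p$), split the integral at $\pi$, compare with the Riemann zeta function, and finally prove the quantitative estimate $\int_0^\pi \sin^{2n}t\,/\,t\,dt\leq(2/\sqrt{\pi}+o(1))/\sqrt{n}$ against the explicit threshold $\tfrac{2}{\pi}\bigl(\tfrac{\ln\pi-\gamma}{\sqrt{\pi}}+\tfrac{4}{3}(\gamma+\ln 2)\bigr)=1.282\cdots$ (versus $2/\sqrt{\pi}=1.128\cdots$) to conclude non-vanishing at scale $2^{d/2}/\sqrt{d}$; none of this appears in your sketch. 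Relatedly, your dismissal of the mismatch between $\partial_p$ of the left-hand side of \eqref{eqn:critical_identity} and the target Fourier coefficient as a ``harmless $O(2^{d/2}/\sqrt{d})$ correction'' is not sound: the $i=d/2$ term is in fact zero (since $\phi_{\mathrm{aux}}(0)=0$), and the real discrepancy is the scaling $\ln|d-2i|=\ln(4k)=\ln k+\ln 4$, whose $\ln 4$ part contributes $\ln 4\cdot\tfrac12\binom{2n}{n}=\Theta(2^{d/2}/\sqrt{d})$ --- exactly the order of the bound you are trying to prove, so it cannot be discarded a priori and must be carried through the non-cancellation analysis, as the paper does with its explicit constant tracking.
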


\subsection{Proof of Lemma~\ref{lem:log_matrix_singular_value}}\label{sec:proof_log_matrix_singular_value}
Differentiate both sides of \eqref{eqn:aux2} w.r.t\@ $p$,
\begin{align*}
-\int_0^\infty \frac{(2\sin t)^{2n}\ln t}{t^{p+1}}dt &= 2\sum_{k=1}^n (-1)^k \binom{2n}{n+k} (2k)^p\ln(2k) \cos\left(\frac{\pi p}{2}\right)\Gamma(-p) \\
&\qquad - 2\sum_{k=1}^n (-1)^k \binom{2n}{n+k} (2k)^p \frac{\pi}{2}\sin\left(\frac{\pi p}{2}\right)\Gamma(-p) \\
&\qquad - 2\sum_{k=1}^n (-1)^k \binom{2n}{n+k} (2k)^p \cos\left(\frac{\pi p}{2}\right) \Gamma'(-p).
\end{align*}
Thus using the reflection identity~\eqref{eqn:reflection_identity}, 
\begin{align*}
\frac{\Gamma(p+1)}{\pi}\sin\left(\frac{\pi p}{2}\right)\int_0^\infty \frac{(2\sin t)^{2n}\ln t}{t^{p+1}}dt &= \sum_{k=1}^n (-1)^k \binom{2n}{n+k} (2k)^p\ln(2k)  \\
&\qquad - 2\sum_{k=1}^n (-1)^k \binom{2n}{n+k} (2k)^p \frac{\pi}{2}\frac{\sin^2\left(\frac{\pi p}{2}\right)}{\sin(p\pi)} \\
&\qquad - \sum_{k=1}^n (-1)^k \binom{2n}{n+k} (2k)^p \frac{\Gamma'(-p)}{\Gamma(-p)}.
\end{align*}
Letting $p\to 0^+$, we see that the middle term on the right-hand side vanishes, which implies
\begin{align*}
\frac{2^{2n}}{\pi} \lim_{p\to 0^+} \sin\left(\frac{\pi p}{2}\right)\int_0^\infty \frac{(\sin t)^{2n}\ln t}{t^{p+1}}dt &= \sum_{k=1}^n (-1)^k \binom{2n}{n+k} \ln k  \\
&\qquad -  \lim_{p\to 0^+}  \sum_{k=1}^n (-1)^k \binom{2n}{n+k} \left(\frac{\Gamma'(-p)}{\Gamma(-p)} - \ln 2\right).
\end{align*}
Invoking Lemma~\ref{lem:critical_identity}, we obtain that
\begin{equation}\label{eqn:log_aux}
\sum_{k=1}^n (-1)^{k} \binom{2n}{n+k} \ln k 
= \frac{2^{2n}}{\pi} \lim_{p\to 0^+} \sin\left(\frac{\pi p}{2}\right)\int_0^\infty \left(\frac{(\sin^{2n} t)\ln t}{t^{p+1}} - \left(\frac{\Gamma'(-p)}{\Gamma(-p)} -\ln 2\right) \frac{\sin^{2n}t}{t^{p+1}} \right) dt.
\end{equation}
We claim that the limit on the right-hand side at least $c/\sqrt{n}$ for some absolute constant $c > 0$. Note that letting $p\to 0^+$ in Lemma~\ref{lem:critical_identity} leads to
\[
\sum_{k=1}^n (-1)^{k+1} \binom{2n}{n+k} = \frac{2^{2n}}{\pi}\lim_{p\to 0^+} \sin\left(\frac{\pi p}{2}\right) \int_0^\pi \frac{\sin^{2n} t}{t^{p+1}} dt,
\]
and the left-hand side is
\[
\sum_{k=1}^n (-1)^{k+1} \binom{2n}{n+k} = \frac{1}{2}\binom{2n}{n}\sim \frac{2}{3}\cdot \frac{2^{2n}}{\sqrt n},
\]
thus
\[
\lim_{p\to 0^+} p \int_0^\pi \frac{\sin^{2n} t}{t^{p+1}} dt \sim \frac{4}{3\sqrt n}.
\]
Note the fact that $\Gamma'(x)/\Gamma(x)= -1/x - \gamma + o(1)$ as $x\to 0^+$ (e.g., plugging $n=1$ into Eq. (1.2.15) in \cite[p13]{special_functions}), where $\gamma = 0.577\cdots$ is the Euler gamma constant. Thus the limit on the right-hand side of \eqref{eqn:log_aux} is the same as
\[
\lim_{p\to 0^+} \sin\left(\frac{\pi p}{2}\right) \int_0^\infty \left(\frac{(\sin^{2n} t)\ln t}{t^{p+1}} - \left(\frac{1}{p}-\gamma-\ln 2\right) \frac{\sin^{2n}t}{t^{p+1}} \right) dt.
\]
and it suffices to show that
\begin{equation}\label{eqn:log_aux_main_limit}
\lim_{p\to 0^+} p \int_0^\infty \left(\frac{(\sin^{2n} t)\ln t}{t^{p+1}} - \frac{1}{p} \frac{\sin^{2n}t}{t^{p+1}} \right) dt > -\frac{c}{\sqrt n}
\end{equation}
for some constant $c \in \big(0,\frac{4}{3}(\gamma+\ln 2)\big)$. Since the limit above must exist, we can pick a sequence $p_k\to 0$ such that $e^{1/p_k}$ is a multiple of $\pi$. Hence we assume that $N = e^{1/p}/\pi$ is an integer below.

Now, split the integral into $[0,\pi]$ and $[\pi,\infty)$. Observe that 
\[
\lim_{p\to 0^+} \sin\left(\frac{\pi p}{2}\right) \int_0^\pi \frac{(\sin^{2n} t)\ln t}{t^{p+1}}dt = 0,\quad \lim_{p\to 0^+} \sin\left(\frac{\pi p}{2}\right) \int_0^\pi \frac{\sin^{2n} t}{t^{p+1}}dt = 0
\]
because the integrands, viewed as functions of $(p,t)$, are bounded on $[0,1]\times [0,\pi]$, since $\sin^{2n}t\sim t^{2n}$ near $t=0$ and so $t=0$ is not a singularity. Furthermore,
\[
\lim_{p\to 0^+} \sin\left(\frac{\pi p}{2}\right) \int_0^\pi \frac{1}{p}\cdot \frac{\sin^{2n} t}{t^{p+1}}dt = \frac{\pi}{2} \int_0^\pi \frac{\sin^{2n}t}{t}dt
\]
because the integrand is uniformly continuous on $[0,1]\times [0,\pi]$ and we can take the limit under the integral sign.

Now we deal with the integral on $[\pi,\infty)$. Following our approach in Corollary~\ref{cor:singular_value_lb}, we have that
\begin{align}
\int_\pi^\infty (p\ln t-1)\frac{(\sin^{2n} t)}{t^{p+1}} dt &= -\int_\pi^{N\pi} (1-p\ln t)\frac{(\sin^{2n} t)}{t^{p+1}} dt + \int_{N\pi}^\infty (p\ln t-1)\frac{(\sin^{2n} t)}{t^{p+1}} dt \notag\\
&\geq \left(-\sum_{k=1}^{N-1} \frac{1-p\ln(k\pi)}{(\pi k)^{p+1}} + \sum_{k=N+1}^\infty \frac{p\ln((k-1)\pi)-1}{(\pi k)^{p+1}} \right) I_n, \label{eqn:log_aux_2}
\end{align}
where
\[
I_n = \int_0^{\pi} \sin^{2n}t dt \sim \sqrt{\frac{\pi}{n}},
\]
and we used the fact that $(1-p\ln t)/t^{p+1}$ is nonnegative and decreasing when $\ln t\leq 1/p$.

The bracketed term on the rightmost side of \eqref{eqn:log_aux_2} is
\begin{align*}
-\sum_{k=1}^{N-1} \frac{1-p\ln(k\pi)}{(\pi k)^{p+1}} + \sum_{k=N+1}^\infty \frac{p\ln((k-1)\pi)-1}{(\pi k)^{p+1}} &= \sum_{k=1}^{N-1} \frac{p\ln(k\pi)-1}{(\pi k)^{p+1}} + \sum_{k=N+1}^\infty \frac{p\ln((k-1)\pi)-1}{(\pi k)^{p+1}}\\
&= \sum_{k=1}^\infty \frac{p\ln(k\pi)-1}{(\pi k)^{p+1}} + \sum_{k=N+1}^{\infty} \frac{p\ln(1-\frac{1}{k})}{(\pi k)^{p+1}}.
\end{align*}
The second term clearly tends to $0$ as $p\to 0^+$, while the first term is equal to
\[
\frac{-\zeta(1+p)+(p\ln\pi) \zeta(1+p)-p\zeta'(1+p)}{\pi^{1+p}} \to \frac{\ln\pi - \gamma}{\pi},\quad p\to 0^+,
\]
where $\zeta(p) = \sum_{n=1}^\infty 1/n^p$ is the Riemann zeta function and we used the fact that $\zeta(1+p) = \frac{1}{p} + \gamma + f(p)$ for an analytic function $f$ on $\mathbb{C}$ with $f(0) = 0$ (see, e.g.~\cite[p15]{special_functions}). 

Therefore we conclude that the limit on the left-hand side in~\eqref{eqn:log_aux_main_limit} is at least
\[
\frac{\ln \pi-\gamma}{\pi} I_n - \frac{\pi}{2}\int_0^\pi \frac{\sin^{2n}{t}}{t}dt 
\]
and it suffices to show that
\[
\int_0^\pi \frac{\sin^{2n}t}{t} dt \leq \frac{c_1}{\sqrt n} + o\left(\frac{1}{\sqrt n}\right)
\]
for some
\[
c_1 < \frac{2}{\pi}\left(\frac{\ln \pi-\gamma}{\sqrt{\pi}} + \frac{4}{3}(\gamma+\ln 2)\right) = 1.282\cdots.
\]
First observe that
\[
\int_{\pi/2}^\pi \frac{\sin^{2n}t}{t}dt \leq \frac{2}{\pi}\int_{\pi/2}^\pi \sin^{2n} tdt = \frac{2}{\pi}\cdot \frac{1}{2}I_n \sim \frac{1}{\sqrt{\pi n}},
\]
and
\[
\int_0^1 \frac{\sin^{2n}t}{t}dt \leq \int_0^1 t^{2n-1} dt = \frac{1}{2n}.
\]
Letting $\delta = \sqrt{(\ln n)/n}$, then $\sqrt{n} \sin^{2n}t \to 0$ as $n\to\infty$ on $t\in [1,\pi/2-\delta]$, and thus
\[
\int_1^{\frac{\pi}{2}-\delta} \frac{\sin^{2n}t}{t} dt = o\left(\frac{1}{\sqrt n}\right)
\]
and we can now bound
\[
\int_{\frac{\pi}{2}-\delta}^{\frac{\pi}{2}} \frac{\sin^{2n}t}{t} dt\leq \frac{1}{\frac{\pi}{2}-\delta}\int_{\frac{\pi}{2}-\delta}^{\frac{\pi}{2}} \sin^{2n}t dt \leq \frac{1}{\frac{\pi}{2}-\delta}\cdot \frac{1}{2}I_n\sim \frac{1}{\sqrt{\pi n}}.
\]
Therefore we conclude that we can choose $c_1$ to be
\[
c_1 = \frac{2}{\sqrt{\pi}} = 1.128\cdots
\]
as desired.


\section{Lower Bounds on Coresets for Projective Clustering}

We shall prove a lower bound of $\widetilde{\Omega}(kj/\epsilon^2)$ bits for coresets for projective clustering. First we need a lemma which provides codewords to encode the clustering information.

\begin{lemma}\label{lem:code}
For any given integer $L \ge 1$ and even integer $D \ge 2$, there exists a set $S = \{(s_1, t_1), (s_2, t_2), \ldots, (s_{m}, t_{m})\}$ of size $m \geq c 2^D/\sqrt{D}$, where $s_i, t_i \in \mathbb{R}^D$ and $c > 0$ is an absolute constant, such that 
\begin{itemize}[topsep=0.5ex,itemsep=-0.5ex,partopsep=1ex,parsep=1ex]
\item $\langle s_i, t_i\rangle = 0$;
\item $\langle s_i, t_j\rangle \ge L^2$ for $i \neq j$;
\item all entries of $s_i$ and $t_i$ are in $\{0, L\}$.
\end{itemize}
\end{lemma}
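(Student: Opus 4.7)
The plan is to exhibit an explicit construction based on identifying $s_i$ and $t_i$ (up to the scaling by $L$) with indicator vectors of subsets of $[D]$, and then to use a counting bound on middle-layer subsets of the Boolean lattice. Since every entry of $s_i$ and $t_i$ is either $0$ or $L$, writing $s_i = L\cdot \mathbf{1}_{A_i}$ and $t_i = L\cdot \mathbf{1}_{B_i}$ for sets $A_i, B_i \subseteq [D]$ turns the inner product conditions into set-theoretic conditions: we need $A_i \cap B_i = \emptyset$ and $|A_i \cap B_j| \geq 1$ whenever $i\neq j$ (and $L^2$ appears automatically from the scaling, so the choice of $L \geq 1$ is irrelevant to the combinatorial problem).

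The construction I would use is the simplest possible: enumerate all subsets $A\subseteq [D]$ of size $D/2$, and for each such $A$ set the corresponding pair to be $(L\cdot \mathbf{1}_A, L\cdot \mathbf{1}_{[D]\setminus A})$. This immediately gives $\langle s_i, t_i\rangle = L^2 |A_i \cap ([D]\setminus A_i)| = 0$. For the cross terms, $\langle s_i, t_j\rangle = L^2 |A_i \cap ([D]\setminus A_j)| = L^2 |A_i \setminus A_j|$; since $A_i$ and $A_j$ are distinct sets of the same cardinality $D/2$, neither can be contained in the other, so $A_i \setminus A_j$ is nonempty and the inner product is at least $L^2$. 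The number of such pairs is $\binom{D}{D/2}$, which by Stirling's approximation is at least $c\cdot 2^D/\sqrt{D}$ for some absolute constant $c>0$. All three bullet conditions and the size bound are thereby satisfied.

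There is no real obstacle in this argument: the only ingredient besides the trivial observation that two distinct equal-size sets cannot be nested is a standard estimate for the central binomial coefficient. The $L \geq 1$ hypothesis and the requirement that $D$ be even are used only to ensure that $L^2 \geq 1$ matches the stated lower bound (giving the clean $L^2$ on the right) and that $D/2$ is an integer so that the middle layer of the Boolean lattice is well-defined. No randomness or probabilistic argument is needed.
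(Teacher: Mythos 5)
Your proposal is correct and is essentially identical to the paper's proof: the paper also takes all $\binom{D}{D/2}$ Hamming-weight-$D/2$ binary vectors $s_i$ with $t_i$ the complement of $s_i$, observes that distinct equal-weight vectors force $\langle s_i, t_j\rangle \geq 1$, and then rescales all entries of value $1$ to $L$ to get the $L^2$ bound. The only (standard) detail you add explicitly is the Stirling estimate $\binom{D}{D/2} \geq c\,2^D/\sqrt{D}$, which the paper leaves implicit.
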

\begin{proof}
We first consider the case $L = 1$. Let $\{s_i\}$ be the set of all binary vectors with Hamming weight $D / 2$, and $t_i = \mathbf{1}^D - s_i$, i.e., $t_i$ is the complement of $s_i$.
Thus, $\langle s_i, t_i \rangle = 0$ by construction.
For any $i \neq j$, since $s_i \neq s_j$, and both $s_i$ and $s_j$ have Hamming weight $D / 2$, we have $\langle s_i, t_j \rangle \ge 1$.

For a general $L$, we replace all entries of value $1$ in the construction above with $L$.
\end{proof}

In the rest of the section, we also use an $n\times d$ matrix to represent a point set of size $n$ in $\R^d$, where each row represents a point in $\R^d$.

Below we set up the framework of the hard instance for the projective subspace clustering problem. For a given $k$, choosing $D = O(\log k)$, we can obtain a set $S$ of size $k$ as guaranteed by Lemma~\ref{lem:code}. Suppose that $j\geq D+1$ and $d\geq j+1$. Without loss of generality we may assume that $d = j+1$, otherwise we just embed our hard instance in $\R^{j+1}$ into $\R^d$ by appending zero coordinates.

For a set $\mathcal{A}$ consisting of $k$ matrices $A^{(1)}, A^{(2)}, \ldots, A^{(k)} \in \mathbb{R}^{n \times (j + 1 - D)}$, we form a point set $X = X(\mathcal{A})\in \R^{nk\times d}$, whose rows are indexed by $(i,j)\in [k]\times [n]$ and defined as
\[
X_{i,j} = \begin{pmatrix} s_i^T & A^{(i)}_j \end{pmatrix},
\]
where $A^{(i)}_j$ denotes the $j$-th row of $A^{(i)}$. 

Suppose that $y\in \R^{j+1-D}$. For each $i\in [k]$, let $V_i, W_i\subseteq \R^{j+1}$ be $j$-dimensional subspaces that satisfy
\begin{align*}
V_i \perp v_i, &\quad v_i = \begin{pmatrix}t_i & \mathbf{0}^{j+1-D}\end{pmatrix},\\
W_i \perp w_i, &\quad w_i = \begin{pmatrix}t_i & y\end{pmatrix},
\end{align*}
where, for notational simplicity, we write vertical concatenation in a row. Last, for each $\ell\in [k]$, define a center
\[
\cC_\ell = (V_1, \dots, V_{\ell-1}, W_\ell, V_{\ell+1},\dots, V_k).
\]

\begin{lemma}\label{lem:hard_instance_clustering}
When $\|y\|_2=1$ and $L^2\geq \max_i \|A^{(\ell)}_i\|_2$, it holds that $\cost(X, \cC_\ell) = \Phi(A^{(\ell)} y/ \|w_\ell\|_2)$.
\end{lemma}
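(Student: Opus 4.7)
The strategy is to compute $\dist(X_{i,j}, \cC_\ell)$ in closed form for every $(i,j)\in[k]\times[n]$ by using the codeword orthogonality relations from Lemma~\ref{lem:code}, show that only the points with $i=\ell$ can contribute to the cost, and then check that those points are assigned to $W_\ell$, at which the distance simplifies to $|\langle A^{(\ell)}_j, y\rangle|/\|w_\ell\|_2$. Summing and using that all the $\phi$'s in play are even (so $\phi(t)=\phi(|t|)$) will yield the claimed identity $\cost(X,\cC_\ell)=\Phi(A^{(\ell)}y/\|w_\ell\|_2)$.

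For the case $i\ne \ell$, I note that $V_i$ is the orthogonal complement of $v_i=(t_i^T,\mathbf{0})$ in $\R^{j+1}$, so distance to $V_i$ equals $|\langle X_{i,j},v_i\rangle|/\|v_i\|_2$. Plugging in $X_{i,j}=(s_i^T, A^{(i)}_j)$ and using $\langle s_i,t_i\rangle=0$ from Lemma~\ref{lem:code}, this inner product vanishes, so $X_{i,j}\in V_i$ and $\dist(X_{i,j},\cC_\ell)=0$, contributing $\phi(0)=0$. Crucially, no hypothesis on $A^{(i)}$ for $i\ne\ell$ is needed here, which is why the lemma's hypothesis only constrains $A^{(\ell)}$.

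For $i=\ell$, I compare the distance to $W_\ell$ with the distances to the other $V_{i'}$. Using $\langle s_\ell,t_\ell\rangle=0$,
\[
\dist(X_{\ell,j},W_\ell)=\frac{|\langle s_\ell,t_\ell\rangle+\langle A^{(\ell)}_j,y\rangle|}{\|w_\ell\|_2}=\frac{|\langle A^{(\ell)}_j,y\rangle|}{\|w_\ell\|_2},
\]
while for $i'\ne\ell$,
\[
\dist(X_{\ell,j},V_{i'})=\frac{|\langle s_\ell,t_{i'}\rangle|}{\|v_{i'}\|_2}\geq \frac{L^2}{\|v_{i'}\|_2},
\]
by the second property of Lemma~\ref{lem:code}. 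The main (and only nontrivial) step is to verify that the former is at most the latter. Using $\|v_{i'}\|_2=L\sqrt{D/2}$ and $\|w_\ell\|_2=\sqrt{L^2 D/2+\|y\|_2^2}=\sqrt{L^2 D/2+1}\geq \|v_{i'}\|_2$, together with Cauchy--Schwarz and the hypothesis $|\langle A^{(\ell)}_j,y\rangle|\leq \|A^{(\ell)}_j\|_2\|y\|_2\leq L^2$, the inequality follows by a short algebraic manipulation. Hence $\dist(X_{\ell,j},\cC_\ell)$ is attained at $W_\ell$ and equals $|\langle A^{(\ell)}_j,y\rangle|/\|w_\ell\|_2$.

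Summing the $i=\ell$ contributions over $j$ and discarding the zero contributions from $i\ne\ell$,
\[
\cost(X,\cC_\ell)=\sum_{j=1}^n \phi\!\left(\frac{\langle A^{(\ell)}_j,y\rangle}{\|w_\ell\|_2}\right)=\Phi\!\left(\frac{A^{(\ell)}y}{\|w_\ell\|_2}\right),
\]
where evenness of $\phi$ lets us drop the absolute value. The key (and essentially only) obstacle is the geometric comparison in the $i=\ell$ case; everything else is a direct unpacking of the definitions of $V_i$, $W_\ell$, $v_i$, $w_\ell$ and the two inner-product properties of the codewords $(s_i,t_i)$.
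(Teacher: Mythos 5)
Your proof is correct and follows essentially the same route as the paper's: show points with $i\neq\ell$ lie in $V_i$ (via $\langle s_i,t_i\rangle=0$) and contribute zero, then for $i=\ell$ compare $\dist(X_{\ell,j},W_\ell)=|\langle A^{(\ell)}_j,y\rangle|/\|w_\ell\|_2$ against $\dist(X_{\ell,j},V_{i'})\geq L^2/\|v_{i'}\|_2$ using Cauchy--Schwarz, $\|y\|_2=1$, $\|A^{(\ell)}_j\|_2\leq L^2$, and $\|w_\ell\|_2\geq\|v_{i'}\|_2$. The algebraic comparison you perform is equivalent to the paper's bound $\dist(P_{\ell j},V_{i'})\geq L/\sqrt{D/2}$ versus $\dist(P_{\ell j},W_\ell)\leq\|A^{(\ell)}_j\|_2\|y\|_2/\sqrt{L^2D/2+\|y\|_2^2}$, so there is no substantive difference.
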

\begin{proof}
One can readily verify, using Lemma~\ref{lem:code}, that $P_{ij}\perp v_i$ whenever $i\neq \ell$, and thus $P_{ij}\in V_i'$ and $\dist(P_{ij},X_\ell) = 0$ for $i\neq \ell$.

On the other hand, for $i\neq \ell$,
\[
\dist(P_{\ell j}, V_i) = \frac{|\langle P_{\ell j}, v_i\rangle|}{\|v_i\|_2} = \frac{|\langle P_{\ell j}, v_i\rangle|}{L\cdot \sqrt{D/2}} \geq \frac{L}{\sqrt{D/2}}.
\]
and
\[
\dist(P_{\ell j}, W_\ell) = \frac{|\langle P_{\ell j}, w_\ell\rangle|}{\|w_\ell\|_2} = \frac{|\langle A_i^{(\ell)}, y\rangle|}{\|w_\ell\|_2} \leq \frac{\|A^{(\ell)}_i\|_2 \|y\|_2}{\sqrt{\frac{D}{2}L^2 + \|y\|_2^2}}.
\]
Hence when $L^2\geq \|y\|_2 \max_i \|A^{(\ell)}_i\|_2$, it must hold that $W_\ell$ is the subspace in $X_\ell$ that is the closest to $P_{\ell j}$ for all $j$, and therefore
\[
\cost(X, \cC_\ell) = \sum_{j = 1}^n \phi(\dist(P_{\ell j}, W_\ell)) = \Phi\left(\frac{A^{(\ell)} y}{\|w_\ell\|_2}\right). \qedhere
\]
\end{proof}

\begin{theorem}\label{thm:clustering}
Suppose that there exists a function $\Phi$ and absolute constants $C_0$ and $\epsilon_0$ such that for any $d\geq C_0\log(k/\eps)$ and $\eps \in (0,\eps_0)$, solving the subspace sketch problem for $\Phi$ requires $M$ bits.
Then there exists an absolute constant $C_1$ such that for any $k\geq 1$ and $j \geq C_1\log(k/\eps)$, any coreset for projective clustering for $\Phi$ requires $kM$ bits.
\end{theorem}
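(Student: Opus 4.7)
The plan is to reduce $k$ independent copies of the subspace sketch problem for $\Phi$ to a single instance of the projective clustering coreset problem for $\Phi$, thereby transferring the $M$-bit per-copy lower bound into a $kM$-bit lower bound overall. The combinatorial gadget for interleaving the $k$ copies is already supplied by Lemma~\ref{lem:code} and Lemma~\ref{lem:hard_instance_clustering}.

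First I set $D = O(\log k)$ and invoke Lemma~\ref{lem:code} to obtain the $k$ cluster-selector pairs $(s_1,t_1),\dots,(s_k,t_k)\in \R^D$. Taking $C_1$ large enough in the hypothesis $j\geq C_1\log(k/\eps)$ ensures $j+1-D\geq C_0\log(k/\eps)$, so the hypothesized $M$-bit lower bound applies in dimension $j+1-D$. I pick $k$ independent hard instances $A^{(1)},\dots,A^{(k)}\in \R^{n\times (j+1-D)}$ for the $\Phi$ subspace sketch problem (each requiring $M$ bits), choose $L$ large enough that $L^2\geq \max_{i,\ell}\|A^{(\ell)}_i\|_2$ (the entries of the hard instance have $O(\log(nd))$-bit magnitudes, so this only costs $O(\log(nd))$ bits in the description of $L$), and assemble the point set $X=X(\mathcal{A})\in\R^{nk\times(j+1)}$ as in the paragraph preceding Lemma~\ref{lem:hard_instance_clustering}.

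Now suppose $Q$ is a coreset for projective clustering on $X$ using $s$ bits. For any $\ell\in[k]$ and any unit query $y\in\R^{j+1-D}$, I form the center $\cC_\ell=(V_1,\dots,W_\ell,\dots,V_k)$ and query $Q(\cC_\ell)$. Lemma~\ref{lem:hard_instance_clustering} tells us this returns a $(1\pm\eps)$-approximation to $\Phi(A^{(\ell)}y/\|w_\ell\|_2)$. Since $\|w_\ell\|_2=\sqrt{DL^2/2+1}$ is a fixed, publicly known scalar, this is the same as a $(1\pm\eps)$-approximation to $\Phi(\tilde A^{(\ell)}y)$ where $\tilde A^{(\ell)}:=A^{(\ell)}/\|w_\ell\|_2$; the subspace sketch lower bound is unaffected by this deterministic rescaling, so $Q$ implicitly realizes $k$ independent subspace sketches, one per $\tilde A^{(\ell)}$, all readable from the single $s$-bit data structure.

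Finally I invoke a standard direct-sum / information-theoretic argument: plant $k$ mutually independent copies of the \textsf{INDEX} hard distribution underlying the $M$-bit lower bound (Lemma~\ref{lem:lb_game}), one inside each $A^{(\ell)}$, and use the $k$ simulated sketches to recover the planted random bits in each copy. Since the recovered copies are information-theoretically independent, $s\geq kM$, which is the claimed bound. The main subtlety to manage is that the coreset's $0.9$ success probability is per-query, so one needs simultaneous correctness across the $k$ centers $\cC_1,\dots,\cC_\ell$; this is handled either by the standard trick of maintaining $O(\log k)$ parallel copies of $Q$ and taking a majority (the $\log k$ overhead is absorbed harmlessly into the abstract $M$), or, more tightly, by observing that only one query $\cC_\ell$ is needed per random bit to be recovered and running the \textsf{INDEX} argument against the joint randomness, which yields the $kM$ bound directly.
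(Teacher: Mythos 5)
Your proposal is correct and follows essentially the same route as the paper's proof: choose $D=O(\log k)$, plant $k$ independent subspace-sketch hard instances of dimension $j+1-D$ into the point set $X(\mathcal{A})$ via Lemma~\ref{lem:code}, use Lemma~\ref{lem:hard_instance_clustering} to read off $(1\pm\eps)$-approximations of $\Phi(A^{(\ell)}y/\|w_\ell\|_2)$ from coreset queries, and conclude $kM$ bits by the information-theoretic direct-sum over the $k$ independent \textsf{INDEX}-style instances. Your extra bookkeeping on the fixed rescaling $\|w_\ell\|_2$ and on per-query success probability only makes explicit what the paper leaves implicit (the tighter ``one query per recovered bit'' observation is the right one; the $O(\log k)$-amplification alternative would actually lose a $\log k$ factor).
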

\begin{proof}
We prove this theorem by a reduction from the subspace sketch problem for $\Phi$ to coresets for projective clustering for $\Phi$. 

Choose $D = O(\log k)$ and $d' := j + 1 - D = C_0\log(1 /\eps)$. Let $A^{(1)},\dots,A^{(k)}\in \R^{n\times d'}$ be $k$ independent hard instances for the subspace sketch problem for $\Phi$.
Let $X$ be as constructed before Lemma~\ref{lem:hard_instance_clustering}. If one can compute a projective clustering coreset for $X$ so that one can approximate $\cost(X,\cC_\ell)$ up to a $(1\pm\eps)$-factor, it follows from Lemma~\ref{lem:hard_instance_clustering} that one can approximate $\Phi(A^{(\ell)}y/\|w\|_2)$ up to a $(1\pm\eps)$-factor for every $\ell \in [k]$ and every unit vector $y\in \R^{d'}$. 
Solving the subspace sketch problem for $\Phi$ for each $A^{(\ell)}$ requires $M$ bits.
Therefore, solving $k$ independent instances requires $kM$ bits.
\end{proof}

We have the following immediate corollary.
\begin{corollary}\label{cor:clustering2}
Under the assumptions of Theorem~\ref{thm:clustering}, any coreset for projective clustering requires $\Omega(jM/\log(k/\eps))$ bits.
\end{corollary}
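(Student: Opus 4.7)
The plan is to reduce the corollary to Theorem~\ref{thm:clustering} applied with a smaller effective number of centers $k'\leq k$, then pad the query back to $k$ subspaces using dummy flats. Concretely, I would set $k' = \max\{1,\min\{k,\lfloor j/(C_1\log(k/\eps))\rfloor\}\}$, where $C_1$ is the constant from Theorem~\ref{thm:clustering}. Since $k'\leq k$ implies $\log(k'/\eps)\leq \log(k/\eps)$, the hypothesis $j\geq C_1\log(k'/\eps)$ needed to invoke Theorem~\ref{thm:clustering} with $k'$ centers already follows from the corollary's assumption. I would then instantiate that theorem on $k'$ centers to obtain a hard point set $X\subset\R^{j+1}$ together with queries $\cC_1,\dots,\cC_{k'}$ whose cost values encode $k'$ independent subspace sketch problems and therefore require at least $k'M$ bits of storage.

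The second step is to embed this $k'$-center instance into the original $k$-center projective clustering problem. For each query $\cC_\ell$, I would form an augmented query $\tilde\cC_\ell = (\cC_\ell, D_1,\dots,D_{k-k'})$, where each $D_i$ is a $j$-dimensional affine subspace of $\R^d$ obtained by translating a fixed $j$-dimensional coordinate flat by a vector of norm much larger than $\max_{x\in X}\|x\|_2$. The construction in Theorem~\ref{thm:clustering} leaves $X$ with coordinates bounded by an absolute constant (depending only on the parameter $L$ from Lemma~\ref{lem:code} and on the bounded-coordinate subspace-sketch hard instances), so a single sufficiently large shift forces $\dist(x,D_i) > \dist(x,\cC_\ell)$ uniformly for every $x\in X$ and every $\ell,i$. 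Consequently, $\cost(X,\tilde\cC_\ell) = \cost(X,\cC_\ell)$, and any $k$-center coreset for $X$ also answers every $k'$-center query, inheriting the $k'M$-bit lower bound.

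To finish, I would verify $k'M = \Omega(jM/\log(k/\eps))$. In the regime $\lfloor j/(C_1\log(k/\eps))\rfloor\leq k$, one has $k' = \Theta(j/\log(k/\eps))$ and the inequality is immediate. Otherwise $k'=k$ and $j > C_1 k\log(k/\eps)$; here I would instead invoke Theorem~\ref{thm:clustering} with the larger choice $d' = j + 1 - O(\log k)$ inside its proof, yielding $k\cdot M(d')$ bits. The subspace sketch lower bounds supplied by Corollary~\ref{cor:comm_lb_eps} (and its $M$-estimator analogues) scale linearly in $d$ up to $\polylog$ factors, so $M(d')= \Omega(d'\cdot M/\log(k/\eps)) = \Omega(jM/\log(k/\eps))$ and the bound $k\cdot M(d')$ dominates the target in this regime too.

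The main obstacle I foresee is the geometric bookkeeping for the dummies: one needs each $D_i$ to be a valid $j$-dimensional affine subspace of the ambient $\R^d$ and to lie uniformly far from the entire bounded-coordinate instance $X$, across all $k'$ queries simultaneously. This is routine given $d\geq j+1$ (which is intrinsic to the projective clustering setup) and the fact that the hard instance has bounded $\ell_2$ radius, but it does require care to ensure the assignment of every data point to its intended subspace is preserved, so the cost is genuinely unaffected by the dummies. A secondary, purely bookkeeping point is to confirm that ``$M$'' in the corollary refers to the base lower bound $M$ at dimension $\Theta(\log(k/\eps))$, so that the edge-case $k'=k$ argument correctly uses the linear-in-$d$ scaling of $M$ that is inherent to the earlier results in the paper.
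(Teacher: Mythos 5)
Your first construction (shrinking to $k'$ centers and padding queries with $k-k'$ far-away dummy flats) is geometrically sound, but it only covers the regime $j\lesssim k\log(k/\eps)$, where the corollary's bound $\Omega(jM/\log(k/\eps))$ is already implied by the $kM$ bound of Theorem~\ref{thm:clustering} itself, so no new argument is needed there. Moreover, as a black-box invocation with $k'$ centers, the hypothesis you would need is a subspace-sketch lower bound of $M$ bits for every $d\geq C_0\log(k'/\eps)$, which is a \emph{larger} range of dimensions than the assumed $d\geq C_0\log(k/\eps)$; you only checked the condition on $j$. This is fixable by opening the proof (the sketch instances actually used have dimension $j+1-O(\log k')\geq C_0\log(k/\eps)$), but it is moot for the reason above.

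The genuine gap is in the only regime where the corollary adds something, $j\gg k\log(k/\eps)$. There you rerun the proof of Theorem~\ref{thm:clustering} with $d'=j+1-O(\log k)$ and replace $M$ by $M(d')=\Omega(d'M/\log(k/\eps))$, appealing to Corollary~\ref{cor:comm_lb_eps} and its $M$-estimator analogues. But the corollary is stated ``under the assumptions of Theorem~\ref{thm:clustering}'': all that is given is a single number $M$ lower-bounding the subspace sketch complexity for every $d\geq C_0\log(k/\eps)$; nothing in those assumptions says the bound grows linearly in $d$, so from the hypotheses alone you only get $M(d')\geq M$ and hence $kM$, not $\Omega(jM/\log(k/\eps))$. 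Your closing remark treats this as bookkeeping, but it is an additional assumption about $\Phi$, not a reading of the statement (it does happen to hold for the concrete $\Phi$'s in the paper, so your argument would recover the concrete bounds, just not the corollary as stated). The paper avoids this by padding the point set rather than the sketch dimension: it takes $b=j/(C_0\log(k/\eps))$ block-diagonal independent copies of the hard instance $X$ of Theorem~\ref{thm:clustering}, each in its own coordinate block of dimension $\Theta(\log(k/\eps))$; a query aimed at one block is extended to $j$-dimensional subspaces by direct-summing with the zero-cost subspaces $V_i$ of the other blocks (as in Lemma~\ref{lem:hard_instance_clustering}), so the cost of the combined query equals the cost inside the queried block, and the coreset must encode $b$ independent instances, giving $\Omega(bM)$ (indeed $\Omega(bkM)$) bits while using only the base bound $M$ at the threshold dimension.
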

\begin{proof}
Let $b = j/(C_0\log(k/\epsilon))$. Let $X'$ be a block diagonal matrix of $b$ blocks, each diagonal block is an independent copy of the hard instance $X$ in Theorem~\ref{thm:clustering}. It then follows from Theorem~\ref{thm:clustering} that the lower bound is $\Omega(b M)$ bits.
\end{proof}

A lower bound of $\Omega(j k/(\eps^2 \log k \cdot \polylog(1/\eps))$ follows immediately for $\Phi(x) = \|x\|_p^p$ (Theorem~\ref{thm:comm_lb_eps}) for $p\in [0,+\infty)\setminus 2\Z^+$, and the $M$-estimators in Corollary~\ref{cor:m_estimator1} and Corollary~\ref{cor:m_estimator2}.

\section{Upper Bounds for the Tukey Loss $p$-Norm}\label{sec:tukey_UB}

We shall prove in this section an $\widetilde{O}(1/\eps^2)$ upper bound for estimating a mollified version of the Tukey loss $1$-norm $\Phi(x)$ for a vector $x\in \R^n$.

\subsection{Mollification of Tukey Loss Function} Consider the classic ``bump'' function $\psi$ and the standard scaled version $\psi_t$, which are defined as
\[
\psi(x) = \begin{cases}
			C_{\psi}\exp(-\frac{1}{1-x^2}), & |x|<1;\\
			0, & \text{otherwise},
		\end{cases}	\qquad
		\psi_t(x) = \frac{1}{t}\psi\left(\frac{x}{t}\right),
\]
where $C_{\psi}$ is the normalization constant such that $\int_{-1}^1 \psi(x)dx =1$. The following is a result on the decay of its Fourier transform. We define the Fourier transform of a function $f\in L^1(\R)$ to be $\hat f(t) = \int_{\R} e^{itx}f(x)dx$, and thus $f(x) = (2\pi)^{-1}\int_{\R} e^{-itx}\hat f(t)dt$ if $\hat f\in L^1(\R)$.

\begin{lemma}[{\cite{johnson:bump}}]\label{lem:bump_fourier}
There exists an absolute constant $C > 0$ such that $|\widehat{\psi}(t)| \leq C|t|^{-3/4}e^{-\sqrt{|t|}}$.
\end{lemma}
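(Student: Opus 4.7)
The plan is to estimate
\[
\hat\psi(t)=C_\psi\int_{-1}^{1} e^{itx}\,e^{-1/(1-x^2)}\,dx
\]
by the method of steepest descent. Since $\psi$ is real and even, $\hat\psi$ is real and even, so it suffices to treat $t>0$. The key structural observation is that the integrand is analytic on the strip $\{z\in\mathbb{C}:\Re z\in(-1,1)\}$, with essential singularities only at $z=\pm 1$, so the contour $[-1,1]$ can be deformed into the upper half-plane, where the factor $|e^{itz}|=e^{-t\Im z}$ supplies additional decay.

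First I would split the integral at $x=0$ and treat the half $[0,1]$; the piece on $[-1,0]$ follows by complex conjugation together with the evenness of $\psi$. Substituting $u=1-x$ extracts an oscillation and produces
\[
e^{it}\int_0^1 e^{-itu-1/(u(2-u))}\,du.
\]
Writing $1/(u(2-u))=1/(2u)+r(u)$ with $r$ real-analytic and bounded on $[0,1)$, the leading analytic phase is $F(u)=-itu-1/(2u)$, whose saddle solves $F'(u)=-it+1/(2u^2)=0$. This gives $u_s=(2it)^{-1/2}=(2t)^{-1/2}e^{-i\pi/4}$, at which $F(u_s)=-(1+i)\sqrt{t}$ and $F''(u_s)=-1/u_s^3=-(2t)^{3/2}e^{i3\pi/4}$. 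Thus $|e^{F(u_s)}|=e^{-\sqrt{t}}$, which already furnishes the claimed exponential factor.

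Next I would deform the contour $[0,1]$ into a path that descends from $u=0$ along the steepest-descent ray of $\Re F$, crosses $u_s$ transverse to the level sets of $\Im F$, and rejoins the real axis at a point bounded away from the origin. On the tails of this path $\Re F$ is strictly more negative than $-\sqrt{t}$, so their contributions are exponentially smaller than the saddle term. The saddle neighbourhood is handled by the standard quadratic expansion: the Gaussian width is $|F''(u_s)|^{-1/2}\asymp t^{-3/4}$, so
\[
\int_0^1 e^{itx}e^{-1/(1-x^2)}\,dx = O\bigl(t^{-3/4}e^{-\sqrt{t}}\bigr),
\]
and the stated bound on $|\hat\psi(t)|$ follows by combining with the mirror contribution from $x=-1$.

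The main obstacle will be to justify the contour deformation in the presence of the essential singularity at $u=0$, and to quantitatively control the errors induced by (i) the smooth factor $e^{-r(u)}$, (ii) the correction $r(u)$ to the phase, and (iii) the higher-order terms in the Gaussian expansion about $u_s$. Each of these contributes only a bounded multiplicative perturbation, but the argument must track their $t$-dependence explicitly to be sure that neither the exponent $\sqrt{t}$ nor the polynomial prefactor $t^{-3/4}$ is disturbed.
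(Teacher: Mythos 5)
Your saddle data are all correct (the substitution $u=1-x$, the split $\tfrac{1}{u(2-u)}=\tfrac{1}{2u}+\tfrac{1}{2(2-u)}$, the saddle $u_s=(2t)^{-1/2}e^{-i\pi/4}$ with $F(u_s)=-(1+i)\sqrt{t}$ and $|F''(u_s)|=(2t)^{3/2}$), and steepest descent through the saddles adjacent to $x=\pm1$ is indeed the method used in the cited source. But there is a genuine gap in how you dispose of everything away from the saddle: the intermediate claim that $\int_0^1 e^{itx}e^{-1/(1-x^2)}\,dx = O\bigl(t^{-3/4}e^{-\sqrt{t}}\bigr)$ is false. Splitting at $x=0$ creates an artificial endpoint at which the integrand does not vanish ($\psi(0)=C_\psi e^{-1}$), and a single integration by parts shows the half-integral equals $-\psi(0)/(it)+o(1/t)$, i.e.\ it is $\Theta(1/t)$ — vastly larger than $t^{-3/4}e^{-\sqrt{t}}$. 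Correspondingly, your assertion that ``on the tails of this path $\Re F$ is strictly more negative than $-\sqrt{t}$'' fails on the portion of the contour that returns to the real axis and runs out to $u=1$ (i.e.\ $x$ near $0$): there $\Re F(u)=-\tfrac{1}{2u}\ge -\tfrac12$, so the integrand has modulus comparable to $e^{-1}$ and only oscillation, not pointwise smallness, controls that piece. No contour choice for the half-integral alone can rescue the claimed bound, since the half-integral genuinely has a $\Theta(1/t)$ (purely imaginary) component.

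The repair is to avoid the cut at $x=0$ altogether: deform the \emph{entire} contour $[-1,1]$ into the upper half-plane, descending through the saddle near $-1$, crossing on a horizontal segment at height $h\asymp c/\sqrt{t}$ with $c>1$ (say $h=2/\sqrt{t}$), and descending through the saddle near $+1$. On the strip $|\Re z|\le 1$ one has $\Re\bigl(1-z^2\bigr)\ge 0$ in the relevant region, hence $|e^{-1/(1-z^2)}|\le 1$, while $|e^{itz}|=e^{-th}\le e^{-2\sqrt{t}}$ on the crossing segment, so the connecting piece is exponentially negligible compared with the two saddle contributions of size $\Theta\bigl(t^{-3/4}e^{-\sqrt{t}}\bigr)$; the local quadratic analysis at each saddle then goes through as you describe. (Equivalently, one may keep the split and write $\widehat\psi(t)=2\Re\int_0^1$, but then one must exhibit the exact cancellation of the endpoint contributions at $x=0$ — e.g.\ by having the two halves' deformed contours share the same complex segment emanating from $0$ with opposite orientations — rather than bound each half in absolute value.) Finally, note the bound for bounded $|t|$ is trivial from $|\widehat\psi(t)|\le\|\psi\|_1=1$, so only the large-$|t|$ regime needs the contour argument.
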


This enables us to upper bound the derivatives of $\psi$. This is probably a classical result but we do not know an appropriate reference and so we reproduce the proof here.

\begin{lemma}\label{lem:bump_derivative}
There exist absolute constants $C_1,C_2>0$ such that $\left\|\psi^{(k)}\right\|_\infty \leq C_1 (C_2k\log k)^{2k+2}$.
\end{lemma}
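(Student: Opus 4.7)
The plan is to go through Fourier inversion and exploit the sub-exponential decay of $\widehat{\psi}$ given by Lemma~\ref{lem:bump_fourier}. Since $\psi\in C_c^\infty(\R)\subset\mathcal{S}(\R)$, the Fourier inversion formula yields
\[
\psi(x) = \frac{1}{2\pi}\int_\R e^{-itx}\widehat\psi(t)\,dt,
\]
and the decay $|\widehat\psi(t)|\leq C|t|^{-3/4}e^{-\sqrt{|t|}}$ makes $t^k\widehat\psi(t)$ integrable for every $k\geq 0$. Hence one may differentiate under the integral sign and obtain
\[
\psi^{(k)}(x) = \frac{1}{2\pi}\int_\R (-it)^k e^{-itx}\widehat\psi(t)\,dt.
\]

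Taking absolute values inside the integral and applying Lemma~\ref{lem:bump_fourier}, I would get
\[
\|\psi^{(k)}\|_\infty \;\leq\; \frac{1}{2\pi}\int_\R |t|^k |\widehat\psi(t)|\,dt \;\leq\; \frac{C}{\pi}\int_0^\infty t^{k-3/4}e^{-\sqrt{t}}\,dt.
\]
The change of variables $u=\sqrt{t}$, $dt = 2u\,du$, reduces this to a Gamma integral:
\[
\int_0^\infty t^{k-3/4}e^{-\sqrt{t}}\,dt = 2\int_0^\infty u^{2k-1/2}e^{-u}\,du = 2\,\Gamma(2k+\tfrac{1}{2}).
\]

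Finally, Stirling's bound gives $\Gamma(2k+\tfrac12)\leq (2k)^{2k}$ up to an absolute constant, and in particular
\[
\Gamma(2k+\tfrac12) \leq C_1(C_2 k\log k)^{2k+2}
\]
for suitable absolute constants $C_1,C_2>0$, which yields the claimed bound. The only non-routine step is verifying that differentiation under the integral sign is legitimate, but this is an immediate consequence of the super-polynomial decay of $\widehat\psi$ provided by Lemma~\ref{lem:bump_fourier} (dominated convergence applied to the difference quotients). I do not expect any real obstacle; the stated form $(C_2 k\log k)^{2k+2}$ is in fact slack compared to the sharper $(2k)^{2k}$ coming out of the calculation, which just reflects the $e^{-\sqrt{|t|}}$ (rather than $e^{-|t|}$) rate of decay.
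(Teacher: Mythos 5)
Your proposal is correct, and it shares the paper's skeleton: both arguments start from $\|\psi^{(k)}\|_\infty \leq \frac{1}{2\pi}\int_\R |t|^k\,|\widehat\psi(t)|\,dt$ (you obtain this by differentiating the Fourier inversion formula under the integral sign, the paper by writing $\widehat{\psi^{(k)}}(t)=(it)^k\widehat\psi(t)$ and bounding by the $L^1$ norm — the same step in different clothing), and both then invoke the decay bound of Lemma~\ref{lem:bump_fourier}. Where you genuinely diverge is in estimating that integral. The paper splits it at a threshold $T=C_1(k\log k)^2$, bounds the tail $|t|>T$ by an absolute constant using the sub-exponential decay, and bounds the bulk $|t|\leq T$ by $2T^{k+1}$ using only $|\widehat\psi(t)|\leq \|\psi\|_1=1$; this splitting is exactly where the $(C_2 k\log k)^{2k+2}$ shape of the bound comes from. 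You instead apply the decay bound on all of $\R$ and evaluate the integral in closed form via $u=\sqrt t$, obtaining $2\Gamma(2k+\tfrac12)\leq C(2k)^{2k}$, which is in fact sharper than the paper's bound (no $\log k$ factors in the base and a smaller exponent) and of course implies the stated estimate. The only blemish — inherited from the lemma statement itself and present equally in the paper's own proof — is that at $k=1$ the right-hand side $(C_2 k\log k)^{2k+2}$ vanishes, so the claimed inequality should be read for $k\geq 2$ (or with $\log k$ replaced by, say, $\log(k+1)$); this is immaterial for how the lemma is used downstream, where $k$ grows like $\log(1/(\eps a\tau))$.
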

\begin{proof}
First note that
\[
\left\|\psi^{(k)}\right\|_\infty  \leq \frac{1}{2\pi}\left\|\widehat{\psi^{(k)}}\right\|_1 = \frac{1}{2\pi}\left\|(it)^k\widehat{\psi}(t)\right\|_1 = \frac{1}{2\pi}\int_\R |t|^k \left|\widehat\psi(t)\right| dt.
\]
On the one hand, taking $T = C_1(k\log k)^2$, it follows from Lemma~\ref{lem:bump_fourier} that
\[
|t|^k |\widehat\psi(t)| \leq C_2 t^{-5/4}, \qquad |t| > T,
\]
where $C_1, C_2 > 0$ are absolute constants. On the other hand,  $|\hat \psi(t)| \leq \|\psi\|_1 = 1$ for all $t$. Therefore
\begin{align*}
\int_\R |t|^k |\widehat\psi(t)| dt &\leq \int_{|t| > T} \frac{C_2}{t^{5/4}} dt + \int_{-T}^T |t|^k dt  \\
&\leq C_3 + 2T^{k+1}\\
&\leq C_4 (\sqrt{C_1} k\log k)^{2k+2}.
\end{align*}
The result then follows.
\end{proof}

Recall that the Tukey $1$-loss function is
\[
	\phi(x) = \begin{cases}
				|x|, &|x|\leq \tau;\\
				\tau, & |x| > \tau.
			\end{cases}
\]
It is easy to verify that $(\phi\ast \psi_{\tau/4})(x) = \phi(x)$ when $\tau/4\leq |x|\leq 3\tau/4$, and thus if we define
\[
	\widetilde\phi(x) = \begin{cases}
				\phi(x), & |x| \leq 3\tau/4; \\
				(\phi\ast\psi_{\tau/4})(x),  & |x| > 3\tau/4,
			\end{cases}
\]
it would be clear that $\widetilde\phi$ is infinitely times differentiable on $(0,\infty)$. Also observe that $\widetilde\phi(x) = \phi(x) = \tau$ when $|x| \geq 5\tau/4$, we see that $\widetilde\phi$ just mollifies $\phi$ on $[3\tau/4,5\tau/4]$. We shall take $\widetilde\phi$ to be our mollified version of the Tukey $1$-loss function.

Next we bound the derivatives of $\widetilde\phi$.

\begin{lemma}\label{lem:derivative_bounds}
There exist absolute constants $C_1,C_2>0$ such that $\left|\widetilde\phi^{(k)}(x)\right| \leq C_1 \tau (C_2k^2\log^2 k/\tau)^{k+1}$ for $x\in [3\tau/4,5\tau/4]$.
\end{lemma}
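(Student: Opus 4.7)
The plan is a direct mollification estimate: on the interval in question, $\widetilde\phi$ equals the convolution $\phi\ast\psi_{\tau/4}$, so I would push every derivative onto the smooth, compactly supported bump and then invoke Lemma~\ref{lem:bump_derivative}.

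First I would write, for $x \in [3\tau/4, 5\tau/4]$,
\[
\widetilde\phi(x) = (\phi\ast\psi_{\tau/4})(x) = \int \phi(y)\,\psi_{\tau/4}(x-y)\,dy,
\]
and differentiate $k$ times under the integral sign (justified because $\psi_{\tau/4}$ is $C^\infty$ with compact support in $[-\tau/4,\tau/4]$), yielding $\widetilde\phi^{(k)}(x) = \int \phi(y)\,\psi_{\tau/4}^{(k)}(x-y)\,dy$. The integrand is supported on $y \in [x-\tau/4, x+\tau/4] \subseteq [\tau/2, 3\tau/2]$, a range on which $|\phi(y)|\leq \tau$, so $|\widetilde\phi^{(k)}(x)| \leq \tau\,\|\psi_{\tau/4}^{(k)}\|_1$.

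Second, I would reduce to the unit-scale bump via the chain-rule identity $\psi_{\tau/4}^{(k)}(u) = (4/\tau)^{k+1}\psi^{(k)}(4u/\tau)$ and a change of variables, obtaining $\|\psi_{\tau/4}^{(k)}\|_1 = (4/\tau)^k \|\psi^{(k)}\|_1$, and then use $\|\psi^{(k)}\|_1 \leq 2\|\psi^{(k)}\|_\infty$ since $\psi^{(k)}$ is supported on $[-1,1]$. Plugging in Lemma~\ref{lem:bump_derivative} in the form $\|\psi^{(k)}\|_\infty \leq C_1 (C_2^2 k^2\log^2 k)^{k+1}$ (after using the identity $2k+2=2(k+1)$), the estimates combine to
\[
|\widetilde\phi^{(k)}(x)| \leq 2C_1\,\tau\,\bigl(\tfrac{4}{\tau}\bigr)^k\,\bigl(C_2^2 k^2\log^2 k\bigr)^{k+1},
\]
which, upon absorbing one factor of the polynomial $C_2^2 k^2\log^2 k/\tau$ into the outer $(k+1)$-st power and renaming constants, matches the claimed form $C_1\,\tau\,(C_2 k^2\log^2 k/\tau)^{k+1}$.

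I do not foresee a serious obstacle: the argument is a textbook mollifier estimate once Lemma~\ref{lem:bump_derivative} is in hand. The only piece of care is the bookkeeping that tracks the exponent of $1/\tau$ through the chain-rule scaling and that aligns $2k+2$ with $2(k+1)$ in the bump derivative bound. Should a slight slack in the $\tau$ exponent need to be shaved to meet the stated form on the nose (as opposed to a form differing by one power of $\tau$), I would first integrate by parts once to move a derivative onto the weak derivative $\phi'$, which has $\|\phi'\|_\infty \leq 1$ and is supported in $[-\tau,\tau]$, trading a factor of $\tau$ for a reduction of one in the derivative order applied to $\psi_{\tau/4}$.
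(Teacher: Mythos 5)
Your setup is the same as the paper's---write $\widetilde\phi=\phi\ast\psi_{\tau/4}$ on $[3\tau/4,5\tau/4]$, put all $k$ derivatives on the bump, rescale, and invoke Lemma~\ref{lem:bump_derivative}---and each individual estimate you write is correct. The gap is in the final identification with the stated form. Your chain gives
\[
\bigl|\widetilde\phi^{(k)}(x)\bigr| \;\le\; \tau\,\bigl\|\psi_{\tau/4}^{(k)}\bigr\|_1 \;=\; \tau\Bigl(\tfrac{4}{\tau}\Bigr)^{k}\bigl\|\psi^{(k)}\bigr\|_1 \;\le\; 2C_1\,\tau\Bigl(\tfrac{4}{\tau}\Bigr)^{k}\bigl(C_2k\log k\bigr)^{2k+2}
\;=\;\tfrac{C_1\tau^2}{2}\Bigl(\tfrac{4C_2^2k^2\log^2k}{\tau}\Bigr)^{k+1},
\]
which is the claimed bound multiplied by an extra factor of order $\tau$. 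Since $\tau$ is a free parameter, that factor cannot be hidden in the absolute constants $C_1,C_2$; the ``absorb one factor of $C_2^2k^2\log^2k/\tau$ into the $(k+1)$-st power'' move is precisely what creates the explicit extra $\tau$, so your bound does not match the statement. The fallback you propose does not repair this: integrating by parts once trades $\max|\phi|\le\tau$ for $\|\phi'\|_\infty\le 1$ (a gain of $1/\tau$) but simultaneously replaces $\|\psi_{\tau/4}^{(k)}\|_1=(4/\tau)^{k}\|\psi^{(k)}\|_1$ by $\|\psi_{\tau/4}^{(k-1)}\|_1=(4/\tau)^{k-1}\|\psi^{(k-1)}\|_1$ (a loss of $\tau/4$ from the scaling), so the net power of $\tau$ is unchanged and you still land at $\tau^{1-k}$ rather than the claimed $\tau^{-k}$.

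For comparison, the paper's one-line proof bounds $|\widetilde\phi^{(k)}(x)|\le \max\phi\cdot\|\psi_{\tau/4}^{(k)}\|_\infty=\tau(4/\tau)^{k+1}\|\psi^{(k)}\|_\infty$, i.e.\ it uses the sup norm of the rescaled bump derivative rather than its $L^1$ norm, and that is exactly where the additional factor $4/\tau$ (hence the stated exponent $k+1$ on $1/\tau$) comes from. Strictly speaking this amounts to $\|\psi_{\tau/4}^{(k)}\|_1\le\|\psi_{\tau/4}^{(k)}\|_\infty$, which is justified because the support of $\psi_{\tau/4}^{(k)}$ has length $\tau/2$, i.e.\ for $\tau=O(1)$, which is the regime in which the lemma is applied; for large $\tau$ the stated $\tau$-dependence is in fact unattainable (already for $k=2$ one has $\widetilde\phi''(x)=-\psi_{\tau/4}(x-\tau)$ on this interval, whose sup is of order $1/\tau$, exceeding the claimed $O(1/\tau^{2})$). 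So to prove the lemma as stated you need the sup-norm bound as in the paper (equivalently, record the support-length factor and the implicit restriction on $\tau$); no bookkeeping of your $L^1$-based chain, with or without the integration by parts, removes the extra power of $\tau$.
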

\begin{proof}
Observe that 
\[
\left|\widetilde\phi^{(k)}(x)\right| \leq \max_{x\in [\frac34\tau,\frac54\tau]} \phi(x) \cdot \left\|\psi_{\tau/4}^{(k)}\right\|_\infty \leq \tau\left\|\psi_{\tau/4}^{(k)}\right\|_\infty 
= \frac{\tau}{(\frac{\tau}{4})^{k+1}} \left\|\psi^{(k)}\right\|_\infty.
\]
The result follows from Lemma~\ref{lem:bump_derivative}.
\end{proof}

As a corollary of the preceding proposition, we have

\begin{lemma}\label{lem:mollified_approx} Let $a\in (0,\frac{3}{4})$ and $b>1$ be constants. There exists a polynomial $p(x)$ of degree $O(\frac{b-a}{\tau} \log^2 \frac{1}{\eps a\tau}\log^2\log \frac{1}{\eps a\tau})$ such that $|p(x)-\widetilde\phi(x)|\leq \eps\widetilde\phi(x)$ on $[a\tau,b\tau]$.
\end{lemma}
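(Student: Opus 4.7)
The plan is to apply Jackson's theorem (Lemma~\ref{lem:poly_approx}) to $\widetilde{\phi}$ on the interval $[a\tau,b\tau]$ and use the explicit derivative bounds of Lemma~\ref{lem:derivative_bounds}, after converting the absolute error coming out of Jackson into a relative one.

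First I would rescale: define $g:[-1,1]\to\R$ by $g(y) = \widetilde{\phi}\!\left(\frac{(b-a)\tau}{2}y + \frac{(a+b)\tau}{2}\right)$, so that any polynomial approximation of $g$ pulls back to one of $\widetilde{\phi}$. Next, I would verify that $\widetilde{\phi}$ is $C^\infty$ on $(0,\infty)$ (and hence on $[a\tau,b\tau]$, since $a>0$): outside the mollification window $[3\tau/4,5\tau/4]$ the function is piecewise linear or constant, so derivatives of order $\geq 2$ vanish there; inside the window, Lemma~\ref{lem:derivative_bounds} yields $|\widetilde{\phi}^{(k)}(x)| \leq C_1\tau\bigl(C_2 k^2\log^2 k/\tau\bigr)^{k+1}$. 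Together with the chain rule, this gives the uniform bound
\[
\|g^{(k)}\|_\infty \leq \left(\tfrac{(b-a)\tau}{2}\right)^{\!k} \|\widetilde{\phi}^{(k)}\|_\infty \leq C_1C_2 k^2\log^2 k \cdot \left(\tfrac{(b-a)C_2 k^2\log^2 k}{2}\right)^{\!k},
\]
where the explicit $\tau$'s cancel because the extra $\tau^k$ from rescaling meets $(1/\tau)^{k+1}$ from the derivative bound, leaving only a harmless factor.

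I would then plug this into Lemma~\ref{lem:poly_approx} with $\omega_k(\delta)\leq 2\|g^{(k)}\|_\infty$, producing
\[
E_n(g;[-1,1]) \leq \frac{C_3\, k^2\log^2 k}{k+1}\left(\frac{C_4(b-a)k^2\log^2 k}{n}\right)^{\!k}
\]
for absolute constants $C_3,C_4$. Choosing $n = A(b-a)k^2\log^2 k$ with $A$ a sufficiently large absolute constant makes the base in the parentheses at most $1/2$, so $E_n(g)\leq C_3 k^2\log^2 k\cdot 2^{-k}$. Finally, I would set $k = \Theta\!\left(\log\tfrac{1}{\eps a\tau}\right)$ so that the right-hand side becomes at most $\eps a\tau$; this fixes $n$ to match (up to constants depending on $a,b$) the degree claimed in the lemma. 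The absolute error bound is then converted to a relative one by observing that $\widetilde{\phi}$ is nondecreasing on $(0,\infty)$ and agrees with $x\mapsto x$ on $[a\tau,3\tau/4]$, so $\widetilde{\phi}(x)\geq a\tau$ on $[a\tau,b\tau]$; hence $|p(x)-\widetilde{\phi}(x)|\leq \eps a\tau\leq \eps\widetilde{\phi}(x)$.

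The main technical delicacy is the interplay between the fast-growing derivative bound $(k^2\log^2 k)^{k+1}$ and the $1/n^k$ decay from Jackson; the geometric term $2^{-k}$ only kicks in once $n$ exceeds a small multiple of $k^2\log^2 k$, and the optimal tuning $k\sim \log(1/\eps a\tau)$ is what transforms $2^{-k}\leq \eps a\tau$ into the stated degree. Tracking where the $(b-a)$ and $\tau$ factors end up (they enter only through the rescaling Jacobian and must be balanced against the $1/\tau$ already present in Lemma~\ref{lem:derivative_bounds}) is the one place where a careless bookkeeping would spoil the bound, so I would handle that step explicitly.
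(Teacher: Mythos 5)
Your proposal is correct and follows essentially the same route as the paper: reduce the relative-error guarantee to uniform error $\eps a\tau$ via $\widetilde\phi\geq a\tau$, rescale $[a\tau,b\tau]$ to $[-1,1]$, and combine Lemma~\ref{lem:poly_approx} with the derivative bounds of Lemma~\ref{lem:derivative_bounds}, finally tuning $k=\Theta(\log\frac{1}{\eps a\tau})$ and $n=\Theta((b-a)k^2\log^2 k)$. One remark on your closing claim that the degree "matches" the lemma: your bookkeeping, in which the $\tau^k$ from the rescaling Jacobian cancels the $\tau^{-(k+1)}$ in the derivative bound, yields degree $O\bigl((b-a)\log^2\frac{1}{\eps a\tau}\log^2\log\frac{1}{\eps a\tau}\bigr)$ with no $\frac{1}{\tau}$ factor, whereas the stated bound has $\frac{b-a}{\tau}$; your version is what actually follows from the paper's first display (the paper's subsequent simplification retains a spurious $\tau$ in the denominator), it is the scale-consistent form, and it implies the stated $O(\cdot)$ bound in the intended regime $\tau=O(1)$, so this is a strengthening rather than a gap.
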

\begin{proof}
	Since $\widetilde\phi(x) \geq a\tau$ on $[a\tau,b\tau]$, it is sufficient to consider the uniform approximation $|p(x)-\widetilde\phi(x)|\leq \eps(a\tau)$ on $[a\tau,b\tau]$. It follows from Lemma~\ref{lem:poly_approx} that when $n>k$, 
	\[
	E_n(\widetilde\phi; [a\tau,b\tau]) \leq \frac{6^{k+1} e^k}{(k+1)n^k} \cdot \left(\frac{(b-a)\tau}{2}\right)^k\|\widetilde\phi^{(k)}\|_\infty \cdot \frac{1}{n-k}.
	\]
	Invoking Lemma~\ref{lem:derivative_bounds}, we obtain for $n\geq 2k$ that
	\[
	E_n(\widetilde\phi; [a\tau,b\tau])\leq \frac{C_1}{(k+1)(b-a)} \left(\frac{C_2(b-a)k^2\log^2 k}{\tau n}\right)^{k+1},
	\]
	where $C_1,C_2>0$ are absolute constants. It is now clear that we can take $k = O(\log\frac{1}{\eps a\tau})$ and $n = O(\frac{b-a}{\tau}k^2\log^2 k)$ so that $E_n(\widetilde\phi; [a\tau,b\tau])\leq \eps\cdot a\tau$.
\end{proof}

\subsection{Estimation Algorithm} Since $\widetilde{\phi}(x)$ agrees with $|x|$ for small $|x|$, it follows from Theorem~\ref{thm:power_function} that solving the subspace sketch problem for $\widetilde{\Phi}(x)$ requires $\widetilde{\Omega}(d/\eps^2)$ bits. In this subsection we show that this lower bound is tight up to polylogarithmic factors. Specifically we have the following theorem.

\begin{theorem}
Let $\widetilde\Phi(x)$ be the mollified Tukey loss $1$-norm of $x\in \R^n$. There exists a randomized algorithm which returns an estimate $Z$ to $\widetilde\Phi(x)$ such that $(1-\eps)\widetilde\Phi(x)\leq Z\leq (1+\eps)\widetilde\Phi(x)$ with probability at least $0.9$. The algorithm uses $\widetilde{O}(1/\eps^2)$ bits of space.
\end{theorem}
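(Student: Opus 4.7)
My plan is to estimate $\widetilde\Phi(x)$ by decomposing the sum according to coordinate magnitude and combining heavy-hitter sketches with the polynomial approximation from Lemma~\ref{lem:mollified_approx}. Since $\widetilde\phi(t)=|t|$ for $|t|\le 3\tau/4$ and $\widetilde\phi(t)=\tau$ for $|t|\ge 5\tau/4$, the nontrivial part of $\widetilde\phi$ lives on the constant-ratio interval $[3\tau/4,5\tau/4]$. On that interval, Lemma~\ref{lem:mollified_approx} gives a polynomial $p(t)$ of degree $D=\polylog(1/\eps)$ with $|p(t)-\widetilde\phi(t)|\le \eps\,\widetilde\phi(t)$. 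I would write $\widetilde\Phi(x)=A+B+C$ where $A=\sum_{|x_i|\le 3\tau/4}|x_i|$, $B=\tau\cdot|\{i:|x_i|\ge 5\tau/4\}|$, and $C=\sum_{3\tau/4<|x_i|<5\tau/4}\widetilde\phi(x_i)$, and estimate each piece separately in $\widetilde{O}(1/\eps^2)$ bits.

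For $A$, I would maintain a $1$-stable (Cauchy) linear sketch of $x$ together with a CountSketch-style heavy-hitter structure of size $\widetilde{O}(1/\eps^2)$. Any coordinate with $|x_i|>3\tau/4$ contributes at least $3\tau/4$ to $\widetilde\Phi$, so there are at most $\widetilde{O}(\widetilde\Phi(x)/\tau)$ such coordinates and the sketch will recover each of their approximate values; subtracting their contributions from the $\ell_1$-sketch then yields a $(1\pm\eps)$-approximation to $A$. For $B$, I would estimate the cardinality $|\{i:|x_i|\ge 5\tau/4\}|$ via a thresholded cardinality estimator built on top of CountSketch, noting that any such coordinate is a heavy hitter relative to the relevant scale $\tau$. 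For $C$, I would expand $p(t)=\sum_{k=0}^{D}c_k t^k$ and estimate each restricted moment $\sum_{3\tau/4<|x_i|<5\tau/4}x_i^k$ from the same heavy-hitter sketch; since every surviving coordinate has $|x_i|=\Theta(\tau)$, this reduces to counting coordinates in $\polylog(1/\eps)$ refined geometric bands within $[3\tau/4,5\tau/4]$.

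The main obstacle is that the decomposition uses hard thresholds but sketches give only approximate access to the $|x_i|$'s. I would relax the thresholds into $(1\pm\eps)$ guard bands around $3\tau/4$ and $5\tau/4$ and show that misclassification of borderline coordinates is benign: for any $|x_i|$ in an $\eps$-neighborhood of either threshold, the three candidate contributions ($|x_i|$, $p(x_i)$, and $\tau$) agree up to multiplicative $(1\pm O(\eps))$, so assigning such a coordinate to the wrong piece preserves the target approximation. A secondary difficulty is that the monomial coefficients of $p$ can be as large as $\tau\cdot D^{O(D)}$ even though $\|p\|_\infty=O(\tau)$ on $[3\tau/4,5\tau/4]$; this is handled by estimating each of the $D+1$ moments to relative error $\eps/D^{O(1)}$, which costs only an extra $\polylog(1/\eps)$ factor in sketch size and is absorbed into $\widetilde{O}(\cdot)$.
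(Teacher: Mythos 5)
There is a genuine gap, and it is the central idea of the paper's proof: subsampling. Your three pieces $A$, $B$, $C$ all require per-coordinate information about every coordinate with $|x_i|\gtrsim \tau$ (their approximate values for $A$, their exact count for $B$, and band-refined counts/moments for $C$). But the number of such coordinates is $\Theta(\widetilde\Phi(x)/\tau)$, which can be as large as $n$ (e.g.\ $n$ coordinates each equal to $\tau$), and none of them need be heavy hitters with respect to $\|x\|_1$ or any other norm of $x$: a \textsc{Count-Sketch} with $\widetilde{O}(1/\eps^2)$ buckets only identifies the $O(1/\eps^2)$ coordinates that are $\Omega(\eps^2)$-heavy relative to the residual $\ell_1$ mass, so when $\|x\|_1\gg \tau/\eps^2$ it recovers essentially none of the coordinates of magnitude $\approx\tau$. ``Heavy relative to the scale $\tau$'' is not a guarantee any $\widetilde{O}(1/\eps^2)$-bit linear sketch provides. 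The paper's algorithm first subsamples coordinates at rate $r=\Theta\bigl(\tau/(\eps^2\widetilde\Phi(x))\bigr)$ (choosing $r$ via a parallel constant-factor estimator of $\widetilde\Phi$), and the variance bound $\sum_i\widetilde\phi(x_i)^2\le\tau\,\widetilde\Phi(x)$ shows the rescaled sampled sum is a $(1\pm O(\eps))$ estimate; after sampling, $\widetilde\Phi(x_L)=O(\tau/\eps^2)$, so only $O(1/\eps^2)$ surviving coordinates exceed $\tau/2$ and each of them \emph{is} an $\Omega(\eps^2)$-heavy hitter of the sampled vector. That step is what makes your heavy-hitter-based recovery of the large and mid-range coordinates possible in $\widetilde{O}(1/\eps^2)$ bits, and it is absent from your proposal.

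Two further points would still need repair even after adding subsampling. First, your estimate of $A$ by ``Cauchy sketch of $\|x\|_1$ minus recovered heavy contributions'' incurs additive error $\eps\|x\|_1$, and $\|x\|_1$ can be arbitrarily larger than $\widetilde\Phi(x)$ because coordinates above $\tau$ contribute their full magnitude to $\|x\|_1$ but only $\tau$ to $\widetilde\Phi$; the paper avoids this by first removing the set $H_1$ of $\ge 5\tau/4$ coordinates and only then using $\|x_{[n]\setminus H_1}\|_1\le 5\,\widetilde\Phi(x_{[n]\setminus H_1})$ to convert additive error into relative error. Second, evaluating the degree-$\polylog(1/\eps)$ polynomial $p$ by separately estimating restricted moments is delicate because the monomial coefficients are huge and the moments must be localized to a magnitude band you cannot determine exactly from a sketch; the paper instead evaluates $p$ directly at the identified heavy coordinates using the complex-valued estimators of the \textsc{HighEnd} structure of Kane--Nelson--Porat--Woodruff, whose error moments vanish up to order $3\deg p$, giving additive error $\eps\|x_{[n]\setminus H_1}\|_1=O(\eps)\widetilde\Phi(x)$. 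Your guard-band observation near $3\tau/4$ and $5\tau/4$ is fine, but it does not address these issues.
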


This theorem implies an $\widetilde{O}(d/\eps^2)$ upper bound for the corresponding subspace sketch problem. The remaining of the section is devoted to the proof of this theorem.

We shall first sample rows of $A$ with sampling rate $\Theta\left(\frac{\tau}{\widetilde{\Phi}(x) \eps^2}\right)$.
However, we do not know $\widetilde{\Phi}(x)$ in advance. To implement this, we sample rows of $A$ using $O(\log n)$ different sampling rates $1, (1.1)^{-1}, (1.1)^{-2}, \ldots, 1.1^{-O(\log n)}$, and in parallel, estimate $\widetilde{\Phi}(x)$ using a separate data structure of $O(\polylog(n) \cdot d)$ space~\cite{braverman2010zero, braverman2016streaming}, which gives an estimate $F$ satisfying $0.9 \widetilde{\Phi}(x) \le F \le 1.1 \widetilde{\Phi}(x)$.
Then we choose a sampling rate $r = 1.1^{-s}$ for some integer $s$ that is closest to $\frac{\tau}{F \eps^2}$. 
Thus $r\in \left[\frac{\tau}{2\Phi(x) \eps^2}, \frac{2\tau}{\widetilde{\Phi}(x) \eps^2}\right]$ when $\widetilde{\Phi}(x) > \frac{\tau}{2\varepsilon^2}$, and $r=1$ otherwise. 

Now we show that for the chosen sampling rate $r$, the sampled entries give an accurate estimation to $\widetilde{\Phi}(x)$.
This is definitely true when $r = 1$, in which case there is no sampling at all. 
Otherwise, let $X_i = \widetilde{\Phi}(x_i)$ if item $i$ is sampled and $X_i = 0$ otherwise. Let $X = \sum_i X_i$ and $Z = (1/r)X$. It is clear that $\E[ Z] = \widetilde{\Phi}(x)$. We calculate the variance below.
\begin{align*}
\Var(Z) = \frac{1}{r^2}\Var(X) = \frac{1}{r^2}\sum_i \Var(X_i)^2 
&= \frac{1}{r^2}\sum_i (r-r^2)(\widetilde{\Phi}(x_i))^2 \\
&\leq \frac{1}{r}\sum_i (\widetilde{\Phi}(x_i))^2 \\
&= O\left( \frac{\widetilde{\Phi}(x) \eps^2}{\tau} \cdot \sum_i \tilde\phi(x_i)\cdot \tau \right) \\
&= O(\eps^2)\cdot (\widetilde{\Phi}(x))^2.
\end{align*}
It follows from Chebyshev's inequality that with constant probability, 
$$
Z = \frac{1}{r} \sum X_i = (1 \pm O(\eps)) \widetilde{\Phi}(x).
$$
We condition on this event in the rest of the proof. Thus, it suffices to estimate the summation of $\widetilde{\Phi}(x_i)$ for those $x_i$ that are sampled. In the rest of this section, we use $L$ to denote the indices of entries that are sampled at the sampling rate $r$.

For each $i \in L$ with $|x_i| \ge \tau$, we claim that 
\begin{equation}\label{eqn:HH}
|x_i| \ge \Omega(\eps^2) \cdot \|(x_L)_{-O(1/\eps^2)}\|_1,
\end{equation}
where $x_L$ denotes the vector $x$ restricted to the indices in $L$ and $v_{-k}$ denotes the vector $v$ after zeroing out the largest $k$ entries in magnitude. 

We first show that $\widetilde{\Phi}(x_L) = O\left(\frac{\tau}{\eps^2} \right)$, 
which is clearly true when $r = 1$, since in this case, $\widetilde{\Phi}(x_L) = \widetilde{\Phi}(x) = O\left(\frac{\tau}{\eps^2} \right)$. When $r < 1$, $\sum_{i \in L} \widetilde{\Phi}(x_i) = (1 \pm O(\eps)) \cdot r \cdot \widetilde{\Phi}(x) =  O\left(\frac{\tau}{\eps^2} \right)$.

Let $L' = \{i\in L: |x_i|\geq\tau/2\}$. It follows that $|L'| \leq \widetilde{\Phi}(x_L)/(\tau/2) = O(1/\eps^2)$. Hence
\[
\|x_{L\setminus L'}\|_1 = \sum_{i \in L\setminus L'} |x_i| = \widetilde{\Phi}(x_{L\setminus L'}) \leq \widetilde{\Phi}(x_L) = 
O\left(\frac{\tau}{\eps^2} \right),
\]
establishing~\eqref{eqn:HH}.

Therefore, to find all $i \in L$ with $|x_i| \ge \tau$, we use an $\ell_p$-heavy hitter data structure, which can be realized by a \textsc{Count-Sketch} structure \cite{charikar2004finding} which hashes $x_L$ into $O(1/\beta)$ buckets and finds $\beta$-heavy hitters relative to $\|(x_L)_{-1/\beta}\|_1$. Set $\beta = \Theta(\eps^2)$. In the end we obtain a list $H\subseteq L$ such that every $i\in H$ is a $\beta/2$-heavy hitter relative to $\|(x_L)_{-1/\beta}\|_1$, and all $\beta$-heavy hitters are in $H$. 
Furthermore, for each $i\in H$ the data structure also returns an estimate $\hat{x}_i$ such that $|x_i| / 2 \le |\hat{x}_i| \le 2|x_i|$ whenever $|x_i|\geq\tau/2$.
The data structure has space complexity $\widetilde{O}(1/\eps^2)$. 

For each $x_i \in L$ with $|x_i| \ge \tau$, it must hold that $i \in H$. Let $H_1 = \{i\in H: |\hat{x}_i|\geq 5\tau/4\}$ and $H_2 = \{i\in H: 3\tau/8 \leq |\hat{x}_i| \leq 5\tau/2\}$.

For each $i\in H_1$, by the estimation guarantee it must hold that $|x_i|\geq 5\tau/4$. Hence $S_1 = \tau |H_1| = \widetilde{\Phi}(x_{H_1})$. 

For each $i\in H_2$ it must hold that $|x_i|\in [\frac{3}{16}\tau,5\tau]$, and thus 
\[
\|x_{[n]\setminus H_1}\|_1\leq 5\widetilde{\Phi}(x_{[n]\setminus H_1}).
\]
Let $p(x)$ be a polynomial such that $|p(x)-\widetilde{\phi}(x)| \leq \eps\widetilde{\phi}(x)$ on $[\frac{3}{16}\tau,5\tau]$. By Lemma~\ref{lem:mollified_approx}, it is possible to achieve $\deg p = O(\log^3(1/(\eps\tau)))$. We now use an estimation algorithm analogous to the \textsc{HighEnd} structure in~\cite{KNPW11}, which uses the same space $\widetilde{O}(1/\eps^2)$. Using the same \textsc{BasicHighEnd} structure in~\cite{KNPW11}, with constant probability, for each $x_i\in H$ we have $T$ estimates $\hat{x}_{i,1},\dots,\hat{x}_{i,T}\in \C$ such that $\hat{x}_{i,t} = x_i + \delta_{i,t}$, where each $\delta_{i,t}\in \C$ satisfies $|\delta_{i,t}|\leq |x_i|/2$ and $\E(\delta_{i,t})^k = 0$ for $k=1,\dots,3\deg p$. The estimator is 
\[
S_2 = \Re \sum_{i\in H_2} \widetilde{\Phi}\left(\frac{1}{T}\sum_{i=1}^T p(\hat{x}_{i,t})\right).
\]
It follows from the analysis in~\cite{KNPW11} that ($x$ can be replaced with $x_{[n]\setminus H_1}$ in the analysis of the variance) that the algorithm will output, with a constant probability,
\[
S_2 = \widetilde{\Phi}(x_{H_2}) \pm \eps\|x_{[n]\setminus H_1}\|_1 = (1\pm 5\eps)\widetilde{\Phi}(x_{H_2}).
\]

For each $i\in [n]\setminus(H_1\cup H_2)$, it must hold that $|x_i|\leq \tau/2$ and thus we can use an $\ell_1$ sketch algorithm as in~\cite{KNPW11}, and obtain $S_3 = (1\pm \eps)\widetilde{\Phi}(x_{[n]\setminus(H_1\cup H_2)})$. 

Finally, the algorithm returns $S_1+S_2+S_3$, which is a $(1\pm 10\eps)$-approximation to $\widetilde{\Phi}(x)$. Rescaling $\eps$ proves the correctness of the estimate.

For the part of evaluating $S_2$ and $S_3$, the space complexity is the same as the  \textsc{HighEnd} and $\ell_p$ sketch algorithm in~\cite{KNPW11}, which are both $\widetilde{O}(1/\eps^2)$ bits.

\section{An Upper Bound for $\ell_1$ Subspace Sketches in Two Dimensions}\label{sec:2dub}
In this section, we prove an $O(\polylog(n)/\eps)$ upper bound for the $\ell_1$ subspace sketch problem when $d = 2$.
Our plan is to reduce the $\ell_1$ subspace sketch problem with $d=2$ to coresets for the weighted $1$-median problem with $d = 1$.
For the latter problem, an $O(\polylog(n) / \varepsilon)$ upper bound is known \cite{har2007smaller}. 

For the special case where the first column of the $A$ matrix is all ones, the $\ell_1$ subspace sketch problem with $d=2$ is equivalent to coresets for $1$-median with $d = 1$. To see this, by homogeneity, we may assume $x_2 = 1$ for the query vector $x \in \mathbb{R}^2$.
Thus, $\|Ax\|_1 = \sum_{i = 1}^n |x_1 + A_{i, 2}|$, which is the $1$-median cost of using $x_2$ as the center on $\{-A_{1, 2}, -A_{2, 2}, \ldots, -A_{n, 2}\}$.
When entries of the first column of $A$ are positive but not necessarily all ones, we have
$$
\|Ax\|_1 = \sum_{i = 1}^n A_{i, 1}\left|x_1 + \frac{A_{i, 2} }{ A_{i, 1}}\right|,
$$
which is the weighted $1$-median cost of using $x_1$ as the center on $\{-A_{1, 2} / A_{1, 1}, -A_{2, 2} / A_{2, 1}, \ldots, -A_{n, 2} / A_{n, 1}\}$, with weights $\{A_{i, 1}, A_{i, 2}, \ldots, A_{n, 2}\}$.
It has been shown in \cite[Theorem~2.8]{har2007smaller} that there exists a coreset of size $O(\polylog(n) / \varepsilon)$ for the weighted $1$-median problem when $d = 1$.

For general $A$, we divide the rows of $A$ into three separate matrices $A^+$, $A^-$ and $A^{0}$.
Here, all entries in the first column of $A^+$ are positive, all entries in the first column of $A^-$ are negative, and all entries in the first column of $A^0$ are zeroes.
Since $\|Ax\|_1 = \|A^+x\|_1 + \|A^-x\|_1 + \|A^{0}x\|_1$, we can design subspace sketches separately for $A^+$, $A^-$ and $A^{0}$.
Our reduction above implies an $O(\polylog(n) / \varepsilon)$ upper bound for $A^+$ and $A^-$.
For $A^0$, since all entries in the first column are all zero, we have
$$
\|A^0x\|_1 = |x_2| \sum_{i} |A^0_{i, 2}|.
$$
Thus, it suffices to store $\sum_{i} |A^0_{i, 2}|$ for $A^0$.

\begin{theorem}
The $\ell_1$ subspace sketch problem can be solved using $\widetilde{O}(1 / \varepsilon)$ bits when $d = 2$.
\end{theorem}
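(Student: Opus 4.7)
The plan is to reduce the two-dimensional $\ell_1$ subspace sketch problem to the one-dimensional weighted $1$-median coreset problem, for which an $O(\polylog(n)/\eps)$-sized coreset is already known~\cite[Theorem~2.8]{har2007smaller}. By positive homogeneity of $x\mapsto \|Ax\|_1$, I may restrict attention to queries on the unit circle; the single direction with $x_2=0$ can be answered from the scalar $\sum_i|A_{i,1}|$, so it suffices to handle queries of the form $x=(x_1,1)^T$ for arbitrary $x_1\in\R$.

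Next I would partition the rows of $A$ into three submatrices $A^+$, $A^-$, $A^0$ according to whether the first-column entry is positive, negative, or zero. The $\ell_1$ norm is additive across these row blocks, so $\|Ax\|_1 = \|A^+x\|_1+\|A^-x\|_1+\|A^0x\|_1$ and the three terms can be sketched independently. For $A^0$ we have $\|A^0x\|_1 = |x_2|\sum_i |A^0_{i,2}|$, and it suffices to store that single scalar using $O(\log(nd))$ bits. For $A^+$, factoring the positive weight $A^+_{i,1}$ out of each row yields
\[
\|A^+x\|_1 = \sum_{i} A^+_{i,1}\bigl|x_1 - (-A^+_{i,2}/A^+_{i,1})\bigr|,
\]
which is precisely the weighted $1$-median cost on $\R$ with weighted point set $\{(-A^+_{i,2}/A^+_{i,1},\,A^+_{i,1})\}$ and candidate center $x_1$. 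Applying the coreset of~\cite[Theorem~2.8]{har2007smaller} produces a weighted subset of $O(\polylog(n)/\eps)$ points whose weighted $1$-median cost $(1\pm\eps)$-approximates the true cost simultaneously for every real center; $A^-$ is handled symmetrically.

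Summing the three sketches gives a $(1\pm\eps)$-approximation of $\|Ax\|_1$ for every $x\in\R^2$. After discretizing each coreset point location and weight to $O(\log(nd))$ bits, which perturbs the answer by at most a further $(1\pm\eps)$ factor via a standard rescaling of $\eps$ (the weights and point coordinates are polynomially bounded in $n,d$ since $A$ has $O(\log(nd))$-bit entries), the total storage is $\widetilde O(1/\eps)$ bits. I do not foresee a serious obstacle here: the only point worth checking is that the coreset guarantee of~\cite{har2007smaller} is a \emph{for-all} guarantee over every real center rather than a for-each guarantee for a single fixed center, and this is precisely what the cited theorem provides, so no new idea is needed beyond the reduction itself.
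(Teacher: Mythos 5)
Your proof is correct and follows essentially the same route as the paper: the same partition into $A^+$, $A^-$, $A^0$, the same reduction of each signed block to a one-dimensional weighted $1$-median instance, and the same invocation of the coreset of~\cite[Theorem~2.8]{har2007smaller}. Your explicit handling of the $x_2=0$ direction and of the bit-discretization of the coreset are minor details the paper leaves implicit, but they do not change the argument.
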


\bibliography{literature}

\end{document}